\let\G\undefined
\newcommand{\lrg}{off}
\newcommand{\makelrg}{\Large}
\newcommand{\makelrg}{}
\newcommand{\cmnt}{off}
\newcommand{\iddo}[1]{\footnote{\textcolor[rgb]{0.5,0,0.9}{\uline{Note}: #1}}}
\newcommand {\mar}[1]
{
              \marginpar{
                       \begin{minipage}{60pt}
                             \tiny \color{lightgray}#1
                       \end{minipage}
                 }
}
\newcommand{\iddo}[1]{}
\newcommand{\mar}[1]{}
\newcommand{\DET}{\textsf{DET}}
\date{}
\newcommand{\HRule}{\rule{\linewidth}{0.2mm}}
\renewcommand{\GFTwo}{\ensuremath{GF(2)}}
\newcommand{\dd}{\ensuremath{\delta}}
\definecolor{lightgray}{rgb}{.7,.7,.7}
\definecolor{gray}{rgb}{.2,.2,.2}
\definecolor{verydarkblue}{rgb}{.0,.0,.2}
\definecolor{darkerblue}{rgb}{0,0,.4}
\definecolor{mycyan}{rgb}{0.0117647,0.533333,0.647059}
\newtheorem{theorem}{Theorem}[section]
\newtheorem*{theorem*}{Theorem}
\newtheorem{lemma}[theorem]{Lemma}
\newtheorem*{claim*}{Claim}
\newtheorem{definition}[theorem]{Definition}
\newtheorem{proposition}[theorem]{Proposition}
\newtheorem{corollary}[theorem]{Corollary}
\newtheorem{remark}[theorem]{Remark}
\newtheorem{fact}[theorem]{Fact}
\newcommand\F{{\ensuremath{\mathbb F}}}
\renewcommand\G{{\ensuremath{\mathbb G}}}
\newcommand\PP{{\mathbb P}}
\newcommand\PC{\ensuremath{\PP_c}}
\newcommand\PI{\ensuremath{\PP_{c}^{{-1}}}}
\newcommand\cplus{{\oplus}}
\newcommand\ctimes{{\otimes}}
\newcommand{\tr}[1]{\ensuremath{{[{#1}]}}}
\newcommand\Num{{\hbox{Num} }}
\newcommand\Den {{\hbox{Den} }}
\newcommand\poly{\hbox{\rm poly}}
\newcommand\Det{\hbox{\rm DET}}
\newcommand\Coef{{\rm coeff}}
\newcommand\nbyn{\ensuremath{n\times n}}
\newcommand\dbyd{\ensuremath{d\times d}}
\newcommand{\set}[1]{\{#1\}}
\newcommand {\cd}{\cdot}
\newcommand{\Base}{\mbox{}\\ \ind{\textit{Base case: }}}
\newcommand{\Induction}{\mbox{}\\ \ind{\textit{Induction step: }}}
\newcommand{\case}[1]{\ind\textbf{Case #1}:\,}
\newcommand{\induction}{\Induction}
\newcommand{\TODO}[1]{\mar{\texttt{\textcolor[rgb]{0.396078,0.639216,0.490196}{:#1}}}}
\renewcommand{\TODO}[1]{}
\newcommand {\ind} {\noindent}
\newcommand {\para}[1] {\paragraph{#1.}}
\DeclareMathAlphabet{\mathitbf}{OML}{cmm}{b}{it}
\newcommand{\rst}{\upharpoonright}
\newtheorem{theorem}{Theorem}[section]
\newtheorem{lemma}[theorem]{Lemma}
\newtheorem{proposition}[theorem]{Proposition}
\newtheorem{corollary}[theorem]{Corollary}
\newtheorem{definition}{Definition}[section]
\newtheorem{mycomment}{Comment}
\newtheorem{fact}[theorem]{Fact}
\newtheorem{claim}[theorem]{Claim}
\newenvironment{notation}{\QuadSpace\par\noindent{\bf Notation}:}{\HalfSpace}
\newenvironment{note}{\QuadSpace\par\noindent{\bf Note}:}{\HalfSpace}
\newenvironment{mycomment}{\QuadSpace\par\noindent{\bf Comment}:}{\HalfSpace}
\newenvironment{example}{\QuadSpace\par\noindent{\bf Example}:}{\HalfSpace}
\newenvironment{proof}{\QuadSpace\par\noindent{\bf Proof}:}{\EndProof\HalfSpace}
\newenvironment{proofclaim}{\QuadSpace\par\noindent{\it Proof of claim}:}
{\vrule width 1ex height 1ex depth 0pt $_{\textrm{\,Claim}}$ \HalfSpace}
\newcommand{\QuadSpace}{\vspace{0.25\baselineskip}}
\newcommand{\HalfSpace}{\vspace{0.5\baselineskip}}
\newcommand{\FullSpace}{\vspace{1.0\baselineskip}}
\newcommand{\EndProof}{ \hfill \vrule width 1ex height 1ex depth 0pt }
\newcommand{\NCTwo}{\ensuremath{\mathbf{NC^2}}}
\renewcommand{\NCOne}{\ensuremath{\mathbf{NC^1}}}
\newcommand{\Z}{\ensuremath{\mathbb{Z}}}
\newcommand{\N}{\ensuremath{\mathbb{N}}}
\newcommand{\abs}[1]{|#1|}
\newcommand{\eac}{\ensuremath{{\textsf{Eval}_{al}}}} 
\newcommand{\mbeac}{\ensuremath{{\textsf{Eval}_{lmb}}}} \newcommand{\encd}[1]{\ensuremath{{\sf write}_{#1} }}
\newcommand{\cdeti}{\ensuremath{\mathsf {Det}_{circ^{-1}}}} 
\newcommand{\cdet}{\ensuremath{\mathsf {Det}_{circ}}} \newcommand{\taydet}{\ensuremath{\mathsf{Det}_{Taylor}}}
\newcommand{\taydetsharp}{\ensuremath{\mathsf{Det}_{Taylor}^{\#}}}
\newcommand{\taydetsharpprime}{\ensuremath{\mathsf{Det}_{Taylor}^{\star}}}
\newcommand{\cdetbl}{\ensuremath{\mathsf {Det}_{balanced}}} 
\newcommand{\cdetb}{\ensuremath{\mathsf{Det}_{bool}}}
\newcommand{\PCZ}{\ensuremath{\PC(\Z)}}
\newcommand{\writepci}{\ensuremath{{\encd{\PI(\Z)\vdash \text{\eqref{eq:1},\eqref{eq:2}}}(n,W)}}} \newcommand{\lgsq}{\ensuremath{\log^2 n}}
\newcommand{\olgsq}{\ensuremath{O(\lgsq)}}
\newcommand{\inv}{{\rm Inv}}
\newcommand{\hci}[1]{\ensuremath{{#1}^{(i)}}}
\newcommand{\hc}[2]{\ensuremath{{#1}^{(#2)}}}
\newcommand{\hck}[1]{\ensuremath{{#1}^{(k)}}}
\newcommand{\PIZ}{\ensuremath{\PI(\Z)}}
\renewcommand{\PCZ}{\ensuremath{\PC(\Z)}}
\newcommand{\hcz}[1]{\ensuremath{#1^{(0)}}}
\newcommand{\eval}[1]{\ensuremath{\mathrm{eval}\left(#1\right)}}
\newcommand{\matz}{\ensuremath{\mathrm {Mat}_\Z}}
\newcommand{\trianmatz}{\ensuremath{\mathrm {triangMat}_\Z}}
\renewcommand{\deg}{\ensuremath{d}}
\newcommand{\ix}{\ensuremath{X^{-1}}} 
\newcommand{\outix}{\ensuremath{\encd{\ix}}}
\newcommand{\outixd}
    {\ensuremath{\encd{\text{level}(\ix)}}}
\newcommand{\degub}{\ensuremath{d_{\rm{ub}}}}
\newcommand{\degubp}{\ensuremath{d^+_{\rm ub}}}   
\renewcommand{\l}{\ell}
\newcommand{\shortECCC}[2]{\texttt{\href{http://eccc.hpi-web.de/report/\ifnumcomp{#1}{>}{93}{19}{20}#1/#2/}{eccc:TR#1-#2}}}
\newcommand{\parseECCC}[1]{
\StrSubstitute{#1}{TR}{}[\tmpstring]%
\IfSubStr{\tmpstring}{/}{ 
\StrBefore{\tmpstring}{/}[\ecccyear]%
\StrBehind{\tmpstring}{/}[\ecccreport]%
}{
\StrBefore{\tmpstring}{-}[\ecccyear]%
\StrBehind{\tmpstring}{-}[\ecccreport]%
}%
\shortECCC{\ecccyear}{\ecccreport}}
\author{Iddo Tzameret\thanks{Department of Computer Science, Royal Holloway, University of London. Iddo.Tzameret@rhul.ac.uk} \and Stephen A.~Cook\thanks{Department of Computer Science, University of Toronto. sacook@cs.toronto.edu}}
\newlength{\defbaselineskip}
\renewcommand{\singlespacing}{\setlength{\baselineskip}{1.97\defbaselineskip}}
\begin{document}

\title{Uniform, Integral and Feasible Proofs for the Determinant Identities}
\maketitle

\begin{abstract}
Aiming to provide  weak as possible axiomatic assumptions in which one can develop basic linear algebra, we give a uniform and integral version of the short propositional proofs for the determinant identities demonstrated over $GF(2)$ in Hrube\v s-Tzameret \cite{HT12}. Specifically, we show that the multiplicativity of the  determinant function and the Cayley-Hamilton theorem over the integers
are provable in the bounded arithmetic theory \VNCTwo; the latter is a first-order theory corresponding to the complexity class \NCTwo\ consisting of problems solvable by uniform families of polynomial-size circuits and $O(\log ^2 n)$-depth. This also establishes the existence of uniform polynomial-size \NCTwo-Frege proofs  of the basic determinant identities over the integers (previous propositional proofs hold only over the two element field).

%
%
%
\end{abstract}

\singlespacing
{\small
\tableofcontents
}
\normalsize
\makelrg

\section{Introduction}
The complexity of linear algebraic operations such as matrix inverse and the determinant is well studied (cf.~Cook \cite{Coo85}). It is well known that many linear algebraic operations like the determinant  can be computed quickly in parallel, and specifically are in \NCTwo, which is the complexity class consisting of all languages that can be decided by uniform families of $O(\log ^2 n)$-depth and polynomial-size circuits (ignoring for now the distinction between function and language classes). This complexity class captures fast parallel computation in the sense  that a language in it can be decided in time $O(\log ^2 n)$ while using polynomially many processors working in parallel. In fact, within the $\NC:=\cup_{i=0}^\infty\NC^i$ hierarchy, which consists of all polynomial-size circuit families of poly-logarithmic depth,  \NCTwo\ is the weakest level known to compute the determinant (formally, the weakest circuit class computing integer determinants is the class \DET\ that lies between \NCOne\ and \NCTwo; see below).

Furthermore, the importance of linear algebra in bounded arithmetic and proof complexity has been identified in many works,  and it has been  conjectured  that the determinant identities, and specifically the multiplicativity of the determinant function $\Det(A)\cd\Det(B)=\Det(AB)$, for two matrices $A,B$, can be proved in a formal theory that, loosely speaking, reasons with \NCTwo\ concepts (Cook and Nguyen present this specific question in their monograph \cite{CN10}; see also \cite{CF10,BBP95,BP98,Sol_PhD,SC04}). 
This conjecture  
is aligned with the intuition that basic properties of many constructions and functions of a given complexity class are provable in logical theories not using concepts beyond that class.

The weakest theory known to date to prove the determinant identities is  \PV\ which corresponds to polynomial-time reasoning; this was shown  by Soltys and Cook \cite{SC04} (cf.~\cite{CF10,Jer05-PhD}).
Quite recently, Hrube\v s and Tzameret \cite{HT12} showed that at least in the \emph{propositional} case, the determinant identities expressing the multiplicativity of the determinant over $GF(2)$ can be proved with polynomial-size propositional proofs operating with \NCTwo-circuits (as well as with  quasipolynomial size Frege proofs). However, this does not lend itself immediately to the uniform framework of bounded arithmetic. That is, the fact that a statement admits polynomial-size propositional proofs in a certain proof-system does not imply that the same statement (suitably translated to first-order logic) is provable in the bounded arithmetic theory corresponding to the proof-system. For example, a short propositional proof may be shown to exist, but without knowing whether it could be constructed uniformly, and  let alone in a restricted computational model such as uniform-\NCTwo---making it thus impossible to carry out directly in bounded arithmetic. 

Furthermore, \cite{HT12}  crucially used in their construction elimination of division gates from algebraic circuits, which we do not know how to do  using uniform weak computational models like  uniform-\NCTwo\ (since for general  division elimination one needs to use the existential statement about  field assignments that do not nullify a given polynomial  \cite{Str73}). \smallskip

The main goal of this work is to prove the determinant identities in the theory \VNCTwo\ (corresponding to ``\NCTwo-reasoning''). 
We will show that similar reasoning as in \cite{HT12} can be carried over to \VNCTwo, with further complications imposed by uniformity and parallelism. As a result of working in bounded arithmetic it will also become possible to conclude  short  propositional-proofs  over the integers (while the previous propositional proofs  worked only over \GFTwo).

\para{Organization} The preliminaries for this work are somewhat  long. \textit{For this reason we begin with a \emph{high-level} overview of the results and their proofs in Section \ref{sec:overview}} (readers who are unfamiliar with some of the concepts in the overview can consult the preliminaries section for those). The preliminaries themselves are given in Section \ref{sec:prelim}, consisting of basic definitions from bounded arithmetic, the uniform complexity class \NCTwo, the corresponding theory \VNCTwo~\cite{CN10}, basic definitions of algebraic circuits, as well as proof systems operating with algebraic circuits  establishing polynomial identities (PI-proofs \cite{HT08,HT12}).
In Section \ref{sec:carry-proof-in-theory}
we give a much more detailed guide to the proof of the determinant identities in the theory, while still leaving out many of  the technical details and proofs. 
%
%
 Section \ref{sec:Encoding Circuits and PI-Proofs} explains in some detail how we encode certain algebraic circuits in the theory. Sections \ref{sec:From-Rational-Function-to-the-Determinant-as-a-Polynomial} to \ref{sec:balancing-algebraic-circ-in-theory} are dedicated to the construction in uniform \NCTwo\ the PI-proof from \cite{HT12}. Section \ref{sec:applying-the-reflection} wraps-up the proof by establishing the reflection principle for Polynomial Identity (PI) proofs, and Section \ref{sec:corollaries} provides \VNCTwo\ proofs of further basic statement in linear algebra. We finish with conclusions and open problems in Section \ref{sec:Conclusions and Open Problems}. The appendix provides more background details about bounded arithmetic as well as some technical lemmas that do not appear in the main text.

\section{Overview }\label{sec:overview}
Our goal is to prove the determinant identities
inside \VNCTwo. For the logical setting and \VNCTwo\ see Section~\ref{sec:prelim}. Specifically, we want to have a \SigOneB-definable in \VNCTwo\  function $\Det(\cd)$ with input an integer  matrix and output an integer represented as a binary string, such that
\VNCTwo\ proves:\iddo{}
\begin{equation}\label{eq:one}
\forall A, B ~\text{\nbyn\ matrices over \Z},~\Det(A)\cd \Det(B)=\Det(AB)
\end{equation}
and
\begin{equation}\label{eq:two}
\forall C~\text{\nbyn\ triangular matrix over \Z},~ ~\Det(C)=c_{11}\cdots c_{nn}.
\end{equation}

Note that these two identities can be considered as the \emph{defining} identities of the determinant polynomial, in the sense that every polynomial for which these two identities hold is the determinant polynomial. One way of seeing this is to observe that every square matrix is equal to a product of upper and lower triangular matrices.
\iddo{} 


Integer numbers are represented as binary strings in the theory, where the least significant bit (lsb) is 0 (resp.~1) when the  integer is positive (resp.~negative), and where the rest of the string is the binary representation of the absolute value of the integer.
An \nbyn\ matrix over \Z\ is  encoded as a two-dimensional array (cf.~\cite{CF10}). 

It is not hard to show that we can prove \emph{simple }facts about matrices, such as the definability of matrix product $AB, $ the statement expressing associativity and commutativity of matrix products $ A(BC)=(AB)C $ and \(A+B=B+A\), resp., and so forth (see for example \cite{SC04,CN10} and \cite[Lemma 28]{HT12} about these basic identities that can be proved already in the theory \VNCOne, that corresponds to \NCOne).

All circuit classes discussed in this work (except when otherwise stated)  are assumed to be uniform circuit classes. Formally, we require uniformity in the sense that  the extended connection language of the circuit family is in \FO\ (see \cite[Chapter A.5]{CN10} for the definition).

Let us now sketch briefly how we define the determinant function in the theory and then how we \emph{prove} its identities \eqref{eq:1} and \eqref{eq:2} in the theory.  

\para{Defining the determinant function in the theory} Given an \nbyn\ integer  matrix, the \SigOneB-definable  string function (recall that we encode integers as strings)  in \VNCTwo\ for the determinant is defined roughly as follows: first, construct an $O(\log ^2 n)$-depth algebraic circuit computing the determinant of \nbyn\ integer matrices, and then evaluate the circuit under the input assignment.

More specifically, the determinant function in the theory  first constructs a  recursive algebraic circuit (or equivalently, a straight-line program) computing the symbolic \nbyn\ determinant \textit{with division gates} (``symbolic'' here means that the algebraic circuit computes the determinant as the formal polynomial over  $n^2$ distinct variables). This is  done using the standard recursive formula of the determinant (using ``Schur complement''), simulating in a sense Gaussian elimination (cf.~\cite{HT12}). Then, \emph{eliminate the division gates }in the determinant circuit using, among other conversions,  substitutions of power series  in the circuit. Then,  \emph{homogenize} the circuit getting rid of high degrees, \emph{balance the circuit} to achieve the squared logarithmic depth, and  finally \emph{evaluate }the result under the input integer matrix.

The function that evaluates a balanced algebraic circuit in itself consists of several steps, as follows: given as an input a balanced algebraic circuit, the function: (i) converts it into a layered circuit (namely, a circuit in which each node connects only to the subsequent layer); (ii)  transforms it into a \emph{Boolean circuit }computing the same polynomial over the integers (coded as bit-strings) while taking care that the negations appear only in the bottom layer;  and finally (iii) evaluates the Boolean circuit using the fact that the \textsc{Monotone \NCTwo\ Circuit Evaluation Problem}  is \NCTwo-complete (under \ACZ-reductions \cite{CN10}).
\smallskip

Note that since we show that the determinant function as defined above is $\SigOneB$-definable in \VNCTwo, by \cite{CN10} it means that this function is in uniform-\NCTwo. 

\para{Proving the determinant equalities in the theory} Informally, the basic argument formalized in the theory
is that there exists a balanced  PI-proof (for \textit{Polynomial Identity  proof}), in symbols, a   $\PCZ$-proof (as in \cite{HT12}; see Section \ref{sec:PI-proofs}), of these identities. 
Thus, by soundness of balanced \PCZ-proofs, which  we show is provable in \VNCTwo, these identities must  be true. Informally, a \PCZ-proof is a sequence of equations between algebraic circuits over \Z, each of which is either an instance of the polynomial-ring axioms or was derived by addition or multiplication of previous equations. 

More precisely, 
we  demonstrate a \SigOneB-definable function in \VNCTwo\ that given an input $n$ in unary, outputs  \PCZ-proofs of the determinant  identities  (see equations (\ref{eq:1}) and (\ref{eq:2})). In this \PCZ-proof every proof-line is an equation between depth $O(\log^2(n))$ algebraic circuits (without division gates) of a polynomial syntactic-degree. To conclude the argument, we use the soundness of $O(\log^2(n))$-depth $\PCZ$-proofs: using induction on proof-length we argue that for \textit{every assignment of integers}, the determinant identity (equations (\ref{eq:1}) and (\ref{eq:2})) must hold.

One important observation in this work, that is central in constructing the PI-proofs in the theory, is that  for some parts in the construction the only properties that the theory is required to express and prove about these PI-proofs are ``local and syntactic" properties, namely the fact that the each proof-line follows syntactically from previous ones. 

For example, let $C$ be an algebraic circuit of polynomial-size and exponential syntactic-degree; e.g., $(x^2)^{2\cdots 2}-(x^2)^{2\cdots 2}+1$, where $(x^2)^{2\cdots 2}$ is written as a chain of $n$ product gates. The theory cannot express the fact that $C$ has exponential syntactic-degree (because the theory defines only polynomially bounded number functions). Nevertheless, the theory can prove, for example, that $(x^2)^{2\cdots 2}-(x^2)^{2\cdots 2}+1=1$ has a legal PI-proof, using possibly an axiom of the form $F-F=0$.  


Overall, in our argument, the main ``non-syntactic" property we need the theory to express about algebraic circuits is the evaluation of \olgsq-depth circuits over \Z. The axiom of \VNCTwo\ is specifically tailored for this purpose (see Section~\ref{sec:algebraic_circuit_value_problem}). We also use the ability to power matrices in \NCTwo\ when balancing the PI-proofs in the theory. 
\iddo{}  
\subsection{Technical Challenges}

 Showing that the  long  and nontrivial constructions from \cite{HT12}  can be carried out  in \VNCTwo\ requires quite  a lot of work. The main technical obstacles that we face are \emph{parallelism} and \emph{uniformity} as we explain in what follows. \smallskip

\emph{\uline{Parallelism}} here means that the construction of the original PI-proofs from \cite{HT12}  \textit{must be done by itself in \NCTwo}. The construction in \cite{HT12} is quite involved, and to make it parallel we need to devise several \ACZ- and  \NCTwo-algorithms (all \SigOneB-definable in \VZ\ and \VNCTwo, respectively). In fact we show that  most parts of the construction can be carried out already in  \ACZ\ (or its functional version \FACZ), namely we carry out the construction in \VZ.
Among the  algorithms we devise are the following ones:

\TODO{Check!}

(\textbf{i}) Division normalization: converting algebraic circuits with division gates into circuits with a single division gate at the output gate (in \FACZ); This follow Strassen's algorithm \cite{Str73}. 
(\textbf{ii}) Converting algebraic circuits $C$ into the sum of their syntactic-homogeneous components, given as input an upper bound on the syntactic-degree of $C$; i.e., each summand $C^{(i)}$ is a syntactic-homogeneous circuit computing the degree $i$ homogeneous component of $C$ (in \FACZ); This also follows Strassen's algorithm \cite{Str73}, only that we show that for most purposes there is no need to compute syntactic-degrees of nodes, rather \emph{upper bounds} on syntactic-degrees suffice. Such upper bounds are easy to compute in \ACZ.
(\textbf{iii}) An \FNCTwo\ algorithm for balancing an algebraic circuit of size $s$ and syntactic-degree  $d$ into a $\poly(s,d)$-size algebraic circuit of depth $O(\log s\cd \log d+\log ^2d)$, given as input an upper bound on the syntactic-degree of $C$. This part combines the original balancing algorithm by Valiant \textit{et al.}~\cite{VSB+83} with ideas from Miller \textit{at al}.~\cite{MRK88}, and further new ideas entailed by the need to work in \FNCTwo. Specifically, we use matrix powering to power adjacency matrices of graphs to find out, for example, whether a node has a directed path to another node, as well as to compute coefficients of linear polynomials computed by circuits with syntactic-degree 1.  

By first balancing an input circuit and then evaluating it (both in \FNCTwo) our results give rise to: (\textbf{iv}) an F\NCTwo\ \emph{evaluation procedure for algebraic circuits of any depth} (given as input an upper bound on their syntactic-degree and assuming the syntactic degree of the circuit is polynomial\footnote{Formally, we need to assume that the syntactic-degree of every node in the circuit when constant nodes are replaces by corresponding variables is polynomially bounded.}) that is different from the previously known  algorithm by Miller \textit{et al.}~\cite{MRK88} (their algorithm does not require the syntactic-degree as input) and that of Allender \emph{et al.} \cite{AJMV98} (which is implicit in that work, and can be extracted from the text \cite{All18}; see also Vinay \cite{Vin91}). 



\smallskip 

Proving parallel algorithms for structural results on algebraic circuits is however not enough. We further need to show that the correctness of these algorithms can be formalized efficiently with PI-proofs and that these proofs are constructible in \VZ\ and \VNCTwo, in order to conclude that \VNCTwo\ proves the existence of a (uniform \NCTwo) function that constructs the low depth PI-proofs of the determinant identities.      

\smallskip  

\emph{\uline{Uniformity}} here means that we need the whole proof to be constructible in uniform-\NCTwo. For instance, we need to eliminate division gates from certain algebraic circuits and proofs. To eliminate division gates like $u/v$ (for two nodes \(u,v\)), one needs to find an assignment to the variables in which the polynomial computed at node $v$ is nonzero. In general we do \emph{not }know how to do this in the theory. Nevertheless, we show that for our purposes it is enough to eliminate only those division gates that occur in some specific circuits.
In order to eliminate division gates we will also need to find `inverse elements' in the ring of integers, and hence we will have to show that for our purposes it is enough to consider only the inverse of 1 in \Z. 
\medskip 

Apart from uniformity and parallelism, working in bounded arithmetic allows us to work more easily over the integers, where previously short \NCTwo-Frege proofs of the determinant identities were known only over $GF(2)$ \cite{HT12}.

\subsection{Note on the Choice of Theory}
\label{sec:Note-on-the-choice-of-Theory}
 It is interesting to consider whether the theory in which the determinant identities is  proved can be pushed even further down  to a theory that corresponds to a complexity class that lies somewhere between \NCOne\ and \NCTwo. 

Cook and Fontes \cite{CF10} developed a bounded arithmetic theory $V\# L$, corresponding to $\DET$, where \DET\ is the  class of functions that can be computed by uniform families of polynomial-size  constant-depth Boolean circuits with oracle access to the determinant over \Z\ (where integer entries of matrices are presented in binary). In other words, \DET\ is the \ACZ-closure of integer determinants. Complete problems for the class \DET\ include computing matrix powers and the determinant itself. We have the following class inclusions (we ignore here the distinction between function and decision classes): $\NCOne\subseteq\DET\subseteq\NCTwo$, to which the theories $\VNCOne\subseteq V\#L\subseteq\VNCTwo$ correspond.

Our argument cannot be carried out in $V\#L$ since the evaluation of algebraic circuits, even those with squared logarithmic depth (or those in algebraic-\ACOne)  over the integers, which is crucial to our argument,  is apparently not definable in $V\#L$. Note that excluding the evaluation of low-depth algebraic circuits all our arguments seem to  carry over to $V\#L$.  This also includes for example our algorithm for balancing algebraic circuits.\footnote{It is possible also to balance algebraic circuits to squared logarithmic depth in \DET\ using some variants of the algorithm  in \cite{AJMV98}, as we were informed by Eric Allender \cite{All18}.} 


Note also that the two classes $\SharpSACOne\subseteq\TCOne$ that are above \DET\ but below \NCTwo, can compute the required depth reduction and the evaluation of algebraic circuits. \emph{We believe that our construction can be carried out more or less  the same in theories corresponding to these classes}. However, for these two classes we are not aware of established bounded arithmetic theories, hence we shall work in \VNCTwo.  

\section{Preliminaries}\label{sec:prelim}

In this section we present  some of the necessary  background from bounded arithmetic as well as algebraic circuit complexity. Specifically, we describe the two-sorted bounded arithmetic theory \VNCTwo\ as developed by Cook and Nguyen \cite{CN10} and show how to define the evaluation of algebraic  circuits over the integers in the theory, and then define algebraic circuits computing formal polynomials and proof systems for polynomial identities \cite{HT08,HT12} (cf.~\cite{PT16} for a survey). We start with an exposition of bounded arithmetic.

\textit{Bounded arithmetic} is a general name for  weak formal  systems of arithmetic, namely, fragments of Peano Arithmetic. The bounded arithmetic theories we use are first-order two-sorted theories, having a first-sort for natural numbers and a second-sort for finite sets of numbers, representing bit-strings via their characteristic functions (for the original  \textit{single-sort} treatment of theories of bounded arithmetic see  \cite{Bus86,HP93,Kra95}). The theory \VZ\ corresponds to the complexity class uniform-\ACZ, and \VNCTwo\ corresponds to uniform-\NCTwo. The complexity classes \ACZ, \NCTwo, and their corresponding function classes \FACZ\ and \FNCTwo\ are  defined using a two-sorted universe (specifically, the first-ordered sort [numbers] are given to the machines in unary representation and the second-sort as binary strings).
See Section \ref{sec:the-complexity-class-NCTwo} below for the definitions of \NCTwo\ and \FNCTwo, and Definition \ref{def:FACZ} in the appendix for \ACZ\ and \FACZ. 
\begin{definition}[Language of two-sorted arithmetic \LTwoA] The language of two-sorted arithmetic, denoted \LTwoA, consists of the following relation, function and constant symbols:            
\[ \set{+,\cd,\le, 0,1,|\ |,=_1,=_2,\in}.\]
\end{definition}

We describe the intended meaning of the symbols by considering the standard model $\N_2$ of two-sorted Peano Arithmetic. It consists of a first-sort universe $U_1=\N$ and a second-sort universe $U_2$ of all finite subsets of $\N$, which are thought of as strings. The constants $0$ and $1$ are interpreted in $\N_2$ as the appropriate natural numbers zero and one, respectively. The functions $+$ and $\cd$ are the usual addition and multiplication on the universe of natural numbers, respectively. The relation $\le$ is the appropriate ``less or equal than'' relation on the first-sort universe. The function $\abs{\cd}$ maps a finite set of numbers to its largest element plus one. The relation $=_1$ is interpreted as equality between numbers, $=_2$ is interpreted as equality between finite sets of numbers. The relation $n\in N$ holds for a number $n$ and a finite set of numbers $N$ if and only if $n$ is an element of $N$.

We denote the first-sort (number) variables by lower-case letters $ x,y,z,\dots$, and the second-sort (string) variables by capital letters $ X,Y,Z,\dots$. \iddo{***\MakeUppercase{Building terms?}***}

We build formulas in the usual way, using two sorts of quantifiers: number quantifiers and string quantifiers. A number quantifier is said to be \emph{bounded} if it is of the form $\exists x (x\leq t\wedge\dots)$ or $\forall x (x\leq t\rightarrow\dots)$, respectively, for some \textit{number term} $t$ that does not contain $ x $. We abbreviate  $\exists x (x\leq t\wedge\dots)$ and $\forall x (x\leq t\rightarrow\dots)$ by $ \exists x\le t $ and $ \forall x\le t $, respectively.  A string quantifier is said to be \emph{bounded} if it is of the form $\exists X (\abs{X}\leq t \wedge\dots)$ or $\forall X (\abs{X}\leq t\rightarrow\dots)$ for some \textit{number term} $t$ that does not contain $ X $. We abbreviate $\exists X (\abs{X}\leq t \wedge\dots)$ and $\forall X (\abs{X}\leq t\rightarrow\dots)$ by $ \exists X\le t $ and $ \forall X\le t $, respectively. 

A formula is in the class of formulas $\SigZB$ or $\mathbf\Pi^B_{0}$ if it uses \emph{no string quantifiers} and all number quantifiers are bounded. A formula is\mar{Change Sigma to bold..} in $\mathbf\Sigma^B_{i+1}$ or $\mathbf\Pi^B_{i+1}$ if it is of the form $\exists X_1\leq t_1 \dots\exists X_m\leq t_m \psi$ or $\forall X_1\leq t_1\dots\forall X_m\leq t_m \psi$, where $\psi\in\mathbf\Pi^B_i$ and $\psi\in\mathbf\Sigma^B_i$, respectively, and $ t_i $ does not contain $ X_i $, for all $ i =1,\dots,m $. We write $\forall \SigZB$ to denote the universal closure of $ \SigZB$ (i.e., the class of $\SigZB$-formulas that possibly have [not necessarily bounded] universal quantifiers on   their front [left]). We  write $T(t)$ to abbreviate $ t\in T $, for a number term $ t $ and a string term $ T $.


As mentioned before, a finite set of natural numbers $N$ represents a finite string $S_N=S^0_N\dots S^{\abs{N}-1}_N$ such that $S^i_N=1$ if and only if $i\in N$. We will abuse notation and identify $N$ and $S_N$.



\subsection{The Theory \VZ} \label{sec:VZ}

The base theory $\VZ$, which corresponds to the computational class \ACZ, consists of the following axioms:\vspace{-2pt} 

\noindent\HRule\vspace{-6pt} 
\begin{gather*}
\textbf{Basic 1}.\  x+1\neq 0  ~~~~~~~~~ \textbf{Basic 2}.\ x+1=y+1\rightarrow x=y \\
\textbf{Basic 3}.\   x+0=x ~~~~~
\textbf{Basic 4}.\ x+(y+1)=(x+y)+1
\\
\textbf{Basic 5}.\   x\cdot 0=0 ~~~~~~~~~~~
\textbf{Basic 6}.\ x\cdot(y+1)=(x\cdot
y)+x 
\\
\textbf{Basic 7}.\  (x\leq y \wedge y\leq x)\rightarrow x=y~~~~~~                \textbf{Basic 8}.\   x\leq x+y 
\\
\textbf{Basic 9}.\   0\leq x ~~~~~~~~~~~~~~~~~~~~~~~~~~~~~~ \textbf{Basic 10}.\   x\leq y\vee y\leq x     
\\
\textbf{Basic 11}.\  x\leq y\leftrightarrow x<y+1~~~~~~~~~~~~~ ~~~~~~~~~~~~~~~~~~~~~~~~~~~~~~~~ \\         
\textbf{Basic 12}.\   x\neq    0\rightarrow\exists y\leq x(y+1=x)~~~~~~~~~~~~~~~~~~~~~~~~~~~~~~~
\\
\textbf{L1}.\  X(y)\rightarrow y<\abs{X}~~~~~~~~~~~~~~~~~~~~\textbf{L2}.\  y+1=\abs{X}\rightarrow X(y)
\end{gather*}
\begin{equation*}
\begin{split}
\mbox{\bf{SE}. }(\abs{X}=\abs{Y}\wedge \forall i\leq\abs{X}\,
        (X(i)\leftrightarrow Y(i)))\rightarrow X=Y~~~~~~~~
\\
\mbox{{\bf $\SigZB$-COMP.\ }}   \exists X\leq y\forall z<y\, (z\in X\leftrightarrow \varphi (z))\,,~~\mbox{for all}~
\varphi \in\SigZB 
\\
\qquad\qquad  \ \ \ \mbox{where $ X $ does not occur freely in $ \varphi $}\,.
\end{split}
\end{equation*}
\noindent \HRule%

Here, the axioms {\bf Basic 1} through {\bf Basic 12} are the usual axioms used to define Peano Arithmetic without induction ($\mathsf{PA^-}$), which settle the basic properties of addition, multiplication, ordering, and of the constants 0 and 1. The Axiom {\bf L1} says that the length of a string coding a finite set is an upper bound to the size of its elements. {\bf L2} says that $\abs{X}$ gives the largest element of $X$ plus $ 1 $. {\bf SE} is the  axiom for strings which states that two strings
are equal if they code the same sets. Finally, {\bf
$\SigZB$-COMP} is the comprehension axiom \emph{scheme} for
$\SigZB$-formulas (i.e., it is an axiom for each such formula) and
implies the existence of all sets which contain exactly the
elements that fulfill any given $\SigZB$ property.


\begin{proposition}[Corollary V.1.8. \cite{CN10}]\label{prop:number-induction}
The theory \VZ\ proves the \emph{number} induction axiom scheme for \SigZB-formulas $ \Phi $:\vspace{-5pt} 
\[
        \left(
                        \Phi(0)\land \forall x \left(
                                                                                                                        \Phi(x) \rightarrow \Phi(x+1)
                                                                                                 \right)
        \right)
                        \rightarrow \forall z\,\Phi(z).
\]
\end{proposition}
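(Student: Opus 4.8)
The plan is to bootstrap number induction for $\SigZB$-formulas from the comprehension scheme $\SigZB$-COMP together with the length axioms \textbf{L1}, \textbf{L2} (and \textbf{Basic 9}, \textbf{Basic 12}) of \VZ. Fix a $\SigZB$-formula $\Phi(x)$ (possibly containing further free number and string parameters, which play no role in the argument and are suppressed), assume $\Phi(0)$ and $\forall x\,(\Phi(x)\to\Phi(x+1))$, and let $b$ be arbitrary; it suffices to derive $\Phi(b)$. By $\SigZB$-COMP with bound $b+1$ there is a set $Y$ with $i\in Y\leftrightarrow\Phi(i)$ for all $i<b+1$. From the two hypotheses one reads off the ``local'' facts $0\in Y$ and $\forall i<b\,(i\in Y\to i+1\in Y)$, and the goal is reduced to showing $b\in Y$.

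Next I would apply $\SigZB$-COMP a second time, to the formula $\forall j\le i\,(j\in Y)$ — which is $\SigZB$, since $j\in Y$ is atomic and $\forall j\le i$ is a bounded number quantifier — obtaining, again with bound $b+1$, a set $U$ with $i\in U\leftrightarrow\forall j\le i\,(j\in Y)$ for all $i<b+1$. Intuitively $U\subseteq\set{0,\dots,b}$ records the lengths (minus one) of the initial segments of $\set{0,\dots,b}$ that are contained in $Y$. Since $0\le b$ by \textbf{Basic 9} and $0\in Y$, we get $0\in U$, hence $0<\abs U$ by \textbf{L1}, hence $\abs U\neq 0$, so by \textbf{Basic 12} there is $m$ with $m+1=\abs U$; then $m\in U$ by \textbf{L2}, and $m\le b$ since $m\in U$ forces $m<\abs U\le b+1$.

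The crux is to rule out $m<b$. If $m<b$, then from $m\in U$ we have $\forall j\le m\,(j\in Y)$, in particular $m\in Y$; the successor property of $Y$ applies (as $m<b$) and yields $m+1\in Y$, whence $\forall j\le m+1\,(j\in Y)$; and $m+1\le b<b+1$, so $m+1\in U$ by the defining property of $U$. But $m+1=\abs U$, so \textbf{L1} gives $m+1<\abs U=m+1$, contradicting the (routinely \VZ-provable) irreflexivity of $<$. Therefore $m=b$, so $b\in U$, and instantiating $\forall j\le b\,(j\in Y)$ at $j=b$ gives $b\in Y$, i.e.\ $\Phi(b)$. Since $b$ was arbitrary, $\forall z\,\Phi(z)$, and discharging the assumptions yields the desired induction instance.

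I expect no real obstacle here: the entire content is that $\SigZB$-COMP supplies the ``truth set'' $Y=\{i<b+1:\Phi(i)\}$ and the auxiliary ``prefix set'' $U$, after which \textbf{L1} together with \textbf{L2} hands us an element $m$ of $U$ whose successor provably escapes $U$, giving a one-line contradiction. The only points needing mild care are verifying that the defining formula of $U$ is genuinely $\SigZB$ (so that comprehension applies) and checking the degenerate cases $b=0$ and $U$ small — all handled by \textbf{Basic 9}, \textbf{Basic 12}, \textbf{L1} and \textbf{L2}. Equivalently, one may first derive the $\SigZB$ minimization principle $\SigZB$-MIN in \VZ\ by the same length-counting trick and then obtain $\SigZB$-IND from it by the usual least-counterexample argument; this is the route of \cite{CN10}.
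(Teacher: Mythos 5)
Your proof is correct. The paper does not prove this proposition at all---it is quoted verbatim as Corollary V.1.8 of \cite{CN10}---so there is no in-paper argument to compare against; your derivation is a sound, self-contained version of the standard length-counting argument, differing from the route in \cite{CN10} only in that you extract the \emph{maximal} element of the prefix set $U$ via \textbf{L2} rather than first establishing $\SigZB$-MIN and taking a least counterexample, a variant you yourself note is equivalent.
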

In the above induction axiom, $ x $ is a number variable and $ \Phi $ can have additional free variables of both sorts.

We  seek to define the determinant function in a theory via a \SigOneB-formula, where a function is said to be \emph{defined in a theory} if the theory can prove that given an input to the function there always exists a unique output. For the exact definition of definability of functions in \VZ\ (and \VNCTwo) consult the appendix (Section \ref{sec:Definability-in-Bounded-Arithmetic}). Note that the \SigOneB-definable functions of \VZ\ (equivalently, the \SigZB-definable functions of \VZ) are precisely the \FACZ\ functions, and that the \SigOneB-definable functions of \VNCTwo\  are precisely the \FNCTwo\ functions (see Theorem \ref{thm:definable_func_of_vnctwo} below).


\subsection{The Complexity Class \NCTwo}
\label{sec:the-complexity-class-NCTwo}

\label{sec:the-uniform-complexity-class-nctwo}
The uniform complexity class \NCTwo\ is defined using an alternating time-space (nondeterministic) Turing machine.

\para{Alternating Turing machines} An alternating Turing machine is a \emph{nondeterministic} Turing machine in which every state, except the halting states, is either an \emph{existential state} or a \emph{universal state}. A \textit{computation} in such a machine can be viewed as an (unbounded fan-in) tree of configurations as follows. A configuration is said to be \textit{existential} (resp.~\textit{universal}) if its state is existential (resp.~universal). In a computation tree of an alternating Turing machine every \emph{existential} configuration has one or more children, such that each child is a configuration reachable in one step from the configuration in the parent node; and every universal configuration has as its set of children \emph{all} configurations reachable in one step from the configuration on the parent in node. We say that a computation of an alternating Turing machine is \emph{accepting} when all the leaves of the computation tree  are accepting configurations. We say that an alternating Turing machine \emph{accepts an input} $x$ if there \textit{exists} an accepting computation tree whose root is the initial configuration with the input $x$.

A computation tree is said to have \textit{$k$ alternations }if the number of alternations between existential and universal states in every branch of the tree is at most $k$. An alternating Turing machine  is said to \emph{work in $f(n)$ alternations} if for every input $x$ of length $n$ the number of alternations in \emph{every }computation tree of $x$ is at most $f(n)$. A computation tree is said to have \textit{space $s$ }if the working space used in every configuration of the tree is at most $s$.  An alternating Turing machine is said to \emph{work in space $g(n)$} if for every input $x$ of length $n$ the space of every computation tree of $x$ is at most $g(n)$.\bigskip

\begin{definition}[Uniform \NCTwo]
\label{def:nctwo}
The uniform  complexity class \NCTwo\ is defined to be the class of languages that can be decided by alternating Turing machines with  $O(\log n)$ space and $O(\log^2 n)$ time. 
\end{definition}

%

\TODO{check \FO. Also didn't define it. Isn't logspace uniform enough? Same?}

We define the \emph{function} class \FNCTwo\ as the function class containing all number functions $f(\vec x, \vec X)$ and
string functions $F(\vec x,\vec X)$, where $\vec x$ and $\vec X$ are number and string variables,
respectively, such that the relation of the function is defined (resp.~bit-defined; see Definition \ref{def:sigB1-definability-in-a-langauge})\TODO{defined?} in \NCTwo\ (a binary relation $R$ is \textit{defined in \NCTwo} if the language containing the set of pairs in $R$ is decidable in \NCTwo).

\para{\NCTwo\ Boolean circuit families} 
Let $\{C_n\}_{n=1}^\infty$ be a family of Boolean circuits (with fan-in at most two $\lor,\land,\neg$ gates). We say that this family is an  \emph{\NCTwo\ circuit family} if every circuit $C_n$ in the family has depth $O(\log^2 n)$ and size $n^{O(n)}$. A circuit taken from a given Boolean  \NCTwo\ circuit family is said to be an \emph{\NCTwo-circuit}. It is known that the \NCTwo\ circuit value problem is complete under \ACZ-reductions for the class \NCTwo\ (Definition \ref{def:nctwo}). We say that $\{C_n\}_{n=1}^\infty$ is a \textbf{\textit{uniform \NCTwo-circuit family}} if its extended connection language is in \FO\ (we refer the reader to \cite[page 455]{CN10} for the definitions). This definition coincides with Definition \ref{def:nctwo}. 

For the definition of uniform \NCOne\ (and \ACOne) we also refer the reader to \cite{CN10}.


\subsection{The Theory \VNCTwo}\label{sec:vnctwo}
Here we define the theory \VNCTwo\ as developed in \cite{CN10}.
It is an extension of \VZ\ over the language \LTwoA\ where
we add the axiom stating the existence of a sequence of
values that represent the evaluation of monotone Boolean
circuits of $O(\log^2(n))$-depth. It is known (cf.~\cite{CN10}) that the Monotone Boolean Circuit Value problem for circuits of $O(\log^2(n))$-depth is complete under \ACZ-reductions for \NCTwo.

 The \NCTwo\ \textsc{circuit value problem} is the problem that  determines the value computed by a Boolean \NCTwo-circuit, given a 0-1 assignment to its input variables. An input circuit  to the problem is encoded  as a \textit{layered circuit} with $d+1$ layers, namely, a circuit in which every node in layer $j$ is connected only to zero or more nodes in layer $j+1$. The actual evaluation of such an (\NCTwo) circuit within the class \NCTwo\ is done in stages, where we start from layer $0$
and ``compute'' (using alternations and nondeterminism)
the values of every node in every layer. Formally, we define this evaluation process as follows (see also \cite[Chap.~IX.5.6]{CN10}).

The layered monotone Boolean circuit with $d+1$ layers is encoded with a string variable $I$, with $|I|\le n$, which defines the (Boolean) input gates to the circuit. Then we have a string variable $G$ such that $G(x,y)$, for $x\in[d]$, holds iff the $y$th gate in layer $x$ is $\land$, and is $\lor$ otherwise. Also the wires of $C$ are encoded by a three-dimensional array, namely a string variable $E$ such that $E(z,x,y)$ holds iff the output of gate $x$ on layer $z$ is connected to the input of gate $y$ on layer $z+1$. To compute the     value of each of the gates in the circuit $C$ on input $I$, simply compute the values of the gates in each layer, starting from the input layer, in $d+1$ stages, using the values of the previous layer. The formula $\delta_{LMCV}(n,d,E,G,I,Y)$ below formalizes this evaluation procedure (where \textit{LMCV} stands for ``layered monotone circuit value"). The two-dimensional array $Y$ stores the result of computation, namely the evaluation string: for $1\le z\le d$, row $Y^{[z]}$ contains the gates on layer $z$ that output $1$.
\begin{multline}\label{eq:LMCV}
\delta_{LMCV}(n,d,E,G,I,Y)\equiv \\
 \forall x<n\forall z<d~ 
    \Big(
        \left(
            Y(0,x)\leftrightarrow I(x)
        \right)
        \land \\
        \big(
            Y(z+1,x)\leftrightarrow
                \big(\big(
                    G(z+1,x)\land\forall u<n,                         E(z,u,x)\to \\ Y(z,u)
                 \big)\lor 
                \left(
                    \neg G(z+1,x)\land \exists                         u<n, E(z,u,x)\land                             Y(z,u)
                \right)
         \big)
   \big)\Big).
\end{multline}     
The following formula states that the circuit with underlying graph $(n,d,E)$ has fan-in two:
\begin{multline}
Fanin2(n,d,E)\equiv \\ \forall z<d \,\forall x<n\exists u_1<n\exists u_2<n\exists v<n (E(z,v,x)\to\\ (v=u_1\lor v=u_2)\big).
\end{multline}
%

Finally, we arrive at the definition of \VNCTwo:
 \begin{definition}[\VNCTwo]\label{def:vnctwo}
The theory \VNCTwo\ has vocabulary \LTwoA\ and is axiomatized by the axioms of \VZ\ and the axiom:
\begin{equation*}
Fanin2(n,|n|^2,E)\to  \exists Y \le \langle |n|^2+1,n \rangle \delta_{LMCV}(n,|n|^2,E,G,I,Y).
\end{equation*}
\end{definition} 
In this definition $\langle\cd \rangle $ is the pairing function, and $\langle |n|^2+1,n \rangle$ is an upper bound on the length needed for the two-dimensional array $Y$. Also, note that given a natural number $n$ the binary representation length of $n$, denoted $|n|$, that is, $\lceil \log_2(n+1)\rceil$, is an \ACZ\ function of $n$ (see \cite[Exercise III.3.30]{CN10}). 
\smallskip 

Recall the concept of a \SigOneB-definable function in a theory (see the  appendix Section \ref{sec:Definability-in-Bounded-Arithmetic}).
The following is the main theorem for \VZ\ and \VNCTwo:

\begin{theorem}\rm{(\cite[Corollaries V.5.2 and IX.5.31]{CN10}})\label{thm:definable_func_of_vnctwo}
A function is $\SigOneB$-definable in \VZ\ iff it is $\SigZB$-definable in \VZ\ iff it is in \FACZ. A function is $\SigOneB$-definable in \VNCTwo\ iff it is in \FNCTwo. 
\iddo{}
\end{theorem}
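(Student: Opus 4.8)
This is the standard ``main theorem'' tying \VZ\ and \VNCTwo\ to their canonical function classes, quoted here from \cite{CN10}; I sketch the plan one would follow to establish it. Observe first that all three equivalences reduce to two substantive claims plus a trivial inclusion: if (i) every \FACZ\ function is $\SigZB$-definable in \VZ\ and every \FNCTwo\ function is $\SigOneB$-definable in \VNCTwo\ (the ``easy'' direction), and (ii) every $\SigOneB$-definable function of \VZ\ is in \FACZ\ and every $\SigOneB$-definable function of \VNCTwo\ is in \FNCTwo\ (the ``witnessing'' direction), then for \VZ\ one obtains the cycle $\SigZB$-definable $\Rightarrow$ $\SigOneB$-definable $\Rightarrow$ \FACZ\ $\Rightarrow$ $\SigZB$-definable, which closes all the equivalences.

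For the easy direction over \VZ: a function $F\in\FACZ$ has a bit-graph which is an \ACZ\ relation, hence is expressed by a $\SigZB$-formula $\varphi_F(x,i)$ (``bit $i$ of $F(x)$ is $1$'') via the characterisation of uniform \ACZ\ as \FO$(+,\times,\le)$; since the underlying circuit has constant depth, no evaluation principle is needed --- $\SigZB$-COMP directly produces the output string $Y$ with $Y(i)\leftrightarrow\varphi_F(x,i)$, and {\bf SE} gives uniqueness --- so $F$ is $\SigZB$-definable in \VZ. For \VNCTwo\ one first notes that \VNCTwo\ $\SigOneB$-defines the value string of a monotone layered circuit of depth $|n|^2$ directly from its defining axiom, uniqueness being proved by $\SigZB$ number induction on the layer index. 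One then proves the closure lemma that the $\SigOneB$-definable functions of \VNCTwo\ are closed under \FACZ-composition, and more generally under \ACZ-reductions (definitions making $\SigZB$-many ``oracle calls'' to previously defined functions). Since \FNCTwo\ equals the \FACZ-closure of the monotone $\log^2$-depth circuit-value function (which is \FNCTwo-complete under \ACZ-reductions), every \FNCTwo\ function is thereby $\SigOneB$-definable in \VNCTwo.

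For the witnessing direction I would take the universal-theory plus Herbrand route. Form the extensions $\overline{\VZ}$ and $\overline{\VNCTwo}$ by adjoining a function symbol for every \FACZ\ (resp.\ \FNCTwo) function together with its defining axioms, and by extending $\SigZB$-COMP and number induction to the enriched formulas. The crucial lemma is that these are conservative over \VZ\ (resp.\ \VNCTwo): any model of the base theory expands canonically to a model of the universal theory --- the new symbols interpreted by the corresponding circuit computations performed inside the model --- and the expansion still satisfies comprehension and induction for the enriched bounded formulas, since each such formula is equivalent to a bounded $\LTwoA$-formula applied to (in the \VNCTwo\ case) the circuit-value function, for which the base theory proves comprehension. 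Granting conservativity: if $F$ is $\SigOneB$-defined in \VZ\ by $\exists Y\,\varphi(x,Y)$ with $\varphi\in\SigZB$, then $\VZ\vdash\forall x\,\exists Y\,\varphi$, hence $\overline{\VZ}\vdash\forall x\,\exists Y\,\varphi$; every axiom of $\overline{\VZ}$ being universal, Herbrand's theorem yields an $\overline{\VZ}$-term $t(x)$ with $\overline{\VZ}\vdash\varphi(x,t(x))$, and a term built from \FACZ-symbols computes an \FACZ\ function, which must coincide with $F$. The identical argument with $\overline{\VNCTwo}$ and \FNCTwo-terms settles the \VNCTwo\ case. (Alternatively one argues syntactically: bring a proof of $\forall x\,\exists Y\,\varphi$ to anchored sequent-calculus normal form and extract witnesses by induction on the proof, the $\SigZB$-COMP and circuit-value axioms being exactly what keeps the extracted witnessing functions inside \FACZ, resp.\ \FNCTwo.)

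The main obstacle is the conservativity lemma, and in the \VNCTwo\ case specifically the need to check that closing the base function under \ACZ-reductions --- unavoidable if one is to capture all of \FNCTwo --- does not leave \NCTwo. At the level of complexity this is merely closure of \NCTwo\ under \ACZ-reductions; the real work is reflecting that closure \emph{inside the theory}, i.e.\ verifying that the definitional apparatus building $\overline{\VNCTwo}$ only ever invokes the circuit-value axiom on circuits whose depth stays $O(\log^2 n)$ and whose connection language stays in \FO. This bookkeeping is the core of the Cook--Nguyen ``$\mathbf V$-theory / $\overline{\mathbf V}$-theory'' machinery; once it is in place, both directions --- and the propositional translation that usually accompanies them --- are routine.
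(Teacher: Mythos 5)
The paper does not prove this statement at all---it is quoted verbatim from \cite[Corollaries V.5.2 and IX.5.31]{CN10}---and your sketch is a faithful reconstruction of the standard Cook--Nguyen argument that the citation points to (the easy direction via \SigZB-bit-definability of \FACZ\ bit-graphs and \ACZ-closure of the circuit-value axiom, the hard direction via the universal conservative extensions $\overline{\VZ}$, $\overline{\VNCTwo}$ and Herbrand/anchored-proof witnessing). So the proposal is correct and takes essentially the same route as the source the paper relies on; no gap to report.
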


Note that the fact that a function is defined in the theory does not mean that we can prove all of its properties, or even anything interesting about it. To actually prove statements about a  $\SigOneB$-definable function in \VNCTwo, for example, we need to carefully consider the  $\SigOneB$-formula defining it, formulate the property that we want to prove in the theory as a formula in the language \LTwoA, and verify that indeed the formula is provable in the theory.   



\subsection{Polynomials and Algebraic Circuits}\label{sec:algebraic_circuits} \label{sec:notation}
For a good monograph  on algebraic circuits and their complexity see Shpilka and Yehudayoff \cite{SY10}. Let \G\ be a ring. Denote by $\G[X]$ the ring of (commutative) polynomials with coefficients from $ \G $ and variables $X:=\{x_1,x_2,\,\dots\,\}$. A \emph{polynomial }is a formal linear combination of monomials, where a \emph{monomial} is a product of variables. Two polynomials are \emph{identical }if all their monomials have the same coefficients. The \emph{degree} of a polynomial is the maximal total degree of a monomial in it.

Algebraic circuits and formulas over the ring \G\   compute polynomials in $\G[X]$ via addition and multiplication gates, starting from the input variables and constants from the field. More precisely, an \emph{algebraic circuit} $C$ is a finite directed acyclic graph (DAG) with \textit{input nodes} (i.e., nodes  of in-degree zero) and a single \textit{output node} (i.e.,  a node of out-degree zero).  Input nodes are labeled with  either a variable or a field element in $\F$.
All the other nodes have in-degree two (unless otherwise
stated) and  are labeled by
either an addition gate  $+$ or a product gate   $\times$. An input node is said to \emph{compute}  %
%
the variable or scalar that   labels  itself. A $+$ (or $\times$) gate
is said to compute the addition (product, resp.) of the (commutative) polynomials
computed by its incoming nodes. An algebraic circuit is called a \emph{formula}, if the underlying directed acyclic graph  is a tree (that is, every node has at most one outgoing edge). The \emph{size} of a circuit $C$ is the number of nodes in it, denoted $|C|$,
and the \emph{depth} of a circuit is the length of the longest directed path in it. \mar{Check cdot was change to  times gate!}

We say that a polynomial is \emph{homogeneous} whenever every monomial in it has the same (total) degree. 

\begin{definition}[Syntactic-degree $\deg(\cd)$]\label{def:syntactic-degree}
Let $C$ be a circuit and $v$ a node in $C$. The \emph{syntactic-degree} \(\deg(v) \) of $v$ is defined as follows:\vspace{-0pt} 
\begin{enumerate}
\item If $v$ is a field element or a variable, then $\deg(v):=0$ and  $ \deg(v):=1$, respectively;

\item  If $v=u+w$ then $\deg(v):= \max\{\deg(u), \deg(w)\}$;
\item If $v=u\cdot w$ then $\deg(v):= \deg(u)+\deg(w)$.
\end{enumerate}
\end{definition}

An algebraic circuit is said to be \emph{syntactic-homogeneous} if for every plus gate $u+v$, $\deg(u)=\deg(v)$.

Given a  circuit $ F $ and a node $u$ in $ F $, $ F_u$ denotes the subcircuit of $F$ with output node $u$. 
\label{defu}
If $F,G$ are two circuits then
\[
\mbox{  $F\cplus G$ and $F\ctimes G$ }
\]
denotes  \emph{any} circuit $H$ whose output node is $u+v$ or $ u\times v$, respectively, where $H_u$ is the circuit $F$ and $H_v$ the circuit $G$. In other words, $F\cplus G$ denotes a circuit with output node $+$ with the two incoming subcircuits $F$ and $G$, where $F$ and $G$ \emph{may not be disjoint} (so $F\cplus G$ is a set of possible different circuits, from which we assume one is picked; the two subcircuits $F,G$ of the output node of $F\cplus G$ are \emph{identical} to $F,G$, respectively). Furthermore, \[
\mbox{  $F+ G$ and $F\times G$ }
\]
 denote the \emph{unique} circuit of the form $F'\cplus G'$ and $F'\ctimes G'$, respectively, where $F'$, $G'$ are disjoint copies of $F$ and $G$. In particular, if $F$ and $G$ are formulas then so are $F+ G$ and $F\times G$. For example, $(1+x_5)\ctimes x_5$ can be any of the following two circuits:


\newlength{\mylength}
\setlength{\mylength}{\unitlength}

\setlength{\unitlength}{0.4cm}
~~~~~~~~~~~~~~~~~~~~~~~~~~~~~~~~~~~~~~~~~~~~~
\begin{picture}(2,4)
\thicklines
\put(1,0.5){$1$}
\put(1.4,0.9){\line(1,1){1}}
\put(2.4,1.9){+}
\put (2.9, 1.9){\line(1,-1){1}}
\put(4, 0.5){$x_5$}
\put(2.8, 2.3){\line(2,1){1}}
\put(3.8, 2.7){$\times$}
\put (4.4, 2.7){\line(1,-2){0.9}}
\put (5.3, 0.5){$x_5$}
\put(8,0.5){$1$}
\put(8.4,0.9){\line(1,1){1}}
\put(9.4, 1.9){+}
\put(9.9, 1.9){\line(1,-1){1}}
\put(11, 0.5){$x_5$}
\put(9.8, 2.3){\line(3,1){1.2}}
\put(11.1, 2.7){$\times$}
\put (11.3, 1){\line(1,6){0.25}}
\end{picture}

\setlength{\unitlength}{\mylength}


\subsection{Polynomial Identity (PI-) Proofs}\label{sec:PI-proofs}

In this section we give the necessary background on the
PI-proof system  \PC. This proof-system was first introduced in \cite{HT08} (under the name ``arithmetic proofs" and for algebraic formulas instead of algebraic circuits), and was subsequently studied in \cite{HT12}. 

\textit{PI-proofs}, as originally introduced in \cite{HT08}, denoted \PC\ (and \PC(\G) when we wish to be explicit about the ring  \G),
are sound and complete proof systems for the set of polynomial identities of \G, written as equations between algebraic circuits. A PI-proof starts from axioms like associativity, commutativity of addition and product, distributivity of product over addition, unit element axioms, etc., and derives new equations between algebraic circuits $F=G$ using rules for adding and multiplying two previous identities. The axioms of \PC\ express reflexivity of equality, commutativity and associativity of addition  and product,  distributivity, zero element, unit element, and true identities in the field.

Algebraic circuits in PI-proofs are treated as purely syntactic objects (similar to the way a propositional formula is a syntactic object in propositional proofs). Thus, simple computations such as multiplying out brackets, are done explicitly, step by step.     



\begin{definition}[\textbf{\textit{PI-proofs}}; System \PC(\G), \cite{HT08,HT12}] \label{def:PI-proofs} The system \PC$(\G)$ proves equations of the form $F=G$ over the ring \G, where $F,G$ are algebraic circuits over $\G$. The inference rules of $\PC$ are (with $F,G,H$ ranging over  algebraic circuits, and where an equation below a line can be derived from the one above the line):
\newcommand{\mez}{\qquad}
\begin{align*}
&{\rm R1}&   \frac{F=G}{G=F} \mez &~~~~~{\rm R2}&  \frac{F=G\qquad G=H}{F=H} && \\
&{\rm R3}&  \frac{F_1=G_1 \qquad  F_2=G_2}{F_1+ F_2= G_1+ G_2}&~~~~~ {\rm R4}& \frac{F_1=G_1\qquad F_2=G_2}{F_1\cdot F_2= G_1\cdot G_2} .&&
\end{align*}
The axioms are equations of the following form, with $F,G,H$ circuits:

\noindent\HRule 

A1 ~~ $F=F$  

A2 ~~~$F+G=G+F$ 

A3 ~~  $F+(G+H) = (F+G)+H$

A4 ~~ $F\cdot G=G\cdot F$

A5 ~ ~$ F \cdot (G\cdot H)= (F\cdot G)\cdot H$ 

 A6~~~   $F\cdot (G+H) = F\cdot G+F\cdot H$    

A7 ~  ~$F+0 = F$  

A8 ~ ~$F\cdot 0 = 0$

A9 ~~  $F\cd 1 = F$   

A10 ~~ $a=b+c\,, ~ a^\prime=b^\prime\cdot c^\prime\, $ ~ (if $a,b,c,a^\prime,b^\prime,c^\prime\in  \G$,  

~~~~~~~~~~~~~~~~~~~~~~~~are such that the equations hold in \G);

C1 ~~$F\cplus G=F+G$\, 

C2  ~~$F\ctimes G=F\cdot G$\, 

\noindent\HRule

\smallskip 

A \emph{\PC(\G)-proof                                                                                                                                                                                                                                                                                                                                                                                                                                                                                                                                                                                                                                                                                                                                                                                                                                                                                                                                                                                                                                                                                                                                                                                                                                                                                                                                                                                                                                                                                                                                                                                                                                                                                                                                                                                                                                                                                                                                                                                                                                                                                                                                                                                                                                                                                                                                                                                                                                                                                                                                                                                                                                                                                                                                                                                                                                 } is a sequence of equations, called \textbf{proof-lines},  $F_1=G_1,\, F_2=G_2,\dots, F_k=G_k$, with $F_i, G_i$ circuits, such that every equation is either an axiom or was obtained from previous equations by one of the inference rules. The \textbf{size} of a proof  is the total size of all circuits appearing in the proof. The \emph{number of steps} in a proof is the number of proof-lines in it.
\end{definition}

A PI-proof can be easily verified for correctness in deterministic  polynomial-time (assuming the field (or ring) has efficient representation; e.g., the field of rational numbers or the the ring \Z), simply by syntactically checking that each proof line is derived from previous lines by one of the inference rules.  

\newcommand{\demph}[1]{\textbf{\emph{#1}}}

\subsection{Circuits and Proofs with Division}\label{sec:circ_proof_with_divison}
We denote by $\G(X)$ the field of formal rational functions in the variables $X, $ where a formal rational fraction is a fraction of two formal polynomials with coefficients from \G. In this work we will consider \G\ to be the ring of integers \Z. We will not be interested in `inverse elements' in \Z\ (excluding the element 1),  nor much  in the completeness or soundness of proof systems for rational functions (like $\PIZ$ described below), because the theory will only prove \emph{syntactical} properties of these proof systems (hence, no actual `division' is performed over the integers).

 It is possible to extend the notion of a circuit so that it computes rational functions in $\G(X)$ (\cite{HT12}). This is done in the following way:  a \demph{circuit with division} $ F $ is an algebraic circuit which may contain an additional type of gate with fan-in $1$, called an \emph{inverse} or a \emph{division} gate, denoted $(\cd)^{-1}$. A division gate $v^{-1}$ (i.e., a division gate whose incoming circuit is $v$) computes the rational function $1/\widehat{v}\in \G(X)$, assuming $v$ does not compute the zero polynomial. If the circuit with division $F$ contains some division gate  $v^{-1}$ such that $v$ computes the zero polynomial, then we say that the circuit $F$ is \emph{not well-defined}, and is otherwise \emph{well-defined}. 
Note, for instance, that the circuit  $(x^{2}+x)^{-1}$ over $ \GFTwo$ is well-defined, since $x^{2}+x$ is not the zero polynomial   (although it vanishes as a function over \GFTwo, for example).
\TODO{Check if things are okay here with \G\ a ring not a field.}

\newcommand{\wh}{\widehat}

We define the system $\PI(\G)$, operating with equations $F=G$ where $F$ and $G$ are circuits with division \cite{HT12}, as follows: 
first, we extend the axioms of $\PC(\G)$ to apply to well-defined circuits with division. Second, we add
 the following new axiom: 
\[
 {\rm D}~~~~~~~~~ F\cdot F^{-1}= 1 \,,
~~\mbox{provided that $F^{-1}$ is well-defined.}
\]
Note that if $F^{-1}$ is well-defined then both $F$ is well-defined and $F\neq 0$. We sometimes call the \PI-system \emph{PI-proof} as well (although it operates with rational functions and not merely polynomial).    

We say that a \PI-proof is \emph{\textbf{syntactically correct}} if it is a correct \PI-proof except that in the axiom D above $F^{-1}$ is \emph{not} necessarily well-defined. Since we do not know how to check in uniform \NCTwo\ that a circuit is well-defined, we do not know how to express the full correctness of \PI-proofs in the \VNCTwo. For our purposes it is sufficient that \VNCTwo\ expresses only the syntactic correctness  of \PI-proofs.

The \emph{syntactic-degree of a circuit $C$ with division} is defined as  
$$\deg(C):=\deg(\Num(C))+\deg(\Den(C)).$$

\section{Carrying the Proof in the Theory: Overview}\label{sec:carry-proof-in-theory}


Here we provide a detailed overview of the proof of  the determinant identities in the theory, as highlighted before in Section \ref{sec:overview}. \iddo{}

We assume all polynomials are over the ring of integers $\Z$. We reason inside \VNCTwo\ (and \VZ) about $\PIZ$- and $\PCZ$-proofs (Definition \ref{def:PI-proofs} and 
Section \ref{sec:circ_proof_with_divison}). We use the following \emph{reflection principle}, stating that if an equation has a proof then the equation is true:

%

\begin{theorem}[\PCZ-reflection principle; In \VNCTwo]\label{thm:ovr:sound-of-arithmetic-proofs}
Let $\pi$ be an $O(\log^2 n)$-depth $\PCZ$-proof of the  equation $F=G$. Then $F=G$ is \emph{true} in \Z; that is, $\forall \alpha \in\Z^n(F(\alpha)=G(\alpha))$.
\end{theorem}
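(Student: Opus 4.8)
The plan is to argue by induction on the number $k$ of proof-lines of $\pi=(F_1=G_1,\dots,F_k=G_k)$, keeping an arbitrary $\alpha\in\Z^n$ fixed throughout; here $F(\alpha)$ denotes the integer value of the circuit $F$ under $\alpha$, formalized in \VNCTwo\ via the \FNCTwo\ evaluation function for $O(\log^2 n)$-depth algebraic circuits over $\Z$ (Section~\ref{sec:algebraic_circuit_value_problem}). Recall that every circuit occurring in $\pi$ has $O(\log^2 n)$ depth \emph{and} polynomially bounded syntactic-degree, so its value under $\alpha$ is a well-defined integer of polynomially many bits, computable from the circuit and $\alpha$ in \FNCTwo. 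Hence, by Theorem~\ref{thm:definable_func_of_vnctwo}, there is a string $V$, \SigOneB-definable in \VNCTwo\ from $\pi$ and $\alpha$, recording for each line $i$ the two integers $F_i(\alpha)$ and $G_i(\alpha)$; moreover, since the evaluation function is provably correct in \VNCTwo, $V$ provably respects the gate operations: $V$'s entry for a circuit $H_1\cplus H_2$ (or $H_1+H_2$) is the sum of its entries for $H_1$ and $H_2$, that for $H_1\ctimes H_2$ (or $H_1\times H_2$) is their product, and that for a constant- or variable-leaf is the obvious value (the sharing versus disjoint-copy distinction being irrelevant, as the value depends only on a circuit's structure and on $\alpha$). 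With $V$ in hand, $\Phi(i)\;:\equiv\;$ ``the two integers recorded for line $i$ in $V$ are equal'' is \SigZB\ with $\pi,\alpha,V$ as parameters, so by Proposition~\ref{prop:number-induction} it suffices to verify $\Phi$ on the axiom lines and that $\Phi$ is preserved by the inference rules.

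For the base case I would check that each axiom, evaluated under $\alpha$, becomes a true integer equation: A1 is trivial; A2--A9 and C1--C2 reduce, via the gate-operation property of $V$, to commutativity and associativity of integer $+$ and $\cdot$, distributivity, and the laws $x+0=x$, $x\cdot 0=0$, $x\cdot 1=x$ for integers coded as bit-strings, all provable in \VNCTwo\ (cf.~\cite{CN10}); and A10 holds because its side condition is precisely the assertion that the relevant equality holds between the integer constants involved, while a constant-leaf evaluates to that constant. For the inference rules, R1 and R2 are immediate since equality of the integers in $V$ is symmetric and transitive; and if line $i$ is obtained by R3 from earlier lines $F_a=G_a$, $F_b=G_b$ (so $F_i$ is $F_a+F_b$ and $G_i$ is $G_a+G_b$), then $\Phi(a)$, $\Phi(b)$ and the gate-operation property give $F_i(\alpha)=F_a(\alpha)+F_b(\alpha)=G_a(\alpha)+G_b(\alpha)=G_i(\alpha)$, the middle equality being just that integer addition is a function, with R4 identical up to replacing $+$ by $\cdot$. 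This yields $\forall i\le k\,\Phi(i)$, and applying it to the line containing $F=G$ gives $F(\alpha)=G(\alpha)$; since $\alpha$ was arbitrary, $\forall\alpha\in\Z^n\,(F(\alpha)=G(\alpha))$.

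The induction itself is routine, so the real content --- and the step I expect to be the main obstacle --- is the infrastructure it rests on: that \VNCTwo\ both defines and proves correct the evaluation of $O(\log^2 n)$-depth, polynomial-syntactic-degree algebraic circuits over \Z\ (via layering, conversion to a monotone Boolean \NCTwo-circuit with negations pushed to the bottom, and the monotone \NCTwo-circuit-value axiom of \VNCTwo), and that \VNCTwo\ proves the ring identities for integers-as-bit-strings used above. Granting these, the only further care needed is the $\cplus/\ctimes$-versus-$+/\times$ bookkeeping, the observation that checking A10's side conditions (equalities between integer constants, hence integer products) lies in \NCOne\ and is therefore available to \VNCTwo, and ensuring all evaluated values stay polynomially bounded --- which is exactly what the polynomial-syntactic-degree hypothesis on $\pi$ provides.
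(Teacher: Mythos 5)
Your proposal is correct and follows essentially the same route as the paper: number induction on the proof-lines via a \SigZB\ formula asserting that both sides of each line evaluate to the same integer, with the base case and rule cases reduced to the soundness of the axioms and of R1--R4 with respect to the \FNCTwo\ evaluation function. The "infrastructure" you rightly flag as the real content --- that \VNCTwo\ proves the evaluation function respects $+$ and $\times$ (e.g.\ that evaluating $F_1+F_2$ agrees with \textsf{StringAdd} applied to the separate evaluations, via the layered monotone Boolean circuit machinery) --- is exactly what the paper isolates as Lemma~\ref{lem:sound-of-PCZ-rules} and proves in Section~\ref{sec:proving-the-reflection-principle-for-PCZ}.
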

Theorem \ref{thm:ovr:sound-of-arithmetic-proofs} is proved as follows. We define the \emph{evaluation function} for $O(\log^2 n)$-depth algebraic circuits over \Z\ as the function that receives an integer assignment $A$ and an $O(\log^2 n)$-depth algebraic circuit $C$. The algorithm then converts $C$ into a layered and monotone \textit{Boolean} \NCTwo\ circuit, where the inputs are the bit-strings corresponding to  $A$. And then evaluates the Boolean circuit using evaluation of \NCTwo\ circuits (\SigOneB-definable in \VNCTwo), and finally outputs the result (see  Section \ref{sec:algebraic_circuit_value_problem}).   

We also need to show in \VNCTwo\ that the rules and axioms of $O(\log^2 n)$-depth \PCZ\ are sound with respect to the above evaluation function. This is proved by inspection of each of the axioms and rules.

Note that we do \emph{not} know how to prove the soundness of \PI-proofs in \VNCTwo. This is because the division axiom $F\cdot F^{-1}= 1$ requires that $\widehat F\neq 0$, and we do not know how to check in \NCTwo\ that a circuit does not compute the zero polynomial. However, we observe that for our purposes it is enough to show that given a specific object (a \PI-proof for the determinant identities) it is possible to obtain from this object a new legitimate \PC-proof of the determinant identities. See the example in the  \textit{Eliminating division gates} part below. 

\TODO{Should be the same Theorem as above!}
\medskip


\para{The determinant function \Det\ in the theory} We now describe slightly informally the (uniform-\NCTwo) determinant function \Det\ defined in the theory. Essentially, each step in the algorithm corresponds to a (more involved) step in the construction of the final PI-proof of the determinant identities in the theory (as described after the algorithm below). 
\bigskip
\iddo{}

\noindent\hrule 
\smallskip
 
\noindent\textsc{Algorithm  \Det} (in \VNCTwo)
\smallskip

\noindent\hrule 
\medskip

\noindent\textbf{Input}: an $n\times n$ integer matrix $A.$

\noindent\textbf{Output}: $z\in\Z$, where $z$ is the determinant of $A$.

\smallskip

\begin{enumerate}
\item Write down an unbalanced algebraic circuit $\cdeti(X)$ with division  that computes the symbolic \nbyn\ determinant polynomial, over the variables $X=\{x_{ij}\}_{i,j\in[n]}$. This circuit captures the standard recursive formula for computing the determinant of matrices, using Schur complement (intuitively, it captures the Gaussian elimination procedure). For details see Section \ref{sec:circuit-with-div-for-det}.
\item Consider the circuit $\cdeti(I_n+zX)$ as computing a univariate polynomial in the new variable $z$. Using this circuit, construct a new circuit $\taydet(X)$ computing the  $n$th term of the Taylor expansion of $\cdeti(I_n+zX)$ around $z=0$.
This is a circuit  \emph{with }a division gate, of exponential syntactic-degree, that computes the determinant \emph{as a polynomial}. For details see Section \ref{sec:PI-proof-from-det-div-to-Taylor}

\item Convert the circuit $\taydet(X)$ into a syntactic homogeneous circuit without division of  syntactic-degree $n$, denoted $\taydetsharp(X)$. For details see Section \ref{sec:Reducing-the-Syntactic-Degree-of-the-Determinant-Polynomial}.
 \iddo{}

\item Make sure that  constant leaves in $\taydetsharp(X)$, when treated as if they are variables, do not lead up to nodes of exponential syntactic-degree in $\taydetsharp(X)$. This is done by constructing the circuit $\taydetsharpprime(X)$, based on $\taydetsharp(X)$. For details see Section \ref{sec:taking-care-of-high-degubp-nodes}.




\item Balance $\taydetsharpprime(X)$ via a (uniform) balancing algorithm, to yield a polynomial size and  $O(\lgsq)$-depth circuit without division denoted $\cdetbl(X)$ that computes the determinant polynomial. For details see Section \ref{sec:balancing-algebraic-circ-in-theory}. 

\item Evaluate the circuit $\cdetbl(X)$ with the input assignment $A$, using the algebraic  circuit evaluation function for $O(\lgsq)$-depth circuits, and output the resulting integer in binary.
For details see Section \ref{sec:applying-the-reflection}. 

\end{enumerate}
\medskip 

\hrule 
\bigskip

\TODO{The algorithm above has inaccuracies probably: do we need to first substitution bi-yi instead of $x_i$'s? Does Taylor depends on the non-nullifying assignment being 0 assignment?}

Since we show that all the parts in the algorithm above are $\SigOneB$-definable functions in \VNCTwo, the determinant function as defined above is $\SigOneB$-definable in the theory (namely, totally recursive).\medskip

Given the function \Det\ we now sketch the proof in \VNCTwo\ of the two determinant identities  (\ref{eq:1}), (\ref{eq:2}) below.
\mar{***change -1 to div  or something ,in notations *** }\medskip

\TODO{Something is not right: when do we do the (b-y) substitution? We need to do it before the Taylor expansion I believe.}



\para{Step 1: Existence of \PIZ-proofs with division gates} We show in \VZ\  a \SigZB-definable function that given a natural number $n$ outputs a $\PC^{-1}(\Z)$-proof \(\pi_0 \) of the following equations 
\medskip

\noindent\fbox{%
    \parbox{\textwidth}{%
\begin{equation}\label{eq:1}
\cdeti(X)\cd\cdeti(Y)=\cdeti(XY) 
\end{equation}
\begin{equation}\label{eq:2}
\cdeti(Z)=z_{11}\cdots z_{nn}, 
\end{equation}
    }%
}

\medskip
\noindent for $X,Y$ symbolic \nbyn\ matrices; that is, the $(i,j)$th entry of $X$ and $Y$ are the variables $x_{ij}$ and $y_{ij}$, respectively, and $Z$ a lower (equivalently, upper) triangular symbolic matrix in which the variable $z_{ij}$ is the $(i,j)$th entry of $Z$ iff $i\ge j$, and 0 otherwise.  

These are equations between algebraic circuits over \Z. This is a proof in which circuits have exponential syntactic-degrees (though the theory cannot express this fact). The circuits in the proof are not necessarily homogeneous, and have division gates. The theory can also only prove that the \PIZ-proof is syntactically correct (Section \ref{sec:circ_proof_with_divison}).  
Note that $\cdeti(X)$ computes the determinant as a rational function and not as a polynomial.
The construction of the proofs uses only the \SigZB-\textbf{COMP} axiom and thus is done already in \VZ. See Section \ref{sec:P-C-inverse-proof-of-1-2} for  details.
\TODO{The same as the problem in the DET algorithm: shouldn't we substitute (bi-yi) before Taylor step?}

\para{Step 2: From the determinant polynomial to a rational function}  For technical reasons  relating to eliminating both division gates and high syntactic-degrees, we will need to construct in the theory a $\PC^{-1}(\Z)$-proof of the determinant identities in which the   determinant circuits appearing in the  identities that are proved are, firstly, written as polynomials and not as rational functions, that is, as circuits without division; and secondly, have small syntactic-degree. Nevertheless, note that some intermediate  \PIZ\ proof-lines will contain the determinant written with division gates and having high syntactic-degree. The first task is achieved in the current step, and the second task in the next step.

 Let $F=F(\overline{x},z)$ be a circuit with division of syntactic-degree $d$. Similar to \cite{HT12}, we  define $\Coef_{z^{k}}(F)$ as a circuit
in the variables $\overline {x}$, computing the coefficient of $z^{k}$ in $F$, when $F$ is written as a power series at $ z=0$. In other words, $\sum_{i=0}^d\Coef_{z^i}(F)\cd z^i$ are the first $d+1$ terms in  the Taylor expansion of $F$ at $z=0$. 

Let $\taydet(X):=\Coef_{z^n}(\cdeti(I_n+zX))$ be\iddo{}  the circuit computing the $n$th term of the Taylor expansion of $\cdeti(I_n+zX)$ about  $z=0$. 
It is easy to see that  $\taydet(X)$ computes the determinant function: since every variable $x_{ij}$ is multiplied by $z$, the coefficient of $z^n$ is precisely the determinant. 

By construction, $\cdeti(I_n+zX)$ will compute the determinant \emph{as a polynomial} and will have only one division gate (this is where we differ from \cite{HT12}; due to the fact that we cannot simply substitute division gates $u^{-1}$ that compute 1 by the node 1, because the theory needs to express the correctness of this substitution in some way). 
Furthermore, since we work over \Z\ we need to make sure that the only invertible ring element needed to be used is the element 1.

We then show in \VZ\ the existence of a function that given a natural positive number $n$   outputs a $\PC^{-1}(\Z)$-proof of $\taydet(X)=\cdeti(X)$, for the \nbyn\ symbolic matrix $X$. Combined with the previous step, \VZ\ proves the existence of a \PIZ-proof, denoted $\pi_1$,  of the determinant identities \eqref{eq:1}, \eqref{eq:2}, in which the determinant circuit in \eqref{eq:1}, \eqref{eq:2} is replaced by $\taydet$. See Section \ref{sec:PI-proof-from-det-div-to-Taylor}. 




\para{Step 3: Reducing the syntactic-degree of the determinant polynomial} The circuit $\taydet(X)$ has exponential syntactic-degree (here we once more differ from \cite{HT12}, since we do not know how to formulate and prove the correctness of an \NCTwo-algorithm that  eliminates 0 nodes in general algebraic circuits, or nodes of high syntactic-degree that compute the zero polynomial).\iddo{} However, for  the next step, we need  $\taydet(X)$  to have a \textit{polynomial} syntactic-degree. We show in \VZ\ that there exists a $\PIZ$-proof of $\taydet(X)=\taydetsharp(X)$, where $\taydetsharp(X)$ has syntactic-degree $n$ and no division. This is done simply by a direct construction of such  a proof using the \SigZB-\textbf{COMP} axiom, and thus is carried out  in \VZ.

\TODO{check syntactic-degree assertion}  
Therefore, by previous steps, \VZ\ proves the existence of a \PIZ-proof of the determinant identities \eqref{eq:1}, \eqref{eq:2}, where the determinant  in these two equations is replaced by $\taydetsharp$ which is an algebraic circuit with no division gates and of syntactic-degree $n$. Denote this \PIZ-proof by $\pi_2$ (this proof does contain the determinant circuit written as \cdeti\ and $\taydet$ but only in intermediate proof-lines). See Section \ref{sec:Reducing-the-Syntactic-Degree-of-the-Determinant-Polynomial}. 
  
\para{Step 4: Bringing division gates to the top}
We say that a circuit $C$ \emph{has a division at the top} whenever $C$ is of the form $F\cd (G)^{-1}$ or $(G)^{-1}\cd F$, for two circuits $F,G$. If $F,G$ do not have division gates we say that $C$ \emph{has a single division gate at the top}. We need our circuits to have a single division gate at the top, because in the next step we need to   replace division gates by an ``approximating'' power series, but we do not know how to do it with nested divisions.      

We devise an \FACZ\ algorithm that takes an algebraic circuit with division, \emph{of any depth}, and outputs an algebraic circuit computing the same rational function   that has a \emph{single} division gate at the top.
Using this algorithm, we show in \VZ\ how to convert the  $\PIZ$-\emph{proof} $\pi_2$\TODO{Den(.) was already needed for the Taylor proof? Is this okay?} into a proof in which every circuit has  a  single division gate at the top or is division free. Denote the resulted proof by $\pi_3$.
\iddo{}
This step is shown in  Section \ref{sec:bring-div-to-top}.
\para{Step 5: Eliminating division gates}  
We now wish to eliminate the division gates from the $\PIZ$-proof $\pi_3$. 
Standard division elimination by Strassen  \cite{Str73} requires finding a total assignment to the variables, such that no division gate in the circuit equals zero under this assignment. However, we do not know how to uniformly find such assignments in uniform-\NCTwo, and so we do not know how to uniformly eliminate division gates from general algebraic circuits in \VNCTwo. We solve this by working out the division elimination only for \underline{those circuits in \(\pi_3\)}.   

In fact, the only properties of the proof-sequence $\pi_3$, as well as the proof-sequence $\pi_4$ constructed during this stage, that we need to express in the theory are about the proof-sequences having a good ``local" behaviour, namely that proof-lines in the resulted \PCZ-proof obtained after division elimination, are derived syntactically correct from previous lines according to the rules of \PCZ.  

We start with a simple example to illustrate the main idea in this and the next step, and then describe the current step in more detail.

\begin{example}
Recall that   the theory expresses only the syntactic correctness of \PIZ-proofs (due to the fact that we cannot verify that division gates do not compute the zero polynomial). In  \VZ\ we can reason  as follows about division elimination.  Start with the following \PIZ-proof: $x\cd x^{-1} = 1$. Then, apply the linear transformation $x\mapsto\ 1-x$ which yields $(1-x)\cd(1-x)^{-1}$. Substitute $1-x$ by the term $\inv_k(1-x)$ defined as: 
$$
\inv_k(1-x):=1+x+x^2+\dots+x^k,
$$ 
which serves to ``approximate'' the inverse of $1-x$ up to degree $k$, in the sense that $(1-x)\cd\inv_k(1-x)=1-x^{k+1}$. For a circuit $F$ denote by  $F^{(i)}$ the syntactic-homogeneous component of degree $i$ of $F$, which computes the sum of all (syntactic-)degree $i$ monomials in $F$. Then, \VZ\  can prove the following statement: 
\begin{quote}
\textit {``Let $k\ge 1$ be a natural number. Given $x\cd x^{-1}=1$, substitute  $1-x$ for $x$, and then substitute the circuit $\inv_k(1-x)$ for $(1-x)^{-1}$. Assuming $\hc{(1-x)}{0}=1$ has a \PCZ-proof, there exist \mbox{\rm \PCZ}-proofs of syntactic-degree at most $k$ for the following equations:\vspace{-0pt} 
\begin{align*}
\left((1-x)\cd\inv_k(1-x)\right)^{(0)}&=1,\\
\left((1-x)\cd\inv_k(1-x)\right)^{(i)}&=0, ~~~\hbox{for $1\le i\le k$.}"
\end{align*}}
\end{quote}
\end{example}
\iddo{}
\bigskip 

We now describe how to eliminate division gates in more detail.
Similar to  $\Coef(F)$, the use of $\inv_k(F)$, for a circuit $F$, involves using the inverse of the \emph{constant term of $F$}, namely, the inverse of $F^{(0)}$. This is why we need to make sure that the only invertible ring element to be used is 1 (and thus it has an inverse in \Z).
For this purpose we show that the assignment of identity matrices to the matrix-entry variables $X=\{x_{ij}\},Y=\{y_{ij}\}$ and $Z=\{z_{ij}\}$, for $i,j\in[n]$, in $\pi_3$ will result in  all  division gates $u^{-1}$ computing polynomials with a constant term 1 (though this statement is not expressed in the theory). 

Assuming for simplicity that $r_i$ (for $i\in J$) are all the variables appearing in $\pi_3$ and  $b$ is the assignment of identity matrices to the variables in $\pi_3$, substitute in $\pi_3$ the term $(b_i-w_i)$ for each $r_i$ (for all $i \in J$) denoting the obtained proof by $\pi_3'$. Then, by our assumption about identity matrices assignment, the all zero assignment $\overline 0$ to the $w_i$ variables in  $\pi_3'$ does not nullify any division gate  in $\pi_3'$. Furthermore, we show that under this assignment every division gate provably in \PIZ\ computes the polynomial $1$. Therefore, in the theory, we  construct this \PIZ-proof $\pi'_3$ (which is simply a substitution instance of $\pi_3$).

As exemplified above, let $\inv_n(H)$ be the truncated power series of $H^{-1}$ over the point determined by the identity matrices to the variables of the entries of the matrices $X,Y,Z$. Loosely speaking, this truncated power series serves as the inverse polynomial of $H$ ``up to the $n$th power". Specifically, $\widehat{H}\cd\widehat{{\rm Inv}_n(H)}=1+\hbox{[\small terms of degree $>n$]}$ (note again that  \VNCTwo\ cannot necessarily prove this equality, since general evaluation of (unrestricted depth) algebraic circuits is not known to be defined in the theory).
For every circuit $C$ with a top division gate $H^{-1}$, \VZ\ proves there exists a corresponding division-free circuit $C'$, obtained by replacing the division gate $H^{-1}$ in $C$ by $\inv_n(H)$.

Let $\pi_4$ be the corresponding division-free proof-sequence obtained from $\pi'_3$ by replacing every circuit with the corresponding division-free circuit as above. By itself $\pi_4$ is not a legal $\PCZ$-proof, since the axiom of division  in $\PC^{-1}(\Z)$ does not translate into an axiom in $\PCZ$. In other words, the axiom D of division: $F\cd F^{-1}=1$ (provided that for every division node $u^{-1}$ in $F^{-1}$, including $F^{-1}$ itself, $u$ does not compute the zero polynomial; see Definition \ref{def:PI-proofs}), 
translates into $F\cd\inv_n(F) = 1$, which is neither a legal axiom, nor  a true identity (since $F\cd\inv_n(F) = 1+\hbox{[\small terms of degree $>n$]}$). We fix this problem as follows: first, we break this equation into its lower $n+1$ syntactic homogeneous components, thereby also getting rid of terms of syntactic-degree $>n$. Second, we will need to construct explicitly \PCZ-proofs of  $(F\cd\inv_n(F) )^{(0)}= 1$ (using the notion  of \emph{provably good division gates}; see Section \ref{sec:division-elim-from-proofs}, as well as Lemma \ref{lem:inv}).



Step 5 is shown in details in section \ref{sec:division-elim-from-proofs}.


\para{Step 6: Eliminating high degrees} Here we eliminate the high syntactic-degree ($>n$) parts in the circuits appearing in $\pi_4$ (these high syntactic-degree circuits appear in the middle of the proof, and not in the identities proved). This is done by homogenizing the proof $\pi_4$. Specifically, we show a \SigZB-definable function in \VZ\ that receives an algebraic circuit $G$ of syntactic-degree $k$ and converts it into a sum of $k+1$ syntactic-homogeneous circuits $\sum_{i=0}^kG^{(i)}$ (computing the same polynomial), in which every node is labeled with an \emph{upper bound }on its syntactic-degree. We show that for our purposes it is enough to work with upper bounds on syntactic-degrees rather than the syntactic-degrees themselves. 

More generally,  we show a \SigZB-definable function in \VZ\ that given a \PCZ-proof of an equation $F=G$ of syntactic-degree $n$, decomposes the proof into $n+1$ ~\PCZ-proofs of $F^{(i)}=G^{(i)}$, for $i=0,\dots,n$, each proof having  syntactic-degree at most $i$. Combining these proofs gives a low syntactic-degree version of $\pi_4$.

This also fixes the problem caused by division elimination described at the end of  the previous step. 
We thus obtain a $\PCZ$-proof, denoted $\pi_5$, of 
equations (\ref{eq:1}) and (\ref{eq:2}), where in these two equations the determinant is written as $\taydetsharp$.

See Section~\ref{sec:homogenization-of-proofs} for more details.
\iddo{}

\para{Step 7: Balancing  algebraic circuits  in the theory} 
We \SigOneB-define in \VNCTwo\  a function that receives an algebraic circuit  $C$ with size $s$ and a number $d$ which stands for an upper bound on the syntactic-degree of $C$,  and outputs a circuit  denoted $[C]$ computing $\widehat C$ with depth $O(\log s\cd\log d+\log ^2 d)$ and size $\poly(s,d)$. \TODO{**CHECK***} As mentioned before this \FNCTwo-algorithm provides an \FACZ-implementation of most parts of the classic Valiant \textit{et al}.~\cite{VSB+83} algorithm, combining it with ideas from the  Miller \textit{et al}.~\cite{MRK88} algorithm and usages of matrix powering  (which then entails working in \VNCTwo).  

%
More generally, we show a \SigOneB-definable function in \VNCTwo\ that receives a \PCZ-proof of $F=G$ with syntactic-degree $d$, and outputs a \PCZ-proof of $[F]=[G]$ in which every circuit is of depth $ O(\log s\cd\log d+\log ^2 d)$ and the size of the proof is $\poly(s,d)$. 


Applying this function to $\pi_5$, we obtain a \SigOneB-definable function in \VNCTwo, which given $n$ in unary outputs a depth $O(\log^2 n)$ $\PC(\F)$-proof $\pi_6$ of the determinant identities  (\ref{eq:1}), (\ref{eq:2}), where the determinant in the two identities is replaced by the appropriate balanced division free  circuit of syntactic-degree $n$ computing the determinant, denoted \cdetbl.
Note that the theory will now express the fact that the PI-proof obtained is indeed a legitimate PI-proof.

See Section \ref{sec:balancing-algebraic-circ-in-theory} for more details.

\para{Step 8: Applying the reflection principle}
We now reason in \VNCTwo\ as follows: for every $n$ and every pair of  matrices $A, B$ over \Z\ of dimension \nbyn, by the definition of the function \Det\ in the theory, $\Det(AB)$, $\Det(A)$ and $\Det(B)$ equals the value of applying the evaluation function to  the circuit $\cdetbl$ with the input assignment  $AB, A, B$, respectively.

\mar{Check if the integer quantifiers here are open or bounded?}

By the arguments above, there exists a depth $O(\log^2 n)$ $\PCZ$-proof of $\cdetbl(XY)=\cdetbl(X)\cd \cdetbl(Y)$ for the two symbolic matrices $X,Y$ of dimension $n\times  n$. But by the reflection principle for depth $O(\log^2 n)$ $\PCZ$-proof from  Theorem \ref{thm:ovr:sound-of-arithmetic-proofs} this means that for every input matrices over \Z,  $\cdetbl(AB)=\cdetbl(A)\cd \cdetbl(B)$. We therefore  conclude   $\Det(AB)=\Det(A)\cd\Det(B)$. 
  
  The same argument  applies to the proof of the determinant identity \eqref{eq:2}, when using a symbolic triangular (lower or upper) matrix and then using Theorem \ref{thm:ovr:sound-of-arithmetic-proofs}.\iddo{}





\section{Encoding  Circuits and PI-Proofs in the Theory}\label{sec:Encoding Circuits and PI-Proofs}
Here we explain how to encode algebraic circuits and PI-proofs in the theory. Specifically, in Section \ref{sec:circuit-with-div-for-det} we describe the circuit \cdeti, namely a circuit with division for the determinant. In Section \ref{sec:vz-function-for-constructing-DET-circuit} we explain how to construct \cdeti\ in \VZ. Finally, in Section \ref{sec:P-C-inverse-proof-of-1-2} we complete Step 1  (in Section \ref{sec:carry-proof-in-theory}) in the construction of the proof in the theory: we  construct a \PIZ-proof of the determinant identities, where the determinant is written as \cdeti. The construction in the theory is syntactic in nature, and  is done  in \VZ\ using the \SigZB-\textbf{COMP} axiom.

\subsection{Encoding  Circuits}\label{sec:encoding-circuits}
In order to talk about algebraic circuits, Boolean circuits and PI-proofs in the theory we need to fix an encoding scheme for these objects. Basically, \VNCTwo\ (in fact, already \VZ) is rich enough to let us encode syntactic objects in a rather natural way. Since every uniform \ACZ\ function is definable in \VZ\ we can assume basic encoding functions to be defined in the theory. 

We show below how to construct  \cdeti\ in the theory.  Encoding and constructing PI-proofs in the theory follows similar lines, we will not always define all the encoding details explicitly when these objects are already constructible in \VZ. 

\subsubsection{Encoding of Algebraic Circuits in the Theory}\label{sec:subsec-for-encoding-circuits}

Algebraic circuits are encoded using strings in the theory as follows: (i) a string of nodes $V$ (in which we assign natural numbers to nodes; this is convenient for our encoding schemes); (ii) a string of gates $G$, where each gate is a natural number interpreted as a pair of natural numbers $(v,t)$ (using the pairing function) where $v$ is a node in $V$ and $t$ is a natural number that  expresses that the gate $v$ is either  $+,\times$ or $(\cd)^{-1}$ (plus gate, times gate or a division gate, respectively) or is the \textit{i}th input (the first two connectives are binary and the third is  unary); (iii) a two-dimensional string of input gates $I$, where, if the first (lsb) of the $i$th string is 0, the $i$th string of $I$ encodes a variable $x_j$, and otherwise it is a binary string representing an integer scalar---the index $j$ of an input variable $x_j$ is represented using the binary representation of $j$; %
\iddo{}%
and finally (iv) a string of directed edges $E$ between two nodes, where $(u,v)\in E$ means that there is an incoming edge to $v\in V$ emanating from $u\in V$.  

We show in Section \ref{sec:algebraic_circuit_value_problem} that there is a \SigOneB-definable function in \VNCTwo\ that converts a non-layered algebraic circuit into a layered circuit in \VNCTwo, where a layered circuit is a circuit in which each node belongs to a specific layer and nodes in layer $i$ have an outgoing edge only to nodes in layer $i+1$. This will enable us to convert algebraic circuits to layered Boolean circuits as  required by the evaluation axiom of \VNCTwo\ (Definition \ref{def:vnctwo}).




\subsubsection{Circuit with Division for the Determinant}\label{sec:circuit-with-div-for-det}

First we need to define the determinant circuit \emph{with division} denoted \cdeti. Similar to \cite{HT12}, this is done using Schur complement and can be viewed as performing a block Gaussian elimination: by considering the symbolic matrix $X= \{x_{ij}\}_{i,j\in [n]}$, consisting of $n^{2}$ distinct variables, defining the matrix inverse $X^{-1}$ of $X$ and then, by partitioning $X$ into blocks, we formulate a recursive definition of the determinant, using matrix inverse.

Formally, we define an $n \times n$ matrix $X^{-1}$ whose entries are circuits with divisions, computing the inverse of $X$, as follows:

\vspace{-5pt} 
\begin{enumerate}
\item If $n=1$, let $X^{-1}:=(x_{11}^{-1})$.
\item If $n>1$, write  $X$ as follows:
\begin{equation}
X= \left(
                \begin{array}{l r}
                        X_{1} & v_{1}^{t}\\
                        v_{2} & x_{nn}
                \end{array}
        \right)\,,
\label{eq: X}
\end{equation}
where $X_{1}=\{x_{ij}\}_{i,j\in [n-1]}$, $v_{1}=(x_{1n},\dots, x_{(n-1)n})$ and $v_{2}=(x_{n1},\dots, x_{n(n-1)}) $.
Assuming we have constructed $X_{1}^{-1}$, let the \emph{Schur complement} be defined as \vspace{-5pt} 
\begin{equation}\label{eq: dd}
\dd(X):= x_{nn}- v_{2}X_{1}^{-1}v_{1\\ }^{t}\,.
\end{equation}
Since $\dd(X)$ computes a single non-zero rational function, $\dd(X)^{-1}$ is well-defined.
Finally, let
\begin{equation}
X^{-1}:=
\left(\begin{array}{l r}
\!\! X_{1}^{-1}\left(
            I_{n-1}+\dd(X)^{-1}v_{1}^{t} v_{2 }X_{1}^{-1}
            \right) & -\dd(X)^{-1}X_{1}^{-1}v_{1}^{t} \\
\!\!       -\dd(X)^{-1} v_{2} X_{1}^{-1} & \dd(X)^{-1}
\end{array}
\right)\,.
\label{eq:def-inverse}
\end{equation}
\end{enumerate}

The circuit $\cdeti(X)$ is defined as follows:
\begin{enumerate}
\item If $n=1$, let $\cdeti(X):= x_{11}$.
\item If $n>1$, partition  $X$ as in (\ref{eq: X}) and let $\dd(X)$ be as in (\ref{eq: dd}).
Let 
\begin{equation}\label{eq:det-min-one-definition}
\cdeti(X):=
 \cdeti(X_{1})\cdot \dd(X)= \cdeti(X_1)\cd(x_{nn}- v_{2}X_{1}^{-1}v_{1}^{t})\,.
\end{equation}
\end{enumerate}

The definition in (\ref{eq:def-inverse}) should be understood as a circuit with $n^{2}$ outputs which takes $X_{1}^{-1}, v_{1}, v_{2}, x_{nn} $ \emph{as inputs} and moreover, such that the inputs from $X_{1}^{-1}$ \emph{occur exactly once}. Altogether, we obtain a polynomial-size circuit for $X^{-1}$ and the determinant function of $X$. The circuits obtained are unbalanced, have division gates and are of exponential syntactic-degree (see Definition \ref{def:syntactic-degree}). The fact that $ \cdeti(X) $ indeed computes the determinant (as a rational function) stems, e.g., from the fact (shown in this work, or in \cite{HT12}) that  $ \PIZ $ can prove the two identities that characterize the determinant. That $X^{-1}$ computes  matrix inverse is also proved in the theory. 
\TODO{Should we say that a matrix is in fact an abbreviation for $n^2$ circuits? Where do we use it/if at all, in the \PI-proof?}

\subsubsection{Constructing the Circuit \cdeti\ in \VZ}\label{sec:vz-function-for-constructing-DET-circuit} Here we show a \SigZB-definable in \VZ\ function, denoted $\outix(n)$, that outputs the multi-output circuit \ix\ (\eqref{eq:def-inverse} above) given as input a unary integer $n$.\iddo{} 
From $X^{-1}$, in a similar manner we can construct (the single-output circuit) $\cdeti(X)$ using \eqref{eq:det-min-one-definition} above. Note that the definition in \eqref{eq:def-inverse} is implicitly a construction that uses \SigOneB-induction (that is, the number induction axiom as in Proposition \ref{prop:number-induction} in which we use  \SigOneB\ instead of \SigZB): given that there exists a circuit for $X_1^{-1}$ of dimension $(n-1)\times(n-1)$, we construct $X^{-1}$ of dimension \nbyn. 
However, since we do not have in \VZ, nor in  \VNCTwo, the number induction axiom for $\SigOneB$-formulas we will need to construct the circuit ``syntactically'' using only the \SigZB-\textbf{COMP} axiom by utilizing a  natural encoding scheme. This idea and similar encoding is then used in the sequel to construct all the \PIZ-proof in the theory. For getting the final division free \PCZ-proofs using homogenization and balancing  we need to consider different arguments, including the axioms of \VNCTwo, e.g., to be able to compute matrix powering (see Sections \ref{sec:homogenization-of-proofs}, \ref{sec:balancing-algebraic-circ-in-theory}).  \iddo{}

The circuit for \ix\ is encoded as follows. It is a multi-output circuit. The string $V$ encodes the nodes in the circuit, as natural numbers, where a node number is interpreted as a tuple of natural numbers as shown below (using the \SigZB-definable in \VZ\ tupling number function). For each \emph{inductive level} $d=1,\dots,n$ in the inductive definition of \ix\ in \eqref{eq:def-inverse}, corresponding to the construction of a \dbyd\ inverse matrix, we  have a set of nodes $(d, (i,j), \ell)\in V$, each interpreted as a three-tuple of numbers where the second number is a pair of numbers in itself. In $(d, (i,j), \ell)\in V$, the pair $(i,j)$, for $i, j\in[d]$, is an entry in a \dbyd\ matrix, meaning that the node $(d, (i,j),\ell)$ is part of a sub-circuit of \ix\ that computes the $(i,j)$th entry in the  $d$th inductive-step; $\ell$ is the running index of the nodes in that part, where $\ell=0$ iff the node is what we consider an  \emph{output node of the given level  $d$ and the given entry $(i,j)$}. Nodes of the form $(0,(i,j),0)$ stand for the \textit{input node} corresponding to the variable $x_{ij}$ (or scalar) in the input string $I$.

For example, $(1,(1,1),0)$ is the node computing $x_{11}^{-1}$, because the first coordinate $d=1$ refers to   ``inductive" level  1 in \eqref{eq:def-inverse}, the second coordinate is $(1,1),$ meaning the $(1,1)$-entry from the circuit computing the inverse of $x_{11}$, and the last coordinate is 0, meaning this is the \emph{output }node of the inverse of $x_{11}$. Note that we use the numbers on the nodes in $V$ to denote information on the structure of the circuit, namely information about the edges in $E$ and  whether a gate is an input node (this information is expressed also in $G$). This makes the construction of the corresponding $E$ and $G$ easier.

Additionally, we have a string $G$ of natural numbers, each interpreted as a four-tuple encoding the gate-type of each node in $V$, excluding the input nodes $(0,(i,j),0)$. That is, $(d,(i,j),\ell,g)\in G$ means that node $(d,(i,j), \ell)\in V$ is of type $+$ if $g=0$, $\times$ if $g=1$ and division $(\cd)^{-1}$ if $g=2$, and an input variable $x_{ij}$ if $g=(i,j)$\TODO{don't use $\langle x,x\rangle$. Change also in definition above.}, where, again, $(\cd,\cd)$ is the pairing function 
(note that the pairing function (cf. \cite{CN10}) is monotone increasing and that $(1,1)>2 $, so we can distinguish between the case of an arithmetic gate and an input gate). Finally, the string $E$ encodes the edges between nodes in the circuit. That is, $(d,(i,j),\ell,d',(i',j'),\ell')$ means that there is a directed edge from node $(d,(i,j),\ell)$ to node $(d',(i',j'),\ell')$.

Using the above  encoding scheme it is possible now to bit-define the string function \outix\ as a \SigZB-definable function in \VZ. We  only need to construct, given some level  $d,(i,j),$  the sub-circuits whose nodes will be $(d,(i,j),\ell)$, for  some $\ell$, according to the definition in \eqref{eq:def-inverse}. We will use the following notation and functions in the theory.  

\para{Notations and basic functions for constructing sub-circuits} Let $F$ be some ``primitive'' arithmetic function,  such as inner product of two  $n$-element vectors over the integers, or one of the functions in \eqref{eq:def-inverse} used to define a minor or the matrix inverse $\ix$, such as $\dd(X)^{-1}$ (we use the term \emph{minor} to refer to a sub-matrix). We will denote by $\encd{F}(n,d,\ell,\overline I,\overline O)$ the following string function: the  input to this string function are $\overline I$, serving as the input nodes to the circuit and $\overline O$ the output nodes of the circuit for $F$, $d$ is the index ``level'' (used to record the induction-level of the inductive  circuit constructions  as in \eqref{eq:def-inverse}) and $\ell$ is the ``running index'' of a node in a given level $d$, and  $n$ stands for the ``dimension'' of the operation defined by $F$ (e.g., inner product of vectors of size $n$, or matrix product of two \nbyn\ matrices has dimension $n$). The output is a string, but we abuse notation and assume it is \emph{three} separate strings encoding the (output) circuit, for simplicity, as follows: $E,V,G$ as described above. \TODO{maybe don't need $n$?}

More formally, we define $\encd{F}(n,d,\ell,\overline I,\overline O)=(E,V,G)$ 
as follows (similar to the above notation): $V$ is a string describing the vertices in an algebraic circuit. $E$ is a string describing the edges between vertices in $V$. $G$ is a string describing the gate-types of vertices in $V$. Every vertex is of the form $(d,(i,j),\ell)$ with $d$ the recursive level in the definition of $\ix$ in \eqref{eq:def-inverse}, $(i,j)$ means that the node is in the $(i,j)$'s part of the definition of \ix, and $\ell$ is the running index of  nodes in the same level $d$ and same part $(i,j)$, where $\ell=0$ iff the node is an output node of \emph{that level} $d$ (it is not necessarily the output node of the whole circuit). Assume that $F(\overline I)$ is some algebraic  function with $m_0$ integer inputs $\overline I$ and $m_1$ integer outputs $\overline O $. Then, we supply $\encd{F}(n,d,\ell,\overline I,\overline O)$  with the node indices (as encoded in $V$) to be used as input nodes and output nodes for the (sub-)circuit computing $F$.
Here is an example of the input and output nodes of $F_1$.

\begin{example} Consider the multi-output circuit  $F_1:={X_{1}^{-1}(I_{n-1}+\dd(X)^{-1}v_{1}^{t} v_{2 }X_{1}^{-1})}$ from \eqref{eq:def-inverse}. We want to construct the circuit $F_1$ in \VZ. Note that $F_1$ is a recursive function in the sense that it uses as inputs the outputs $X_1^{-1}$ which are computed in the previous recursive level $d-1$, together with the ``new'' nodes in row $d$ and column $d$ in $X$. Therefore, the inputs of $F_1$ are the following nodes: $(d-1)^2$ input nodes for $X_1^{-1}$, $2(d-1)$ input nodes for $v_1^t$ and $v_2$, and finally one input node $x_{dd}$ (needed for computing $\dd(X)^{-1}$), which sums up to $d^2$ input nodes in total. The number of output nodes for  $F_1$ is $(d-1)^2$, as it defines a $(d-1)\times(d-1)$ minor of $\ix$. Therefore, in our encoding scheme, the input nodes for $F_1$ (viewed as a \dbyd\ matrix) are: 
\[\!\!\!\!\!
\begin{pmatrix}
(d-1,(1,1),0)  & \dots & (d-1,(1,d-1),0) & (0,(1,d),0)   \\
\vdots & \ddots &\vdots & \vdots \\
(d-1,(d-1,1),0) &  \dots & (d-1,(d-1,d-1),0) & (0,(d-1,d),0)   \\
 (0,(d,1),0)   & \dots & (0,(d,d-1),0) & (0,(d,d),0)
\end{pmatrix}   
\]
and the output nodes (viewed as a $(d-1)\times(d-1)$ matrix)  are:
\[
\begin{pmatrix}
(d,(1,1),0) & \dots & (d,(1,d-1),0)   \\
\vdots & \ddots & \vdots  \\
(d,(d-1,1),0) & \dots  & (d,(d-1,d-1),0) 
\end{pmatrix}   \,.
\]
%
\end{example}
\bigskip 

Let $F_2,F_3,F_4$ be the other three functions used in the definition of \ix\ \eqref{eq:def-inverse} (for the other simpler three minors). We shall define similarly $\encd{F_i}$ functions for these  $F_i$'s. \TODO{define them explicitly}


To  show that \outix\ is a \SigZB-definable \iddo{} function in \VZ\ we need to demonstrate how to bit-define  this function using a \SigZB-formula (see Definition \ref{def:bit-graph-of-string-func} in Section \ref{sec:Definability-in-Bounded-Arithmetic} for bit-definability). In our case we need to show how to bit-define $\encd{v\cd u}$ using a \SigZB-formula, given two $n$-element vectors of integers $v,u$  representing \textit{nodes} in the circuit. This is quite easy to do: simply output a binary tree with the appropriate plus and products nodes, and plug the input nodes $v,u$ to the leaves accordingly. We provide a proof of this in the Appendix \ref{sec:construct-binary-tree}.
 Here we denote the nodes in the circuit computing the inner-product $v\cd u$ in level $d$ using the running index: every node excluding the  output nodes of this level $d$ (which are unique  for every fixed $d$ and $(i,j)$) has a different running index $\ell>0$, namely has the tuple $(d,(i,j),\ell)$ associated with level $d$ and the $(i,j)$ entry in the matrix computed at level $d$.    


Similarly, we have \SigZB-formulas for constructing other formulas like $\encd{vA}$ and $\encd{Av^t}$, given the input nodes for an \nbyn\ matrix $A$, and the input nodes for an $n$-elements vector $v$. Also, given a node $z$ it is immediate to output a circuit computing $z^{-1}$ or $-z$, and given two matrices $A,B$ (i.e., $2n^2$ nodes) it is easy to give a \SigZB\ bit-definition of $\encd{A+B}$ in \VZ. \bigskip 

\newcommand{\hi}[1]{\ensuremath{{\rm Inp}_{F_{#1}}}}
\newcommand{\ho}[1]{\ensuremath{{\rm Out}_{F_{#1}}}}

Now that we set up the notation and the functions for constructing sub-circuits, we can bit-define with a \SigZB-formula  \outix\ in \VZ\ as follows. First, for $i=1,\dots,4$, define $\hi{i}(d)$ and $\ho{i}(d)$ to be  the string functions that output the sequence of input and output nodes of the $d$th recursive level of \ix\ for each of the $F_i$'s, respectively, as shown for $F_1$ in the example above. They are all \SigZB-definable string-functions in \VZ. The bit-definition of  \outix\ is\iddo{}
\vspace{-5pt}      
\begin{multline*}
\outix(n)(i) \equiv   \\ \exists 2\le d\le n  
\left(\exists 1\le j\le 4, \outixd\left(n,d,1,\hi{j}(d),\ho{j}(d)\right)(i)\right)\\ \lor
 \encd{x_{11}^{-1}}\left(n,1,0,((0,(1,1),0)), ((1,(1,1),0))\right)(i)\big)\,,  
\end{multline*}
where $\outixd(n,d,\ell,\overline I,\overline O)$ outputs $(E,V,G)$ encoding a (sub-)circuit that is  the $d$th inductive level of $\ix$, and $\encd{x_{11}^{-1}}\big(n,1,0,((0,(1,1),0)), ((1,(1,1),0))\big)$ is the string function that outputs the encoding of the circuit ``$x_{11}^{-1}$''.  
\medskip 

In the sequel we will be less formal about encoding  in \VZ\  circuits in the  \PIZ-proofs in the theory. 
 

\subsection{Encoding and Witnessing PI-proofs}
\label{sec:encoding-PI-proofs}\iddo{}

Recall that \PI\ is a PI-proof system with division gates and with the division axiom D added (Section \ref{sec:circ_proof_with_divison}).
Also, recall from Step 5 (Section \ref{sec:carry-proof-in-theory})\mar{is this the right step?} that we are not going to express in the theory the full correctness of \PIZ-proofs in the sense that the theory will express only the syntactic-correctness of a \PIZ-proof (see definition in Section \ref{sec:circ_proof_with_divison}). 

\PIZ- and \PCZ-proofs are encoded as a two dimensional array $S$ (that is, a string encoding an array of strings), in which the $i$th string $S^{[i]}$, also called \emph{the $i$th row of $S$}, is the $i$th equation in the proof, written as a pair of circuits with division (and where circuits encoding is done as described in Section \ref{sec:vz-function-for-constructing-DET-circuit}).
 
Furthermore, the encoding of PI-proofs will always consist of additional  \emph{\textbf{witnesses for syntactic correctness}}, as follows:
\begin{enumerate}
\item  Each row $S^{[i]}$ specifies whether it is an axiom, and if not specifying  the proof-lines from which it was derived as well as the rule by which it was derived.

\item For the four rules R1-R4, we have the following convention to witness the correctness of applying the rule: 
the encoding of the circuits $F,G,H$ and $F_1,F_2,G_1,G_2$  in the antecedent and consequence of the rules are identical, that is, with the same node numbers in their respective sets of nodes $V$. In other words, the respective strings encoding $F, G, H, F_1,F_2,G_1,G_2$ in the antecedent and consequence are identical.\label{it:pi-proof-encode-circuits-identical}

\item  For the axioms A1-A9, and the axiom D in \PI, the circuits $F,G,H$ in both sides of the equations are encoded identically, as in part  \ref{it:pi-proof-encode-circuits-identical} above. 

\item  The scalar axioms A10 is encoded as a circuit with scalar inputs as usual. Only that we will not verify their correctness, as this will not be needed.

\item  The axioms C1, C2 needs a special treatment. Consider $F_1\oplus F_2=F_1+F_2$, and let $V$ be the set of node (numbers) belonging to $F_1\oplus F_2$. Every node $u\in V $, excluding the plus at the root, occurs as two different nodes $u_1, u_2$ in $F_1+F_2$. 
To witness this rule we add a string that stores (as an array of number pairs) the mapping from the nodes of $F_1$ in $F_1+F_2$ to the nodes of $F_1$ in $F_1\oplus F_2$, and similarly for $F_2$.   
Given such a witness it is  immediate to verify (with a \SigZB-formula) that the C1 axiom is applied correctly\footnote{One can also use an \NL\ algorithm, formalizable in \VNCTwo\ (since $\NL\subseteq\NCTwo$), to verify that both sides of the axiom C1 are different representation of the same circuit. However, it will not be easy to prove for our \PIZ-proofs that they are correct with such a predicate of correctness.}.
C2 is treated similarly.  
\end{enumerate}

When we talk about a PI-proof in the theory, unless otherwise stated, we assume that the proof encoding includes its witness for syntactic correctness as  above. When we talk about \PIZ-proofs specifically, we shall say that ``the theory proves the existence of a syntactic correctness \PIZ-proof'' to mean that the proof is encoded with the witness as above, only that we emphasize that the proof and witness only ensure syntactic correctness (since division by zero may occur in such proofs).

\subsubsection{Existence of Proofs with Division for the Determinant Identities} \label{sec:P-C-inverse-proof-of-1-2}

\iddo{}

Here we complete Step 1 of the argument (Section \ref{sec:carry-proof-in-theory}), by demonstrating  that there is a \SigZB-formula with $n$ as a number parameter that  defines the \PIZ-proof of the two determinant identities \eqref{eq:1}, \eqref{eq:2}.\iddo{}  



\begin{proposition}[in \VZ]\label{prop:write_PC-1-proof}
Given a positive natural number $n$ there exists a syntactic correct $\PIZ$-proof of the determinant identities \eqref{eq:1} and \eqref{eq:2} for \nbyn\ matrices, where the determinant in \eqref{eq:1} and \eqref{eq:2} is written as the division free circuit $\cdeti(A)$, for $A$, the \nbyn\ symbolic matrix $X$ or $Y$, or their product $XY$, or a symbolic triangular matrix $Z$. 
\end{proposition}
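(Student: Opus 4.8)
The plan is to construct the $\PIZ$-proof recursively on the matrix dimension, mirroring the recursive definition of $\cdeti$ in \eqref{eq:det-min-one-definition}, but--as stressed in the overview--to carry out the recursion only \emph{syntactically}, i.e.\ to unwind it into a polynomially long explicit sequence of proof-lines and then invoke $\SigZB$-\textbf{COMP} to realize that sequence as a string. Concretely, for a fixed $n$ I would first fix the family of circuits occurring in the argument: the symbolic inverse circuits $X^{-1},X_1^{-1},\dots$, the Schur complements $\dd(X_k)$ for each recursive level $k\le n$, and the determinant circuits $\cdeti(X_k)$, all encoded via the $\encd{\cdot}$ machinery of Section \ref{sec:vz-function-for-constructing-DET-circuit}. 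The heart of the construction is a ``local'' lemma: for each level $k$ there is a short (constant- or $O(\log)$-length, certainly $\poly(n)$ overall) block of $\PIZ$-lines, each line a syntactically correct instance of an axiom A1--A10, C1, C2 or D, or obtained by R1--R4, that derives from the recursive identities at level $k-1$ the corresponding identities at level $k$. Stacking these blocks for $k=1,\dots,n$ and appending a final block that instantiates the triangular-matrix case \eqref{eq:2} yields the whole proof; the map $n\mapsto$ (this stacked array together with its syntactic-correctness witnesses of Section \ref{sec:encoding-PI-proofs}) is then visibly bit-definable by a $\SigZB$-formula, hence a $\SigZB$-definable string function of \VZ.

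The key steps, in order, are: (i) write down the $\PIZ$-derivation of $X X^{-1}=I_n$ (equivalently, entrywise equations) from the block formula \eqref{eq:def-inverse}, using the division axiom D only on the single node $\dd(X)^{-1}$, whose well-definedness is exactly the ``syntactic correctness'' we are allowed to assume; this is pure polynomial-ring bookkeeping--multiplying out the $2\times2$ block product, using distributivity A6, commutativity/associativity A2--A5, the unit and zero axioms A7--A9, and the recursion hypothesis $X_1X_1^{-1}=I_{n-1}$--all done step by step on circuits as syntactic objects. (ii) From \eqref{eq:det-min-one-definition}, $\cdeti(X)=\cdeti(X_1)\cdot\dd(X)$, derive multiplicativity \eqref{eq:1}: here one applies the standard block-LU identity, writing $XY$ in block form, and reducing $\cdeti(XY)$ to a product over the pivots via repeated use of the level-$(k-1)$ instance of \eqref{eq:1} and the Schur-complement algebra from step (i); each algebraic manipulation is again a named axiom or an application of R3/R4. (iii) For \eqref{eq:2}, when $Z$ is triangular the off-diagonal blocks $v_2$ (or $v_1$) are the zero circuit, so $\dd(Z)$ reduces via A8 (``$F\cdot 0=0$'') and A7 to the single leaf $z_{nn}$, and the recursion immediately gives $\cdeti(Z)=z_{11}\cdots z_{nn}$; this block is even simpler. (iv) Finally, assemble: define the string function that, given $n$, outputs the concatenation of these blocks in order, inserting the bookkeeping witnesses (the node-identity convention of item \ref{it:pi-proof-encode-circuits-identical} and the node-maps for C1, C2), and check that each clause of ``this row is a correct axiom/rule application'' is a $\SigZB$ predicate of $n$ and the row index--so the whole object is $\SigZB$-bit-definable.

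The main obstacle I expect is \emph{not} the mathematical content (the identities are the textbook block-Gaussian-elimination facts, already shown provable in $\PIZ$ in \cite{HT12}) but the \emph{encoding discipline}: one must choose the node-numbering scheme for $X^{-1}$, $\dd(X)$, and $\cdeti(X)$ so that (a) the ``occurs exactly once'' requirement on the $X_1^{-1}$ inputs in \eqref{eq:def-inverse} is respected--otherwise the circuits are exponentially large rather than polynomial--and (b) the circuits appearing on the two sides of each C1/C2 axiom and in the antecedent/consequent of each rule are \emph{literally the same strings} (same node numbers), as mandated in Section \ref{sec:encoding-PI-proofs}; getting a single, globally consistent numbering across all $n$ levels that still makes ``is a legal proof-line'' an $\SigZB$ property is fiddly. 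A secondary subtlety is that we must route all uses of invertibility through the constant $1$ only (never inverting a general ring element), which forces the recursion to keep the division gates exactly in the form $\dd(\cdot)^{-1}$ and to never attempt to ``simplify'' a division node--again a constraint on the syntactic bookkeeping rather than on the algebra. Since every individual block is of bounded description length and there are $O(n)$ of them, no genuine induction inside the theory is needed: the recursion is unrolled at the meta-level and captured by $\SigZB$-\textbf{COMP}, which is precisely why this step lands in \VZ.
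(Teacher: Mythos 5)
Your proposal matches the paper's proof essentially step for step: the paper likewise unrolls the recursion of \eqref{eq:def-inverse} into $n$ explicit blocks (one per dimension $k$), realizes the whole stacked sequence via $\SigZB$-\textbf{COMP} with pointers to the level-$(k-1)$ lines in place of any $\SigOneB$-induction, exemplifies the construction with the block derivation of $X\cdot X^{-1}=I_n$, and then transcribes the \cite{HT12} proofs of \eqref{eq:1} and \eqref{eq:2} using the same encoding discipline you describe. The only detail you omit is the paper's explicit remark that the diagonal entries of the triangular matrix $Z$ must be kept as variables (not scalars) so that the division gates are later provably good under the identity-matrix assignment, but this is a forward-looking bookkeeping point rather than a gap in the present argument.
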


\begin{proof}
It is enough to show that there exists a \SigZB-formula denoted \writepci, that holds iff $W$ is the syntactically correct  $\PIZ$-proof of the identities \eqref{eq:1} and \eqref{eq:2} for \nbyn\ matrices. 
We construct this \SigZB-formula defining the polynomial-size  \PIZ-proof demonstrated in \cite[Section 7.1]{HT12} for the identities \eqref{eq:1}, \eqref{eq:2}. 

Given a pair of $ n\times n$ matrices $X,Y$,  the expressions $XY= A$ in the context of \PI\ is an abbreviation of  a sequence of $n^2$ equalities between the appropriate entries.
Note however that whereas before we treated $X^{-1}$ as a single multi-output circuit, here $X^{-1}$ is a set of $n^2$ separate circuits for each of the entries in $X^{-1}$ (this is achieved simply by taking the same single multi-output circuit for $X^{-1}$ as before, and duplicating each of the output gates together with its sub-circuit).  



We exemplify our \VZ-construction with the proof of $X\cdot X^{-1}=I_{n}$ below. This can potentially be constructed by induction on $n$. However, similar to the construction of \cdeti\ in the theory (Section \ref{sec:vz-function-for-constructing-DET-circuit}), we cannot use (number) induction on \SigOneB-formulas in \VNCTwo, and thus we need to work out an encoding of the proof that can be constructed using a \SigZB-formula. 

If $n=1$, we have $x_{11}\cdot x^{-1}_{11}=x^{-1}_{11}\cdot x_{11}= 1$ which is a $\PI$ axiom.  Otherwise, let $n>1$ and $X$ be as in (\ref{eq: X}). We want to construct a polynomial-size proof of $X\cdot X^{-1}=I_{n}$ from the assumption $X_{1}\cd X_{1}^{-1}=I_{n-1}$. 

Denote $a:= \dd(X)$, and $A:=I_{n-1}+a^{-1}v_{1}^{t}
                            v_{2} X_{1}^{-1} -                             a^{-1}v_{1}^{t} v_2                                X_{1}^{-1}$, and $B:= v_2    X_{1}^{-1}+                             a^{-1}(v_2    X_{1}^{-1}v_1^t                           -x_{nn} )v_2     X_{1}^{-1}$. Taken verbatim from \cite[Proposition 7.2]{HT12}, using some rearrangements, and the definition of $a$, we have:
\begin{align*}
 X\cdot X^{{-1}} & = 
        \left(
                \begin{array}{l r} X_{1} & v_{1}^{t}\\
                                     v_{2} & x_{nn}
                \end{array}
        \right)\cdot 
        \left(
                \begin{array}{l r}                 X_{1}^{-1}(I_{n-1}+a^{-1}v_{1}^{t} v_{2 }X_{1}^{-1})     & ~~~~-a^{-1}X_{1}^{-1}v_{1}^{t} \\
    -a^{-1} v_{2} X_{1}^{-1} & a^{-1}
                \end{array}
        \right)\\
        &=\left(
                     \begin{array}
                        {l r} A& - a^{-1}v_1^t
                                + a^{-1}v_1^t                          \\
                                B                             &      ~~~~~a^{-1}(-v_2                        X_{1}^{-1}v_1^t     + x_{nn})
                        \end{array}
                \right) \\
        &=\left(
                        \begin{array}{l r}
                                I_{n-1}& 0\\
                                v_2    X_{1}^{-1}-
                                a^{-1}a v_2   
                                X_{1}^{-1}& ~~~a^{-1}a
                        \end{array}
                \right) \\
        &=  \left(
                             \begin{array}{l r}
                                I_{n-1}& 0\\
                                 0 & 1
                                \end{array}
             \right).
\end{align*}
To encode this we do the following: for every $k=1,\dots,n$, we use  \SigZB-\textbf{COMP} to define the above \PIZ-proof sequence, where $k$ replaces $n$, namely, we construct the above \PIZ-sequence for  $X\cd X^{-1}$ of dimension $k\times k$ for each $k=1,\dots,n$. Such a sequence can be \SigZB-defined similar to the encoding shown in Section \ref{sec:vz-function-for-constructing-DET-circuit}: since we simply need to compose primitive constructions of matrix multiplications (which are written as $k^2$ separate equations for each entry), dot products, plus and minus, construction of the identity matrix of dimension $k$, and $X_1^{-1}$ (which we encoded explicitly in Section \ref{sec:vz-function-for-constructing-DET-circuit}). The proof sequence for $k$ uses the proof of $X_1\cd X_1^{-1}=I$, where $X_1$ has dimension $(k-1)\times (k-1)$, and this is done  by specifying $X_1\cd X_1^{-1}=I$ as a previous proof-line from which we derive our new proof-line (in particular, no (string-)induction is needed for this). 


\iddo{}
Having \PIZ-proofs of $X\cdot X^{-1}=I_{n}$ and $X^{-1}\cdot X= I_{n}$, we can now proceed to construct the \PIZ-proofs of \eqref{eq:1}, \eqref{eq:2} in the theory. This is done in exactly the same manner, by formalizing directly in \VZ\ the \PIZ-proof shown in \cite[Section~7.1]{HT12}.
However, for constructing in the theory the \PIZ-proof of \eqref{eq:2} in \cite[Section~7.1]{HT12} we make sure that the diagonal elements of the triangular matrix $Z$ are in fact  \emph{variables} and not scalars.  This is needed for us to be able to eliminate division gates later: to eliminate division gates we need every division gate to be provably good under the identity matrix assignment (see Section \ref{sec:division-elim-from-proofs}), and specifically, these gates cannot compute the zero polynomial. When the diagonal elements are not definable, namely contain division gates that compute the zero polynomial, we will not be able to establish that they are provably good.      
\iddo{} 
\end{proof}


\section{From a Rational Function to the Determinant as a Polynomial}\label{sec:From-Rational-Function-to-the-Determinant-as-a-Polynomial}

Here we complete Step 2  (Section \ref{sec:carry-proof-in-theory}) in the construction of the proof in the theory: we  construct a \PIZ-proof of the determinant identities \eqref{eq:1}, \eqref{eq:2}, where the determinant is written as \taydet, which is  a polynomial-size algebraic circuit without division gates for the determinant (of exponential syntactic-degree). We first need rst to provide the preliminaries for division elimination and truncated Taylor expansions that we shall use in this and the next sections.  

\subsection{Preliminaries for Division Elimination}\label{sec:prelim-div-elim}
\newcommand{\pow}{{\rm pow}}
%
Let $F$ be a \emph{division-free} circuit and let $F^{(0)}$ be the syntactic homogeneous division free circuit that computes the constant term of $F$ at the point $\overline  0$;  we show that such a syntactic homogeneous circuit can be constructed in \VZ\ in Section \ref{sec:homogenization-of-proofs}. We  define the  circuit $\inv_k(F)$ that will serve as an inverse of $F$ modulo high degree monomials, in the sense that 
\begin{equation}\label{eq:F-times-invF}
F\cd\inv_k(F)=1+\hbox{[\begin{small}monomials of degree greater than $k$\end{small}]}\,.
\end{equation}  
Note that because we work over \Z\ the only way for \eqref{eq:F-times-invF} to hold (for $k>1$) is when $\widehat {F^{(0)}}=1$, since if the constant term in $F$ is not 1, no product of $F$ can compute the constant term 1. In general, the division elimination (as in Strassen \cite{Str73}) needs to work over a field and have an inverse element for $\widehat {F^{(0)}}$. 
But in our application inside the theory we will always have $F^{(0)}$ as a circuit without division, possibly with variables, that computes the polynomial  1, and hence we do not require an inverse of $\widehat {F^{(0)}}$. 


Assume that $F$ is good under the zero assignment, in symbols $\widehat {F^{(0)}}=1$, and define the circuit $\inv_{k}(F)$ as 
\begin{align*}
        \inv_{k}(F):= \left(1+ (1-F)+ (1-F)^{2}+\dots+ (1-F)^{k}\right), \end{align*}
where powers like $(1-F)^k$ are abbreviation of $(1-F)\cdots (1-F) $, $k$  times (written for instance as a logarithmic in $k$ depth circuit; since $k$ will always be polynomial in our application this will be enough for our purposes).
\mar{Put this abbreviation in preliminiaries or somewhere else.}%
%
%
The following  lemma demonstrates that $\inv_k(F) $ can \emph{provably} serve as the inverse polynomial of $ F $ ``up to the $ k $-th degree''.

\begin{lemma}[in \VZ]\label{lem:inv}
Let $F$ be a size $ s $ circuit without division, let $\eta$ be a \PCZ-proof of $F^{(0)}=1$,  
\iddo{}
and let $k\ge 1$ be a natural number. Then there exists   $\PCZ$-proofs of size $s\cdot \poly(k)$ of the following equations, in which every node in every circuit in the proofs appears with its syntactic-degree upper bound (see Definition \ref{def:appears-with-syntactic-degree-upper-bound}):
\begin{eqnarray}
(F\cdot \inv_{k}(F) )^{(0)}&=& 1 \label{eq:FtimesInvF-zero}\\
(F\cdot \inv_{k} F )^{(i)}&=& 0, \,\, ~~\hbox{for}\,\, 1\leq i\leq k\,.\label{eq:FtimesInvF-k}
\end{eqnarray}
\end{lemma}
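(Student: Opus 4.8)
The plan is to prove the two equations by working with the telescoping identity $F\cdot\inv_k(F)=1-(1-F)^{k+1}$ and then extracting syntactic-homogeneous components, all while constructing the \PCZ-proofs explicitly with a \SigZB-formula so the construction lies in \VZ. First I would set $D:=1-F$ (a circuit of size $O(s)$, built by $1+(-1)\cdot F$ or directly), and observe that $\inv_k(F)$ is literally the circuit $1\cplus D\cplus D^{\ctimes 2}\cplus\dots\cplus D^{\ctimes k}$. Using only the ring axioms A1--A10 together with C1, C2 (so that the $\cplus,\ctimes$ gates can be replaced by $+,\times$), one constructs a short \PCZ-proof of the polynomial identity
\[
F\cdot\inv_k(F)\;=\;(1-D)\cdot(1+D+\dots+D^k)\;=\;1-D^{k+1}.
\]
This is the standard finite geometric series identity; its \PCZ-proof has $\poly(k)$ lines and each circuit has size $s\cdot\poly(k)$, and crucially it is built by a uniform (\SigZB) recipe: multiply out using distributivity A6, cancel the interior terms $D^i$ against $-D^i$ via A2, A3, A7, A8, and collapse using A9. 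I would formalize this as a sub-lemma — a \SigZB-definable function in \VZ\ producing the proof — mirroring the encoding technique of Section~\ref{sec:vz-function-for-constructing-DET-circuit}, avoiding \SigOneB-induction by indexing the construction over all $k'\le k$ simultaneously.

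Next I would pass to syntactic-homogeneous components. Given the \PCZ-proof of $F\cdot\inv_k(F)=1-D^{k+1}$, apply the homogenization procedure of Section~\ref{sec:homogenization-of-proofs}: since $\deg(F\cdot\inv_k(F))$ has an easily computed upper bound (at most $s\cdot k$, but for the components we only need the degree-$\le k$ part), the procedure yields, for each $i$, a \PCZ-proof of
\[
\bigl(F\cdot\inv_k(F)\bigr)^{(i)}\;=\;\bigl(1-D^{k+1}\bigr)^{(i)},
\]
where every node carries its syntactic-degree upper bound as required by Definition~\ref{def:appears-with-syntactic-degree-upper-bound}. Here I use the key point stressed in the overview: \VZ\ need only manipulate \emph{upper bounds} on syntactic degrees, and decomposing a proof of $P=Q$ of degree $d$ into proofs of $P^{(i)}=Q^{(i)}$ for $i=0,\dots,d$ is a purely syntactic, \SigZB-definable operation. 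For $1\le i\le k$ we have $(D^{k+1})^{(i)}=0$ syntactically — every monomial of $D^{k+1}$, when $F^{(0)}=1$ so that $D=1-F$ has zero constant term after using $\eta$... wait, this needs care: $D=1-F$ does \emph{not} have zero constant term unless $F^{(0)}=1$, which is exactly the hypothesis supplied by $\eta$.

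So the role of $\eta$ enters precisely here. Let $E:=D-D^{(0)}=D-(1-F^{(0)})$, i.e.\ $D$ with its constant term removed; using $\eta$ (the \PCZ-proof of $F^{(0)}=1$) one gets a \PCZ-proof of $D^{(0)}=0$, hence of $D=E$ where $E$ has syntactic-degree-$0$ part provably $0$. Then $D^{k+1}$ is provably equal to $E^{k+1}$, and $E^{k+1}$ has \emph{syntactic-degree lower bound} $\ge k+1$ in the sense that its degree-$\le k$ homogeneous components are syntactically zero; concretely $(E^{k+1})^{(i)}$ for $i\le k$ reduces, after homogenization, to a sum of products each containing a factor $E^{(0)}$, and $E^{(0)}=0$ is provable, so A8 ($F\cdot 0=0$) and A7, A2 finish each such component to $0$. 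This gives \PCZ-proofs of $(D^{k+1})^{(i)}=0$ for $1\le i\le k$ and of $(D^{k+1})^{(0)}=(1-F^{(0)})^{k+1}=0^{k+1}$, which collapses to $0$ via A8. Combining with the homogenized geometric-series proof: $(F\cdot\inv_k(F))^{(i)} = (1-D^{k+1})^{(i)} = 1^{(i)}-(D^{k+1})^{(i)}$, which is $1-0=1$ for $i=0$ (using A7, and that $1^{(0)}=1$, $1^{(i)}=0$ for $i\ge1$) and $0-0=0$ for $1\le i\le k$. Each step is a bounded-length sequence of axiom instances and rule applications, of total size $s\cdot\poly(k)$, and the whole proof-sequence is \SigZB-definable in \VZ\ by composing the primitive constructions, so \VZ\ proves its existence.

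\textbf{Main obstacle.} The routine algebra (geometric series, distributing out products) is not the hard part; the delicate point is the \emph{bookkeeping of syntactic-degree upper bounds} through homogenization so that the final proofs satisfy Definition~\ref{def:appears-with-syntactic-degree-upper-bound} and remain of size $s\cdot\poly(k)$, and the fact that all of this must be produced by a single \SigZB-formula in \VZ\ rather than by \SigOneB-induction on $k$ or on circuit structure. I expect the proof to lean on the homogenization machinery of Section~\ref{sec:homogenization-of-proofs} as a black box for the component-extraction, and the only genuinely new content here is verifying that the constant-term hypothesis $F^{(0)}=1$, witnessed by $\eta$, propagates correctly so that $(D^{k+1})^{(i)}$ is \emph{provably} $0$ for all $0\le i\le k$ — which is where $\eta$ is consumed and where the over-\Z\ restriction (only the unit $1$ is invertible) is exactly what is needed.
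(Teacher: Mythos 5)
Your proposal is correct and follows essentially the same route as the paper: both hinge on the telescoping identity $F\cdot\inv_k(F)=1-(1-F)^{k+1}$ proved by elementary rearrangement, followed by the homogenization machinery (Theorem \ref{thm:homogenize-proofs} and Lemma \ref{lem:bounded degree}) to extract the components, with $\eta$ consumed exactly to show $(1-F)^{(0)}=0$ and hence $((1-F)^{k+1})^{(i)}=0$ for $i\le k$. Your explicit introduction of $E:=D-D^{(0)}$ is just a slightly more spelled-out version of the step the paper delegates to Lemma \ref{lem:bounded degree}.
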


\begin{proof}
\iddo{}

Denote $a=F^{(0)} $. 
We construct the following simple  \PCZ-proof sequences. We have $a(1-(1-F))=aF$, and by assumption that $a=1$ has a \PCZ-proof $\eta$, we get a \PCZ-proof of $F=(1-(1-F))$. 

By definition $\inv_{k}(F)= (1+(1-F)+(1-F)^{2}+\dots +(1-F)^{k})$.
By elementary rearrangement, we prove in $\PCZ$:
\begin{align}\notag
F\cdot \inv_{k}(F) & = (1-(1-F))\cd \left(1+(1-F)+(1-F)^{2}+\dots+ (1-F)^{k}\right) \\\notag
& = 1+(1-F)+\dots +(1-F)^{k}\\\notag 
& ~~~~~~~~~~~~~~~~~~-(1-F)\cd\left(1+(1-F)+(1-F)^2+\dots+ (1-F)^{k}\right) \\\notag
& = 1+(1-F)+\dots+ (1-F)^{k}\\\notag 
& ~~~~~~~~~~~~~~~~~~~~~~~~~~~~~~~~~~~~~~~~~~~~-(1-F)-(1-F)^2-\dots- (1-F)^{k+1} \\
& = 1-(1-F)^{k+1}\,. \label{eq:cancel-this}
\end{align} 

From \eqref{eq:cancel-this} and Theorem \ref{thm:homogenize-proofs} in Section \ref{sec:homogenization-of-proofs}, we construct the \PCZ-proof of 
$$
(F\cdot \inv_{k}(F))^{(0)}= 1 - ((1-F)^{k+1})^{(0)},
$$
wherein every node in every circuit in the proof appears with its syntactic-degree upper bound.
Specifically, from Theorem \ref{thm:homogenize-proofs} we have \PCZ-proofs of   $\left((1-F)^{k+1}\right)^{(0)}=\left((1-F)^{(0)}\right)^{k+1} = (1-F^{(0)})^{k+1}$, and using $\eta$ we have that the rightmost term is  $(1-1)^{k+1}=0$, and we conclude \eqref{eq:FtimesInvF-zero}.

To conclude \eqref{eq:FtimesInvF-k} we proceed as follows.  From \eqref{eq:cancel-this} and Theorem \ref{thm:homogenize-proofs}  we  construct \PCZ-proofs  $(F\cdot \inv_{k}(F))^{(i)}=  (1-(1-F)^{k+1})^{(i)}$, for all $1\le i\le k$, with all nodes appear with their syntactic-degree upper bounds. From Lemma \ref{lem:bounded degree} in Section \ref{sec:homogenization-of-proofs}, we prove $(1-(1-F)^{k+1})^{(i)}=\hci{1}-\hci{((1-F)^{k+1})}=0-\hci{((1-F)^{k+1})}$. 

 To construct the \PCZ-proof of $((1-F)^{k+1})^{(i)}=0$, for $1\le i\le k$, in the theory we use again Lemma \ref{lem:bounded degree}. We omit the details. (Note   that since $F^{(0)}=1$ by assumption, we have $(1-F)^{(0)}=0$, meaning that all monomials in $(1-F)$ are of positive total degree. Therefore $(1-F)^{k+1}$ can only have monomials of degree greater than $k$, and so $\hci{((1-F)^{k+1})}=0$ is a true identity for all $i\le k$.)\iddo{}
\end{proof}

\subsubsection{Extracting the Numerators and Denominators of Circuits with Division}\label{sec:NUM-DEN-definition}
We also need to show how to extract the denominator and numerator of circuits with divisions. For every node $v$ in a circuit $F$ with division we introduce two nodes $\Den(v)$ and $\Num(v)$ that  will compute \emph{as polynomials} (that is, they will be circuits with no division) the numerator and denominator of the rational function computed by $v$, respectively, as follows:

\begin{enumerate}
\item If $v$ is an input node of $F$, let $\Num(v):= v$ and $\Den(v):= 1$.
\item If $v=u^{-1}$, let $\Num (v):= \Den(u)$ and $\Den(v):=\Num(u)$.
\item If $v=u_{1}\cdot u_{2}$, let $\Num(v):=\Num(u_{1})\cdot \Num(u_{2})$ and $\Den(v):=\Den(u_{1})\cdot \Den(u_{2})$.
\item If $v=u_{1}+ u_{2}$, let $\Num(v):=\Num(u_{1})\cdot \Den(u_{2})+ \Num(u_{2})\cdot \Den(u_{1})$ and $\Den(v):=\Den(u_{1})\cdot \Den(u_{2})$.
\end{enumerate}

Let $\Num(F)$ and $\Den(F)$ be the circuits with the output node $\Num(w)$ and $\Den(w)$, respectively, where $w$ is the output node of $F$. In Section \ref{sec:bring-div-to-top} we show that given a circuit with division $F$, there is a  \SigZB-definable function in \VZ\ that constructs the circuit $\Num(F)\cd(\Den(F))^{-1}$.

\subsection{A PI-Proof Reducing the Determinant from Rational Function to  Polynomial}
\label{sec:PI-proof-from-det-div-to-Taylor}

Like \cite{HT12}, in order to write the determinant as a polynomial instead of a rational function, we will write the determinant as the coefficient of $z^n$ in the Taylor expansion of a certain circuit with division. The coefficients of a Taylor expansion are defined as  follows:

\begin{definition}[Taylor expansion]\label{def:taylor}
Let $F=F(\overline{x},z)$ be a circuit with division. Define $\Coef_{z^{k}}(F)$ as a circuit
in the variables $\overline {x}$, computing the coefficient of $z^{k}$ in $F$, when $F$ is written as a Taylor power series at $ z=0$, in the following way:
\HalfSpace

\case 1 Assume  that no division gate in $F$ contains the variable $z$. Then we define $\Coef_{z^{k}}(F)$ by induction on the structure of $F$ as follows:
\begin{enumerate}
\item $\Coef_{z}(z):=1$ and $\Coef_{z^{k}}(z):=0$, if $k>1$.
\item If $F$ does not contain $z$, then $\Coef_{z^{0}}(F):=F$ and $\Coef_{z^{k}}(F):=0$, for $k>0$.
\item $\Coef_{z^{k}}(F+G):= \Coef_{z^{k}}(F)+\Coef_{z^{k}}(G)$.
\item $\Coef_{z^{k}}(F\cdot G):= \sum_{i=0}^k\Coef_{z^{i}}(F)\cd\Coef_{z^{k-i}}(G)$.
\end{enumerate}

\case 2 Assume  that $z$ occurs in the scope of some division gate in $F$. We let
$F_0$ be the denominator of the rational function computed by $F$ when $z=0$:\mar{make sure substitution notation is defined}\iddo{}
\[F_{0}:=\big(\Den(F)\big)(z/0).\]
\TODO{Here we drop the non redundancy condition---need t overify this.}
Note that $F_0$ is not necessarily a constant, as it may contain variables different from $z$. If $\wh{F_{0}}=0$ then $\Coef$ is \emph{undefined}.
Assume that  $\wh{F_{0}}\not=0$ and denote by $G$ the circuit $(1-F_{0}^{-1}\cd\Den(F))$. We let
\begin{equation}\label{eq:def-of-Coef}
\Coef_{z^{k}}(F):= 
F_{0}^{-1} \cdot \Coef_{z^{k}} \left(\Num(F)\cd
\left(1+G+G^2+\dots+G^k \right)\right).
\end{equation}
\end{definition}


Note that $z$ does not occur in any division gate inside $\Num(F)\cd
\left(1+G+G^2+\dots+G^k \right)$, and so $ \Coef_{z^{k}}(F) $ is well-defined.%
\iddo{}
\iddo{} 
In our applications, when using  $ \Coef_{z^{k}}(F)$  we will need to make sure that  $\widehat F_0=1$, and that we can prove in \PIZ\ that $F_0^{-1}=1$.
\iddo{}



\medskip

The following are the main properties of $\Coef_{z^{k}}$ that we  can use already in \VZ\ (similar to \cite{HT12}). 

\begin{lemma}[in \VZ]\label{lem:Taylor-basic-properties}
\begin{enumerate}


\item If $F_{0},\dots, F_{k}$ are circuits \emph{with} division not containing the variable $z$, then $\Coef_{z^{j}}\left(\sum_{i=0}^{k}F_{i} z^{i}\right)= F_{j}$ has a syntactically correct  $\PIZ$-proof, for each $j\leq k$.
\iddo{}

\item
Assume that $F,G$ are circuits \emph{with division} such that $F=G$ has a syntactically correct $\PIZ$-proof $\pi$ of size $s$. Then, $\Coef_{z^k}(F)= \Coef_{z^{k}}(G)$
has a syntactically correct $\PIZ$-proof of size $s\cdot \poly(k)$, for every natural number $k$.
\iddo{} 

\item Let  $F$ be a circuit \emph{without} division, together with a witness for  the syntactic-degree of all nodes in $F$, where $\deg(F)=d$. Then ${F= \sum_{i=0}^d \Coef_{z^{i}}(F)\cdot  z^{i}}$ has a $\PCZ$-proof. \footnote{The only place where we need this part is the  proofs of  the Cayley-Hamilton theorem in Section \ref{sec:corollaries}. For witnesses for syntactic-degree see the appendix Section \ref{sec:witnessing-syntactic-degrees}.}
\end{enumerate}

In both parts 1, 2 above the inverse elements of an integer $a$ is written as $a^{-1}$ (i.e., with an explicit division gate. This will not be a problem for us since in our case \VZ\ can prove that $a=1$).\iddo{}\mar{make sure all syntactically correct \PI-proofs have been put in appropriate places} \iddo{}
\end{lemma}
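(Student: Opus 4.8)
The plan is to prove all three parts by directly constructing the relevant PI-proof-sequences, using only the $\SigZB$-\textbf{COMP} axiom; the pervasive difficulty is the one already met for \cdeti\ and for Proposition~\ref{prop:write_PC-1-proof}, namely that we cannot use $\SigOneB$-induction (neither on the structure of a circuit nor on the length of a proof), so every ``inductive'' argument must be recast as a single proof-sequence whose rows are indexed by a tuple recording the position in the recursion.

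For Part~1, I would first build, by an auxiliary $\SigZB$-\textbf{COMP} construction, $\PCZ$-proofs of $\Coef_{z^{j}}(z^{i}) = 1$ when $i=j$ and $\Coef_{z^{j}}(z^{i}) = 0$ otherwise, for all $i,j\le k$ simultaneously: unfolding rules~1 and~4 of Definition~\ref{def:taylor} on the product-chain $z^{i}$ gives a telescoping sum all of whose terms vanish except the one isolating $\Coef_{z^{1}}(z)$. Then, for the circuit $F_{0}+F_{1}\cdot z+\dots+F_{k}\cdot z^{k}$, I repeatedly apply rules~3 and~4, use that $\Coef_{z^{0}}(F_{i})$ is literally $F_{i}$ and $\Coef_{z^{l}}(F_{i})$ is literally $0$ for $l>0$ (rule~2, since $F_{i}$ does not contain $z$), splice in the auxiliary proofs, and collapse the many zero summands with axioms A7, A8, A9. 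The resulting proof is a syntactically correct $\PIZ$-proof (it contains division gates only where the $F_{i}$ already did).

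For Part~2, I would lift the given proof $\pi=(A_{1}=B_{1},\dots,A_{L}=B_{L})$ line by line, producing for every $m\le L$ and every $l\le k$ a syntactically correct $\PIZ$-derivation of $\Coef_{z^{l}}(A_{m})=\Coef_{z^{l}}(B_{m})$, again as one sequence indexed by $(m,l)$ (together with the auxiliary indices coming from the convolution in rule~4). Lines derived by R1, R2 lift under the same rule; for R3 and R4 one uses that $\Coef_{z^{l}}$ commutes with $+$ and, via rule~4, with $\cdot$ up to the convolution sum, so the lifted conclusion is reassembled from the lifted premises for all $l'\le l$ using R3 and R4 --- this gives the $\poly(k)$ blow-up, one factor of $k$ for ranging over $l$ and one for the convolution. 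For each axiom one checks directly that $\Coef_{z^{l}}$ of the two sides is provably equal: A1--A10 and C1, C2 reduce to an instance of the same axiom plus rearrangement, and for the division axiom D, $F\cdot F^{-1}=1$, the Taylor expansion of $F\cdot F^{-1}$ is $1$, so one needs $\Coef_{z^{0}}(F\cdot F^{-1})=1$ and $\Coef_{z^{l}}(F\cdot F^{-1})=0$ for $l\ge 1$. I expect this last point to be the main obstacle: when $z$ occurs inside a division gate of a proof-line, $\Coef$ is governed by Case~2 of Definition~\ref{def:taylor} and is only defined when the denominator evaluated at $z=0$ computes a nonzero polynomial, so the clean statement needs either the restriction that $z$ occurs outside all division gates, or an explicit invariant that every intermediate denominator stays good at $z=0$. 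In the intended applications this invariant does hold --- $z$ enters only through $\cdeti(I_{n}+zX)$, whose denominators have constant term $1$ --- so I would make the hypothesis explicit and verify it for the two places (Step~2 and the Cayley--Hamilton argument of Section~\ref{sec:corollaries}) where Part~2 is used.

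For Part~3, the identity $F=\sum_{i=0}^{d}\Coef_{z^{i}}(F)\cdot z^{i}$ for a division-free $F$ of syntactic-degree $d$ would be proved by structural recursion on $F$, written out as one $\PCZ$-sequence indexed by the nodes of $F$ in topological order and using the degree witness to fix the cut-off at each node; alongside the main identity I carry, for every node $v$ and every $i>\deg(v)$, a $\PCZ$-proof that $\Coef_{z^{i}}(F_{v})=0$, which is needed to match the ranges of the sums at $+$- and $\cdot$-nodes. Variables, scalars and $z$ are handled by A7--A9; the $+$-case combines the two sub-identities term by term; the $\cdot$-case expands the product of the two sub-sums by distributivity, associativity and commutativity, rewrites each $z^{i}\cdot z^{j}$ as $z^{i+j}$ (the power laws being themselves supplied by an auxiliary $\SigZB$-\textbf{COMP} construction), and recollects by powers of $z$, so that the coefficient of $z^{m}$ becomes $\sum_{i+j=m}\Coef_{z^{i}}(G)\cdot\Coef_{z^{j}}(H)$, which is $\Coef_{z^{m}}(G\cdot H)$ by rule~4. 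All circuits that appear have size polynomial in $|F|$ and $d$, so the whole construction is $\SigZB$-definable in \VZ.
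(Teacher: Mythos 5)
Your proposal is correct and follows essentially the same route as the paper: parts 1 and 2 are handled there by the same line-by-line, $\SigZB$-\textbf{COMP}-indexed lifting used for homogenizing proofs (Theorem \ref{thm:homogenize-proofs}), and part 3 by the same simultaneous node-indexed construction given in the appendix (Lemma \ref{lem:Annex-F=sum-F-k}). The subtlety you flag about Case 2 of Definition \ref{def:taylor} and the axiom D is real, and the paper resolves it exactly as you suggest — by demanding only \emph{syntactic} correctness of the resulting $\PIZ$-proof and separately verifying (Lemmas \ref{lem:Taylor-good-assignment}, \ref{lem:small-good-assgn-forgot}) that the relevant denominators are provably good in the two places where the lemma is applied.
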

\begin{proof}
\iddo{}
The proofs  of parts 1, 2 are  almost identical to the proof of Theorem \ref{thm:homogenize-proofs} in Section \ref{sec:homogenization-of-proofs} for eliminating high syntactic-degrees (namely, homogenizing \PCZ-proofs), and we omit the details.
\iddo{}  The proof of part  3 is given  in the appendix (Lemma \ref{lem:Annex-F=sum-F-k}). 
\end{proof}

%

%
%
%
%

\para{The determinant as a polynomial}We are now ready to define the circuit computing the determinant as a polynomial. Let 
$$\taydet(X):=\Coef_{z^n}(\cdeti(I_n+zX)).$$
As explained \iddo{}in Section \ref{sec:carry-proof-in-theory}, $\taydet(X)$ is a circuit of polynomial-size in $n$ that computes the determinant polynomial. This is because every variable from $X$ in the circuit $\cdeti(I_{n}+zX)$ occurs in a product with $z$, and thus $\Coef_{z^{n}}(\cdeti(I_{n}+zX))$ computes the  $n$th homogeneous part of the determinant of $I_{n}+X$, which is simply the determinant of $X$.
By the definition of $\Coef_{z^{n}}$, the circuit $\Coef_{z^{n}}(\cdeti(I+zX))$ \emph{contains exactly one inverse gate}, namely the inverse of $\Den(\cdeti(I_{n}+zX))$ at the point $z=0$.  $\Den(\cdeti(I_{n}+zX))(z/0)$ has an exponential syntactic-degree  which will be dealt with  below in Section \ref{sec:Reducing-the-Syntactic-Degree-of-the-Determinant-Polynomial}.  

We can now use Lemma \ref{lem:Taylor-basic-properties} to construct the desired \PIZ-proof:

\begin{lemma}[in \VZ]\label{lem:PI-proof-DET-taylor-equals-DET-circ-div} Let $A$ be an  \nbyn\ symbolic matrix $X$ of distinct variables or the product of two symbolic matrices $XY$ or a triangular \nbyn\ symbolic matrix $Z$. Then there exists a polynomial-size $\PIZ$-proof of $\taydet(A)=\cdeti(A)$.
\end{lemma}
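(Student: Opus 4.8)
The plan is to prove the identity for the generic symbolic matrix $A=X$ and to obtain the remaining two cases by substitution: $\cdeti(XY),\taydet(XY)$ (resp.\ $\cdeti(Z),\taydet(Z)$) are obtained from $\cdeti(X),\taydet(X)$ by substituting for each entry variable $x_{ij}$ the circuit $\sum_{t}x_{it}y_{tj}$ (resp.\ $z_{ij}$ if $i\ge j$, the constant $0$ otherwise). Such a substitution touches only the matrix-entry variables, not the Taylor variable $z$, and keeps the Case~2 side-condition $\widehat{F_0}\neq 0$ of Definition~\ref{def:taylor} valid, so it commutes with the formation of $\Coef_{z^n}$; since $\PIZ$-proofs and the $\SigZB$-definable constructions of $\cdeti$ and $\Coef$ are closed under such substitutions, a polynomial-size $\PIZ$-proof of $\taydet(X)=\cdeti(X)$ substitutes to the proofs needed for $XY$ and $Z$. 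Unfolding $\taydet$, the goal for $A=X$ is a polynomial-size $\PIZ$-proof of $\Coef_{z^n}\!\bigl(\cdeti(I_n+zX)\bigr)=\cdeti(X)$.

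One preliminary I would dispose of first is that $F:=\cdeti(I_n+zX)$ falls under the well-behaved Case~2 of Definition~\ref{def:taylor}, with $F_0:=\bigl(\Den(F)\bigr)(z/0)$ being $\PIZ$-provably equal to $1$ (hence $F_0^{-1}=1$ $\PIZ$-provably): setting $z=0$ kills every factor $z$ by axioms~A7, A8, so each Schur complement and each entry of the matrix inverse occurring inside $\cdeti$ has denominator $\PIZ$-provably $1$ at $z=0$, and products of such denominators remain $1$. Here and in every construction below, since $\SigOneB$-induction is unavailable in \VZ\ the whole proof sequence is laid out at once using the $\SigZB$-\textbf{COMP} axiom, the level-$k$ lines citing the level-$(k-1)$ conclusion as a previous proof-line --- exactly the device used in the proof of Proposition~\ref{prop:write_PC-1-proof}.

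The core is a recursion over the dimension $k=1,\dots,n$. Write $X^{[k]}$ for the top-left $k\times k$ submatrix of $X$, let $M^{-1}$ be the inverse circuit of $M$ given by the construction \eqref{eq:def-inverse}, and set $\Phi(M):=\cdeti(M)\cdot M^{-1}$, the ``adjugate'' circuit of $M$. With $N:=2n$ fixed, I would construct $\PIZ$-proofs of truncated Taylor expansions
\[
\cdeti(I_k+zX^{[k]})=\textstyle\sum_{j=0}^{N}z^{j}e_{j}^{[k]}+z^{N+1}R^{[k]},\qquad \Phi(I_k+zX^{[k]})=\textstyle\sum_{j=0}^{N}z^{j}\pi_{j}^{[k]}+z^{N+1}S^{[k]}
\]
(the second entrywise), with coefficient circuits $e_{j}^{[k]},\pi_{j}^{[k]}$ not containing $z$ (they may carry division gates, harmlessly by Lemma~\ref{lem:Taylor-basic-properties}(1)) and $R^{[k]},S^{[k]}$ high-order remainders. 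For the step $k-1\mapsto k$, writing $M_1:=I_{k-1}+zX^{[k-1]}$: from the definitions $\cdeti(M)=\cdeti(M_1)\cdot\dd(M)$ and $\dd(I_k+zX^{[k]})=(1+zx_{kk})-z^{2}v_2 M_1^{-1}v_1^{t}$ (with $v_1,v_2$ the off-diagonal last column/row of $X^{[k]}$ as in \eqref{eq: X}), after distributing, commuting the scalar $\cdeti(M_1)$ past $v_2$, and rewriting $\cdeti(M_1)M_1^{-1}$ as $\Phi(M_1)$, one gets in $\PIZ$ the recurrence $\cdeti(I_k+zX^{[k]})=\cdeti(M_1)(1+zx_{kk})-z^{2}v_2\Phi(M_1)v_1^{t}$, and likewise a recurrence for $\Phi(I_k+zX^{[k]})$ from the block formula \eqref{eq:def-inverse} (after the $\dd^{-1}$-gates cancel against the $\dd$-factor of $\cdeti$, it is an expression in $\Phi(M_1)$, $\cdeti(M_1)$ and $\cdeti(M_1)^{-1}$, the last being replaced by $\inv_{N}(\cdeti(M_1))=\sum_{i=0}^{N}(1-\cdeti(M_1))^{i}$ plus a $z^{N+1}$-tail, which is legitimate because of the invariant $e_{0}^{[k]}=1$ below and is the $z$-graded analogue of Lemma~\ref{lem:inv}). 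Substituting the level-$(k-1)$ expansions into these recurrences, re-collecting powers of $z$, truncating at $z^{N}$ and pushing the rest into the remainders, produces the level-$k$ expansions, with the $e_{j}^{[k]},\pi_{j}^{[k]}$ given by the resulting poly-size linear recurrences.

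Along the recursion one maintains, $\PIZ$-provably, the invariants $e_{0}^{[k]}=1$; the degree bounds $e_{j}^{[k]}=0$ for $k<j\le N$ and $\pi_{j}^{[k]}=0$ for $k-1<j\le N$; $e_{k}^{[k]}=\cdeti(X^{[k]})$; and $\pi_{k-1}^{[k]}=\Phi(X^{[k]})$ (entrywise). The first two follow termwise from the recurrences via A7, A8; the third follows from the recurrence for $e_{k}^{[k]}$ together with the preceding-level invariants and the defining equation $\cdeti(X^{[k]})=\cdeti(X^{[k-1]})\bigl(x_{kk}-v_2(X^{[k-1]})^{-1}v_1^{t}\bigr)$; and the fourth I would get not termwise but from the adjugate identity $\Phi(M)M=\cdeti(M)I$ --- a $\PIZ$-consequence of $M^{-1}M=I$ (Proposition~\ref{prop:write_PC-1-proof}, substituted) and the definition of $\Phi$ --- by equating $z^{k}$-coefficients in $\Phi(I_k+zX^{[k]})(I_k+zX^{[k]})=\cdeti(I_k+zX^{[k]})I_k$ with Lemma~\ref{lem:Taylor-basic-properties}(1),(2), which yields $\pi_{k}^{[k]}+\pi_{k-1}^{[k]}X^{[k]}=e_{k}^{[k]}I_k$ and hence, using the degree bound $\pi_{k}^{[k]}=0$, the invariant $e_{k}^{[k]}=\cdeti(X^{[k]})$, and $X^{[k]}(X^{[k]})^{-1}=I_k$, that $\pi_{k-1}^{[k]}=\Phi(X^{[k]})$. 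Finally, at $k=n$, Lemma~\ref{lem:Taylor-basic-properties}(2) turns the expansion of $\cdeti(I_n+zX)$ into a $\PIZ$-proof of $\Coef_{z^n}(\cdeti(I_n+zX))=\Coef_{z^n}\!\bigl(\sum_{j=0}^{N}z^{j}e_{j}^{[n]}+z^{N+1}R^{[n]}\bigr)$, whose right side is $\PIZ$-provably $e_{n}^{[n]}$ by Lemma~\ref{lem:Taylor-basic-properties}(1) and $\Coef_{z^n}(z^{N+1}R^{[n]})=0$ (as $N\ge n$), and the invariant $e_{n}^{[n]}=\cdeti(X)$ then closes the proof. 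I expect the main obstacle to be this core recursion: carrying $\cdeti$ and its adjugate companion $\Phi$ simultaneously through the Schur-complement unrolling while keeping every line a \emph{syntactically correct} $\PIZ$-proof of polynomial size --- controlling the division gates and the rational remainders --- and keeping the four invariants mutually coherent at all levels; this is where essentially all of the lengthy (but routine) work lies, and where the argument relies most on Lemma~\ref{lem:Taylor-basic-properties}, on the geometric-series reasoning behind Lemma~\ref{lem:inv}, and on the $\SigZB$-\textbf{COMP} encoding discipline.
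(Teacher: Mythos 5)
Your argument is sound in outline, but it takes a genuinely different --- and considerably heavier --- route than the paper. You expand $\cdeti(I_n+zX)$ directly in powers of $z$ by a dimension recursion on the Schur-complement definition, carrying the adjugate $\cdeti(M)\cdot M^{-1}$ as a companion so that the recurrence closes, and you identify the top coefficient $e_n^{[n]}$ with $\cdeti(X)$ by a separate invariant chase. The paper (following \cite[Proposition 7.9]{HT12}, as sketched in Lemma \ref{lem:Taylor-good-assignment}) instead first proves $\cdeti(zI_n+X^{-1})=z^n+\sum_{i<n}z^iQ_i$ with the $Q_i$ free of $z$ --- here the leading coefficient is \emph{manifestly} $1$, since $zI_n+X^{-1}$ is ``monic'' in $z$ --- and then invokes the already-constructed $\PIZ$-proof of multiplicativity (Proposition \ref{prop:write_PC-1-proof}) to get $\cdeti(I_n+zX)=\cdeti(X)\cdot\cdeti(zI_n+X^{-1})=\cdeti(X)\cdot\bigl(z^n+\sum_{i<n}z^iQ_i\bigr)$, after which parts 1 and 2 of Lemma \ref{lem:Taylor-basic-properties} extract the coefficient of $z^n$ and yield $\taydet(X)=\cdeti(X)$ in one step. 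The multiplicativity trick converts ``identify the unknown top $z$-coefficient of $\cdeti(I_n+zX)$'' into ``the top coefficient of $\cdeti(zI_n+X^{-1})$ is $1$'', which removes exactly the parts of your argument that force you to track the adjugate, the four invariants, and the truncated geometric series for $\cdeti(M_1)^{-1}$; your version appears workable, but that bookkeeping (in particular, establishing in $\PIZ$ that the tail $\cdeti(M_1)^{-1}(1-\cdeti(M_1))^{N+1}$ is a multiple of $z^{N+1}$ as a circuit identity) is avoidable. Your handling of the $XY$ and $Z$ cases by substitution, and your use of $\SigZB$-\textbf{COMP} in place of unavailable induction, match the paper.
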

\begin{proof}
Using Lemma \ref{lem:Taylor-basic-properties} parts 1 and 2 we construct in \VZ\ the \PIZ-proof demonstrated in \cite[Proposition 7.9]{HT12} (see also Lemma \ref{lem:Taylor-good-assignment}).
(Since we do not use part 3  of Lemma  \ref{lem:Taylor-basic-properties}
we do not need to construct a witness for the syntactic-degrees of any circuit in this case.)
\iddo{}
\iddo{}

\end{proof}

%
%

%
%
%
%
%
%
%
%

\subsection{Reducing the Syntactic-Degree of the Determinant Polynomial}\label{sec:Reducing-the-Syntactic-Degree-of-the-Determinant-Polynomial}

We need to reduce the syntactic-degree of $\taydet(X):=\Coef_{z^n}(\cdeti(I_n+zX))$, which is exponential as we now explain. Similar reasoning applies to $\taydet(A)$, where $A$ is an \nbyn\ symbolic matrix $X$ of distinct variables or the product of two symbolic matrices $XY$ or a triangular \nbyn\ symbolic matrix $Z$.

Let $F(X)=F(x_1,,\dots,x_n)$ be a circuit with division in the displayed input variables. Assume that every input variable is now multiplied by a new variable $z$, to get $F(zx_1,\dots,zx_n)$, which we denote by $F'$. Consider $\Coef_{z^{k}}(F'):= F'^{-1}_0 \cdot \Coef_{z^{k}} \left(\Num(F')\cd \left(1+G+G^2+\dots+G^k \right)\right)$ for $G:=(1-F'^{-1}_{0}\cd\Den(F'))$, and $F'_{0}:=\big(\Den(F')\big)(z/0)$. By induction on the size of  $F'$, inspecting the construction of $\Coef_{z^k}(\cd)$,  it is easy to show that every subcircuit in $\Coef_{z^{k}}(F')$ has a syntactic-degree at most $k$, excluding occurrences of the  subcircuit $F'^{-1}_{0}$ that has syntactic-degree $\deg(\Den(F'))$ that  may be greater than $k$.\iddo{}

Accordingly, $\Coef_{z^n}(\cdeti(I_n+zX))$  contains occurrences of $\Den\left(\cdeti\left(I_n+zX\right)\right)(z/0)$, that are of syntactic-degree greater than $n$. To remedy this we define $\taydetsharp(X)$ as the circuit $\taydet(X)$ in which we replace the subcircuit  $\Den\left(\cdeti\left(I_n+zX\right)\right)(z/0)$ by the constant 1 (note that indeed $\Den\left(\cdeti\left(I_n+zX\right)\right)(z/0)$ computes the polynomial 1):
\begin{multline}
\label{eq:huge-term-for-detsharp-spelled-out}
\taydetsharp(X):=1 \cdot \Coef_{z^{n}} \Big(\Num(\cdeti(I_n+zX))\cd \\ \Big(1+(1-1\cd\Den(\cdeti(I_n+zX)))+(1-1\cd\Den(\cdeti(I_n+zX)))^2+ \\
\dots+\left(1-1\cd\Den(\cdeti(I_n+zX))\right)^n \Big)\Big)
\end{multline}

\begin{lemma}[in \VZ]\label{lem:reduce-syntactic-degree-Taylor}
Let $n$ be a positive natural number and $A$ be an \nbyn\ symbolic matrix $X$ of distinct variables, or the product of two symbolic matrices $XY$, or a triangular \nbyn\ symbolic matrix $Z$.\iddo{} Then, there exists a syntactically correct  $\PIZ$-proof of $\taydetsharp(A)=\taydet(A)$.
\end{lemma}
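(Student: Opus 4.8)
The plan is to exploit the fact that $\taydetsharp(A)$ is obtained from $\taydet(A)=\Coef_{z^n}(\cdeti(I_n+zA))$ by a single local surgery on the circuit: replacing the unique inverse gate $F_0^{-1}$ — where $F_0:=\big(\Den(\cdeti(I_n+zA))\big)(z/0)$ is the (division-free) denominator of $\cdeti(I_n+zA)$ specialised at $z=0$, which by the remark following Definition~\ref{def:taylor} is the \emph{only} division gate in $\taydet(A)$ — by the constant leaf $1$. Hence it suffices to (i) produce a syntactically correct $\PIZ$-proof of $F_0^{-1}=1$, and then (ii) propagate this single equality up the shared circuit, using only the inference rules of $\PC$, to obtain $\taydet(A)=\taydetsharp(A)$, whence R1 gives the stated equation. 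The whole construction will be carried out syntactically, using only the $\SigZB$-\textbf{COMP} axiom, exactly as in Proposition~\ref{prop:write_PC-1-proof}.

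For (i), I would first construct a $\PCZ$-proof of $F_0=1$. Unwinding the recursive definition of $\cdeti$ in Section~\ref{sec:circuit-with-div-for-det} together with the $\Num/\Den$ rules of Section~\ref{sec:NUM-DEN-definition}, the circuit $F_0$ is a product of numerators and denominators of the Schur complements $\dd(\cdot)$ occurring at the $n$ recursive levels, each specialised at the identity matrix $I_n$ (the value of $I_n+zA$ at $z=0$); since at the identity the off-diagonal vectors $v_1,v_2$ vanish, every such $\dd(\cdot)$ computes the polynomial $1$, so $\widehat{F_0}=1$. A $\PCZ$-proof of $F_0=1$ is then assembled by a direct structural construction in the style of Proposition~\ref{prop:write_PC-1-proof}; this is essentially the content of Lemma~\ref{lem:Taylor-good-assignment} (cf.\ \cite[Proposition~7.9]{HT12}). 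Given such a proof $\rho$ of $F_0=1$, a short $\PIZ$-proof of $F_0^{-1}=1$ follows: from R1 applied to $\rho$ we get $1=F_0$; R4 applied to A1 ($F_0^{-1}=F_0^{-1}$) and $1=F_0$ gives $F_0^{-1}\cdot 1=F_0^{-1}\cdot F_0$; chaining this with A9 ($F_0^{-1}\cdot 1=F_0^{-1}$), A4 (commutativity), and the instance $F_0\cdot F_0^{-1}=1$ of axiom~D, via R1 and R2, yields $F_0^{-1}=1$. The use of axiom~D is in fact genuinely sound here since $\widehat{F_0}=1\neq 0$, but only syntactic correctness is needed.

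For (ii), I would propagate $F_0^{-1}=1$ upwards by recursion on the structure of $\taydet(A)$: for every node $v$, construct a $\PIZ$-proof that the subcircuit of $\taydet(A)$ rooted at $v$ equals the subcircuit of $\taydetsharp(A)$ rooted at $v$ — using A1 for nodes not lying above $F_0^{-1}$, and, for a gate $v=u_1+u_2$ or $v=u_1\cdot u_2$, applying R3 or R4 to the equations already obtained at $u_1$ and $u_2$, together with the axioms C1, C2 and the identification-witnesses carried by the proof encoding (Section~\ref{sec:encoding-PI-proofs}) to reconcile nodes shared in the DAG. At the output node this produces $\taydet(A)=\taydetsharp(A)$, and R1 gives the claim. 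As in Proposition~\ref{prop:write_PC-1-proof}, this recursion cannot be realised by $\SigOneB$-induction inside \VZ; instead, the whole proof sequence — which has a simple uniform layout, a bounded number of proof-lines per node of $\taydet(A)$ plus the $O(1)$-many lines of step (i), with circuits encoded as in Section~\ref{sec:vz-function-for-constructing-DET-circuit} — is $\SigZB$-definable, and the resulting polynomial-size object can be verified in \VZ\ to be a syntactically correct $\PIZ$-proof of the required equation.

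I expect the main obstacle to be not the propagation in step (ii) (which is routine but bookkeeping-heavy because of DAG sharing and the $\cplus/+$ distinction), but rather step (i): pinning down the exact structure of $F_0=\big(\Den(\cdeti(I_n+zA))\big)(z/0)$ through all $n$ recursive levels of $\cdeti$ and the $\Num/\Den$ recursion, and exhibiting a $\PCZ$-proof of $F_0=1$ that is explicitly $\SigZB$-definable and uniform in $n$. Once $\widehat{F_0}=1$ is identified and that proof is in hand, everything else is a straightforward syntactic substitution carried out with $\SigZB$-\textbf{COMP}.
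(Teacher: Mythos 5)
Your proposal matches the paper's proof in all essentials: both identify the unique inverse gate $F_0^{-1}$ with $F_0=\Den(\cdeti(I_n+zA))(z/0)$, establish a syntactically correct $\PIZ$-proof that this denominator equals $1$, and then build via $\SigZB$-\textbf{COMP} a proof that gradually substitutes $1$ for that subcircuit throughout $\taydet(A)$. The only divergence is in the bookkeeping for $F_0=1$ — the paper first proves $(\cdeti(I_n+zA))(z/0)=\cdeti(I_n)$ and $\cdeti(I_n)=1$ and then invokes Theorem~\ref{thm:Den-Num-normalization-proofs}(ii), whereas you unwind $\Den$ through the Schur-complement recursion directly — but both rest on the same underlying fact that all division gates are provably good under the identity assignment (Lemma~\ref{lem:L01}).
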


\begin{proof}
We assume that $A=X$. The other cases are similar. 

By previous constructions we can construct the circuits $\taydetsharp(X)$ and $\taydet(X)$ in \VZ.

We first construct a \PIZ-proof of $\Den\left(\cdeti\left(I_n+zX\right)\right)(z/0)=1$. We use Theorem \ref{thm:Den-Num-normalization-proofs} in Section \ref{sec:bring-div-to-top} that will be proved in the sequel. Initially,  construct directly a \PIZ-proof of $(\cdeti\left(I_n+zX\right))(z/0)= \cdeti\left(I_n\right)$, by substituting $z$ by 0 and gradually replacing in the proof $0\cd u$ to $0$ and $0+u$ to $u$, for $u$ any subcircuit. This is done  using \SigZB-\textbf{COMP} and our encoding scheme in Section \ref{sec:encoding-PI-proofs}.\footnote{Note that this is not done for an \emph{arbitrary} circuit, namely, we do not know of an \NCTwo\ algorithm that receives a circuit $C$, such that $\widehat  C\neq 0$, and discards in such a way every $0$ constant in the circuit. We only build a \PIZ-proof that witnesses such a gradual   procedure for discarding 0's from the \emph{specific} circuit $(\cdeti\left(I_n+zX\right))(z/0)$.} 

Now, construct a \PIZ-proof of $\cdeti\left(I_n\right)=1$. This follows from the proof of Lemma \ref{lem:L01}. Another way to construct this \PIZ-proof is as follows: using Proposition \ref{prop:write_PC-1-proof} for constructing  the proof of  $\cdeti(Z)=z_{11}\cdots z_{nn}$, when $Z$ is a symbolic triangular matrix: we can construct a syntactically-correct \PIZ-proof of $\cdeti\left(I_n\right)=1$.\iddo{} Now using  Theorem \ref{thm:Den-Num-normalization-proofs} part (ii), we construct the proof of $\Den\left(\cdeti\left(I_n+zX\right)\right)(z/0)=1$.

Using the  \PIZ-proof of $\Den\left(\cdeti\left(I_n+zX\right)\right)(z/0)=1$ and the \SigZB-\textbf{COMP} axiom we can show the existence of a $\PI(Z)$-proof of  $\taydetsharp(X)=\taydet(X)$. This is done  by constructing a  $\PIZ$-proof that gradually substitutes each occurrence of the subcircuit $\Den\left(\cdeti\left(I_n+zX\right)\right)(z/0)$ in $\taydet(X)$ by the constant 1.\iddo{} 
\end{proof}

From Proposition \ref{prop:write_PC-1-proof} and Lemma \ref{lem:reduce-syntactic-degree-Taylor} we get: 

\begin{corollary}[in \VZ]\label{cor:exist-PI-div-proof-of-det-identities-for-dettaysharp}
Given a positive natural number $n$ there exists a syntactically correct $\PIZ$-proof of the determinant identities \eqref{eq:1} and \eqref{eq:2} for \nbyn\ matrices, where the determinant  in  \eqref{eq:1} and \eqref{eq:2} is written as the division free circuit $\taydetsharp(A)$, for $A$, the \nbyn\ symbolic matrix $X$ or $Y$, or their product $XY$, or a symbolic triangular matrix $Z$. 
\end{corollary}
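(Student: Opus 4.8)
The plan is to obtain the claimed $\PIZ$-proof by concatenating three proof-sequences that have already been $\SigZB$-defined in \VZ, and then appending a constant number of applications of the rules R1, R2, R4 that rewrite the two endpoint equations. Fix $n$ and let $A$ range over the symbolic matrices $X$, $Y$, $XY$ and the symbolic triangular matrix $Z$. By Proposition~\ref{prop:write_PC-1-proof} we have (the $\SigZB$-definition of) a syntactically correct $\PIZ$-proof $\pi_0$ of \eqref{eq:1} and \eqref{eq:2} with the determinant written as $\cdeti(\cd)$; by Lemma~\ref{lem:PI-proof-DET-taylor-equals-DET-circ-div} we have, for each such $A$, a $\PIZ$-proof $\sigma_A$ of $\taydet(A)=\cdeti(A)$; and by Lemma~\ref{lem:reduce-syntactic-degree-Taylor} we have, for each such $A$, a syntactically correct $\PIZ$-proof $\tau_A$ of $\taydetsharp(A)=\taydet(A)$. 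Each of these is a $\SigZB$-definable string function of $n$ in \VZ, and a finite concatenation of $\SigZB$-definable string functions is again $\SigZB$-definable in \VZ, so the composite object can be produced by the theory.

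For \eqref{eq:1}: appending $\sigma_X$ to $\tau_X$ and using R2 gives $\taydetsharp(X)=\cdeti(X)$, and likewise $\taydetsharp(Y)=\cdeti(Y)$ and $\taydetsharp(XY)=\cdeti(XY)$. Apply R4 to the first two of these to derive $\taydetsharp(X)\cd\taydetsharp(Y)=\cdeti(X)\cd\cdeti(Y)$; chain this via R2 with the endpoint $\cdeti(X)\cd\cdeti(Y)=\cdeti(XY)$ of $\pi_0$, and then via R2 again with the R1-reversal $\cdeti(XY)=\taydetsharp(XY)$ of $\taydetsharp(XY)=\cdeti(XY)$, obtaining $\taydetsharp(X)\cd\taydetsharp(Y)=\taydetsharp(XY)$, which is \eqref{eq:1} with the determinant written as $\taydetsharp$. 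For \eqref{eq:2}: from $\tau_Z$ and $\sigma_Z$ one gets $\taydetsharp(Z)=\cdeti(Z)$ by R2, and chaining this (again by R2) with the endpoint $\cdeti(Z)=z_{11}\cdots z_{nn}$ of $\pi_0$ yields $\taydetsharp(Z)=z_{11}\cdots z_{nn}$. The assembled sequence is a $\PIZ$-proof; it is only syntactically correct, since $\pi_0$, the $\sigma_A$ and the $\tau_A$ still contain division gates and the division axiom D is only checked syntactically (Section~\ref{sec:circ_proof_with_divison}), but that is exactly what the statement asserts.

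The only point that needs care — and the main bookkeeping obstacle — is making the encodings fit together, i.e.\ preserving the \emph{witness for syntactic correctness} (Section~\ref{sec:encoding-PI-proofs}) across the splice: the circuit $\cdeti(A)$ must occur with identical node numbering in the last lines of $\pi_0$ and in $\sigma_A$, $\tau_A$; the circuit $\taydet(A)$ identically in $\sigma_A$ and $\tau_A$; and $\taydetsharp(A)$ identically wherever it appears. This is handled as in Section~\ref{sec:vz-function-for-constructing-DET-circuit}: $\cdeti$, $\taydet$ and $\taydetsharp$ are all produced by the same deterministic $\SigZB$-definable constructions built out of their common sub-circuits, so one fixes a single global node-numbering scheme that depends only on $n$ and on which of $X,Y,XY,Z$ is meant, re-reads each sub-proof through that scheme before concatenating, and, where a sub-proof was originally defined with a different local numbering, prepends a uniform renaming map that is itself $\SigZB$-definable and becomes part of the witness. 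Once the numberings are aligned, checking that each newly appended line follows from earlier lines by R1, R2 or R4 is a $\SigZB$ predicate, so the entire gluing operation is a $\SigZB$-definable string function of $n$ in \VZ, which completes the construction.
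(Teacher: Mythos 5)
Your proposal is correct and follows essentially the same route as the paper, which obtains this corollary by combining Proposition \ref{prop:write_PC-1-proof}, Lemma \ref{lem:PI-proof-DET-taylor-equals-DET-circ-div} and Lemma \ref{lem:reduce-syntactic-degree-Taylor} and chaining the resulting equations with the \PIZ\ rules. The extra care you take about aligning node numberings and syntactic-correctness witnesses across the spliced sub-proofs is consistent with the encoding conventions of Section \ref{sec:encoding-PI-proofs} and fills in detail the paper leaves implicit.
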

\iddo{}

\section{Bringing Division Gates to the Top}\label{sec:bring-div-to-top}

Here  we show a \SigZB-definable function in \VZ\ that receives an algebraic circuit $F$ with division gates and \emph{normalizes} it as shown in Section \ref{sec:NUM-DEN-definition}, that is, converts it into a circuit with a single division gate  ``at the top'', of the form $\Num(F)\cd \Den(F)^{-1}$ (formally, the output gate is a product gate with one of its child being a division gate). We need to normalize circuits in such a way in order to be able to eliminate division gates: as seen in  the next section, we  replace the gate $G^{-1}$ with $\inv_k(G)$; but for $\inv_k(G)$ to be defined we need $G$ to be a division free circuit (and can be guaranteed only if there is only one division gate at the top). 



%
%

 We wish to show the following:

\begin{theorem}[in \VZ]
\label{thm:Den-Num-normalization-proofs}
(i) If $F$ is a circuit with division, then $F=\Num(F) \cd \Den(F)^{-1}$ has a $\PI(\F)$-proof. (ii) Let $F,G$ be circuits with division. Assume that $F=G$ has a $\PI(\F)$-proof. Then $\Num(F)\cd\Den(F)^{-1}= \Num(G)\cd\Den(G)^{-1}$ has a syntactically correct  $\PI(\F)$-proof such that every division gate in every circuit in the proof occurs only at the top.  
\end{theorem}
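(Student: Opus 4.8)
The plan is to prove both parts by structural induction on the circuit $F$ (respectively, on the given $\PI(\F)$-proof of $F=G$ for part (ii)), using the recursive definitions of $\Num(v)$ and $\Den(v)$ from Section~\ref{sec:NUM-DEN-definition}. For part (i), I would induct on the structure of $F$, establishing in $\PI(\F)$ the equation $F_v = \Num(v)\cdot\Den(v)^{-1}$ for every node $v$ of $F$ simultaneously. The base case is when $v$ is an input node: then $\Num(v)=v$, $\Den(v)=1$, and $v = v\cdot 1^{-1}$ follows from axioms $\mathrm{A9}$ ($F\cdot 1 = F$) and the division axiom $\mathrm{D}$ applied to the constant $1$ (together with $\mathrm{R1}$, $\mathrm{R2}$, $\mathrm{R4}$). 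For the inductive step I would handle each gate type. If $v = u^{-1}$, then $\Num(v)=\Den(u)$, $\Den(v)=\Num(u)$, and from the inductive hypothesis $u = \Num(u)\cdot\Den(u)^{-1}$ one derives $v = u^{-1} = (\Num(u)\cdot\Den(u)^{-1})^{-1} = \Num(u)^{-1}\cdot\Den(u)$ using $\mathrm{D}$ and commutativity/associativity; this matches $\Den(u)\cdot\Num(u)^{-1}$ as desired. If $v = u_1\cdot u_2$, combine the two inductive hypotheses via rule $\mathrm{R4}$ to get $v = (\Num(u_1)\Den(u_1)^{-1})\cdot(\Num(u_2)\Den(u_2)^{-1})$, then rearrange using $\mathrm{A4}$, $\mathrm{A5}$ and the fact that $(\Den(u_1)^{-1})\cdot(\Den(u_2)^{-1}) = (\Den(u_1)\cdot\Den(u_2))^{-1}$ (which itself needs a short $\PI$-derivation from $\mathrm{D}$). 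If $v = u_1 + u_2$, the computation $\Num(v) = \Num(u_1)\Den(u_2) + \Num(u_2)\Den(u_1)$, $\Den(v)=\Den(u_1)\Den(u_2)$ requires the standard "common denominator" identity $a_1 d_1^{-1} + a_2 d_2^{-1} = (a_1 d_2 + a_2 d_1)(d_1 d_2)^{-1}$, which again reduces to $\mathrm{D}$, distributivity $\mathrm{A6}$, and ring-axiom rearrangements. Since every intermediate equation here is produced by a fixed finite template, all of this is $\SigZB$-constructible in \VZ\ using the $\SigZB$-\textbf{COMP} axiom and the encoding scheme of Section~\ref{sec:encoding-PI-proofs}.

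For part (ii), I would proceed by induction on the length of the given $\PI(\F)$-proof $\pi$ of $F=G$. At each proof-line $H_i = K_i$, I maintain a derived line $\Num(H_i)\Den(H_i)^{-1} = \Num(K_i)\Den(K_i)^{-1}$ in the new proof, and I keep every circuit in the new proof normalized so that its only division gate is a single top division gate (this is the key structural invariant). If $H_i=K_i$ is an axiom, I invoke part (i) on both $H_i$ and $K_i$ together with the axiom itself and rule $\mathrm{R2}$; for the division axiom $\mathrm{D}$ itself, $F\cdot F^{-1}=1$, the normalized translation is $\Num(F)\Den(F)^{-1}\cdot(\Num(F)\Den(F)^{-1})^{-1} = 1$, which is an instance of the normalized form of $\mathrm{D}$ applied to the circuit $\Num(F)\Den(F)^{-1}$, after a short rearrangement. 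If $H_i=K_i$ is derived by $\mathrm{R3}$ (addition of two previous lines), I combine the two normalized predecessor equations and re-normalize the resulting sum using the common-denominator identity from part (i); similarly $\mathrm{R4}$ uses the product-of-denominators identity. Rules $\mathrm{R1}$, $\mathrm{R2}$ are immediate. Throughout, the re-normalization step after $\mathrm{R3}$ and $\mathrm{R4}$ is what guarantees the invariant "every division gate occurs only at the top" — I never form nested divisions because I always push to a single top-level $(\cdot)^{-1}$ immediately.

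The main obstacle I anticipate is \emph{not} the mathematical content — the identities involved are elementary — but rather the bookkeeping needed to make this construction go through as a genuine $\SigZB$-definable function in \VZ, in particular (a) ensuring that the node-numbering conventions of Section~\ref{sec:encoding-PI-proofs} (item~\ref{it:pi-proof-encode-circuits-identical}: identical node numbers for the "same" circuit across antecedent and consequent) are respected when I splice together the many sub-derivations coming from part (i) at each proof-step, and (b) verifying that the size of the output proof stays polynomial: each original proof-line of size $s$ blows up by a fixed polynomial factor under normalization, so a proof of size $S$ yields a normalized proof of size $\poly(S)$, but one must check the polynomial degree of this blow-up is uniform (independent of $n$). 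Also note that part (ii) is stated for \emph{syntactically correct} $\PI$-proofs, so I must be careful to never rely on well-definedness of any division gate — every step above uses only the syntactic rules $\mathrm{R1}$–$\mathrm{R4}$ and the syntactic form of the axioms, so this causes no difficulty, but it does mean I cannot, e.g., simplify $0^{-1}\cdot 0$ to $1$ or discard a zero numerator. Once these encoding details are discharged, both parts follow by the straightforward inductions sketched above.
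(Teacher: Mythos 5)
Your proposal takes essentially the same route as the paper. The paper first observes that the $\Num/\Den$ construction of Section~\ref{sec:NUM-DEN-definition} can be carried out in parallel (every node $u$ is duplicated into $\Num(u)$ and $\Den(u)$ and all the wiring is done simultaneously, so the normalized circuit is $\FACZ$-constructible via $\SigZB$-\textbf{COMP}), and then states that the proof-theoretic part is done exactly as in Theorem~\ref{thm:homogenize-proofs}: a local, per-proof-line translation assembled simultaneously for all lines with pointers to premises, rather than by induction on proof length. That is precisely your construction; your ``obstacles'' paragraph correctly identifies the parallelization and encoding issues as the only real work, and the same remark applies to part (ii), where your ``induction on the length of $\pi$'' must likewise be read as a simultaneous per-line construction (each translated line depends only on the line itself and the indices of its premises, so this is fine).

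Two local slips are worth fixing, though neither changes the argument. First, in the case $v=u^{-1}$ of part (i), the step $u^{-1}=(\Num(u)\cd\Den(u)^{-1})^{-1}$ is not licensed by any rule: $\PI$ has no congruence rule for $(\cd)^{-1}$. The standard derivation is to obtain $u\cd(\Den(u)\cd\Num(u)^{-1})=1=u\cd u^{-1}$ from the induction hypothesis and axiom D, and then cancel $u$ by multiplying both sides by $u^{-1}$ and applying D again. Second, your translation of a D-axiom line as $\Num(F)\Den(F)^{-1}\cd(\Num(F)\Den(F)^{-1})^{-1}=1$ contains a nested division and so violates the very invariant you are maintaining. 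The correct normalized line is obtained by unfolding $\Num$ and $\Den$ on $F\cd F^{-1}$, namely $\big(\Num(F)\cd\Den(F)\big)\cd\big(\Den(F)\cd\Num(F)\big)^{-1}=1\cd 1^{-1}$, and it is derived from D applied to the \emph{division-free} circuit $\Den(F)\cd\Num(F)$ together with A4; this keeps all division gates at the top and never applies D to a circuit that itself contains divisions, which is also what keeps the output proof merely \emph{syntactically} correct as required.
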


To prove this we first argue that division normalization is doable in \ACZ. This follows the similar reasoning as shown in Section \ref{sec:vz-function-for-constructing-DET-circuit}: although the division normalization procedure as described in Section \ref{sec:NUM-DEN-definition} is defined by induction on the size of the circuit (and thus implicitly needs \SigOneB-induction) it is uniform enough to be defined \SigZB-formula (using \SigZB-\textbf{COMP} axiom) as follows. 

Every internal node $u$ in the input circuit $F$ is duplicated into two copies denoted $\Num(u)$ and $\Den(u)$. We then construct (in parallel) the division normalized circuit by wiring the nodes $\Num(u)$ and $\Den(u)$ for each $u$ in $F$ as in the original definition:  (\textbf{i}) If $v$ is an input node of $F$, let $\Num(v):= v$ and $\Den(v):= 1$; (\textbf{ii}) if $u=v+w$ we construct $\Num(u):=\Num(v)\cd\Den(w)+\Num(w)\cd \Den(v)$ and $\Den(u):=\Den(v)\cd\Den(w)$;
(\textbf{iii}) if $v=u^{-1}$, let $\Num (v):= \Den(u)$ and $\Den(v):=\Num(u)$; and finally (\textbf{iv}) if $v=u_{1}\cdot u_{2}$, let $\Num(v):=\Num(v_{1})\cdot \Num(v_{2})$ and $\Den(v):=\Den(v_{1})\cdot \Den(v_{2})$.

Since the nodes $\Num(u)$ and $\Den(u)$ are known in advance for every node $u$, this construction is doable in parallel, and specifically in \FACZ.        






The proof of Theorem \ref{thm:Den-Num-normalization-proofs} is similar to  the proof of Theorem \ref{thm:homogenize-proofs} that is presented in details in  Section \ref{sec:homogenization-of-proofs}. We thus omit the details. 




From Corollary \ref{cor:exist-PI-div-proof-of-det-identities-for-dettaysharp} and Theorem \ref{thm:Den-Num-normalization-proofs} we get:

\begin{corollary}[in \VZ]\label{cor:exist-normalized-div-PI-div-proof-of-det-identities}
Given a positive natural number $n$ there exists a syntactically correct $\PIZ$-proof of the determinant identities \eqref{eq:1} and \eqref{eq:2} for \nbyn\ matrices, where the determinant  in  \eqref{eq:1} and \eqref{eq:2} is written as the division free circuit  
$\taydetsharp(A)$, for $A$, the \nbyn\ symbolic matrix $X$ or $Y$, or their product $XY$, or a symbolic triangular matrix $Z$
in which every circuit with division has only a single division gate at the top.\iddo{}
\end{corollary}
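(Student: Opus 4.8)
The statement to prove is Corollary \ref{cor:exist-normalized-div-PI-div-proof-of-det-identities}, which combines Corollary \ref{cor:exist-PI-div-proof-of-det-identities-for-dettaysharp} and Theorem \ref{thm:Den-Num-normalization-proofs}. This is a routine composition of two earlier results, so the ``proof'' is really a short chaining argument. Let me sketch it.

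The plan is to invoke the two cited results in sequence. First I would recall from Corollary \ref{cor:exist-PI-div-proof-of-det-identities-for-dettaysharp} that \VZ\ proves, given $n$ in unary, the existence of a syntactically correct $\PIZ$-proof $\pi$ of the determinant identities \eqref{eq:1}, \eqref{eq:2}, where the determinant is written as $\taydetsharp(A)$ for $A$ one of the symbolic matrices $X$, $Y$, $XY$, or the triangular matrix $Z$. The circuits appearing in $\pi$ may have division gates occurring anywhere (in particular in intermediate proof-lines), though the determinant circuits in the concluding lines, $\taydetsharp(A)$, are already division-free.

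Next I would apply Theorem \ref{thm:Den-Num-normalization-proofs}(ii): since $\pi$ is a (syntactically correct) $\PI(\Z)$-proof of the identities in question, there is a syntactically correct $\PIZ$-proof $\pi'$ of the corresponding normalized identities in which every division gate in every circuit occurs only at the top, i.e., every circuit with division has the form $\Num(F)\cd\Den(F)^{-1}$. Here one must check that applying the normalization to the endpoints does not disturb the determinant circuits: since $\taydetsharp(A)$ is division-free, normalization leaves it essentially unchanged (formally, $\Num(\taydetsharp(A))\cd\Den(\taydetsharp(A))^{-1}$, but one can prepend/append a short $\PIZ$-derivation, using Theorem \ref{thm:Den-Num-normalization-proofs}(i) together with $\Den(\taydetsharp(A))^{-1}=1^{-1}$ and axioms A9, D, to recover $\taydetsharp(A)$ itself). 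Thus the concluding lines of the normalized proof are exactly the determinant identities with the determinant written as $\taydetsharp(A)$, as desired. Finally, I would note that the whole construction is carried out by a $\SigZB$-definable function in \VZ: the proof $\pi$ is produced by a $\SigZB$-definable function (Corollary \ref{cor:exist-PI-div-proof-of-det-identities-for-dettaysharp}), and the normalization transformation of Theorem \ref{thm:Den-Num-normalization-proofs} is $\FACZ$ (as argued via the explicit parallel wiring of the $\Num(u)$, $\Den(u)$ copies), hence definable in \VZ; composing two $\SigZB$-definable functions stays $\SigZB$-definable in \VZ.

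The main ``obstacle'' is not really a mathematical difficulty but a bookkeeping point: ensuring that after normalization the \emph{displayed} determinant subcircuits in the final identities are still literally $\taydetsharp(A)$ and not some normalized variant of it, which requires the small patching derivation mentioned above (using that $\taydetsharp(A)$ is division-free so its denominator is provably the constant $1$). Everything else is a direct appeal to the previously established results, so the proof can be stated in a few lines.

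\begin{proof}
By Corollary \ref{cor:exist-PI-div-proof-of-det-identities-for-dettaysharp}, there is a \SigZB-definable function in \VZ\ that, given $n$ in unary, outputs a syntactically correct $\PIZ$-proof $\pi$ of the determinant identities \eqref{eq:1} and \eqref{eq:2} for \nbyn\ matrices, where the determinant in \eqref{eq:1} and \eqref{eq:2} is written as the division-free circuit $\taydetsharp(A)$, for $A$ one of $X$, $Y$, $XY$, or the symbolic triangular matrix $Z$.

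Apply Theorem \ref{thm:Den-Num-normalization-proofs}(ii) to $\pi$: since $\pi$ is a $\PI(\Z)$-proof of these equations, we obtain a syntactically correct $\PI(\Z)$-proof $\pi'$ of the normalized equations in which every division gate in every circuit occurs only at the top, i.e., every circuit in $\pi'$ that contains a division gate has the form $\Num(F)\cd\Den(F)^{-1}$. The normalization transformation is \FACZ\ (by the explicit parallel construction of the copies $\Num(u),\Den(u)$ described after the statement of Theorem \ref{thm:Den-Num-normalization-proofs}), hence \SigZB-definable in \VZ; composed with the \SigZB-definable function producing $\pi$, the resulting map $n\mapsto\pi'$ is again \SigZB-definable in \VZ.

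It remains to observe that the concluding lines of $\pi'$ are the determinant identities with the determinant written as $\taydetsharp(A)$ itself. Since $\taydetsharp(A)$ is division-free, $\Den(\taydetsharp(A))$ is (provably, by Theorem \ref{thm:Den-Num-normalization-proofs}(i) and the construction of $\Den$ on division-free circuits) the constant $1$, and $\Num(\taydetsharp(A))=\taydetsharp(A)$; hence a short $\PIZ$-derivation using axioms A9 and D (namely $1\cd 1^{-1}=1$, and $F\cd 1=F$) converts $\Num(\taydetsharp(A))\cd\Den(\taydetsharp(A))^{-1}$ back to $\taydetsharp(A)$. Appending these derivations to $\pi'$ (again by a \SigZB-definable modification in \VZ) yields the desired syntactically correct $\PIZ$-proof of \eqref{eq:1} and \eqref{eq:2} with the determinant written as $\taydetsharp(A)$ and with every circuit with division having a single division gate at the top.
\end{proof}
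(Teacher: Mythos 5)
Your proposal is correct and follows exactly the route the paper takes: the paper derives this corollary immediately by combining Corollary \ref{cor:exist-PI-div-proof-of-det-identities-for-dettaysharp} with Theorem \ref{thm:Den-Num-normalization-proofs}, offering no further detail. Your additional patching step to recover the literal circuit $\taydetsharp(A)$ in the concluding lines (since $\Num$ and $\Den$ of a division-free circuit are only provably, not syntactically, equal to the circuit and to $1$) is a legitimate bookkeeping point that the paper silently elides, and your handling of it is sound.
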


\newcommand{\shift}{\ensuremath{\rst\sigma}}

\section{Eliminating Division Gates}\label{sec:division-elim-from-proofs}
In this section  we show in detail step 5 (in Section \ref{sec:carry-proof-in-theory}) which eliminates in the theory  division gates from the \PIZ-proof we obtained in Corollary \ref{cor:exist-normalized-div-PI-div-proof-of-det-identities}. 

A \PIZ-proof is said to be \emph{division axiom free} if it does not use the axiom D of division: $F\cd F^{-1}=1$. The \emph{identity matrices assignment $\rho$ i}s defined to be the assignment of 0 and 1 elements to the variables in \eqref{eq:1}, \eqref{eq:2} such that $X=Y=Z=I_n$. We say that a division gate $u$ is \emph{\textbf{provably good under an assignment $\rho$}} to its variables whenever $u\rst\rho=1$ has a division axiom free \PIZ-proof. In this case we  also say that \emph{$\rho$ is provably good for $u$}. Accordingly, if $\rho$ is provably good for all the division gates in a circuit $C$ we say that \emph{$\rho$ is provably good for $C$}. 

To eliminate division gates we first  make sure that every division gate $u^{-1}$ that appears in the $\PIZ$-proof is provably good under $\rho$.  
This will allow us to carry out the proof in the theory as follows: %
 by induction on circuit size it is easy to see that if every division gate in a circuit $F$ is good under $\rho$ then  $\Den(F)$ (Section \ref{sec:NUM-DEN-definition}) is good under $\rho$ as well (however, we cannot use such \SigOneB-induction in the theory to conclude this fact; see below).
We then proceed as follows. After normalizing division gates (Section \ref{sec:bring-div-to-top}) in   the \PIZ-proof obtained in Corollary \ref{cor:exist-normalized-div-PI-div-proof-of-det-identities}, every circuit with division in the \PIZ-proof has a single division gate and is of the form $\Num(F)\cd\Den(F)^{-1}$. We first linearly shift the variables  by $\rho$, that is, replace every variable $r_{i}$ in the proof by $\rho(r_i)-w_i$ (the \PIZ-proof is still correct under this shift). Denote the shift $r_i \mapsto \rho(r_i)-w_i$ by $\sigma$. Then we replace every subcircuit $\Den(F\shift)^{-1}$ in the shifted proof  with $\inv_k(\Den(F\shift))$, for a suitable $k$. 
We now need to simulate the axiom $F\cd F^{-1}=1$ that was replaced by $F\shift\cd\inv_k (F\shift)=1$ in order to obtain a correct \PCZ-proof. This is done by using in Section \ref{sec:homogenization-of-proofs}  Lemma \ref{lem:inv}; for this lemma we need to  show that $\hc{(\Den(F\shift))}{0}=1$ is \PCZ-provable (note that $\hc{(\Den(F\shift))}{0}$ computes precisely the value of $\Den(F)$ under the identity matrices assignment, and that since   $\Den(F)$ is good under $\rho$, as we mentioned above, ${\hc{(\Den(F\shift))}{0}}$ computes the polynomial 1).

The construction in \VZ\ of the   \PCZ-proofs of $\hc{(\Den(F\shift))}{0}=1$, for every  $F$ as above, is done as follows. As a first attempt, observe that  we can construct this \PCZ-proof by induction on the size of $F\shift$: for every gate $u$ in $F\shift$ we construct the proof of $\hc{\Den(u)}{0}=1$, using the fact that every division gate in $F\shift$ evaluates to 1 under the zero assignment (this is because $\sigma$ shifts the variables to $\rho$, and the fact that  every division gate is good under $\rho$). However, since we do not have the $\SigOneB$-induction axiom in \VZ, in Lemma \ref{lem:from-all-Den-gates-proofs-to-top-Den} we will use a parallel construction of the \PCZ-proof, as follows: using the \SigZB-\textbf{COMP} axiom, for every gate $u$ in $F\shift$, construct a \PCZ-proof of $\Den(u)^{(0)}=1$. This is done simultaneously  for all gates $u$, using pointers to ``previous'' proof-lines and direct constructions of \PIZ-proofs of $\hcz{u}=1$ that do not use the division axiom D, for every gate $u$ in $F\shift$.  

\iddo{} 


\subsection{Identity Matrices are Provably Good Assignments}

Here we explicitly inspect all the division gates in our \PIZ-proofs, making sure that they are all provably good under  $\rho$. We first make sure that this is sufficient for our purpose of eliminating division (that is, applying Lemma \ref{lem:inv}).


\begin{lemma}[in \VZ]\label{lem:from-all-Den-gates-proofs-to-top-Den}
Let $F$ be a circuit with division and assume that for every division gate $u^{-1}$ in $F$ there exists a division axiom free  $\PIZ$-proof of $u\rst\rho=1$. Then, there exists a \PCZ-proof of $\hcz{\Den(F\shift)}=1$.
\end{lemma}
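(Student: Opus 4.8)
\textbf{Proof plan for Lemma \ref{lem:from-all-Den-gates-proofs-to-top-Den}.}

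The plan is to construct, in parallel and via the \SigZB-\textbf{COMP} axiom, a single \PCZ-proof which simultaneously establishes $\hcz{\Den(u)\shift}=1$ for every gate $u$ of $F$, and then read off the case $u=w$ (the output gate of $F$) as the desired conclusion. The reason a parallel construction is forced on us is that the natural argument is an induction on the structure of $F$, following the recursive definition of $\Den(\cd)$ in Section \ref{sec:NUM-DEN-definition}, but this induction is over a \SigOneB-property (``there exists a \PCZ-proof of $\hcz{\Den(u)\shift}=1$'') and \VZ\ does not have \SigOneB-induction. Instead, I would fix in advance the shape of the sub-proof attached to each gate $u$: its proof-lines are obtained by a \SigZB-\textbf{COMP}-definable transformation of the sub-proofs for the children of $u$, together with a bounded number of applications of the homogenization results of Section \ref{sec:homogenization-of-proofs} (Theorem \ref{thm:homogenize-proofs}, Lemma \ref{lem:bounded degree}), and the whole assembled object is then checked to be a syntactically correct \PCZ-proof by a \SigZB-formula using the witnessing conventions of Section \ref{sec:encoding-PI-proofs}. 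Concretely, the total proof is laid out as a concatenation of blocks indexed by the gates $u$ of $F\shift$, each block ending in a line $\hcz{\Den(u)\shift}=1$, and each block refers back (by pointers to earlier lines) only to the blocks of $u$'s children; no string induction is needed because the entire sequence is defined explicitly by comprehension.

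The content of each block follows the four cases of the definition of $\Den$. (\textbf{i}) If $u$ is an input node then $\Den(u)=1$, so after the shift $\Den(u)\shift$ is still the constant $1$ (the shift $\sigma$ only touches variables, and replaces $r_i$ by $\rho(r_i)-w_i$, which does not affect the constant leaf $1$), hence $\hcz{1}=1$ is proved from axiom A1 together with the homogenization machinery of Section \ref{sec:homogenization-of-proofs} that extracts the degree-$0$ part of a constant circuit. (\textbf{ii}) If $u=u_1^{-1}$ is a division gate, then $\Den(u)=\Num(u_1)$; but by hypothesis $u_1\shift=1$ has a division-axiom-free \PIZ-proof (this is precisely the statement that $\rho$ is provably good for $u_1$, transported through $\sigma$), and from a \PIZ-proof of $u_1\shift=1$ together with Theorem \ref{thm:Den-Num-normalization-proofs} one obtains $\Num(u_1\shift)=1$ and hence, via Theorem \ref{thm:homogenize-proofs}, $\hcz{\Num(u_1\shift)}=\hcz{1}=1$. (\textbf{iii}) If $u=u_1\cdot u_2$ then $\Den(u)=\Den(u_1)\cdot\Den(u_2)$, so $\hcz{\Den(u)\shift}=\hcz{\Den(u_1)\shift}\cdot\hcz{\Den(u_2)\shift}$ by the multiplicativity of the degree-$0$ component (Theorem \ref{thm:homogenize-proofs}), and substituting the two child equations $\hcz{\Den(u_i)\shift}=1$ and using axiom A9 gives $1\cdot 1=1$. (\textbf{iv}) If $u=u_1+u_2$ then $\Den(u)=\Den(u_1)\cdot\Den(u_2)$ again, so this case is identical to (iii). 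In every case the block has size $|F|\cdot\poly(1)$ and the homogenization invocations are applied to circuits of polynomially bounded syntactic-degree, so the whole proof has size $\poly(|F|)$ and every circuit appears with its syntactic-degree upper bound, as required to feed into Lemma \ref{lem:inv} afterwards.

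The main obstacle is not any single case but the bookkeeping needed to make the parallel assembly genuinely \SigZB-definable: one must choose the node-numbering of all the circuits occurring across all blocks coherently (so that the identical-encoding conventions of Section \ref{sec:encoding-PI-proofs} items \ref{it:pi-proof-encode-circuits-identical}--5 are met), arrange the pointers from the block of $u$ to the blocks of its children, and verify that the homogenization-transform applied to the child blocks is itself the fixed \SigZB-definable operation supplied by Theorem \ref{thm:homogenize-proofs} rather than something requiring induction. This is exactly the same encoding discipline already exercised for \outix\ in Section \ref{sec:vz-function-for-constructing-DET-circuit} and for the proof of $X\cdot X^{-1}=I_n$ in Proposition \ref{prop:write_PC-1-proof}, so I would reuse that scheme verbatim: define the block for $u$ by comprehension over the nodes and lines it contributes, with a tuple-index $(u,\cdot)$ distinguishing blocks, and then a single \SigZB-formula checks syntactic correctness of the concatenation. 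Once this is in place, specializing to $u=w$ yields $\hcz{\Den(F\shift)}=1$, completing the lemma.
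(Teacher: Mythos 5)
Your overall architecture — a simultaneous, \SigZB-\textbf{COMP}-defined construction of one proof-block per gate, with pointers from the block of $u$ to the blocks of its children and a case split on the gate type — is exactly the paper's, and your input and $+/\cdot$ cases match the paper's (degree-$0$ of a product is the product of the degree-$0$ parts, then $1\cd 1=1$).

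The gap is in the division-gate case, which is the only substantive one. You assert that ``by hypothesis $u_1\shift=1$ has a division-axiom-free \PIZ-proof.'' This is false: $\sigma$ replaces each variable $r_i$ by $\rho(r_i)-w_i$, so $u_1\shift$ still contains the fresh variables $w_i$ and does not compute the constant $1$; the hypothesis only concerns $u_1\rst\rho=1$, i.e.\ $(u_1\shift)\rst\overline 0=1$. Consequently the next step — invoking Theorem \ref{thm:Den-Num-normalization-proofs} to ``obtain $\Num(u_1\shift)=1$'' — also fails: part (ii) of that theorem would only yield $\Num(u_1\shift)\cd\Den(u_1\shift)^{-1}=1$, which still carries a division gate and does not isolate $\Num$. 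What is actually needed (the paper isolates it as Claim \ref{cla:special-den-eq-num}) is that a division-axiom-free \PIZ-proof of $u_1\rst\rho=1$ yields a \PCZ-proof of $\Num(u_1\rst\rho)=\Den(u_1\rst\rho)$; since $\Den(u)=\Num(u_1)$ for $u=u_1^{-1}$, this must then be combined with the line $\Den(u_1\rst\rho)=1$ supplied by the \emph{child block} of $u_1$ — a pointer your case (ii) never uses, though it is unavoidable because $\Num(u_1)$ is itself a compound circuit whose value under $\rho$ is not determined by the hypothesis alone. Relatedly, the paper avoids threading $\hcz{(\cdot\shift)}$ through every case by running the entire parallel construction for $\Den(v\rst\rho)=1$ and converting only once at the end, using that $(\Den(F\shift))\rst\overline 0$ and $\hcz{(\Den(F\shift))}$ have \emph{identical} underlying graphs by the homogenization construction; some such explicit bridge between the $\rho$-instantiated statement and the degree-$0$ component of the $\sigma$-shifted circuit is missing from your writeup as well.
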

\begin{proof}
Simultaneously, for every node $v$ in $F$ we construct  a $\PCZ$-proof of ${\Den(v\rst\rho)}=1 $, by which we conclude that $\Den(F\rst \rho)=1$ is provable in \PCZ. 
Before showing this construction we show how to get a \PCZ-proof of $\hcz{\Den(F\shift)}=1$ from a \PCZ-proof of $\Den(F\rst \rho)=1$. 

By the definition of $\sigma$ and by basic rearrangements in \PCZ\ we have a \PCZ-proof of $\Den(F\rst\rho) = (\Den(F\shift))\rst \overline 0$ (where $C\rst\overline  0$ means substituting $0$ for each variable in $C$). By the construction of syntactic-homogeneous circuits (Section \ref{sec:homogenization-of-proofs}) the underlying graph of a (division free) circuit $C\rst 0$ and the structure of $\hcz{C}$ are \emph{identical}: constant gates stays the same; variables $x_i$ turns into  0 nodes; plus gates $w=v_1+v_2$ add the zero copies, $[v_1,0], [v_2,0]$  of $v_1,v_2$, respectively; and product gates $w=v_1\cd v_2$ multiply the zero-copies $[v_1,0], [v_2,0]$  of $v_1,v_2$, respectively (in particular, no plus gates are added to the circuit). Hence, $(\Den(F\shift))\rst \overline 0=\hcz{(\Den(F\shift))}$ is \PCZ-provable.\smallskip 

We now come back to the construction of a $\PCZ$-proof of ${\Den(v\rst\rho)}=1$, for every node $v$ in $F$. This is done by cases as follows.

\case 1 $v$ is an input node. Hence $\Den(v)=1$ by definition of $\Den$.

\case 2 $v=w_1\circ w_2$, for $\circ\in\{+,\cd\} $, then $\Den(v\rst\rho):=\Den(w_1\rst\rho)\cd\Den(w_2\rst\rho)$. Thus we construct the proof of $\Den(w_1\rst\rho)\cd\Den(w_2\rst\rho)=1$ by pointing to the proofs of $\Den(w_1\rst\rho)=1$ and $\Den(w_2\rst\rho)$ and using basic rules of \PCZ\ such as $1\cd 1=1$.

\case 3 $v=u^{-1}$. This is the relatively more difficult case. We need to use the following claim:

\begin{claim}[in \VZ]\label{cla:special-den-eq-num}
If there exists a \PIZ-proof of $u\rst\rho=1$ in which we do not use the axiom $D$, then there exists a \PCZ-proof of $\Num(u\rst\rho)=\Den(u\rst\rho)$. 
\end{claim}

\begin{proofclaim} This is proved in a similar way to the division gates normalization Theorem \ref{thm:Den-Num-normalization-proofs}.  We omit the details.
\end{proofclaim}\iddo{}

By definition $\Den(v)=\Num(u)$.  Since we have by assumption a \PIZ-proof of $u\rst\rho=1$, by the claim we have a \PCZ-proof of $\Num(u\rst\rho)=\Den(u\rst\rho)$, and from that, using pointers to the \PCZ-proof of $\Den(u\rst\rho)=1$, we get a  \PCZ-proof of $\Den(v\rst\rho)=1$.    
\end{proof}

We now need to inspect and provide appropriate division axiom free \PIZ-proofs for all division gates in the  \PIZ-proof  of the determinant identities, denoted  $\pi_3$ (according to the notation in Section \ref{sec:carry-proof-in-theory}). The proof $\pi_3$ consists of: (\textbf{i}) the proof obtained from Proposition \ref{prop:write_PC-1-proof}---we deal with this part of the proof in Lemma \ref{lem:L01}; (\textbf{ii}) the proof obtained from Theorem \ref{lem:PI-proof-DET-taylor-equals-DET-circ-div}, using the transformation from Lemma \ref{lem:Taylor-basic-properties}---we deal with this in  Lemma \ref{lem:Taylor-good-assignment}; (\textbf{iii}) the proof obtained from Proposition \ref{lem:reduce-syntactic-degree-Taylor} in Section \ref{sec:Reducing-the-Syntactic-Degree-of-the-Determinant-Polynomial}---which is dealt with in Lemma \ref{lem:small-good-assgn-forgot}. 
 
\begin{lemma}[in \VZ]\label{lem:L01}
All division gates $F^{-1}$ in the proof obtained from Proposition \ref{prop:write_PC-1-proof} are provably good under the identity matrices assignment. 
\end{lemma}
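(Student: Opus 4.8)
The plan is to open up the $\PIZ$-proof $\pi_0$ produced by Proposition \ref{prop:write_PC-1-proof} and, for every division gate occurring in it, exhibit a division-axiom-free $\PIZ$-proof that the subcircuit feeding the gate equals $1$ under $\rho$. The first step is to enumerate the division gates. Since $\pi_0$ formalizes the construction of \cite[Section~7.1]{HT12}, every division gate occurs inside one of the inverse-matrix circuits of \eqref{eq:def-inverse} for $X$, $Y$, $XY$, or inside a determinant circuit $\cdeti$; and inspecting \eqref{eq:def-inverse} and \eqref{eq:det-min-one-definition}, every such gate is, up to renaming of nodes, of one of two shapes: (a) a base-case gate $x_{11}^{-1}$ (and its analogues $y_{11}^{-1}$, $z_{11}^{-1}$, and $\bigl((XY)_{11}\bigr)^{-1}$, where $(XY)_{11}$ is the entry-circuit $\sum_k x_{1k}y_{k1}$); or (b) a Schur-complement inverse $\dd(A^{(d)})^{-1}$, with $A\in\{X,Y,XY,Z\}$, $2\le d\le n$, $A^{(d)}$ the $\dbyd$ leading submatrix, and $\dd(A^{(d)})=a_{dd}-v_2\,(A^{(d-1)})^{-1}v_1^t$, where $v_1,v_2$ are the off-diagonal parts of the last column and last row of $A^{(d)}$.

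Next I would record how $\rho$ acts on the entries of $A$. For $A=X$ and $A=Z$ a diagonal entry restricted by $\rho$ is literally the constant $1$ and an off-diagonal entry is literally the constant $0$, so nothing needs to be proved; for $A=XY$ the entry-circuit $(XY)_{ij}=\sum_k x_{ik}y_{kj}$ restricts to a circuit over $\{0,1\}$, and I would give the obvious short division-axiom-free $\PIZ$-proof that it equals $1$ when $i=j$ and $0$ otherwise (the summand $k=i$ contributes $1\cdot y_{ij}\rst\rho$ and reduces by A9, every other summand is a product with a literal-$0$ factor and reduces by A4 and A8, and the remaining sum of zeros collapses by R3 and A7). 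This disposes of the base-case gates: e.g.\ $x_{11}\rst\rho=1$ is axiom A1, and $(XY)_{11}\rst\rho=1$ is the case $i=j=1$.

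The substantive step is the Schur-complement gates. Fix $\dd(A^{(d)})^{-1}$. Under $\rho$, $a_{dd}\rst\rho$ is provably $1$ (from the previous paragraph), and every coordinate of $v_1$ and of $v_2$, being an off-diagonal entry of $A$, is provably $0$. The key observation — which is what makes the argument go through without the division axiom \emph{and} without any induction on $d$ — is that although $(A^{(d-1)})^{-1}\rst\rho$ still contains division gates, these gates never have to be evaluated: writing $v_2\,(A^{(d-1)})^{-1}v_1^t$ as an iterated sum of products of a coordinate of $v_2$, an entry of $(A^{(d-1)})^{-1}$, and a coordinate of $v_1^t$, each summand is a product one of whose factors restricts, provably and division-free, to $0$; multiplying that proven equality by the reflexivity of the other factors (rule R4 together with A1) and then applying A4 and A8 yields a division-axiom-free $\PIZ$-proof that the summand equals $0$, whereupon R3 and A7 collapse the whole sum. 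This gives a division-axiom-free $\PIZ$-proof of $v_2\,(A^{(d-1)})^{-1}v_1^t\rst\rho=0$; combining it with $a_{dd}\rst\rho=1$ via R3 for $+$, handling the subtraction as multiplication by the constant $-1$ (R4, A1, A8), and finishing with A7, produces the required division-axiom-free $\PIZ$-proof of $\dd(A^{(d)})\rst\rho=1$. Hence every Schur-complement gate, and therefore every division gate of $\pi_0$, is provably good under $\rho$.

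Finally I would carry out the formalization in \VZ. Exactly as in the construction of $\cdeti$ (Section \ref{sec:vz-function-for-constructing-DET-circuit}) and of $\pi_0$ itself, $\SigOneB$ number-induction is unavailable, so all of the small division-axiom-free $\PIZ$-proofs above must be assembled \emph{simultaneously} by the $\SigZB$-\textbf{COMP} axiom, using the same uniform node-indexing and tupling scheme: for each $A$, each level $d$, and each relevant entry I lay down a fixed-shape subproof whose ``earlier lines'' (the proven equalities of the coordinates of $v_1,v_2$ to $0$, of $a_{dd}$ to $1$, and the intermediate products) are referred to by explicit pointers, and since the numbers of matrices, levels and entries are polynomially bounded in $n$, the whole object is a $\SigZB$-definable function of $n$. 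I expect the main obstacle to be not mathematical depth but precisely this bookkeeping: organizing every division gate of $\pi_0$ into one uniform family indexed by $(A,d,\text{entry})$ and getting the parallel $\SigZB$-\textbf{COMP} assembly correct. The ``screening-off'' observation in the previous paragraph is exactly what makes this feasible, since it removes any need for a recursive argument on $d$.
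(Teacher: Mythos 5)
Your proposal is correct and follows essentially the same route as the paper: both arguments reduce every division gate to a base case or a Schur-complement inverse $\dd(A^{(d)})^{-1}$, and both exploit the fact that the inner product $v_2\,(A^{(d-1)})^{-1}v_1^t$ is screened off by (literal or provably-zero-under-$\rho$) off-diagonal entries via the axiom $F\cdot 0=0$, so the nested division gates never need to be evaluated — this is precisely the paper's Claim that every gate has the form $(u_{ii}-\overline 0\cd\overline h)^{-1}$. Your treatment of the $(XY)_{ij}\rst\rho$ entries and your explicit avoidance of induction on $d$ fill in details the paper leaves to the reader, but the underlying mechanism and the $\SigZB$-\textbf{COMP} assembly are the same.
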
 

\mar{Define identity matrices assignment. Make sure that being a good division gate is enough for all cases of division gates} 
\begin{proof}
This is proved by direct construction of the \PIZ-proofs.

We need to show that every division gate  $F^{-1}$ in this proof is  provably good under $\rho$.
Let $\mathcal U $ be the set of all the subcircuits  $ u $ such that $ u ^{-1} $ occurs in some circuit in $ \pi $. 
We show  that: 
\newcommand{\syntProperty}{{\large $\divideontimes$}~}
\begin{quote}
\syntProperty
\emph{An assignment is provably good for  $\mathcal U$ iff the assignment is provably good for the equations \eqref{eq:1}, \eqref{eq:2}.}
\end{quote}
By Claim \ref{cla:1-2-defined-under-identity-matrix} $\rho$ is provably good for equations  \eqref{eq:1} and \eqref{eq:2} which concludes the proof.
%
\mar{do I need to use simple gates u?}
%
%

\medskip 

We  only check \syntProperty for the part of the proof $\pi$ that proves equation \eqref{eq:2}, where $C$ is a triangular matrix  (see \cite[Proposition~7.6 (i)]{HT12}). The inspection of the other parts is similar.
\TODO{Need to make sure that this holds also for the other proofs.}

Assume that  $U$ is  an (upper or lower)  triangular \(n\times n\) matrix. We  show that in the   $\PIZ$-proof $\pi$ of    
\begin{equation}\label{eq:C-triang-main}
\cdeti(U)=u_{11}\cdots u_{nn}\,,
\end{equation}
all division gates $w^{-1}$ are provably good under the identity matrices assignment  iff $u_{ii}^{-1}$ in \eqref{eq:C-triang-main} are provably good under this assignment, for all $i\in[n]$.



For a matrix $A$ with entries $ a_{ij} $ we let
$A[k]:=\{a_{ij}\}_{i,j\in[k]}$. Assume that $U$ is a lower triangular matrix (the case for an upper triangular matrix is similar). Denote by $\overline 0$ the zero vector of the appropriate dimension (depending on the context). We write \(U\) as   
\begin{displaymath}
U:=
\begin{pmatrix}U[n-1] &  \overline 0^t \\
v & u_{nn} \\
\end{pmatrix}\,.
\end{displaymath}       
 The \PIZ-proof of (\ref{eq:C-triang-main}) (in \cite[Proposition~7.6.(i)]{HT12}) proceeds, using the  construction of $\cdeti(U)$ as follows (the explicit proof is omitted in \cite{HT12}):\iddo{}
\begin{align*}
\cdeti(U) & = \cdeti(U[n-1])\cd (u_{nn}-v (U[n-1])^{-1} \overline 0^t)\\
                                                                & = \cdeti(U[n-1])\cd u_{nn}\\
                                                                & = \cdeti(U[n-2])\cd (u_{n-1,n-1}-v '(U[n-2])^{-1}\ \overline 0^t)\cd u_{nn}\\
                                                                & = \cdeti(U[n-2])\cd u_{n-1,n-1}\cd u_{nn} \\
                                                                & = \dots = \prod_{i=1}^n u_{ii}
\end{align*}
where $v'$ is $v$ excluding the  $(n-1)$th coordinate.

Inspecting the proof sequence above we see that all the  division gates in the proof appear in the circuits computing the matrices $(U[n-1])^{-1},\dots,(U[1])^{-1}$. It thus suffices to show that all division gates appearing in the circuits  $ (U[n-1])^{-1},\dots,(U[1])^{-1}$ are provably good under $\rho$ iff all   $ u_{ii}^{-1}, $ for $i\in[n]$, are.  Specifically, we show the following claim, from which it is immediate to conclude that all these division gates $u$ have division axiom free \PIZ-proofs  of $u\rst\rho=1$ (using the \PCZ-axiom $0\cd F=0$):

\iddo{}
\begin{claim}\label{cla:good-div-gates-in-proof}
Every division gate in the circuits  $(U[n-1])^{-1},\dots,(U[1])^{-1}$  is of the form 
$$(u_{ii}-\overline 0\cd \overline h)^{-1},$$
for some (possibly empty) vector of circuits with division $\overline h$ and some $i\in[n].$ 
\end{claim}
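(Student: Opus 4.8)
The plan is to prove the claim by induction following the recursive construction of the matrix-inverse circuit $\cix$ in \eqref{eq:def-inverse}, exploiting the fact that lower-triangularity of $U$ forces the upper-right block in \emph{every} recursive block partition \eqref{eq: X} to be the literal zero vector. Indeed, in a triangular symbolic matrix the above-diagonal entries are encoded by the constant $0$ (Proposition~\ref{prop:write_PC-1-proof}), so for $1\le k\le n-1$ the partition \eqref{eq: X} of $U[k]$ has $U[k-1]$ as its top-left block, the row $v^{(k)}:=(u_{k1},\dots,u_{k,k-1})$ as its bottom-left block, $u_{kk}$ at the bottom right, and the constant zero column $\overline 0^t$ in the position played by $v_1^t$ in \eqref{eq: X}.

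First I would prove, by induction on $k$, the following slight strengthening of the claim: every division gate occurring in $(U[k])^{-1}$ is of the form $(u_{ii}-\overline 0\cd\overline h)^{-1}$ for some $i\le k$ and some (possibly empty) vector $\overline h$ of circuits with division (this implies the claim since $k\le n-1<n$). For the base case $k=1$, the recursive definition of $X^{-1}$ gives $(U[1])^{-1}=(u_{11}^{-1})$, whose only division gate is $u_{11}^{-1}$; this is the stated form with $i=1$ and empty $\overline h$, where for empty $\overline h$ the subtracted term is absent and the gate is simply $u_{11}^{-1}$. For the inductive step, instantiate \eqref{eq:def-inverse} with $X_1=U[k-1]$, $v_1^t=\overline 0^t$, $v_2=v^{(k)}$ and $x_{nn}=u_{kk}$: the circuit $(U[k])^{-1}$ is built from (a) the sub-circuit $(U[k-1])^{-1}$ (per the construction of Section~\ref{sec:vz-function-for-constructing-DET-circuit}), (b) the single new division gate $\dd(U[k])^{-1}$, where $\dd(U[k])=u_{kk}-v^{(k)}(U[k-1])^{-1}\overline 0^t$ as in \eqref{eq: dd}, and (c) addition and multiplication gates, which create no division gate. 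By the induction hypothesis the division gates inside the copy of $(U[k-1])^{-1}$ already have the required shape (with index $\le k-1$), and every remaining division gate is an occurrence of $\dd(U[k])^{-1}$. Since in \eqref{eq: dd} the inner product against $v_1^t$ is the outermost operation of the circuit computing $v_2X_1^{-1}v_1^t$, substituting $v_1^t=\overline 0^t$ turns the subtracted term into $\overline h\cd\overline 0$, equivalently $\overline 0\cd\overline h$, with $\overline h:=v^{(k)}(U[k-1])^{-1}$ a vector of circuits with division; hence $\dd(U[k])=u_{kk}-\overline 0\cd\overline h$ is of the stated form with $i=k$. This closes the induction, and since $\cdeti(\cd)$ in \eqref{eq:det-min-one-definition} only ever invokes the inverses $(U[n-1])^{-1},\dots,(U[1])^{-1}$, the claim follows.

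Two remarks finish the argument. First, the consequence actually used in the proof of Lemma~\ref{lem:L01} is then immediate: under the identity-matrices assignment $\rho$ one has $u_{ii}\rst\rho=1$, while $(\overline 0\cd\overline h)\rst\rho$ rewrites to $0$ by repeated use of the axioms $0\cd F=0$ and $F+0=F$, never inspecting $\overline h$, so $(u_{ii}-\overline 0\cd\overline h)\rst\rho=1$ has a division-axiom-free $\PIZ$-proof; hence each such division gate is provably good under $\rho$. Second, to keep everything inside \VZ\ one does not literally prove the quantified shape statement, but, exactly as for the construction of $\cix$ in Section~\ref{sec:vz-function-for-constructing-DET-circuit}, one uses \SigZB-\textbf{COMP} to \SigZB-define the map sending each division node of the constructed proof to the corresponding short division-axiom-free $\PIZ$-proof of its operand, restricted by $\rho$, being $1$; the induction above is merely the design principle behind this uniform construction. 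The one genuinely delicate point I anticipate is the bookkeeping of the product-association convention in the encoding of $v_2X_1^{-1}v_1^t$, so that the substituted zero vector really sits at an outermost product gate; should a different convention be in force, provable goodness still follows by a slightly longer rewriting that pushes the zero outward through the matrix-vector product, and the substance of the proof is unchanged.
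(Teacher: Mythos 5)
Your proof is correct and follows essentially the same route as the paper's: induction on the recursive level $k$, with base case $(U[1])^{-1}=u_{11}^{-1}$ and the observation that the only new division gate at level $k$ is the Schur-complement inverse, which has the required form because the triangularity forces one off-diagonal block to be the literal zero vector, all remaining division gates being handled by the induction hypothesis inside $(U[k-1])^{-1}$. Your closing remark about the left/right placement of $\overline 0$ in the product is a fair point of care (the paper itself silently switches between $u_{ii}-v\,U^{-1}\overline 0^t$ and $u_{ii}-\overline 0\,U^{-1}v^t$), but as you note it does not affect provable goodness under $\rho$, so the substance is unchanged.
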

\begin{proofclaim}
For brevity, let $U_i:=U[i]$, for $ i=1,\dots,n-1$, and let $v_{(i)}:=(v_1,\dots,v_i)$, and denote by $I_n$ the identity matrix of dimension $n$. Denote 
\begin{gather*}
A= U_{i-1}^{-1}
\left(I_{i-1}+\delta(U_i)^{-1}v_{(i-1)}^{t}
\overline 0  U_{i-1}^{-1}\right)  .
\end{gather*}
Then, by construction
\begin{equation}
U_i^{-1}= 
\left(\begin{array}{l r}
    A     & -\delta(U_i)^{-1}U_{i-1}^{-1}v_{i-1}^{t} \\
    -\delta(U_i)^{-1} \overline 0 U_{i-1}^{-1} & \delta(U_i)^{-1}
    \end{array}
    \right)\,
\label{eq:inverse}
\end{equation}

and
\begin{equation}\label{eq:delta-division-elim}
\delta(U_i)= u_{ii}- \overline 0 U_{i-1}^{-1}v_{(i-1)}^{t}\,.
\end{equation}

We proceed by induction on $i$ to show that all division gates in $ (U_{n-1})^{-1},\dots,(U_1)^{-1}$ are of the required form. The base case $i=1$ holds by definition, since $(U_1)^{-1}=u^{-1}_{11}$.

The induction step $(U_i)^{-1}$ is proved as follows. By 
(\ref{eq:inverse}) all the division gates in $U_{i}$ appear  either in    $(U_{i-1})^{-1}$
or in  $\delta(U_i)^{-1}$. By induction hypothesis the division gates in $(U_{i-1})^{-1}$ are all of the form $(u_{jj}-\overline 0\cd \overline h)^{-1} $, for some vector of circuits $\overline h$ and some $ j=1,\dots,i-1$. Moreover, the single-output circuit $\delta(U_i)^{-1}$ contributes the outermost division gate $(u_{ii}-\overline 0\cd \overline h)^{-1}$, to $\overline h= U_{i-1}^{-1}v_{(i-1)}^{t}\,$, which is of the correct form; as well as all the division gates in $\overline h$ itself. But all the division gates in $\overline h$ appear in $U_{i-1}^{-1}$, which by induction hypothesis  all have the required form. 
\end{proofclaim} 
\bigskip 

It remains to show the following: 

\begin{claim}[in \VZ]\label{cla:1-2-defined-under-identity-matrix}
All division gates in equations \eqref{eq:1}, \eqref{eq:2} are provably good under $\rho$.\iddo{} \iddo{}
\end{claim}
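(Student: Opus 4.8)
The plan is to reduce the claim to a structural fact about the circuit $\cdeti$ together with the collapse that happens under the identity matrices assignment $\rho$. First I would observe that the only division gates occurring in \eqref{eq:1} and \eqref{eq:2} are those inside the circuits $\cdeti(X)$, $\cdeti(Y)$, $\cdeti(XY)$, and $\cdeti(Z)$ (the right-hand side $z_{11}\cdots z_{nn}$ of \eqref{eq:2} is division free). Unwinding the recursive definitions \eqref{eq:def-inverse} and \eqref{eq:det-min-one-definition}, every division gate occurring in $\cdeti(A)$ lies inside one of the subcircuits $(A'[k])^{-1}$ for a principal submatrix $A'$ of $A$, and, exactly as in Claim~\ref{cla:good-div-gates-in-proof}, the argument of such a gate is a Schur-complement circuit
\[
\dd(A'[k]) \;=\; a'_{kk}-v_{2}\,(A'[k-1])^{-1}\,v_{1}^{t},
\]
with $v_1,v_2$ the relevant off-diagonal vectors, the base case $k=1$ being just $a'_{11}$.

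Next I would evaluate these arguments under $\rho$. In each of the four cases the relevant matrix collapses to $I_n$: for $X$, $Y$, $Z$ this is immediate entrywise, while for $XY$ each entry is a circuit $\sum_{l}x_{il}y_{lj}$ which, using the scalar axiom A10 together with A8 and A7, has a $\PIZ$-proof of equalling the constant $[\,i=j\,]$. Hence for any Schur-complement argument $\dd(A'[k])$ the corner $a'_{kk}\rst\rho$ has a proof of equalling $1$ and each coordinate of $v_1\rst\rho$ and $v_2\rst\rho$ has a proof of equalling $0$; the correction term $v_{2}(A'[k-1])^{-1}v_{1}^{t}\rst\rho$ then has a proof of equalling $0$ by ``zero times anything is zero'' (axioms A4, A7, A8 applied to the circuit $(A'[k-1])^{-1}\rst\rho$), so $\dd(A'[k])\rst\rho=1$ follows via $1-0=1+(-1)\cdot 0=1$ using A7, A8, A10. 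None of these steps uses the division axiom D, so each such gate is provably good under $\rho$. The well-definedness of the nested circuit $(A'[k-1])^{-1}\rst\rho$ that is needed here comes from the same induction on $k$: once $\dd(A'[k-1])\rst\rho$ is provably $1$, it computes the polynomial $1\neq 0$, so $\dd(A'[k-1])^{-1}\rst\rho$, and hence $(A'[k-1])^{-1}\rst\rho$, is well defined.

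Finally, since \SigOneB-induction is unavailable in \VZ, this recursion has to be realized as one \SigZB-\textbf{COMP} construction, following the template of Section~\ref{sec:vz-function-for-constructing-DET-circuit} and the proof of Proposition~\ref{prop:write_PC-1-proof}: for each $n$ we lay out, in parallel, the polynomially many short division-axiom-free $\PIZ$-proofs of $\dd(A'[k])\rst\rho=1$ --- one per division gate and level, in all four endpoint circuits --- each line referring back to previously placed lines rather than invoking an induction rule. The main obstacle, I expect, is not any single calculation but the simultaneous bookkeeping over all four endpoints: we must track the well-definedness of the whole chain of nested inverses while never being permitted to ``evaluate'' a division gate, so that the only tool available at a division node is that its argument's correction term is multiplied by a provably-zero vector.
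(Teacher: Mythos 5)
Your proposal is correct and follows essentially the same route as the paper: both arguments reduce every division gate in \eqref{eq:1}, \eqref{eq:2} to a Schur-complement argument $a'_{kk}-v_2(A'[k-1])^{-1}v_1^{t}$, observe that under $\rho$ the corner becomes $1$ and the correction term is killed by the axiom $0\cd F=0$ (the paper invokes Claim~\ref{cla:good-div-gates-in-proof} for \eqref{eq:2} and says ``similarly, by induction on $n$'' for \eqref{eq:1}, omitting the details you spell out, including the $XY$-entry computation and the \SigZB-\textbf{COMP} realization). The only superfluous part is your well-definedness bookkeeping: since the theory tracks only \emph{syntactic} correctness of \PIZ-proofs and the notion of ``provably good'' asks merely for a division-axiom-free proof of $u\rst\rho=1$, no well-definedness of the nested inverses needs to be certified.
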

\begin{proofclaim}
For  equation (\ref{eq:2}) above, this is immediate from the above, since every inverse gate that appears in the proof (and equivalently  in equation (\ref{eq:C-triang-main})) is one of $(u_{ii}-\overline 0\cd \overline  h)^{-1}$, for some $ i\in[n]$, which is provably good under the identity matrices assignment.

For equation \eqref{eq:1}, namely, $\cdeti(X)\cdeti(Y)=\cdeti(XY),$ by  definition and notation in \eqref{eq: X} we have $\cdeti(X):=
 \cdeti(X_{1})\cdot \dd(X)= \cdeti(X_1)\cd(x_{nn}- v_{2}X_{1}^{-1}v_{1}^{t})\,$, and (by induction on $n$ the dimension of $X$\iddo{}), similarly to Claim \ref{cla:good-div-gates-in-proof} above, we can verify that every division gate in $\cdeti(X)$ is provably good under $\rho$. Similar reasoning applies to $\cdeti(Y)$ and $\cdeti(XY)$. \iddo{} We omit the details.  
\end{proofclaim}

This concludes the proof of Lemma \ref{lem:L01}.
\end{proof}
\bigskip

\iddo{}

\iddo{}

\iddo{}   

\iddo{}

\iddo{}

\iddo{}

\begin{lemma}\label{lem:Taylor-good-assignment}
Every division gate $F^{-1}$ that appears in the  $\PIZ$-proof in Lemma \ref{lem:PI-proof-DET-taylor-equals-DET-circ-div} is provably good under $\rho$. 
\end{lemma}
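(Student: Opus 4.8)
The plan is to inspect, one by one, the division gates occurring in the $\PIZ$-proof $\pi$ of $\taydet(A)=\cdeti(A)$ produced by Lemma \ref{lem:PI-proof-DET-taylor-equals-DET-circ-div}, and for each of them to exhibit a division-axiom-free $\PIZ$-proof that its argument evaluates to $1$ under $\rho$. Here I read $\rho$ as the identity-matrices assignment extended so that the fresh scalar variable $z$ of the Taylor expansion is sent to its expansion point $0$ — the value $z$ receives in the identity-matrices assignment to all of $\pi_3$ in Section \ref{sec:carry-proof-in-theory}, since $z$ does occur in intermediate proof-lines of $\pi_3$. As $\pi$ is the \VZ-formalization, via parts 1 and 2 of Lemma \ref{lem:Taylor-basic-properties}, of the construction from \cite{HT12} built out of the circuits $\cdeti(I_n+zX)$ and $\cdeti(A)$, its division gates fall into three families. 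The gates occurring inside some $\cdeti(A)$ with $A$ one of $X,Y,XY,Z$ are exactly those treated in Claim \ref{cla:1-2-defined-under-identity-matrix} in the proof of Lemma \ref{lem:L01}, so nothing new is needed there. The gates occurring inside $\cdeti(I_n+zX)$ are handled by noting that the extended $\rho$ turns $I_n+zX$ into $I_n$, whence, running the induction of Claim \ref{cla:good-div-gates-in-proof} on the recursive construction of $\cdeti$, every such gate has argument of the form $1-\overline 0\cdot\overline h$ for some (vector of) circuit(s) $\overline h$, which provably equals $1$ using only the axioms $F\cdot 0=0$ and $F+0=F$ (and their vector forms), in particular without axiom D.

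The remaining, and most delicate, family consists of the inverse-of-denominator gates that the $\Coef_{z^n}$-construction inserts (Definition \ref{def:taylor}, Case 2): forming $\Coef_{z^n}$ of a subcircuit $H$ of $\cdeti(I_n+zX)$ that carries $z$ inside a division gate adds only the gate $\bigl(\Den(H)(z/0)\bigr)^{-1}$, whose argument is a division-free circuit, together with the auxiliary circuit $G=1-\bigl(\Den(H)(z/0)\bigr)^{-1}\cdot\Den(H)$, in which the only division gate is again this one. For the top-level $H=\cdeti(I_n+zX)$ the proof of Lemma \ref{lem:reduce-syntactic-degree-Taylor} already constructs, via Theorem \ref{thm:Den-Num-normalization-proofs} part (ii) applied to a $\PIZ$-proof of $\cdeti(I_n)=1$ that does not use axiom D, a $\PIZ$-proof of $\Den(\cdeti(I_n+zX))(z/0)=1$ (division-free, since its two sides are division-free circuits). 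I would extend that same direct construction to the remaining subcircuits $H$ that part 2 of Lemma \ref{lem:Taylor-basic-properties} brings in, observing that $\Den(H)(z/0)$ is a product of $\Num$'s of Schur-complement subterms, each of which provably collapses to $1$ at $z=0$ — there the underlying matrix is $I_n$ and the relevant off-diagonal blocks vanish — again invoking only $F\cdot 0=0$ and $F+0=F$. The cases $A=XY$ and $A=Z$ are treated in the same way, so this family is provably good under $\rho$ (indeed under any assignment).

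The step I expect to be the main obstacle is precisely this last family: one has to verify that the \cite{HT12}-style proof of $\taydet(A)=\cdeti(A)$, once formalized through part 2 of Lemma \ref{lem:Taylor-basic-properties}, never introduces a bare inverse $z^{-1}$ (which would fail to be provably good once $z$ is set to $0$) and that every inverse-of-denominator gate it inserts genuinely has argument provably equal to $1$ without the division axiom D. Granting this, Lemma \ref{lem:Taylor-good-assignment} follows by a routine case analysis that simply reuses the $\PIZ$-constructions already developed for Lemma \ref{lem:L01} and Lemma \ref{lem:reduce-syntactic-degree-Taylor}.
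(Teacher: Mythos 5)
Your overall strategy---classify the division gates, reduce most of them to the analysis already done for Lemma \ref{lem:L01}, and handle the $\Coef$-inserted denominators via division-axiom-free $\PIZ$-proofs of $\Den(\cdot)(z/0)=1$---is the same as the paper's, and your reading of $\rho$ as also sending the Taylor variable $z$ to $0$ is consistent with how the paper uses it. The gap is that your trichotomy of division gates is incomplete, and the omission is exactly where the paper spends its effort. The $\PIZ$-proof of $\taydet(X)=\cdeti(X)$ from \cite[Proposition 7.9]{HT12} is not built only from $\cdeti(A)$ for symbolic $A$, $\cdeti(I_n+zX)$, and the $\Coef$ machinery: it first derives $\cdeti(zI_n+X^{-1})=z^n+\sum_{i=0}^{n-1}z^i\cd Q_i$ and then multiplies by $\cdeti(X)$ using an instance of the multiplicativity proof. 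The circuit $\cdeti(zI_n+X^{-1})$ is the recursive $\cdeti$ construction applied to a matrix whose \emph{entries are themselves circuits with division} (the entries of $X^{-1}$). Its division gates therefore comprise (i) the gates of $X^{-1}$ itself, which Lemma \ref{lem:L01} covers, and (ii) the new Schur-complement inverses $\delta(\cdot)^{-1}$ produced at each recursive level of this outer $\cdeti$, whose arguments involve $z$ and the entries of $X^{-1}$. The gates of type (ii), together with their occurrences in the $Q_i$'s and in the substitution instance of the multiplicativity proof, belong to none of your three families, so your case analysis does not establish the lemma for them. The paper's proof is essentially an explicit identification of these gates (showing that at each level the only genuinely new inverse is the relevant $\delta(\cdot)^{-1}$, the other apparent inverses such as $(X^{-1})_1^{-1}$ being matrix-notation artifacts rather than division gates) followed by a reduction to the computation in Lemma \ref{lem:L01}. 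You flag ``the main obstacle'' in the right neighbourhood, but you locate it in the $\Coef$-denominators rather than in this intermediate circuit, and you do not resolve it.

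Two smaller points. For your family (b) you cannot literally ``run the induction of Claim \ref{cla:good-div-gates-in-proof}'': that claim concerns triangular matrices whose off-diagonal entries are the syntactic constant $0$, whereas the entries of $I_n+zX$ after substituting $0$ for $z$ are circuits of the form $0+0\cd x_{ij}$ or $1+0\cd x_{ij}$ that merely \emph{compute} $0$ and $1$; the fix you propose (a division-axiom-free derivation from $F\cd 0=0$ and $F+0=F$, in the spirit of Claim \ref{cla:1-2-defined-under-identity-matrix}) is the right one, but it is an adaptation, not an application, of that claim. And your parenthetical that the $\Coef$-denominators are good ``indeed under any assignment'' is only true for those coming from $\cdeti(I_n+zX)$; once the gates of type (ii) above enter, $\Den(F)(z/0)$ contains the division gates of $X^{-1}$ and is provably good under $\rho$ precisely because those are.
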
 

\begin{proof}

We wish to show that every division gate $F^{-1}$ that appears in the  $\PIZ$-proof in Lemma \ref{lem:PI-proof-DET-taylor-equals-DET-circ-div} of
\begin{equation}\label{eq:det-taylor-eqls-dte-circ-div}
\cdeti(X)=\taydet(X),
\end{equation}
where $\taydet(X):=\Coef_{z^n}(\cdeti(I-zX))$, is provably good under $\rho$. The cases where $A$ in Lemma \ref{lem:PI-proof-DET-taylor-equals-DET-circ-div} is $XY$ or $Z$ are similar. 

We shall  characterize all division gates in this proof. The goal of this \PIZ-proof as shown in \cite[Proposition 7.9]{HT12} is to show that $\cdeti(X)$ is equal to $z^n\cd\taydet(X)+\sum _{i< n}z^i\cd Q_i$ and then apply Lemma \ref{lem:Taylor-basic-properties} to extract the equality between the coefficients of $z^n$. For this purpose, the proof of \eqref{eq:det-taylor-eqls-dte-circ-div}
proceeds by first proving 
\begin{equation}\label{eq:taydet001}
\cdeti(zI_n+X^{-1})=z^n+\sum_{i=0}^{n-1}z^i\cd Q_i
\end{equation}
where the $Q_i$'s are circuits with division, and  such that $z$ does not occur in the $Q_i$'s. We then use the \PIZ-proof of $\cdeti(X)\cdeti(Y)=\cdeti(XY)$ proved in Section \ref{sec:P-C-inverse-proof-of-1-2} to  show that $\cdeti(zI_n+X^{-1})\cd\cdeti(X)=\cdeti(zX+I_n)$. By this and \eqref{eq:taydet001}, we get that 
\begin{equation}\label{eq:taydet002}
\cdeti(zX+I_n)=\cdeti(X)\cd(z^n+\sum_{i=0}^{n-1}Q_i).
\end{equation} 
We  use Lemma \ref{lem:Taylor-basic-properties} to extract the coefficient of $z^n$ in \eqref{eq:taydet002} to get $\Coef_{z^n}(\cdeti(zX+I_n))= \cdeti(X)$, which is our desired identity.
\medskip 

We will first check that the division gates in $\cdeti(zI_n+X^{-1})$ are provably good under $\rho$. The division gates in $\cdeti(zI_n+X^{-1})$  include, among other gates, the gates in the circuit $X^{-1}$. But by the proof of Lemma \ref{lem:L01} the division gates in $X^{-1}$ are all provably good under $\rho$. The other division gates in $\cdeti(zI_n+X^{-1})$ occur also in $\cdeti(X)$. More precisely, let $u^{-1}$ be a division gate in $\cdeti(X)$; then $u^{-1}$ occurs also in the substitution instance   $\cdeti(zI_n+X^{-1})$ of $\cdeti(X)$. Such a division gate $u^{-1}$  in $\cdeti(zI_n+X^{-1})$, 
by definition  \eqref{eq:det-min-one-definition} of $\cdeti$, will occur in the following term $\cdeti(zI_n+X^{-1})=z\cd 1+\delta(X)^{-1}-(-\delta(X)^{-1}v_2 X_1^{-1})\cd (X^{-1})_1^{-1}\cd(-\delta(X)^{-1} X_1^{-1}v_1^t) $ (where $(X^{-1})_1^{-1}$ is $(X^{-1})^{-1}$ without the rightmost column and without the lowest row). Note that in this term \emph{we have only the division gate }$\delta(X)^{-1}$ (terms like $(X^{-1})_1^{-1}$ contain division gates, but they do not stand themselves as division gates, since $X^{-1}$ is a notation for a matrix, not a circuit whose root is a division gate).   

We thus need to construct a free division axiom \PIZ-proof for the fact that  the  division gate $\delta(X)^{-1}$ is 1, under $\rho$. But this already stems from the proof of Lemma \ref{lem:L01} above. 

Precisely the same reasoning shows that for every circuit $F$ in the \PI-proofs of $\cdeti(X)=\taydet(X)$ in Lemma \ref{lem:PI-proof-DET-taylor-equals-DET-circ-div}  we have: if $z$ occurs in the scope of a division gate in $F$ then $\big(\Den(F)\big)(z/0)$ is provably good under $\rho$. \iddo{} 
\end{proof}

\begin{lemma}\label{lem:small-good-assgn-forgot}
Every division gate $u^{-1}$ that appears in the  $\PIZ$-proof in Lemma \ref{lem:reduce-syntactic-degree-Taylor} is provably good under $\rho$.  
\end{lemma}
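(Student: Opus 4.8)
The plan is to proceed exactly as in the proof of Lemma~\ref{lem:L01} and Lemma~\ref{lem:Taylor-good-assignment}: inspect every division gate appearing in the \PIZ-proof constructed in Lemma~\ref{lem:reduce-syntactic-degree-Taylor}, and exhibit for each such gate $u^{-1}$ a division-axiom-free \PIZ-proof of $u\rst\rho=1$. Recall that the proof in Lemma~\ref{lem:reduce-syntactic-degree-Taylor} establishes $\taydetsharp(A)=\taydet(A)$, and it is assembled from three ingredients: (i) a \PIZ-proof of $\big(\cdeti(I_n+zX)\big)(z/0)=\cdeti(I_n)$, obtained by gradually discarding $0\cd u\mapsto 0$ and $0+u\mapsto u$; (ii) a \PIZ-proof of $\cdeti(I_n)=1$, obtained either via Lemma~\ref{lem:L01}/the proof of $\cdeti(Z)=z_{11}\cdots z_{nn}$ or via the soon-to-be-proved Lemma~\ref{lem:L01}; and (iii) an application of Theorem~\ref{thm:Den-Num-normalization-proofs}(ii) to get $\Den\big(\cdeti(I_n+zX)\big)(z/0)=1$, followed by a gradual substitution replacing each occurrence of that subcircuit by the constant~$1$ inside $\taydet(X)$.

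The key observation is that every division gate occurring anywhere in this proof already occurs in one of the following: the circuit $\cdeti(I_n+zX)$ (equivalently, a substitution instance of $\cdeti(X)$), the circuit $\Den\big(\cdeti(I_n+zX)\big)$, or circuits already handled in Lemma~\ref{lem:L01}. For the division gates coming from $\cdeti(I_n+zX)$: by definition~\eqref{eq:det-min-one-definition} and the analysis in the proof of Lemma~\ref{lem:Taylor-good-assignment}, such a gate is of the form $\delta(X)^{-1}$ for some minor, or a division gate inside the matrix inverse $X_1^{-1}$; by Claim~\ref{cla:good-div-gates-in-proof} (applied to the symbolic matrix rather than a triangular one, exactly as in the proof of Claim~\ref{cla:1-2-defined-under-identity-matrix}) all of these are provably good under $\rho$ using division-axiom-free \PIZ-proofs. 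For the division gates inside $\Den\big(\cdeti(I_n+zX)\big)$: since $\Den(\cdot)$ is built by the recursion of Section~\ref{sec:NUM-DEN-definition} from the gates of $\cdeti(I_n+zX)$, and inverse gates $v=u^{-1}$ contribute $\Num(v):=\Den(u),\ \Den(v):=\Num(u)$, every division gate that appears inside $\Den(F)$ already appeared inside $F=\cdeti(I_n+zX)$; hence it is covered by the previous case. Finally, the discarding-of-zeros steps and the substitution of the constant~$1$ introduce no new division gates at all.

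Concretely I would argue: simultaneously, for every division gate $u^{-1}$ occurring in the \PIZ-proof of Lemma~\ref{lem:reduce-syntactic-degree-Taylor}, construct (using the \SigZB-\textbf{COMP} axiom and the encoding scheme of Section~\ref{sec:encoding-PI-proofs}, as we cannot use \SigOneB-induction) a division-axiom-free \PIZ-proof of $u\rst\rho=1$. By the case analysis above, each such $u$ is either (a) a division gate $\delta(M)^{-1}$ or a gate inside $M^{-1}$ for $M$ a symbolic matrix or minor thereof occurring in $\cdeti(I_n+zX)$, in which case Claim~\ref{cla:good-div-gates-in-proof} gives that $u$ is of the form $(m_{ii}-\overline 0\cd\overline h)^{-1}$ — wait, more precisely for the symbolic matrix case $u\rst\rho$ reduces under $\rho$ to a circuit computing $1$ by the analysis in the proof of Claim~\ref{cla:1-2-defined-under-identity-matrix} — and we build the division-axiom-free proof as there; or (b) a division gate inside $\Den\big(\cdeti(I_n+zX)\big)$, which by the recursion of Section~\ref{sec:NUM-DEN-definition} coincides with a division gate already handled in case~(a). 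This exhausts all division gates, so all of them are provably good under $\rho$, which is the assertion of the lemma.

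The main obstacle is purely bookkeeping: one must check that the gradual zero-elimination and constant-substitution steps in the proof of Lemma~\ref{lem:reduce-syntactic-degree-Taylor} genuinely introduce no fresh division gates, and that the \SigZB-formula witnessing the collection of division-axiom-free subproofs can be written down uniformly (using pointers to ``previous'' proof-lines exactly as in Lemma~\ref{lem:from-all-Den-gates-proofs-to-top-Den} and Lemma~\ref{lem:L01}), since we have no \SigOneB-induction in \VZ. Apart from that, everything reduces to Claim~\ref{cla:good-div-gates-in-proof} and the reasoning already carried out for Lemmas~\ref{lem:L01} and~\ref{lem:Taylor-good-assignment}; accordingly I would state the proof briefly and refer to those analyses, omitting the routine details.
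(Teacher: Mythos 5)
Your case analysis misses the one division gate that this lemma is really about. By the construction in Section~\ref{sec:Reducing-the-Syntactic-Degree-of-the-Determinant-Polynomial}, the division gate occurring in $\taydet(X)$ (and hence throughout the proof of $\taydetsharp(X)=\taydet(X)$) is $\bigl(\Den\bigl(\cdeti(I_n+zX)\bigr)(z/0)\bigr)^{-1}$, i.e.\ the inverse gate sitting \emph{on top of} the entire denominator circuit. This gate falls into neither of your two cases: it is not a division gate occurring inside $\cdeti(I_n+zX)$, and your case (b) is vacuous, since $\Den(\cdot)$ as defined in Section~\ref{sec:NUM-DEN-definition} is a division-\emph{free} circuit, so there are no division gates ``inside $\Den(\cdeti(I_n+zX))$'' for the Num/Den recursion to trace back. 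Moreover, even if one could match this top gate to some gate of $\cdeti(I_n+zX)$, that would not suffice: $\Den(F)$ is (roughly) the product of all denominators occurring in $F$, so its provable goodness under $\rho$ is not the provable goodness of any single division gate of $F$.

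What is actually needed — and what the paper's proof supplies — is a division-axiom-free \PIZ-proof of $\Den\bigl(\cdeti(I_n+zX)\bigr)(z/0)\rst\rho=1$. This is obtained by first reducing to $\cdeti(I_n)$ via the zero-discarding proof, noting (from the proof of Lemma~\ref{lem:L01}) that all division gates there are provably good under $\rho$, and then invoking Lemma~\ref{lem:from-all-Den-gates-proofs-to-top-Den}, which converts the provable goodness of the individual division gates of a circuit $F$ into a \PCZ-proof (hence division-axiom-free) that $\Den(F)$ evaluates to $1$ at $\rho$. In your write-up that lemma appears only as a bookkeeping device for organizing pointers in the $\SigZB$-construction; here it is the substantive step, and without it the lemma is not proved. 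Your case (a), covering the division gates internal to $\cdeti(I_n+zX)(z/0)$ and $\cdeti(I_n)$ via Claims~\ref{cla:good-div-gates-in-proof} and~\ref{cla:1-2-defined-under-identity-matrix}, is fine as far as it goes, but it is only the input to Lemma~\ref{lem:from-all-Den-gates-proofs-to-top-Den}, not the conclusion.
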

\begin{proof}
We consider the case where $A$ in Lemma \ref{lem:reduce-syntactic-degree-Taylor} is $X$. The cases for $XY, Z$ are similar.

By Section \ref{sec:Reducing-the-Syntactic-Degree-of-the-Determinant-Polynomial} every division gate in the $\PIZ$-proof of Lemma \ref{lem:reduce-syntactic-degree-Taylor} is of the form $(\Den\left(\cdeti\left(I_n+zX\right)\right)(z/0))^{-1}$.  Identical to the proof of Lemma \ref{lem:reduce-syntactic-degree-Taylor}, start by constructing directly a \PIZ-proof of $(\cdeti\left(I_n+zX\right))(z/0)= \cdeti\left(I_n\right)$, by substituting $z$ by 0 and gradually replacing in the proof $0\cd u$ to $0$ and $0+u$ to $u$, for $u$ any subcircuit. Now,  by the proof of Lemma \ref{lem:L01}, all the division gates in $\cdeti\left(I_n\right)$ are provably good under $\rho$. By Lemma \ref{lem:from-all-Den-gates-proofs-to-top-Den} we get that $(\Den\left(\cdeti\left(I_n+zX\right)\right)(z/0))=1$ has a division axiom free \PIZ-proof (in fact, already a \PCZ-proof).\iddo{}
\end{proof}

In order to be able to apply Lemma \ref{lem:inv} in Section \ref{sec:homogenization-of-proofs} we need to show how to construct in the theory \PCZ-proofs of the fact that the division gates from Corollary \ref{cor:exist-normalized-div-PI-div-proof-of-det-identities}, are equal 1.

\begin{corollary}[in \VZ]\label{cor:what-we-need-for-div-elim} Let $\pi$ be the \PIZ-proof of \eqref{eq:1} and \eqref{eq:2} from Corollary \ref{cor:exist-normalized-div-PI-div-proof-of-det-identities}, in which every division gate in a circuit in  $\pi$ appears at the top as $\Den(F)^{-1}$, for some circuit $F$, and where the determinant in  \eqref{eq:1} and \eqref{eq:2} is written as $\taydetsharp$. For every such $F$ in $\pi$ there exists a \PCZ-proof of $\hc{(\Den(F\shift))}{0}=1$  (where $\sigma$ is the linear shift as defined above).  
\end{corollary}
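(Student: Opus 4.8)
The plan is to assemble Corollary~\ref{cor:what-we-need-for-div-elim} by chaining together the lemmas of this section exactly as they were designed to fit. By Corollary~\ref{cor:exist-normalized-div-PI-div-proof-of-det-identities}, the \PIZ-proof $\pi$ of \eqref{eq:1}, \eqref{eq:2} has the property that every circuit with division occurring in it has a single division gate at the top, of the form $\Num(F)\cd\Den(F)^{-1}$; so the circuits $F$ whose denominator we must control are precisely the (division-containing) circuits appearing as proof-lines of $\pi$. Tracing back, $\pi$ was obtained by composing the constructions of Proposition~\ref{prop:write_PC-1-proof}, Lemma~\ref{lem:PI-proof-DET-taylor-equals-DET-circ-div} (via the $\Coef$ transformation of Lemma~\ref{lem:Taylor-basic-properties}), Lemma~\ref{lem:reduce-syntactic-degree-Taylor}, and finally the division-normalization of Theorem~\ref{thm:Den-Num-normalization-proofs}. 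So first I would observe that division normalization does not introduce any genuinely new division gates: every division gate in the normalized proof is of the form $\Den(G)^{-1}$ where $G$ ranges over subcircuits of circuits already present in the pre-normalization proof $\pi_2$ (equivalently, $\pi_3$ in the Section~\ref{sec:carry-proof-in-theory} numbering), and each such $\Den(G)$ is a product of copies of $\Num$ and $\Den$ of the original nodes. Hence it suffices to control the original division gates, i.e.\ those appearing in the three component proofs above.

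Next I would invoke the three ``provably good'' lemmas in turn: Lemma~\ref{lem:L01} handles all division gates in the Proposition~\ref{prop:write_PC-1-proof} part; Lemma~\ref{lem:Taylor-good-assignment} handles all division gates in the Lemma~\ref{lem:PI-proof-DET-taylor-equals-DET-circ-div} part (including the Taylor-expansion denominators $\big(\Den(F)\big)(z/0)$, which that lemma explicitly certifies are provably good under $\rho$); and Lemma~\ref{lem:small-good-assgn-forgot} handles all division gates in the Lemma~\ref{lem:reduce-syntactic-degree-Taylor} part. Together these say: \emph{every} division gate $u^{-1}$ occurring anywhere in $\pi$ is provably good under the identity matrices assignment $\rho$, meaning $u\rst\rho=1$ has a division-axiom-free \PIZ-proof, and these proofs are uniformly \SigZB-constructible (the lemmas are all stated ``in \VZ'').

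Now fix a circuit with a top division gate $\Den(F)^{-1}$ appearing in $\pi$. Since every division gate of $F$ is provably good under $\rho$ (by the previous paragraph applied to $F$ and its subcircuits), Lemma~\ref{lem:from-all-Den-gates-proofs-to-top-Den} applies directly and yields a \PCZ-proof of $\hc{(\Den(F\shift))}{0}=1$. This is the desired conclusion. The whole argument is carried out in \VZ\ because each ingredient — the three provably-good lemmas, Lemma~\ref{lem:from-all-Den-gates-proofs-to-top-Den}, and the syntactic bookkeeping that matches up the division gates of the normalized proof with those of $\pi_2$ — is \SigZB-definable in \VZ, and we only ever compose a bounded (in fact polynomial, over all $F$ in $\pi$) number of such constructions using the \SigZB-\textbf{COMP} axiom and pointers to previous proof-lines, never number induction on \SigOneB-formulas.

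The main obstacle I anticipate is not any single deep step but the bookkeeping of paragraph one: one must verify carefully that division normalization (Theorem~\ref{thm:Den-Num-normalization-proofs}) really does not manufacture a division gate $\Den(G)^{-1}$ for which $\widehat{\Den(G)}\rst\rho\neq 1$ — i.e.\ that $\Den(G)$ is always a product of $\Num$'s and $\Den$'s of \emph{provably-good-under-$\rho$} original nodes, so that $\hc{(\Den(G))}{0}$ provably computes a product of $1$'s, hence $1$. This reduces, by induction on circuit structure, to the fact that $\widehat{\Num(u)}\rst\rho = \widehat{\Den(u)}\rst\rho$ whenever $u$ itself is provably good (Claim~\ref{cla:special-den-eq-num}), together with the multiplicativity of $\Num,\Den$ over $+$ and $\times$ from Section~\ref{sec:NUM-DEN-definition}; but since we cannot run this induction inside \VZ, in the theory we instead construct, simultaneously for all the relevant denominators via \SigZB-\textbf{COMP}, the \PCZ-proofs of $\hc{(\Den(F\shift))}{0}=1$ exactly as in the proof of Lemma~\ref{lem:from-all-Den-gates-proofs-to-top-Den}, feeding in the division-axiom-free \PIZ-proofs supplied by Lemmas~\ref{lem:L01}, \ref{lem:Taylor-good-assignment}, and~\ref{lem:small-good-assgn-forgot}.
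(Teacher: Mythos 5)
Your proposal is correct and follows essentially the same route as the paper: the paper's own proof is exactly the combination of Lemma \ref{lem:from-all-Den-gates-proofs-to-top-Den} with the fact that Lemmas \ref{lem:L01}, \ref{lem:Taylor-good-assignment} and \ref{lem:small-good-assgn-forgot} certify every division gate in $\pi$ as provably good under $\rho$. Your extra bookkeeping about normalization not manufacturing bad denominators is a sensible elaboration of a point the paper leaves implicit, but it does not change the argument.
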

\begin{proof}
This follows from Lemma \ref{lem:from-all-Den-gates-proofs-to-top-Den} and the fact that all division gates in $\pi$ were shown to be provably good under $\rho$ in the lemmas that follow Lemma \ref{lem:from-all-Den-gates-proofs-to-top-Den}. 
\end{proof}

\subsection{Eliminating Division}
We are now ready to construct the division free PI-proof of the determinant identities (with some restrictions). Note that we will need to use the fact that division gates are provably good in the next section (Section \ref{sec:homogenization-of-proofs}). 

We first define the following:

\begin{definition}[Correct up to degree $k$ $\PCZ$-proof]\label{def:correct-up-to-deg} Let $k$ be a natural number. We say that a $\PCZ$-proof sequence $\pi$ is \emph{correct up to degree $k$} if (i) every proof-line in $\pi$ is an equation between algebraic circuits that was derived by one of the derivation rules of $\PCZ$  from previous lines; or (ii) is a variant of the division axiom {\rm D}, where instead of $F\cd F^{-1}=1$ we have the line  $F\cd\inv_k(F) = 1$; or (iii) is an axiom of $\PCZ$ different from {\rm D}.\iddo{}
\end{definition}

The witness for syntactic correctness of a correct up to degree $k$ $\PCZ$-proof is similar to that in Section \ref{sec:encoding-PI-proofs}.
%
Note that we do not need to witness the syntactic-degree of nodes in circuits in a correct up to degree $k$ \PCZ-proof. In other words, there exists a \SigZB-formula $\psi(Z,k)$ that holds if $Z$ is a correct up to degree $k$ \PCZ-proof (where $Z$ contains also the syntactic correctness witness as in Section \ref{sec:encoding-PI-proofs}). The formula $\Psi(Z,k)$ only needs to verify that in the division axiom D we have $F\cd\inv_k(F) = 1$, and checking whether a circuit is $\inv_k(F)$ is done without the need to witness the syntactic-degree of $F$ or $\inv_k(F)$. \iddo{}   

\begin{corollary}[in \VZ]\label{cor:exists-corr-up-to-deg-n-PC-proof-of-dettaysharp}
Given a positive natural number $n$, there exists a \emph{correct up to degree $2n$} $\PCZ$-proof sequence of the determinant identities \eqref{eq:1} and \eqref{eq:2} for \nbyn\ matrices where the determinant in \eqref{eq:1} and \eqref{eq:2} is written as the division free circuit $\taydetsharp(A)$ of syntactic-degree $n$, for $A$, the \nbyn\ symbolic matrix $X$ or $Y$, or their product $XY$, or a symbolic triangular matrix $Z$. 
\end{corollary}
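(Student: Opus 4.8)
The plan is to obtain Corollary \ref{cor:exists-corr-up-to-deg-n-PC-proof-of-dettaysharp} by taking the \PIZ-proof from Corollary \ref{cor:exist-normalized-div-PI-div-proof-of-det-identities} and mechanically replacing each top division gate with its $\inv_k$ approximation, tracking that the result is a \SigZB-definable object and that it satisfies the ``correct up to degree $2n$'' predicate of Definition \ref{def:correct-up-to-deg}. First I would recall the setup: by Corollary \ref{cor:exist-normalized-div-PI-div-proof-of-det-identities}, \VZ\ proves there is a syntactically correct \PIZ-proof $\pi$ of \eqref{eq:1}, \eqref{eq:2} (determinant written as $\taydetsharp(A)$, which is division-free of syntactic-degree $n$) in which every circuit with division has a single division gate at the top, i.e.\ is of the form $\Num(F)\cd\Den(F)^{-1}$. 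Apply the linear shift $\sigma\colon r_i\mapsto\rho(r_i)-w_i$ to every proof-line of $\pi$; this is a \SigZB-definable substitution on the encoding and the result $\pi'$ is again a syntactically correct \PIZ-proof (substitution instances of axioms are axioms, substitution commutes with the rules R1--R4). Then replace each subcircuit $\Den(F\shift)^{-1}$ occurring in $\pi'$ by the division-free circuit $\inv_{k}(\Den(F\shift))$, with $k=2n$; call the resulting proof-sequence $\pi_4$. Since the endpoints of $\pi$ are division-free ($\taydetsharp$ has no division gates and the triangular-product side is a monomial), the shift and the $\inv_k$ substitution leave the two claimed identities \eqref{eq:1}, \eqref{eq:2} essentially unchanged up to the harmless $w$-shift, which can be undone by another \SigZB-definable substitution $w_i\mapsto\rho(r_i)-r_i$ appended to the front and back of the proof; so the final lines really are \eqref{eq:1}, \eqref{eq:2} with the determinant as $\taydetsharp(A)$.

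Next I would verify that $\pi_4$ is correct up to degree $2n$ in the sense of Definition \ref{def:correct-up-to-deg}. Go line by line. A line that was an axiom A1--A9 or C1, C2 of \PCZ\ (not D) stays such an axiom after the substitution, since those axioms are schematic in the circuits $F,G,H$ and substituting a division-free circuit for $\Den(F\shift)^{-1}$ everywhere preserves the pattern. A line that was derived by R1--R4 stays so derived, because $\inv_k(\cd)$-substitution is a uniform replacement applied to both antecedents and consequent and the witness-for-syntactic-correctness convention (identical node numbering, Section \ref{sec:encoding-PI-proofs}) is maintained by the \SigZB-definable translation. The only nontrivial case is a line that was an instance of the division axiom D, $F\cd F^{-1}=1$ where here $F=\Den(F_0\shift)$ for the circuit $F_0$ from which the division gate originated: after the translation this line becomes $\Den(F_0\shift)\cd\inv_{2n}(\Den(F_0\shift))=1$, which is exactly case (ii) of Definition \ref{def:correct-up-to-deg} with $k=2n$. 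Hence $\psi(\pi_4,2n)$ holds, where $\psi$ is the \SigZB-formula checking the correct-up-to-degree-$k$ property; crucially, as noted after Definition \ref{def:correct-up-to-deg}, $\psi$ need not witness any syntactic-degrees, so this check is genuinely \SigZB. The whole construction --- applying $\sigma$, then replacing each top $\Den(F\shift)^{-1}$ by $\inv_{2n}(\Den(F\shift))$, then appending the undoing substitution --- is a composition of \SigZB-definable string operations on the proof encoding, hence \SigZB-definable in \VZ, and \VZ\ proves the output satisfies $\psi(\cd,2n)$.

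The choice $k=2n$ deserves a word: we need the approximation degree to be at least the syntactic-degree of the products $\Den(F\shift)\cd\inv_k(\Den(F\shift))$ that we will later (in Section \ref{sec:homogenization-of-proofs}, via Lemma \ref{lem:inv} and Theorem \ref{thm:homogenize-proofs}) homogenize and split into components up to degree $n$; since in equation \eqref{eq:1} the determinant $\taydetsharp$ has syntactic-degree $n$ on each side and the product $\cdeti(X)\cdeti(Y)$-style line has degree $2n$, we want the division-elimination approximation to be faithful up to degree $2n$ so that the eventual degree-$\le 2n$ homogeneous decomposition of $\pi_4$ recovers genuine \PCZ-proofs. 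Thus taking $k=2n$ (any bound $\ge 2n$ would do) suffices. I do not expect a serious obstacle here: the content is the bookkeeping that the translation is \SigZB and preserves the line-by-line structure, and that the endpoints come out right after undoing the $w$-shift; the genuinely substantive facts --- that $\inv_k(H)$ behaves as a degree-$k$ inverse (Lemma \ref{lem:inv}), that the relevant $\Den(F\shift)$ have provably-trivial constant term (Corollary \ref{cor:what-we-need-for-div-elim}), and that homogenization works (Theorem \ref{thm:homogenize-proofs}) --- are deferred to Section \ref{sec:homogenization-of-proofs} and are not needed for this corollary, which only asserts the existence of the correct-up-to-degree-$2n$ sequence. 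The one point to be careful about is that replacing $\Den(F\shift)^{-1}$ by $\inv_{2n}(\Den(F\shift))$ inside a line that was \emph{not} the division axiom (e.g.\ an associativity axiom mentioning such a subcircuit) still yields a legal axiom instance: this holds because the \PCZ\ axioms A1--A9, C1, C2 are closed under arbitrary substitution of circuits for their schematic metavariables, and our translation is precisely such a uniform substitution.
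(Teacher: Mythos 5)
Your proposal is correct and follows essentially the same route as the paper's proof: start from the normalized-division \PIZ-proof of Corollary \ref{cor:exist-normalized-div-PI-div-proof-of-det-identities}, apply the \SigZB-definable shift to the identity-matrices assignment, replace each top division gate $\Den(F\shift)^{-1}$ by $\inv_{2n}(\Den(F\shift))$, substitute the original variables back, and check line-by-line that the result satisfies Definition \ref{def:correct-up-to-deg} with the D-axiom instances becoming case (ii). Your observations that the provably-good-division facts are only needed later (for Lemma \ref{lem:inv} in the homogenization step) and that the non-D axioms survive the uniform substitution are both consistent with how the paper organizes the argument.
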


\begin{proof}
By Corollary \ref{cor:exist-normalized-div-PI-div-proof-of-det-identities}
there exists a syntactically correct \PIZ-proof $\pi_3$ of the determinant identities where the determinant  in \eqref{eq:1} and \eqref{eq:2} is written as $\taydetsharp$, and in which each circuit appearing in the proof has at most one division gate that appears as the root of the circuit\iddo{}. 

The rest of the proof follows the sketch in Step 5 (Section \ref{sec:carry-proof-in-theory}). 
Assume that $r_i$ (for $i\in J$) are all the variables appearing in $\pi_3$ and  $b$ is the assignment of identity matrices to the variables in $\pi_3$ (that is, the assignment of $I_n$ to the (variables belonging to the) matrices $X,Y,Z$ in $\pi_3$). 

Substitute in $\pi_3$ the term $(b_i-w_i)$ for each $r_i$ (for all $i \in J$) denoting the obtained proof by $\pi_3'$. The substitution is performed as follows:
given $\pi_3$ there is a \SigZB-formula $\Psi(\pi_3,v)$ that is true iff $v$ is a node in a circuit in $\pi_3$ that is labeled by a variable (more formally, $v$ is a number in the set of nodes $V$ in a circuit in the proof and $(v,t)\in G$, where $G$ is the string specifying the gates of $V$, and $t$ is a number that specifies one of the input variables; recall the encoding of circuits in Section \ref{sec:subsec-for-encoding-circuits}). Using $\Psi(\pi_3,v)$ we can define a \SigZB-formula $\Phi(\pi_3,i)$ such that $\Phi(\pi_3,i)$ is true iff  $i\in \pi'_3$, where $\pi'_3$ is a string that encodes the desired substitution instance of $\pi_3$. This is done using a bounded number universal quantifier over nodes $v$ in $\pi_3$ and over all indices of circuits in $\pi_3$, so that when $\Psi(\pi_3,v)$ holds, $(v,t)\in G$ is replaced by ``$(b_i-w_i)$'' as follows:  $(v,t)$ is replaced by $(v,t')$ where $t'$ specify a $+$ gate; then we add four new nodes in $V$, new edges in $E$, new  gates in $G$, and new inputs to the input string $I$, encoding that the product gate $v$ has two children: $b_i$ and $-1\cd w_i$.


Accordingly, when we define the substitution instance above we also need to update the witnesses for the proofs. The witnesses all stay the same except the witnesses for the \PI\ axioms C1 and C2, that is, $F_1\oplus  F_2=F_1+ F_2$ and $F_1\otimes  F_2=F_1 \times F_2$, respectively. For these two axioms we specify the new mappings of $F_1, F_2$ in the right hand side of the equation to the circuits in the left hand side of the equation under the substitution. Since the substitution is defined by replacing some leaves in $F_i$ by some circuits then we only need to add new pairs to the original mapping to update it appropriately.\iddo{}   

\iddo{}

By Lemma \ref{cor:what-we-need-for-div-elim} the all zero assignment $\overline 0$ to the $w_i$ variables in  $\pi_3'$ gives every division gate the value  $1$. 

%
For every circuit $C$ with a top division gate $H^{-1}$, let $\inv_{2n}(H)$ be  the truncated power series of $H^{-1}$ over the zero assignment (we use $2n$ because this is the degree of the determinant identities). \VZ\ proves there exists a corresponding division-free circuit $C'$, obtained by replacing the division gate $H^{-1}$ in $C$ by $\inv_{2n}(H)$. This is done like the substitution of variables describes above, where now the nodes we define using a \SigZB-formula $\Psi(\pi_3,v)$ are the inverse nodes $u^{-1}$, and the substitution adds nodes and wirings while also using the subcircuit $u$ itself (as needed to construct $\inv_{2n}(u)$).\iddo{}

Let $\pi_3''$ be the corresponding division-free proof-sequence obtained from $\pi'_3$ by replacing every circuit $C$ with a division gate at the top by $C'$ as above. 


\iddo{}
In $\pi_3''$ every occurrence of the axiom D of division $F\cd F^{-1}=1$ is replaced by $F\cd\inv_{2n}(F) = 1$. Now substitute back the original variables $X,Y,Z$ instead of the terms $b_i-w_i$ .
The resulting sequence is a correct up to degree $2n$\ $\PCZ$ proof sequence of the desired determinant identities.
\iddo{}
%
\end{proof}

\iddo{}

\section{Eliminating High Degrees: Constructing PI-Proofs with Polynomial Syntactic-Degrees}\label{sec:homogenization-of-proofs}

\emph{From now on all our algebraic circuits are division free}. Let $C$ be an algebraic circuit. Recall that $\deg(C)$ is the syntactic-degree of $C$ defined to be the maximal syntactic-degree of a node in $C$  (Definition  \ref{def:syntactic-degree}). Also, recall that  $C^{(i)}$ is a syntactic-homogeneous circuit computing the degree $i$ homogeneous part of $\widehat C$ (see Section \ref{sec:algebraic_circuits}). For a node $v$ in $C$, $C_v$ denotes the circuit rooted at $v$.  

We now show how to eliminate, within  \VZ, all nodes of syntactic degrees exceeding a given number $d$, from circuits in a PI-proof of the equation $F=G$ assuming the syntactic-degree of $F,G$ is at most $d$.
\mar{FAC0 is undefined}

We first note that we do not know of  an \FACZ\ algorithm that  computes the syntactic-degrees of nodes in a given circuit (see also Section \ref{sec:witnessing-syntactic-degrees}).
However, for our purposes it will be sufficient to input an upper bound on the syntactic-degree of the circuit to be homogenized (except for the proof of the Cayley-Hamilton theorem in Section \ref{sec:corollaries} wherein we need to witness explicitly the syntactic-degree of nodes in a proof in order to use Lemma \ref{lem:Taylor-basic-properties} part (3); see Section \ref{sec:witnessing-syntactic-degrees} for the definition of syntactic-degree witness). 

Our algorithms will  receive a circuit $C$ and a number $d$ that will serve as an upper bound on the syntactic-degree of the circuit (the theory does not need to verify the correctness of this syntactic-degree).  

\iddo{}  

\medskip 
\medskip
\hrule  
\medskip 
\noindent \textsc{{Homogenization Algorithm in Uniform \FACZ}}\smallskip \mar{FAC is undefined}
\medskip 
\hrule
\medskip 

\noindent \textbf{Input:} an arithmetic circuit $C$ of size $s$ and a natural number $d$.
\begin{description}
\item[Optional input 1:] A syntactic-degree witness for all the nodes in $C$ (including the root that has syntactic-degree $d$). 

\item[Optional input 2:] A natural number $i$. \end{description}

\noindent \textbf{Output:} \begin{enumerate}
\item 
An arithmetic circuit $C'$ of size $O(d^2\cdot s)$ computing the polynomial $\widehat C$ such that  $C'$ is a sum of syntactic homogeneous circuits $C'=C^{(0)}+\dots+C^{(d)}$ ($C^{(i)}=0$ for $i>\deg(C)$). 
\item If optional input 1 was supplied, then for every gate $v$ in $C'$, the duplicate gate $[v,j]$ for $j>\deg(v)$, is the circuit $0$ (see below for the notation ``$[u,j]$"). 

\item If the input $C$ is (declared to be)\footnote{The algorithm does not check for correctness of $C$ being a (sum of) syntactic homogeneous circuits.} a (sum of) syntactic homogeneous circuits $\sum_{i\in I}C^{(i)}$ for $I\subseteq\{0,\dots,d\}$  then output $C$, augmented with the nodes $[u,j]=0$, for all  nodes $u\in C$ and all $j\in \{0,\dots,d\}\setminus I$.  \label{it:homogeneous-algorithm-part3}

\item
If optional input 2 was supplied, then $C'=C^{(i)}$, namely  the $i$th homogeneous component.
If moreover  the input circuit $C$ is already a syntactic homogeneous circuits $C^{(j)}$ then the output is the circuit $0$ if  $j\neq i$ and is $C^{(i)}$ if $j=i$. 
\end{enumerate}\medskip

\noindent\textbf{Algorithm:} 
We follow the standard Strassen \cite{Str73} algorithm, but instead of building the circuits by induction from leaves to root we construct all nodes simultaneously as follows.

(\textbf{1}) Assume we do not have the witness for syntactic-degrees of all the nodes (namely, the witness was not supplied as an input). 
Every node $v$ is duplicated $d+1$ times into the nodes $[v,0],\dots,[v,d]$. For a node $[v,i]$ we call $i$ the \textbf{\textit{syntactic-degree upper bound of}} $[v,i]$,  denoted as 
$$\degub([v,i]):=i.$$ 
 
 The node  $[v,i]$ is (the root of) a syntactic-homogeneous circuit of syntactic-degree at most $i$ computing \emph{either }$0$ or the degree $i$ homogeneous part of the polynomial $\widehat {C_v}$. 
\mar{homogeneous part is defined?}
The algorithm is doable in \FACZ\ because every new node $[v,i]$ depends only on the copies of the two nodes $u,w$ that goes into $v$, and these nodes are already known from the input circuit, namely, they are \([u,i], [w,i]\), for  \(i=0,\dots,d+1\), where \(v=u+w\) or \(v=u\cdot w\) in \(C\). Hence, the wiring of the new circuit is done in parallel for each of the new nodes as follows:

\case 0 $v$ is a leaf in $C$. If $v$ is a constant $\alpha$, then define $[v,0]=\alpha$, and $[v,i]=0$ for all $i=1,\dots,d$. Otherwise, $v$ is a variable $x_j$, and we define $[v,1]=x_j$, and $[v,i]=0$ for all $1\neq i\in\{0,\dots,d\}$. 

\case 1 $v=u+w$ in $C$. Define $[v,i]:=[u,i]+[w,i]$ for every $i=0,\dots,d$.

\case 2 $v=u\times w$ in $C$. Define $[v,i]:=\sum_{j+k=i \atop j,k=0,\dots,d}[u,j]\times[w,k]$.
\smallskip 
 
\noindent Finally, $C^{(i)}:=r^{(i)}$, for $r$ the root of $C$, for all $i=0,\dots, d$. 

\medskip

 (\textbf{2}) Otherwise, assume that a witness for the syntactic-degree $\deg(v)$ for every node $v$ in $C$ was supplied as an input. In this case the algorithm is the same as above, except that 
the $i$th duplicate $[v,i]$ of a node $v$ is defined to be the circuit 0 whenever $i>\deg(v)$.  More precisely: 
  
 \case 0 $v$ is a leaf in $C$. If $v$ is a constant $\alpha$, then define $[v,0]=\alpha$. Otherwise, $v$ is a variable $x_j$, and we define $[v,1]=x_j$, and $[v,j]=0$, for all $0\le j\le d$ and $j\neq 1$.
 
 \case 1 $v=u+w$ in $C$. Define $[v,i]:=[u,i]+[w,i]$ for every $i=0,\dots,\deg(v)$, and $[v,i]:=0$ for $i=\deg(v)+1,\dots,d$.
 

\case 2 $v=u\times w$ in $C$. Define $[v,i]:=\sum_{j+k=i \atop j,k=0,\dots,d}[u,j]\times[w,k]$,  for every $i=0,\dots,\deg(v)$, and $[v,i]:=0$ for $i=\deg(v)+1,\dots,d$.\footnote{Note that this means that provably in \VZ\ there exists a \PCZ-proof of $[v,i]:=\sum_{j+k=i \atop 0\le j\le \deg(u) ,0\le k\le \deg(k)}[u,j]\times[w,k]$, for every $i=0,\dots,d$, since $[u,j]=0$ for $j>\deg(u)$ and $[w,k]=0$ for $k>\deg(u)$.}

Finally, $C^{(i)}:=r^{(i)}$, for $r$ the root of $C$, for all $i=0,\dots, d$. \smallskip 

\medskip\hrule 

\bigskip

\para{Note on syntactic-degree upper bounds}
%
Notice that if we do not have a witness for syntactic-degrees and assuming we get as input a correct upper bound, that is, $d\ge \deg(C)$, the above algorithm produces a syntactic homogeneous circuit in which each node $[u,i]$ is of syntactic-degree $i$, \emph{except} that for $\widehat {[u,i]}=0$, the syntactic degree of the circuit rooted at $[u,i]$ can be \emph{smaller} than $i$ (but not bigger). This means that the circuit contains in itself \emph{a witness for the upper bound of the syntactic-degree of each node}. 

We will manage to work out our argument without the need to compute syntactic-degrees except for the Cayley-Hamilton theorem shown in Section \ref{sec:corollaries}. 

On the other hand, if we receive a syntactic-degree witness as an input then assuming $[u,i]$ is not the circuit 0, $u$ has syntactic-degree at least $i$. Moreover, if the input to the algorithm is already a sum of syntactic homogeneous circuits $\sum_{i=0}^k \hci{C}$ then $[u,i]=\emptyset$ for $i>k$ for every node $u$ in $C$.

We are going to construct \PCZ-proofs which witness the syntactic-degree of every node in:

\begin{definition}\label{def:appears-with-syntactic-degree-upper-bound}
Given a \PCZ-proof $\pi$ we say that every node in every circuit in the proof \emph{appears with its syntactic-degree upper bound \degub} if every such  node $u$ is a pair of numbers $[u,i]$ for $\degub(u)=i$, according to the construction in the homogenization algorithm above. 
\end{definition}






\iddo{}

We have the following main theorem about homogenization of proofs:


\iddo{}

\begin{theorem}[in \VZ]\label{thm:homogenize-proofs}\mar{Update that now homogenization is in \VNCTwo\ and not \VZ}
Let $d$ be a natural number and assume that  $F=G$ has a correct up to degree $d$ \PCZ-proof. Then, for every $k=0,\dots,d$, 
there exists a $\PCZ$-proof of $F^{(k)}=G^{(k)}$ of syntactic-degree at most $k$, in which every circuit is a syntactic homogeneous circuit in which every node $u$ appears with its  syntactic-degree upper bound, and $\degub(u)\le k$. 

\end{theorem}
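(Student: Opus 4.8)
The plan is to homogenize a correct-up-to-degree-$d$ $\PCZ$-proof line by line, replacing each proof-line $H_1 = H_2$ by a block of $d+1$ proof-lines $H_1^{(k)} = H_2^{(k)}$, $k = 0,\dots,d$, using the Homogenization Algorithm (in \FACZ) to produce the syntactic-homogeneous circuits and working inside \VZ\ via the \SigZB-\textbf{COMP} axiom to assemble the whole new proof at once (rather than by induction on proof-length, which would require \SigOneB-induction we do not have). Concretely, I would first apply the Homogenization Algorithm to every circuit occurring in the given proof, attaching to each node $u$ its pair-label $[u,i]$ so that it ``appears with its syntactic-degree upper bound'' in the sense of Definition \ref{def:appears-with-syntactic-degree-upper-bound}, and with $\degub(u) \le k$ in the $k$th block. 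Since each circuit in the proof has size $\poly$ and $d$ is polynomial, all of this is an \FACZ\ operation on the string encoding the proof, hence \SigZB-definable in \VZ, and the resulting object is a candidate $\PCZ$-proof whose correctness we must verify.

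The core of the argument is then a case analysis over the rules and axioms of $\PCZ$ (Definition \ref{def:PI-proofs}), showing that each application transforms, after homogenization, into a short \emph{derivable} block. For the rules R1 (symmetry) and R2 (transitivity), homogenizing commutes with the rule componentwise, so $F^{(k)} = G^{(k)}$ and $G^{(k)} = H^{(k)}$ give $F^{(k)} = H^{(k)}$ directly. For the addition rule R3, $(F_1 + F_2)^{(k)}$ is by construction $[r,k] = [u,k] + [w,k]$ where $r$ is the $+$-root, so this is literally $F_1^{(k)} + F_2^{(k)}$ up to the axioms C1/A1 identifying a circuit with its disjoint-copy form, and R3 applied to the block for $F_1 = G_1$ and $F_2 = G_2$ yields the block for $F_1 + F_2 = G_1 + G_2$. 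The multiplication rule R4 is the delicate one: $(F_1 \cdot F_2)^{(k)} = \sum_{i+j=k} F_1^{(i)} \cdot F_2^{(j)}$, so from the $d+1$ equations $F_1^{(i)} = G_1^{(i)}$ and $F_2^{(j)} = G_2^{(j)}$ one uses R4 on each summand and then R3 repeatedly to add the (at most $k+1 \le d+1$) products together; this is a $\poly(d)$-length sub-derivation and, crucially, the bookkeeping of which product-node of the homogenized circuit equals which $[u,i]\times[w,j]$ is exactly what the C2 witness (Section \ref{sec:encoding-PI-proofs}) records, so the whole block is \SigZB-checkable. For the axioms A1--A9, each homogenized instance is again an instance (or a short derivation from instances) of the \emph{same} axiom applied degree by degree — e.g. distributivity A6 homogenizes to $\sum_i$ over the degree-$i$ parts, each of which is an A6-instance combined by R3; for A8 ($F \cdot 0 = 0$) every homogeneous part is $0$ by A8 again; for the scalar axiom A10 we simply keep it in degree $0$. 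The variant division ``axiom'' $F \cdot \inv_k(F) = 1$ appearing in a correct-up-to-degree-$d$ proof is handled precisely by Lemma \ref{lem:inv}: provided $F^{(0)} = 1$ has a \PCZ-proof (which is guaranteed in our application by Corollary \ref{cor:what-we-need-for-div-elim}), its degree-$0$ component yields $1$ and its degree $1,\dots,k$ components yield $0$, with all intermediate circuits carrying their syntactic-degree upper bounds — this is the one place we invoke a nontrivial prior lemma rather than a direct rearrangement.

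I expect the main obstacle to be \emph{not} the mathematics of the case analysis but the \emph{uniform, \SigZB-definable bookkeeping}: we must produce the entire homogenized proof — together with its syntactic-correctness witnesses (the pointers to previous lines, the rule labels, and especially the C1/C2 node-mapping witnesses of Section \ref{sec:encoding-PI-proofs}) — by a single \SigZB-formula, since no $\SigOneB$-induction on proof length is available in \VZ. The delicate points are: (i) ensuring that when a node $[u,i]$ is the circuit $0$ (which can genuinely happen when $\widehat{C_u}$ has no degree-$i$ part), the \degub-labelling is still consistent and the R4-blocks that sum over $i+j=k$ still parse as legal R3/R4 applications, appealing to A8 to collapse the $0$ summands; (ii) keeping the proof size at $\poly(s,d)$ — each original line spawns $d+1$ lines, each R4-line spawns an $O(d)$-length sub-derivation, so the blow-up is $\poly(d)$ per line, hence polynomial overall; and (iii) threading the syntactic-correctness witnesses through the substitutions exactly as was done in the proof of Proposition \ref{prop:write_PC-1-proof} and Corollary \ref{cor:exists-corr-up-to-deg-n-PC-proof-of-dettaysharp}. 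Once these are in place, a final verification that the output string satisfies the \SigZB-predicate ``is a syntactically correct \PCZ-proof with every node appearing with its syntactic-degree upper bound $\le k$'' completes the argument; by the analogous reasoning Lemma \ref{lem:Taylor-basic-properties} parts 1 and 2 (which the excerpt says are proved ``almost identically'') follow by the same template with $z$-coefficient extraction in place of homogeneous-component extraction.
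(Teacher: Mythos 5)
Your proposal follows essentially the same route as the paper's proof: a single \SigZB-definable (\FACZ) construction that, for each $k$ and each proof-line simultaneously, emits the block of homogeneous components with pointers to previous lines in place of induction on proof length, reducing every axiom and every rule application to the circuit identities $(F_1\cplus F_2)^{(k)}=F_1^{(k)}+F_2^{(k)}$ and $(F_1\ctimes F_2)^{(k)}=\sum_{i}F_1^{(i)}\cdot F_2^{(k-i)}$ (the paper's Lemma \ref{lem:bounded degree}) together with the C1/C2 witness bookkeeping. The only divergence is that you fold the treatment of the variant division lines $F\cdot\inv_k(F)=1$ into the theorem's proof via Lemma \ref{lem:inv}, whereas the paper defers this to Corollary \ref{cor:exist-PC-proof-of-det-tay-sharp-identities}; since Lemma \ref{lem:inv} itself invokes this theorem (on a division-free proof), you should make explicit that the division-free case is established first so the argument is not circular.
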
 
\iddo{}
We need the following lemmas before concluding this theorem.

\begin{lemma}[in \VZ]\label{lem:bounded degree}
Let $F_1\oplus F_2$ and $F_1\oplus F_2$ be two circuits and let $k$ be a natural number\footnote{Recall that we mean here that we pick one such circuit out of all possible circuits of this form.}. The following have $\PCZ$-proofs, in which every circuit is a sum of syntactic homogeneous circuits in which every node $u$ appears with its  syntactic-degree upper bound: \iddo{} 
\begin{enumerate}
\item\label{c:1}  $(F_1\cplus F_2)^{(k)}= F_1^{(k)}+F_2^{(k)}$;
\item\label{c:2}   $(F_1\ctimes F_2)^{(k)}= \sum_{i=0}^kF_1^{(i)}\cdot F_2^{(k-i)}$.
\end{enumerate}
\end{lemma}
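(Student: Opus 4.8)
The plan is to construct the two required \PCZ-proofs by direct, explicit description of a proof-sequence, entirely inside \VZ\ via the \SigZB-\textbf{COMP} axiom, following exactly the recipe of the homogenization algorithm. First I would fix notation: write $F:=F_1\cplus F_2$, so the output gate of $F$ is a plus gate whose two incoming subcircuits are (copies of) $F_1$ and $F_2$; similarly for $F_1\ctimes F_2$. Running the homogenization algorithm (in its mode without a syntactic-degree witness) on $F$ and on each $F_i$ produces, for every node $v$ and every $j\in\{0,\dots,d\}$, the duplicate node $[v,j]$, and the algorithm's \textbf{Case 1}/\textbf{Case 2} wiring rules tell us \emph{syntactically} how $[v,j]$ is built from the $[u,j]$'s of its children. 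For part~\ref{c:1}: the root $r$ of $F_1\cplus F_2$ satisfies $r=u+w$ where $u,w$ are the roots of $F_1,F_2$; hence by \textbf{Case 1} the algorithm defines $[r,k]:=[u,k]+[w,k]$, which is literally the circuit $F_1^{(k)}+F_2^{(k)}$. So $(F_1\cplus F_2)^{(k)}=F_1^{(k)}+F_2^{(k)}$ is provable by a one-line application of axiom A1 ($F=F$) together with the C1 axiom to reconcile $\cplus$ with $+$ if the encoding demands it — essentially a trivial \PCZ-proof, and the only content is checking that the encoded circuits on both sides are the identical string (which is how our proof-encoding witnesses equality of circuits; see Section~\ref{sec:encoding-PI-proofs}).

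For part~\ref{c:2} the work is slightly more: the root $r$ of $F_1\ctimes F_2$ is $r=u\times w$, so \textbf{Case 2} of the algorithm sets $[r,k]:=\sum_{j+\ell=k,\ 0\le j,\ell\le d}[u,j]\times[w,\ell]$. I need a \PCZ-proof that this equals $\sum_{i=0}^{k}F_1^{(i)}\cdot F_2^{(k-i)}$. The summands with $j>k$ or $\ell>k$ are impossible since $j+\ell=k$, so the double sum is already indexed by $i=j\in\{0,\dots,k\}$ with $\ell=k-i$; thus the two expressions are the \emph{same} formal sum, up to the order in which the associative $+$-tree over the $k+1$ products is built and up to $\cplus$-vs-$+$ bookkeeping. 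Hence the proof is: start from A1 on $[r,k]$, then use the associativity/commutativity axioms A2, A3 (for $+$) and the C1/C2 axioms a bounded (polynomial in $k$) number of times to reassociate the sum-tree and the product-subtrees into the canonical shape of $\sum_{i=0}^k F_1^{(i)}\cdot F_2^{(k-i)}$. All of this is a fixed syntactic manipulation whose length is $\poly(k,|F_1|,|F_2|)$, and — crucially — every circuit appearing is, by construction of the homogenization algorithm, a sum of syntactic-homogeneous circuits in which each node $[v,j]$ carries its syntactic-degree upper bound $\degub([v,j])=j$; so the side-conditions of the lemma statement are met automatically. I would formalize "there exists such a proof" by exhibiting a \SigZB-formula defining the proof-string $\pi$ as an array, using \SigZB-\textbf{COMP}, exactly as in Proposition~\ref{prop:write_PC-1-proof} and Section~\ref{sec:vz-function-for-constructing-DET-circuit}: the proof is a concatenation of primitive gadgets (one reassociation gadget per layer of the sum/product tree), each of which is \SigZB-constructible, and the pointers between proof-lines are computed by an \ACZ\ function of the indices.

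The main obstacle — and it is a bookkeeping obstacle rather than a conceptual one — is the interplay between the ``shared-subcircuit'' operators $\cplus,\ctimes$ and the ``disjoint-copy'' operators $+,\times$, together with making sure the duplicate-node indexing of the homogenization algorithm is applied consistently on both sides of each claimed equation. Concretely: in $(F_1\ctimes F_2)^{(k)}$ the subcircuits $F_1,F_2$ of the product root may share nodes, and when we homogenize we must be careful that the copies $[u,j]$ coming from $F_1$ and from $F_2$ are handled by the C1/C2 witness data (the explicit node-mapping described in item~\ref{it:pi-proof-encode-circuits-identical} and in part~(5) of Section~\ref{sec:encoding-PI-proofs}); otherwise the string on the right-hand side, $\sum_i F_1^{(i)}\cdot F_2^{(k-i)}$, is built from \emph{disjoint} copies and will not be literally equal to the left-hand side's encoding. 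The fix is to route the equality through the C1/C2 axioms (which is precisely what they are for) and to carry along the requisite witness strings when we write down $\pi$ via \SigZB-\textbf{COMP}; I expect that verifying these witnesses are \SigZB-checkable, and that their total size stays polynomial, is the most tedious part, but it is entirely analogous to the witnessing already set up in Section~\ref{sec:encoding-PI-proofs}, so no new machinery is needed. Once Lemma~\ref{lem:bounded degree} is in hand, Theorem~\ref{thm:homogenize-proofs} follows by induction on the proof-length of the given correct-up-to-degree-$d$ \PCZ-proof, applying parts~\ref{c:1} and~\ref{c:2} at each $+$- and $\times$-rule application and the analogue of part~\ref{c:1} at each axiom.
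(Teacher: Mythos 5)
Your proposal is correct and matches the paper's own proof: the paper likewise reads off $(F_1\cplus F_2)^{(k)}=F_1^{(k)}\cplus F_2^{(k)}$ directly from the homogenization algorithm's wiring, applies axiom C1 (with its explicit node-mapping witness) to pass to the disjoint sum $F_1^{(k)}+F_2^{(k)}$, notes that the syntactic-degree upper-bound annotations carry over by construction, and dismisses part~2 as "similar" — your reassociation of the double sum via A2/A3 and C2 is exactly the elaboration that remark presupposes. The only nit is your phrase that $[r,k]$ is "literally" $F_1^{(k)}+F_2^{(k)}$ — it is $F_1^{(k)}\cplus F_2^{(k)}$, so the C1 step is not optional — but you identify and resolve this yourself in your discussion of the shared-versus-disjoint bookkeeping.
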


\begin{proof}
Using the homogenization \FACZ-algorithm above  we  construct $(F_{1}\cplus F_{2})^{(k)}$. By definition, $(F_{1}\cplus F_{2})^{(k)}:=F_1^{(k)}\cplus F_2^{(k)}$, and by axiom C1,  $ F_1^{(k)}\cplus F_2^{(k)}=F_1^{(k)}+ F_2^{(k)}$. We thus construct this one-line proof, adding to the proof a witness for the application of axiom C1. \mar{make sure to talk about such witnesses above.} Note that given a circuit $A\oplus B$ we can  construct in \VZ\ a witness for the correctness of applying C1 to get $A\oplus B=A+B$. The \iddo{} witness will say that $A$ in $A\oplus B$ is identical to $A$ in $A+B$, by an explicit mapping of nodes from the former to the latter copy; and similarly for $B$.
 Furthermore, note that in this one-line \PCZ-proof every node $u$ in every circuit appears with its syntactic-degree upper bound: in $(F_{1}\cplus F_{2})^{(k)}$ this is true by construction of $\hc{(\cd)}{k}$ and in  $F_1^{(k)}+F_2^{(k)}$ we simply specify every node $u$ in  $F_1^{(k)}+F_2^{(k)}$ to have the same syntactic-degree upper bound as its origin node in $(F_{1}\cplus F_{2})^{(k)}$ (note that this is indeed a  true upper bound on the syntactic-degree of $u$).  

This concludes \ref{c:1}. Part \ref{c:2} is similar using the homogenization algorithm above.
\iddo{}
%
\end{proof}

%
%



We now conclude the proof of the theorem:


\bigskip

\begin{proof}[Proof of Theorem \ref{thm:homogenize-proofs}]


For every $k=0,\dots,d$, we devise a \SigZB-definable function in \VZ\ that produces a \PCZ-proof of $F^{(k)}=G^{(k)}$ with every node $u$ in every circuit in the proof appears with its syntactic-degree upper bound $\degub(u)$ (that is, it appears as $[u,i]$ where $i=\degub(u)$) and such that $\degub(u)\le k$. This is done in a  manner resembling  the algorithm above for homogenizing circuits. %
%
%
%
%
Specifically, for every $k =0,\dots,d$ and every line $S=T$ in $\pi$,  we construct in parallel  a part of the proof of  $\hck{S}=\hck T$ (that taken collectively would amount to a proof of   $\hck{S}=\hck T$).

Though some proof-lines $S=T$ possess syntactic-degree witnesses while some are already syntactic homogeneous, and some proof-lines do not fall into the two former categories, the proof we show works for all these three cases; this stems from the way we defined the homogenization algorithm: this algorithm  constructs the nodes $[u,i]$ for every original node $u$ in its input and every $i=0,\dots,d$ (even in the case where $i$ exceeds the specified syntactic-degree of $u$ in the syntactic-degree witnesses input; and similar for the output of the homogenization algorithm part \ref{it:homogeneous-algorithm-part3}).
\smallskip

\case 1  $S=T$ is an axiom of $\PCZ$ of size $s$. We construct a \PCZ-proof of  $S^{(k)}=T^{(k)}$ with size $s\cdot \poly(k)$ (and syntactic degree $\leq k)$. 

Lemma \ref{lem:bounded degree} gives a proof $(F_1\cplus F_2)^{(k)}=(F_1+ F_2)^{(k)} $ and $(F_1\ctimes F_2)^{(k)}=(F_1\cdot F_2)^{(k)} $, as required for the axioms C1 and C2.
\iddo{}

Axioms A1 and A10 are immediate. For the other axioms, consider for example the axiom $F_1\cdot (F_2\cdot F_3)= (F_1\cdot F_2)\cdot F_3$, where the circuits have size $\leq s$. We have to construct a proof of
\begin{equation}(F_1\cdot (F_2\cdot F_3))^{(k)}= ((F_1\cdot F_2)\cdot F_3)^{(k)}\,.\label{eq:0}\end{equation}
By part (ii) of Lemma \ref{lem:bounded degree} the equations
\begin{eqnarray}
(F_1\cdot (F_2\cdot F_3))^{(k)}&=&\sum_{i=0}^k F_1^{(i)}\left( \sum_{j=0}^{k-i} F_2^{(j)} F_3^{(k-i-j)}\right) \label{eq:F123A}\\
((F_1\cdot F_2)\cdot F_3)^{(k)}& =& \sum_{i=0}^k \left( \sum_{j=0}^{i} F_1^{(j)} F_2^{(i-j)}\right)\cdot F_3^{(k-i)} \label{eq:F123B}\,,
\end{eqnarray}
can be proved in \PCZ. 
In $\PCZ$, the right hand sides of both (\ref{eq:F123A}) and (\ref{eq:F123B}) can be written as
$\sum_{ i+j+l=k} F_1^{(i)}F_2^{(j)}F_3^{(l)}$ by a proof of size roughly $s(k+1)^4$. (This gives the proof of (\ref{eq:0}) of size $s\cdot \poly(k)$.)
\smallskip 
\iddo{}

\case {2}  If the  line $S=T$ is  $S_1\cdot S_2= T_1\cdot T_2 $, and it was derived  using the rule R4 as follows:
\begin{equation}\label{eq:rule-app-homogenization}
\frac{S_1=T_1~~~~~~ S_2=T_2}
{S_1\cdot S_2= T_1\cdot T_2}
\,.
\end{equation}
From previous lines $S_1=T_1$ and $S_2=T_2$, we construct the derivation of $S^{(k)}=T^{(k)}$ by using the  lines $\hci{S_1}=\hci{T_1}$ and $\hci{S_2}=\hci{T_2}$, for all $i=0,\dots,k$, as follows: by Lemma \ref{lem:bounded degree}, we can construct the proofs of $(S_1\cdot S_2)^{(k)}=\sum_{i=0,\dots, k} S_1^{(i)}\cd S_2^{(k-i)}$ and $(T_1\cdot T_2)^{(k)}=\sum_{i=0,\dots, k} T_1^{(i)}\cd T_2^{(k-i)}$. Hence, $(S_1\cdot S_2)^{(k)}= (T_1\cdot T_2)^{(k)}$ can be proved from the assumptions $S_1^{(i)}=T_1^{(i)}, S_2^{(i)}=T_2^{(i)}, i=0,\dots k$. (The proof has size roughly $s\cdot(k+1)^c (k+1)$.)

Note that this is done independently and simultaneously for each proof-line (and specifically, not by induction on proof-length). That is, given the rule application \eqref{eq:rule-app-homogenization}, we construct the (partial) proof of $\hc{(S_1\cdot S_2)}{k}= \hc{(T_1\cdot T_2)}{k}$ using only the lines $\hci{S_1}=\hci{T_1}$ and $\hci{S_2}=\hci{T_2}$, for all $i=0,\dots,k$; and in addition, since we also want to record the information about which line was derived from which previous lines we  add pointers to previous lines. The latter can be  defined via a \SigZB-definable \VZ\ number function given as input the line-numbers of $S_1=T_1$ and $S_2=T_2$. Hence the whole construction is in \VZ. 

%
%

\case 3 $S=T$ is the rule R1-R3. This is similar to the case for rule R4.  

\medskip 

The fact that in the \PCZ-proof we constructed every node $u$ in every circuit appears with its  syntactic-degree upper bound is clear from the construction, since we used the homogenization algorithm to produce the syntactic homogeneous circuits and Lemma \ref{lem:bounded degree}
\end{proof}

\newcommand{\homo}[2]{\ensuremath{\sum_{i=0}^{#2}\hc{(#1)}{i}}}

We need the following  claims:

\begin{claim}[in \VZ]\label{cla:homogeneous-Fii=Fi} Given a syntactic homogeneous circuit $F^{(d)}$ there exists a $\PCZ$-proof of ${\hc{F}{d}= \sum_{i=0}^d \hci{\left(F^{(d)}\right)}}$ in which every node $u$ in every circuit appears with its syntactic-degree upper bound.
\end{claim}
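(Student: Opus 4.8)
The plan is to prove Claim \ref{cla:homogeneous-Fii=Fi} by running the homogenization \FACZ-algorithm on the input circuit $F^{(d)}$, declared (as in the optional input of part \ref{it:homogeneous-algorithm-part3}) to be a syntactic homogeneous circuit, and then reading off a short \PCZ-proof that identifies $\hc{F}{d}$ with the degree-$d$ component of the resulting sum decomposition. First I would recall what the algorithm outputs on such an input: since $F^{(d)}$ is declared to be a single syntactic homogeneous circuit of degree $d$, part \ref{it:homogeneous-algorithm-part3} of the output says the algorithm returns $F^{(d)}$ itself, augmented with the trivial duplicates $[u,j]=0$ for every node $u$ of $F^{(d)}$ and every $j\in\{0,\dots,d\}\setminus\{d\}$. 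In particular $\hci{(F^{(d)})}=0$ as a circuit for $i<d$, and $\hc{(F^{(d)})}{d}$ is literally $F^{(d)}$ (i.e.\ it has the same underlying graph, with every node $u$ relabelled as $[u,d]$). So the equation to prove, $\hc{F}{d}=\sum_{i=0}^d\hci{(F^{(d)})}$, reduces to a chain $F^{(d)}=0+\dots+0+F^{(d)}$, where the right-hand side is a circuit formed by summing $d$ copies of the zero circuit with one copy of $F^{(d)}$.

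Next I would exhibit the actual \PCZ-proof. It is a direct, constant-per-summand computation: repeatedly apply axiom A7 ($H+0=H$) — after commuting with A2 where the $0$ appears on the left — to collapse $0+0+\dots+0+F^{(d)}$ to $F^{(d)}$, interleaving the C1 axiom to pass between the $\cplus$ form (the circuit as built by the algorithm, where the summands need not be disjoint) and the $+$ form where needed, together with R2 (transitivity) to stitch the $d$ steps together; finally R1 to orient the equation as $\hc{F}{d}=\sum_{i=0}^d\hci{(F^{(d)})}$. Each of the $d$ steps has bounded size, so the whole proof has size $O(d\cdot|F^{(d)}|)$, i.e.\ polynomial. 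Crucially this is constructed by a \SigZB-definable function in \VZ: the underlying circuits are produced by the homogenization \FACZ-algorithm (already shown \SigZB-definable), the zero circuits $[u,j]$ are explicit, the sum $\sum_{i=0}^d\hci{(F^{(d)})}$ is assembled by a \SigZB\ bit-definition, and the sequence of axiom instances and rule applications together with their syntactic-correctness witnesses (pointers to previous lines; node-identification maps for the C1 applications, as in Section \ref{sec:encoding-PI-proofs}) is laid out in parallel by \SigZB-\textbf{COMP}, with no appeal to \SigOneB-induction.

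Finally I would check the bookkeeping requested in the statement: every node $u$ in every circuit of this proof must appear with its syntactic-degree upper bound $[u,i]$. This is immediate by construction — $\hc{(F^{(d)})}{d}$ and all the $[u,j]=0$ duplicates come straight out of the homogenization algorithm, which labels every node with a degree upper bound; the intermediate sums $0+\dots+0+F^{(d)}$ inherit the labels of their subcircuits (and the new plus gates get upper bound $d$, which is correct since their value has degree $\le d$); and the target $\sum_{i=0}^d\hci{(F^{(d)})}$ is labelled the same way. The only mild subtlety — and the one place I would be careful — is the interplay between the $\cplus$/$+$ distinction and the witnesses for C1: the sum produced by the algorithm is a $\cplus$-combination whose summands may share nodes, so to reach the ``disjoint'' $+$ form used in the statement one inserts C1 (and its witness mapping) at each level, exactly as in the proof of Lemma \ref{lem:bounded degree}; since $F^{(d)}$ itself is division-free and already syntactic-homogeneous, no further complications arise, and the argument goes through verbatim in \VZ.
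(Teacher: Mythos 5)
Your proposal is correct and follows the same route as the paper: the paper's own proof is exactly the one-line observation that, by definition of the homogenization algorithm on an input declared to be already syntactic homogeneous, $\hci{(\hc{F}{d})}$ is the zero circuit for $i\neq d$ and is $\hc{F}{d}$ itself for $i=d$, so the claimed equation collapses to $F^{(d)}=0+\dots+0+F^{(d)}$. You simply spell out the routine \PCZ-derivation (A7/A2/C1 with R1, R2) and the \SigZB-constructibility that the paper leaves implicit, which is fine.
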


\begin{proofclaim}
This is by the definition of the homogenization algorithm: if the input to the algorithm is the (already) syntactic homogeneous circuit $\hc{F}{d}$ then $\hci{(\hc{F}{d})}=\emptyset$, for all $i\neq d$, and  $\hci{(\hc{F}{d})}=\hc{F}{d}$, for $i=d$. \end{proofclaim}

\begin{claim}[in \VZ]\label{cla:Coeff-zn-homogeneous-comp-n}
Let $F(x_1,,\dots,x_m)$ be a circuit without division, in the displayed input variables. Assume that every input variable is now multiplied by a new variable $z$ to get $F(zx_1,\dots,zx_m)$, which we denote by $F_z$. Then, there exists a \PCZ-proof of ${\Coef_{z^i}(F_z)=F^{(i)}}$ in which every node $u$ in every circuit appears with its  syntactic-degree upper bound.
\end{claim}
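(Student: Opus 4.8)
The plan is to proceed by induction on the structure of the circuit $F$, in exactly the same parallel style as the homogenization algorithm and Lemma~\ref{lem:bounded degree}: rather than an actual induction on circuit size (which would need \SigOneB-induction, unavailable in \VZ), I would build the \PCZ-proof of ${\Coef_{z^i}(F_z)=F^{(i)}}$ simultaneously for all gates, using the \SigZB-\textbf{COMP} axiom and pointers to ``previous'' proof-lines, mirroring the encoding scheme from Section~\ref{sec:encoding-PI-proofs} and Section~\ref{sec:vz-function-for-constructing-DET-circuit}. The key observation is that since $z$ does not occur inside any division gate of $F_z$ (indeed $F$ is division-free and $z$ only multiplies the input variables), $\Coef_{z^i}(F_z)$ is defined purely by Case~1 of Definition~\ref{def:taylor}, i.e. by the recursion $\Coef_{z^i}(A+B)=\Coef_{z^i}(A)+\Coef_{z^i}(B)$ and $\Coef_{z^i}(A\cdot B)=\sum_{j=0}^i\Coef_{z^j}(A)\cdot\Coef_{z^{i-j}}(B)$, with $\Coef_z(z)=1$, $\Coef_{z^i}(z)=0$ for $i\neq1$, and $\Coef_{z^0}(c)=c$, $\Coef_{z^i}(c)=0$ for $i>0$ on scalar leaves. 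This is the \emph{same} recursion, term by term, as the one defining the duplicated nodes $[v,i]=C^{(i)}$ in the homogenization algorithm (Case~0/1/2 there), once we observe that multiplying an input variable $x_j$ by $z$ makes the leaf $zx_j$ contribute to $z$-degree exactly $1$, which is precisely the syntactic-degree $1$ assigned to the variable leaf in the homogenization construction.

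Concretely, I would establish the base cases first: for a scalar leaf $c$ in $F$, the corresponding leaf in $F_z$ is still $c$, and $\Coef_{z^0}(c)=c=c^{(0)}$ while $\Coef_{z^i}(c)=0=c^{(i)}$ for $i>0$ — both sides are literally the circuit $c$ or the circuit $0$, so the \PCZ-proof is a single axiom A1 application. For a variable leaf $x_j$ in $F$, the leaf in $F_z$ is the circuit $z\cdot x_j$ (or $x_j\cdot z$), and $\Coef_{z}(z\cdot x_j)$ unwinds by the product rule to $\Coef_z(z)\cdot\Coef_{z^0}(x_j)+\Coef_{z^0}(z)\cdot\Coef_z(x_j)=1\cdot x_j+0\cdot 0$, which has a short \PCZ-proof equal to $x_j=x_j^{(1)}$; and $\Coef_{z^i}(z\cdot x_j)=0=x_j^{(i)}$ for $i\neq1$ similarly. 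Then for the inductive gates: if $v=u+w$ in $F$, the corresponding gate in $F_z$ is $u_z+w_z$, and $\Coef_{z^i}(u_z+w_z)=\Coef_{z^i}(u_z)+\Coef_{z^i}(w_z)$ is an axiom-level step, after which we apply R3 to the (already constructed) proof-lines $\Coef_{z^i}(u_z)=u^{(i)}$ and $\Coef_{z^i}(w_z)=w^{(i)}$ to get $\Coef_{z^i}(v_z)=u^{(i)}+w^{(i)}=v^{(i)}$, where the last equality is again exactly the definition of $v^{(i)}$ from the homogenization algorithm (Case~1). If $v=u\cdot w$, the corresponding gate in $F_z$ is $u_z\cdot w_z$, and $\Coef_{z^i}(u_z\cdot w_z)=\sum_{j=0}^i\Coef_{z^j}(u_z)\cdot\Coef_{z^{i-j}}(w_z)$; applying R3 and R4 to the proof-lines $\Coef_{z^j}(u_z)=u^{(j)}$ and $\Coef_{z^{i-j}}(w_z)=w^{(i-j)}$ (for all $j=0,\dots,i$) yields $\Coef_{z^i}(v_z)=\sum_{j=0}^i u^{(j)}\cdot w^{(i-j)}=v^{(i)}$, matching Case~2 of the homogenization algorithm. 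Throughout, every node is tagged with its syntactic-degree upper bound exactly as produced by the homogenization algorithm, which by the construction of $\Coef$ (every subcircuit of $\Coef_{z^i}$ has $z$-degree, hence syntactic-degree after the $z$-substitution is removed, bounded by $i$) is a genuine upper bound.

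The main obstacle I anticipate is \emph{not} the mathematical content — which is essentially a syntactic matching between two recursions — but the bookkeeping: one must argue that the whole family of proof-lines, indexed by $i=0,\dots,d$ and by gates of $F$, can be laid out by a single \SigZB-formula using \SigZB-\textbf{COMP}, with the pointer structure (which line is derived from which) correctly encoded, and with the C1/C2 and R3/R4 witnesses attached as in Section~\ref{sec:encoding-PI-proofs}. This is routine given the encoding machinery already developed (it is the same pattern as Proposition~\ref{prop:write_PC-1-proof} and Theorem~\ref{thm:homogenize-proofs}), so I would state that the construction is \SigZB-definable in \VZ\ "similar to the encoding shown in Section~\ref{sec:vz-function-for-constructing-DET-circuit}" and omit the low-level details, as the paper does elsewhere. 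A secondary point worth a sentence: one should note that the circuit $F^{(i)}$ referred to in the statement is precisely the output of the homogenization algorithm applied to $F$ with upper bound $d$, so that the syntactic-degree-upper-bound tags on the two sides of $\Coef_{z^i}(F_z)=F^{(i)}$ are consistent by construction.
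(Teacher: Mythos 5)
Your proposal is correct and takes essentially the same route as the paper: the paper's proof likewise rests on the observation that the recursion computing $\Coef_{z^i}(F_z)$ (the appendix algorithm for \Coef) and the homogenization recursion producing $F^{(i)}$ are syntactically identical gate-for-gate except at the leaves, where $\Coef_{z^1}(z\cd x_j)=1\cd x_j$ versus $x_j^{(1)}=x_j$ is handled by basic rearrangements (and zero terms are discarded). The paper is simply terser, leaving the plus/product gate matching and the \SigZB-\textbf{COMP} bookkeeping implicit, whereas you spell them out.
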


\begin{proofclaim}
This is by construction of $\Coef_{z^i}(\cd)$ and $\hci{(\cd)}$. Consider the construction of $\Coef_{z^i}(\cd)$ in \VZ\ as shown in the appendix Section \ref{sec:alg-for-coef}. Then, having each variable $x_j$ multiplied directly by $z$ in the circuit $F'$ means that $\Coef_{z^i}(F')=F^{(i)}$ are syntactically identical except for the bottom level of the circuits, namely $\Coef_{z^1}(z\cd x_j)=1\cd x_j$ (ignoring zero terms, and applying basic rearrangements), and $\Coef_{z^r}(z\cd x_j)=0$ for all $1\neq r\le k$ and  all $j\in[n]$. And accordingly, $\hc{x_j}{1}=x_j$ and $\hc{x_j}{r}=0$, for all $1\neq r\le k$ and  all $j\in[n]$.  
\end{proofclaim}


\begin{corollary}[in \VZ]\label{cor:exist-PC-proof-of-det-tay-sharp-identities}
Given a positive natural number $n$ there exists a $\PCZ$-proof of the determinant identities \eqref{eq:1} and \eqref{eq:2}, where the determinant in \eqref{eq:1} and \eqref{eq:2} is written as the division free circuit $\taydetsharp(A)$, for $A$, the \nbyn\ symbolic matrix $X$ or $Y$, or their product $XY$, or a symbolic triangular matrix $Z$. Moreover, in this proof every circuit is a sum of syntactic homogeneous circuits in which every node appears with its syntactic-degree upper bound $\degub$.
\iddo{} \end{corollary}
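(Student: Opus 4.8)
The plan is to derive Corollary~\ref{cor:exist-PC-proof-of-det-tay-sharp-identities} by chaining together the results established so far in the paper, with the only remaining work being a homogenization pass on the division-free proof obtained in Corollary~\ref{cor:exists-corr-up-to-deg-n-PC-proof-of-dettaysharp}. Recall that that corollary already gives us, for every positive natural number $n$, a \emph{correct up to degree $2n$} $\PCZ$-proof sequence $\pi_4$ of the determinant identities \eqref{eq:1} and \eqref{eq:2}, where the determinant is written as the division-free circuit $\taydetsharp(A)$ of syntactic-degree $n$ (for $A$ one of $X$, $Y$, $XY$, or the triangular matrix $Z$). The problem with $\pi_4$ is twofold: it may contain circuits of high (in fact up to exponential) syntactic-degree arising from the intermediate $\inv_{2n}$ circuits, and it contains lines of the form $F\cd\inv_{2n}(F)=1$ which are not genuine $\PCZ$ axioms. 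Both defects are repaired at once by homogenization.

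First I would apply Theorem~\ref{thm:homogenize-proofs} to $\pi_4$ with the degree bound $d:=2n$. Since $\pi_4$ is correct up to degree $2n$, for every $k=0,\dots,2n$ the theorem produces (via a $\SigZB$-definable function in \VZ) a genuine $\PCZ$-proof of $S^{(k)}=T^{(k)}$ of syntactic-degree at most $k$, in which every circuit is a sum of syntactic-homogeneous circuits with every node carrying its syntactic-degree upper bound $\degub$. In particular, applying this to the final line of $\pi_4$, which is $\taydetsharp(X)\cd\taydetsharp(Y)=\taydetsharp(XY)$ (respectively $\taydetsharp(Z)=z_{11}\cdots z_{nn}$), I obtain for each $k$ a $\PCZ$-proof of $\bigl(\taydetsharp(X)\cd\taydetsharp(Y)\bigr)^{(k)}=\bigl(\taydetsharp(XY)\bigr)^{(k)}$, and similarly for \eqref{eq:2}. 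The point where Lemma~\ref{lem:inv} enters is exactly the handling of the pseudo-axiom lines $F\cd\inv_{2n}(F)=1$: by Corollary~\ref{cor:what-we-need-for-div-elim} we have for each such $F$ a $\PCZ$-proof of $\hc{(\Den(F\shift))}{0}=1$, hence by Lemma~\ref{lem:inv} (with $k=2n$) we get $\PCZ$-proofs of $(F\cd\inv_{2n}(F))^{(0)}=1$ and $(F\cd\inv_{2n}(F))^{(i)}=0$ for $1\le i\le 2n$, which is precisely what the homogenized version of this line requires; this is why the statement of Theorem~\ref{thm:homogenize-proofs} is stated for \emph{correct up to degree $d$} proofs rather than honest $\PCZ$-proofs.

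Next I would reassemble the homogeneous pieces. Using Claim~\ref{cla:homogeneous-Fii=Fi} together with Lemma~\ref{lem:bounded degree}, and the fact that $\taydetsharp$ has syntactic-degree $n$ so that $\bigl(\taydetsharp(A)\bigr)^{(k)}=0$ provably for $k>n$, I can combine the proofs of the homogeneous components $\bigl(\taydetsharp(X)\cd\taydetsharp(Y)\bigr)^{(k)}=\bigl(\taydetsharp(XY)\bigr)^{(k)}$, $k=0,\dots,2n$, into a single $\PCZ$-proof of the non-homogeneous identity $\taydetsharp(X)\cd\taydetsharp(Y)=\taydetsharp(XY)$: concretely, from each side's syntactic-homogeneous decomposition $\sum_{k=0}^{2n}[\cdot]^{(k)}$ (provable in $\PCZ$ by the homogenization algorithm applied to the circuit, since $\taydetsharp(X)\cd\taydetsharp(Y)$ and $\taydetsharp(XY)$ are themselves circuits to which the \FACZ\ homogenization algorithm applies), I substitute in the componentwise equalities just obtained. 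The analogous argument handles \eqref{eq:2}, where the right-hand side $z_{11}\cdots z_{nn}$ is already syntactic-homogeneous of degree $n$. Throughout, every circuit appearing is a sum of syntactic-homogeneous circuits whose nodes carry $\degub$, as guaranteed by Theorem~\ref{thm:homogenize-proofs} and Lemma~\ref{lem:bounded degree}. Finally, since every step above is a composition of $\SigZB$-definable functions in \VZ\ (the homogenization function of Theorem~\ref{thm:homogenize-proofs}, the proof-construction of Corollary~\ref{cor:exists-corr-up-to-deg-n-PC-proof-of-dettaysharp}, and the $\SigZB$-\textbf{COMP}-based assembly), the whole construction is $\SigZB$-definable in \VZ, which is the content of the corollary.

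The main obstacle I anticipate is purely bookkeeping rather than conceptual: one must be careful that the homogenization of the pseudo-axiom lines $F\cd\inv_{2n}(F)=1$ is stitched in consistently with the rest, i.e.\ that the $\PCZ$-proofs supplied by Lemma~\ref{lem:inv} use the \emph{same} node-numbering convention and the same $\degub$ labels as the circuits produced by the homogenization algorithm on the neighboring proof-lines, so that the pointers to previous lines remain syntactically valid. This is the same kind of uniform encoding care exercised in Proposition~\ref{prop:write_PC-1-proof} and in the homogenization algorithm itself, and it is handled the same way — by constructing all lines and their syntactic-correctness witnesses simultaneously via $\SigZB$-\textbf{COMP} rather than by any form of $\SigOneB$-induction. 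A secondary point requiring attention is the bound $2n$ versus $n$: the identities have degree $2n$ (a product of two degree-$n$ circuits), so we must carry components up to $k=2n$ even though $\taydetsharp$ itself has degree $n$; the homogenization theorem's degree bound $d=2n$ already accommodates this, and the components with $n<k\le 2n$ are provably zero, which is where the syntactic-degree upper bounds recorded in the circuits do the work of certifying the vanishing.
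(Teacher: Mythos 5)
Your proposal is correct and follows essentially the same route as the paper's proof: start from the correct-up-to-degree-$2n$ proof of Corollary \ref{cor:exists-corr-up-to-deg-n-PC-proof-of-dettaysharp}, repair the pseudo-axiom lines $F\cd\inv_{2n}(F)=1$ via Corollary \ref{cor:what-we-need-for-div-elim} and Lemma \ref{lem:inv}, homogenize the whole proof with Theorem \ref{thm:homogenize-proofs}, and reassemble the components using Claim \ref{cla:homogeneous-Fii=Fi}, Lemma \ref{lem:bounded degree}, and the vanishing of the components of degree above $n$. The only cosmetic difference is that the paper explicitly invokes Claim \ref{cla:Coeff-zn-homogeneous-comp-n} to justify treating $\taydetsharp$ as already syntactic-homogeneous before applying Claim \ref{cla:homogeneous-Fii=Fi}, a step you subsume into the observation that $\taydetsharp$ has syntactic-degree $n$.
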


\begin{proof} By Corollary \ref{cor:exists-corr-up-to-deg-n-PC-proof-of-dettaysharp} there exists a \emph{correct up to degree $2n$} \iddo{} $\PCZ$ proof-sequence $\pi$ of the determinant identities \eqref{eq:1} and \eqref{eq:2} where the determinant in \eqref{eq:1} and \eqref{eq:2} is written as $\taydetsharp$. 
By Lemma \ref{lem:inv} and Corollary \ref{cor:what-we-need-for-div-elim} we have a $\PCZ$-proof of $\hci{(F\cd\inv_{2n}(F))} = 0$, for every $1\le i\le 2n$, and a $\PCZ$-proof of $\hc{(F\cd\inv_{2n}(F))}{0} = 1$, for every proof-line $F\cd\inv_{2n}(F)=1$ appearing in $\pi$, and these proofs also contain the syntactic-degree upper bound for every node in every circuit. Therefore, by Theorem \ref{thm:homogenize-proofs} part (i) we have a \PCZ-proof of 

\begin{equation}\label{eq:homogeneous-end-single-h-components}
\left(\taydetsharp(X)\cd \taydetsharp(Y)\right)^{(k)}=(\taydetsharp(XY))^{(k)},
\end{equation} 
for every $k=0,\dots,2n$, and similarly for \eqref{eq:2}, wherein every node appearing in the proof has a specified syntactic-degree upper bound.\iddo{}

Finally, we can conclude the corollary by reasoning as follows.


Recall the definition of $\taydetsharp(X)$ from  \eqref{eq:huge-term-for-detsharp-spelled-out}. Then $\taydetsharp(X)$ is a division free circuit such that every $X$ variable in it is a product of $z$. Thus, by Claim \ref{cla:Coeff-zn-homogeneous-comp-n} (and its proof) we can assume that $\taydetsharp(X)$ is specified already as a syntactic homogeneous circuit when given as input to the homogenization algorithm\iddo{}, meaning that the algorithm will output the same syntactic homogeneous circuit it got as an input (or a single homogeneous component as in part \ref{it:homogeneous-algorithm-part3} of the output). Thus, by Claim \ref{cla:homogeneous-Fii=Fi} there exist \PCZ-proofs of:
\begin{gather}\label{eq:homogeneous-penultimate1}
(\taydetsharp(X))^{(i)}=0,\text{ for all $i<n$,}\\
(\taydetsharp(X))^{(n)}=\taydetsharp(X)\,.\label{eq:homogeneous-penultimate2}
\end{gather}
The same argument works for $\taydetsharp(Y)$.
We proceed as follows.

\iddo{}

By Lemma \ref{lem:bounded degree} we have a \PCZ-proof of 
\begin{equation}\label{eq:homogeneous-end1}
\sum\nolimits_{i=0}^{2n} \hci {\left(\taydetsharp(X)\cd
 \taydetsharp(Y)\right)} =   \sum\nolimits_{i=0}^{2n} \sum_{l+j=i \atop 0\le l\le 2n ,0\le j\le 2n} \taydetsharp(X)^{(l)}\cd \taydetsharp(Y)^{(j)}\,. \end{equation}
This equals
\begin{multline}\label{eq:homogeneous-end2}
\sum\nolimits_{i=0}^{2n} \sum_{l+j=i \atop 0\le l\le n ,0\le j\le n} \taydetsharp(X)^{(l)}\cd \taydetsharp(Y)^{(j)} + \\
\sum\nolimits_{i=0}^{2n} \sum_{l+j=i \atop n< l\le 2n ,n< j\le 2n} \taydetsharp(X)^{(l)}\cd \taydetsharp(Y)^{(j)} \,,
\end{multline}
where the rightmost big term is a sum of only zeros, by construction of the homogeneous circuits (since $\taydetsharp(X)$ and $\taydetsharp(Y)$ are specified as already syntactic homogeneous circuits when input to the homogenization algorithm $\taydetsharp(X)^{(l)}=\taydetsharp(Y)^{(l)}=0$, for all $l>n$). We are thus left with the leftmost big sum in \eqref{eq:homogeneous-end2}. We proceed with
\begin{align*}
&\sum\nolimits_{i=0}^{2n} \sum_{l+j=i \atop 0\le l\le n ,0\le j\le n} \taydetsharp(X)^{(l)}\cd \taydetsharp(Y)^{(j)}  \\
& ~~~~~~~~~~~~~~~~~~~~~~~~~~~~~~~~~~~~~~~~~~~~~~ = \sum\nolimits_{i=0}^{n} 
    \taydetsharp(X)^{(i)}\cd\sum\nolimits_{i=0}^{n} \hci{\taydetsharp(Y)}  \\
&~~~~~~~~~~~~~~~~~~~~~~~~~~~~~~~~~~~~~~~~~~~~~~ = \taydetsharp(X)\cd \taydetsharp(Y) ,
\end{align*}
where the first equation is by rearrangement and the second equation is by \eqref{eq:homogeneous-penultimate1} and \eqref{eq:homogeneous-penultimate2}. By summing  \eqref{eq:homogeneous-end-single-h-components} for all $i=0,\dots,2n$ and using similar reasoning for $\taydetsharp(XY)$\iddo{}, we conclude that there exists a \PCZ-proof of  
$$
\taydetsharp(X)\cd \taydetsharp(Y)=\taydetsharp(XY).
$$

\medskip The fact that in the above \PCZ-proof every proof-line is a sum of syntactic homogeneous circuits in which every node appears with its syntactic-degree upper bound $\degub$ stems from the construction.
\iddo{}
\end{proof}

\iddo{}



\section{Balancing Algebraic Circuits and Proofs in the Theory}\label{sec:balancing-algebraic-circ-in-theory}



We start by providing some background and overview of the \NCTwo-algorithm for constructing a balanced circuit, given as an input  an upper bound (in unary) on the syntactic-degree of the input circuit. 



The input to the algorithm is $C,d$ where $C$ is a syntactic-homogeneous circuit and $d$ is an upper bound on the syntactic-degree \iddo{}of $C$ in unary. 
 

The output of the algorithm is a balanced circuit denoted $\tr  C$ computing the polynomial $\widehat C$. That is, if $C$ has size $s$, then the depth of $\tr C$ is $O(\log s\log d+\log^2 d)$ and the size of $\tr C$ is \poly$(s,d)$.

The algorithm itself follows the general scheme of Valiant \emph{et al.} \cite{VSB+83} that proceeds by induction on the logarithm of the degree of the polynomial computed by the circuit, however there are  differences  that help us fit the algorithm in \FNCTwo\ (for a very clear exposition of the \cite{VSB+83} algorithm we refer the reader to  \cite{RY08-balancing} (cf.~\cite{HT12}), though our treatment is self contained). 

Specifically, we use a preprocessing step to record in advance for every pair of nodes $w$ and $v$ if $w$ is in the scope of the circuit rooted by $v$. Furthermore, in the first stage of the algorithm (corresponding to the base case of the Valiant et al.~algorithm) we need to compute the coefficients of certain linear forms computed by possibly non-balanced circuits. We show that both the preprocessing step and the  first stage  of the algorithm are \ACZ-reducible to matrix powering; where matrix powering is known to be computable in \VNCTwo\ (cf.~\cite{CF10}). 
Another difference is that while  Valiant et al.~\cite{VSB+83} use the notion of degree of a node, and Hrube\v s and Tzameret \cite{HT12} use the syntactic-degree of a node, we are going to use the relaxed notion of syntactic-degree upper bound  $\degub(v)$ of a node $v$ introduced in Section \ref{sec:homogenization-of-proofs} and its variant $\degubp(v)$ (defined below).


\medskip



\TODO{I think we need to replace the scalars by variables in the start of the process, namely, before homogenization. But then we need to make sure that the division elimination goes through. The reason is that in the balancing algorithm we need to know the syntactic-degree of nodes \textbf{when scalars are replaced by variables}; But to know the syntactic-degree of nodes (we can't compute it in the theory since computing syntactic-degrees directly is beyond \NCTwo) we need the homogenization process.
\\ 
But actually we can also replace them after the division elimination: we can have a division free circuit and then apply the homogenization algorithm to get the syntactic-degrees of nodes.   }

\TODO{Also, I think it is easier if we replace also 0,1 scalars in the circuit with new variables. Otherwise you need to evaluate 0,1 circuits in \NCTwo\ and I don't know how to do it. I think this won't be a problem in the VSBR algorithm we use, since we use syntactic-degrees and not degrees, and so we don't need to forbid the circuits from containing nodes computing the zero polynomial (in fact, we must allow such polynomials). (On the other hand, we must be careful in the division elimination with nodes computing 0.)}


\begin{notation} Recall that we now only work with division free circuits. We use the following notation throughout this section:  $F,C$ are circuits and $\widehat F, \widehat C$ are the corresponding polynomials they compute.\mar{already defined notation?} For convenience we denote by $f$ the polynomial $\widehat F$. For a node $v$ in $F$ we write $F_v$ to denote the subcircuit rooted at $v$ and $f_v$ denotes the polynomial $\widehat {F_v}$. We write $ u\in F $ to mean that $ u $ is a node in the circuit $F $.
\end{notation}

We will need to construct with an \FNCTwo\ algorithm some linear polynomials computed by $F_v$, whenever $v\in F$ and $\degub(v)\le 1$ as well as the linear polynomials $\partial wf_v$ whenever $0\le \degub(v)-\degub(w)\le 1$ (see below). However, we cannot directly compute the integer  coefficients in these linear polynomials because their (sub-)circuits are not balanced (and we apparently cannot evaluate circuits of high-depth in \VNCTwo). 

We show how to compute the linear polynomials we need in in Lemma \ref{lem:variable-free-nc2-eval}  and Lemma \ref{lem:construct-partial-w-Fv}. To facilitate these lemmas we need to treat scalar nodes  $c\in \Z$ occurring in the circuit as if they are variables (and hence even circuits with only scalars get balanced  throughout the balancing algorithm). Formally, this means defining their syntactic-degree as 1 instead of 0, as follows (so that both variables and scalars are now treated as syntactic-degree 1 circuits).


Denote by $\degubp(\cd)$ the syntactic-degree upper bound defined similar to $\degub(\cd)$, except that scalar nodes are associated with syntactic-degree upper bound 1 (instead of 0) in the algorithm for homogenizing circuits shown in Section \ref{sec:homogenization-of-proofs}. 
Note that any circuit $F_v$ rooted by the node $v$ such that $\degubp(v)\le 1$ cannot contain product nodes (as this would make $\degubp(v)>1$ by definition). In Lemma \ref{lem:variable-free-nc2-eval} we show how to compute in \FNCTwo\ a circuit with no product nodes. 

Also note that it may happen that a node $v$ in a circuit has polynomially bounded  $\degub(u)$ but exponential large $\degubp(u)$, for example in case we have a linear chain of $n$ products computing $(((2)^2)^2\dots^2)=2^{2^n}$.
We deal with this problem 
in Lemma \ref{lem:scalars-substitution-keeps-poly-deg}.

\newcommand{\prt}[2]{f(#1;#2)}

\begin{definition}[Partial derivative polynomial  {\ensuremath{\partial w f_v}}]
\label{def:partialfvw}
Let $ w,v $ be two nodes in $ F $. We define the \emph{partial derivative of $F$ with respect to $w$}, denoted ${\ensuremath{\partial w f_v}}$, as the following \emph{polynomial}:
\begin{equation}\label{eq:partialfvw}
\partial w f_v :=
                                  \left\{
                                                                \begin{array}{ll}
                                                                  0, & \hbox{if $ w\not\in F_v$,} \\
                                                                  1, & \hbox{if $ w=v $,  and otherwise:} \\
                                                                 \partial w f_{v_1} + \partial w f_{v_2}  , & \hbox{$ v=v_1+v_2$;} \\
                                                                  (\partial w f_{v_1}) \cd f_{v_2} , & \hbox{if either $ v=v_1\cd v_2 $} 
~ \hbox{and }  \degubp(v_1)\geq\degubp(v_2) \\                                                                                &\hbox{~~~~~~~~~or $ v=v_2\cd v_1 $ and  $ \degubp(v_1)> \degubp(v_2) $.}                                                                \end{array}
                                  \right.
\end{equation}
\end{definition}

The idea behind this definition is the following: let $ w,v $ be two nodes in $ F $ and assume   that  $\degubp(w) > \frac{\degubp(v)}{2} $ (we will use $\partial w f_v$ only under this assumption\iddo{}).
  Then for any product node $v_{1}\cdot v_{2}\in F_{v}$, $w$ can be a node in at most one of $F_{v_{1}}, F_{v_{2}}$, namely the one with the higher syntactic-degree. If we replace the node $w$ in $F_{v}$ by a new variable $z$ that does not occur in $F$, then $F_{v}$ computes a polynomial $g(z,x_{1},\dots,x_{n})$ which is \textit{linear }in $z$, that is $g(z,x_{1},\dots,x_{n})=h_0\cd z+h_1$ for some polynomials $h_0,h_1$ in the $x_1,...,x_n$ variables, and $\partial w f_{v}=h_0$. Namely, $\partial w f_{v}$ is the standard partial derivative $\partial z g$.\iddo{}




\begin{proposition}\label{prop:deg_fv-deg_w}
Let $ w,v $ be two nodes in a syntactically homogeneous \iddo{}circuit $ F$ such that $\degubp(w)>\frac{\degubp(v)}{2}$. Then the polynomial $ \partial w f_{v}$ has degree\iddo{} at most $\degubp(v)-\degubp(w)$.
\end{proposition}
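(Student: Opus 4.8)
The plan is to prove this by induction on the structure of the circuit $F$, following the recursive definition of $\partial w f_v$ in Definition~\ref{def:partialfvw}, with the induction going over the nodes $v$ of $F$ (processed in topological order from the leaves up). The key observation to keep in mind throughout is the semantic meaning stated after the definition: under the hypothesis $\degubp(w) > \frac{\degubp(v)}{2}$, the node $w$ can lie in at most one of the two subcircuits feeding any product gate inside $F_v$, so $F_v$ (with $w$ replaced by a fresh variable $z$) computes a polynomial linear in $z$, and $\partial w f_v$ is its $z$-coefficient. The degree bound $\degubp(v) - \degubp(w)$ is exactly what one expects: differentiating with respect to $z$ "uses up" degree $\degubp(w)$ out of the total $\degubp(v)$.

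\textbf{Base cases.} If $w \notin F_v$, then $\partial w f_v = 0$, whose degree is (by convention) $-\infty$ or we simply regard the zero polynomial as satisfying any degree bound, so the inequality holds vacuously. If $w = v$, then $\partial w f_v = 1$, which has degree $0$; and here $\degubp(v) - \degubp(w) = 0$, so the bound holds with equality. (One should note that $v$ being a leaf with $w=v$ falls here; and if $v$ is a leaf with $w \neq v$ then $w \notin F_v$, the first base case.)

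\textbf{Inductive step.} Suppose $v = v_1 + v_2$. Since $F$ is syntactically homogeneous, every plus gate $u_1 + u_2$ satisfies $\deg(u_1) = \deg(u_2)$; I would first check that the same holds for $\degubp$ given the way the homogenization algorithm assigns upper bounds, so $\degubp(v) = \degubp(v_1) = \degubp(v_2)$. Hence the hypothesis $\degubp(w) > \frac{\degubp(v)}{2}$ transfers to both $v_1$ and $v_2$, the induction hypothesis gives $\deg(\partial w f_{v_i}) \le \degubp(v_i) - \degubp(w) = \degubp(v) - \degubp(w)$, and since $\partial w f_v = \partial w f_{v_1} + \partial w f_{v_2}$, its degree is at most the max of the two, giving the bound. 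Now suppose $v = v_1 \cdot v_2$ with, say, $\degubp(v_1) \ge \degubp(v_2)$ (the case $v = v_2 \cdot v_1$ with strict inequality is symmetric), so $\degubp(v) = \degubp(v_1) + \degubp(v_2)$ and $\partial w f_v = (\partial w f_{v_1}) \cdot f_{v_2}$. The crucial point is that $\degubp(w) > \frac{\degubp(v)}{2} \ge \degubp(v_2)$, and since $\degubp(v_1) \ge \degubp(v_2)$ we get $\degubp(v_1) \ge \degubp(v)/2$, hence $\degubp(w) > \frac{\degubp(v)}{2} = \frac{\degubp(v_1) + \degubp(v_2)}{2} \ge \frac{\degubp(v_1)}{2}$ after using $\degubp(v_2) \le \degubp(v_1)$ — actually more directly: $2\degubp(w) > \degubp(v) = \degubp(v_1)+\degubp(v_2)$ and $\degubp(v_2) \le \degubp(v_1)$ give $2\degubp(w) > 2\degubp(v_2)$ is too weak, so instead use $2\degubp(w) > \degubp(v_1) + \degubp(v_2) \ge \degubp(v_1)$ only when $\degubp(v_1) \le \degubp(v_2)$, which fails here; the right inequality is $2\degubp(w) > \degubp(v_1)+\degubp(v_2)$, and combined with $\degubp(v_2)\le \degubp(v_1)$ this does NOT immediately give $\degubp(w) > \degubp(v_1)/2$. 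The correct route: we need $w$ to apply to $v_1$, and indeed if $w \in F_v$ then $w \in F_{v_1} \cup F_{v_2}$; since $\degubp(w) > \degubp(v)/2 \ge \degubp(v_2)$, $w$ cannot be in $F_{v_2}$ (a node's $\degubp$ in a subcircuit is at most that subcircuit's root's $\degubp$), so $w \in F_{v_1}$, and moreover $\degubp(w) \le \degubp(v_1)$, so $\degubp(w) > \degubp(v)/2$ together with $\degubp(v) = \degubp(v_1)+\degubp(v_2) \ge \degubp(v_1)$... — the genuinely needed fact is just $\degubp(w) > \degubp(v_1)/2$, which I would derive from $2\degubp(w) > \degubp(v_1) + \degubp(v_2) \ge \degubp(v_1)$, wait that gives $2\degubp(w) > \degubp(v_1)$, i.e. exactly $\degubp(w) > \degubp(v_1)/2$. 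Good — so the induction hypothesis applies to $w, v_1$: $\deg(\partial w f_{v_1}) \le \degubp(v_1) - \degubp(w)$. Then $\deg(\partial w f_v) = \deg(\partial w f_{v_1}) + \deg(f_{v_2}) \le (\degubp(v_1) - \degubp(w)) + \degubp(v_2) = \degubp(v) - \degubp(w)$, using that $\deg(f_{v_2}) \le \degubp(v_2)$ (a standard property: the true degree of the polynomial computed at a node is at most its syntactic-degree upper bound, provable by a trivial sub-induction or cited from the homogenization section).

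\textbf{Main obstacle.} The delicate points are purely bookkeeping about $\degubp$: (a) confirming $\degubp$ is additive on product gates and equal on the two inputs of a plus gate (inherited from how the homogenization algorithm labels nodes and from syntactic-homogeneity), and (b) the arithmetic showing that the product-gate hypothesis descends correctly to the high-degree child — specifically that $\degubp(w) > \degubp(v)/2$ forces $w$ into $F_{v_1}$ and simultaneously yields $\degubp(w) > \degubp(v_1)/2$, so that the induction hypothesis is actually applicable. I expect this is where a careful writeup must be precise, since it is easy to state the inequality chain sloppily; but there is no deep difficulty, just the need to track which of the two symmetric product cases one is in and to invoke $\degubp(v_2) \le \degubp(v_1)$ at the right moment. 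Since $F$ is syntactically homogeneous and finite, the induction is well-founded and terminates; all reasoning here is elementary and would be carried out in the theory without difficulty, though for the statement of this proposition it suffices to argue it informally over the integers.
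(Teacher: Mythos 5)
Your proof is correct and follows essentially the same structural induction as the paper's (base case for $w\notin F_v$ and $w=v$, homogeneity giving $\degubp(v)=\degubp(v_1)=\degubp(v_2)$ at plus gates, and additivity at product gates). The only difference is that you explicitly verify that the hypothesis $\degubp(w)>\degubp(v_1)/2$ descends to the recursive call at a product gate — a detail the paper's proof leaves implicit — and your final inequality chain $2\degubp(w)>\degubp(v_1)+\degubp(v_2)\ge\degubp(v_1)$ settles it correctly.
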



\begin{proof}
%
By induction on the size of $F_{v}$.
\Base \(F_v\) is a single node. If $F_{v}=w$ then {\ensuremath{\partial w f_v =1}}, and so $\degubp(\partial w f_{v})=0=\degubp(v)-\degubp(w)$ and the claim holds. If $F_{v}\neq w$ then $\partial w f_{v}=0$ and the claim holds similarly.
\induction

\case 1 $v=v_1+v_2$. Then $\partial w f_{v} = \partial w f_{v_1}+\partial w f_{v_2}$, and by induction hypothesis the degrees of $\partial w f_{v_1}$ and $\partial w f_{v_2}$ are at most $\degubp(v_1)-\degubp(w)$ and $\degubp(v_2)-\degubp(w)$, respectively. Since $F$ is syntactically homogeneous $\degubp(v)=\degubp(v_1)=\degubp(v_2)$ and so the degree of $\partial w f_{v} $ is at most $\degubp(v)-\degubp(w)$.

\case 2 $v=v_1\cd v_2$. Assume that $\degubp(v_1)\ge \degubp(v_2)$. Then ${\ensuremath{\partial w f_v}} =  ({\ensuremath{\partial w f_{v_1}}})\cd f_{v_2}$ and by induction hypothesis the degree of {\ensuremath{\partial w f_{v_1}}} is at most \(\degubp(v_1)-\degubp(w)\). Thus, the degree of {\ensuremath{\partial w f_v}} is at most \(\degubp(v_1)+\degubp(v_2)-\degubp(w)=\degubp(v)-\degubp(w)\). The case where \(\degubp(v_1)<\degubp(v_2)\) is similar.
\end{proof}

%

\begin{mycomment} We have defined {\ensuremath{\partial w f_v}} as a \emph{polynomial}. Below we shall construct (polynomial-size and balanced) \emph{circuits} \tr{{\ensuremath{\partial w f_v}}} that compute the polynomial {\ensuremath{\partial w f_v}}. We will make sure that the construction of \tr{{\ensuremath{\partial w f_v}}} is correct in the sense that it computes {\ensuremath{\partial w f_v}} and also that it has a syntactic-degree at most $\degubp(v)-\degubp(w)$. The correctness of the construction follows from  \cite{VSB+83} (see also \cite{RY08-balancing, HT12}) where in our construction  the notion of a syntactic-degree is used, instead of the notion of degree.
\end{mycomment}

\ind\textbf{Overview of the balancing algorithm:}
Let $ F $ be a syntactic-homogeneous  arithmetic circuit of syntactic-degree $ d $.
For every node $ v\in F $ we introduce the corresponding \emph{node} $ [F_v] $ in $ [F] $ (intended to compute the polynomial $\widehat {f}_v $); and for every pair of nodes $ v,w \in F $ such that $ \degubp(w) >\frac{\degubp(v)}{2}$, we introduce the \emph{node} $ [\partial w f_{v} ]$ in $ [F] $ (intended to  compute the polynomial $ \partial w f_{v} $).
 Note that given a syntactic-homogeneous circuit $F$, we can assume that every node comes with a number that denotes its syntactic-degree---this stems from the \FNCTwo\ \iddo{} algorithm for homogenization in Section \ref{sec:homogenization-of-proofs}; but note that  according to this algorithm circuits that compute zero may be assigned \emph{higher} syntactic-degrees than they actually possess. Since we are given an upper bound on the  the syntactic-degree of the circuit in advance this  will not interfere with  the algorithm.\iddo{} 

The algorithm starts with a preprocessing step that determines some properties of the circuit graph.
Then it  proceeds in steps $ i=0,\dots,\lceil \log d \rceil $. In each step \(i\) we construct:
\begin{enumerate}
\item  Circuits computing $ f_v $, for all nodes $ v $ in $ F $ with $ 2^{i-1}<\degubp(v)\le 2^i $;
\item  Circuits computing $ \partial w f_{v}  $, for all pairs of nodes $ w,v $ in $ F $ with $ 2^{i-1}<\degubp(v)-\degubp(w)\le 2^i $ and $ \degubp(w)>\frac{\degubp(v)}{2}$.
\end{enumerate}
 Each step adds depth  $ O(\log s) $ to the new circuit, which at the end amounts to a depth $ O(\log^{2}d+\log d \cd \log s ) $ circuit. Furthermore, each node $ v $ in $ F $ adds $ O(s) $ nodes in the new circuit and each pair of nodes $ v,w $ in $ F $ adds  $ O(s) $ nodes in the new circuit. This   amounts finally to a circuit of  \TODO{check!} size $ O(s^3)$.

The preprocessing step and step $i=0$ are done in \FNCTwo\ as they both use matrix powering (in fact the class \DET, which is the \ACZ-closure of matrix powering, suffices here). Each other stage constructs a group of nodes (namely, a part of the circuit having depth $O(\log s)$). 
Steps $i=1$ to $i=\lceil \log d\rceil$ are done in \FACZ\ by constructing the nodes and wiring simultaneously. Thus overall the balancing algorithm is in \FNCTwo. 


\subsection{Preliminaries for the Balancing Algorithm}

\iddo{}

\iddo{}

\begin{lemma}[in \VNCTwo]\label{lem:variable-free-nc2-eval}
Given a (division free) algebraic circuit $F$ of size $s$ with no product gates, there exists a depth \(O(\log n)\) circuit
computing $\widehat F$ of size $\poly(s)$, for $n$ the number of variables.\footnote{Notice that if the input algebraic circuit $F$ was a formula instead of a circuit, it would be trivial to output the balanced formula computing $\widehat F$: simply build a balanced binary tree whose leaves are all the variables occurring in $F$ (variables that occur more than once should also occur more than once in the resulting formula). Also, notice that although there are no scalars in $C$, a monomial can occur with a coefficient in $\widehat C$ different from 1.}
\end{lemma}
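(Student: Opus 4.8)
The plan is to reduce the evaluation of a product-free circuit to a matrix-powering (or reachability-style) computation, which is available in \VNCTwo. First I would observe that since $F$ has no product gates, every node computes a polynomial that is a $\Z$-linear combination of the input variables, i.e. a polynomial of the form $c_0 + \sum_{j=1}^n c_j x_j$ with integer coefficients (here $x_0$ denoting a dummy slot for the constant term, fed by the scalar leaves). The goal is then twofold: (i) show that these coefficient vectors $(c_0,c_1,\dots,c_n)$ can be computed for every node of $F$ by an \FNCTwo\ procedure, and (ii) once the coefficient vector of the root is in hand, emit a balanced circuit — a depth-$O(\log n)$ binary tree of $+$ gates — that sums the $n+1$ terms $c_j x_j$ (with $x_0 := 1$), where each $c_j$ is written as a constant leaf. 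Part (ii) is routine (build a balanced summation tree, a \SigZB-definable construction as in Appendix \ref{sec:construct-binary-tree}); the content is in part (i).

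For part (i), the key step is to set up a matrix that propagates coefficient vectors along the DAG. Concretely, assign to each node $v$ of $F$ a row vector $\lambda(v)\in\Z^{n+1}$: for an input leaf labelled $x_j$ put $\lambda(v)=e_j$; for a scalar leaf $\alpha$ put $\lambda(v)=\alpha e_0$; and for an addition node $v=u+w$ put $\lambda(v)=\lambda(u)+\lambda(w)$. Because $F$ is acyclic with $s$ nodes, $\lambda(v)$ is obtained by summing $e_j$-contributions over all directed paths from leaves to $v$, with multiplicities; equivalently, if $M$ is the (node $\times$ node) adjacency matrix of $F$ (with $M_{uv}=1$ when there is an edge $u\to v$), then the contribution of a leaf $\ell$ to an internal node $v$ is exactly the number of directed $\ell$-to-$v$ paths, which is the $(\ell,v)$ entry of $\sum_{k=0}^{s} M^k$ — a matrix power sum computable in \VNCTwo\ (matrix powering is in \FNCTwo, as used in Section~\ref{sec:balancing-algebraic-circ-in-theory} and \cite{CF10}). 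Thus $\lambda(v) = \sum_{\ell \text{ a leaf}} \big(\textstyle\sum_{k} M^k\big)_{\ell v}\cdot \lambda_0(\ell)$, where $\lambda_0(\ell)$ is the leaf's own coefficient vector. This gives an \FNCTwo\ computation of all $\lambda(v)$ simultaneously, and in particular of $\lambda(r)$ for the root $r$.

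Then I would verify inside \VNCTwo\ that the emitted circuit — a balanced $+$-tree over leaves $\lambda(r)_0, \lambda(r)_1 x_1, \dots, \lambda(r)_n x_n$ — computes $\widehat F$. This amounts to proving, by the appropriate number-induction available in \VNCTwo, that $\widehat{F_v}$ equals the linear form with coefficient vector $\lambda(v)$; the induction is over the layered structure of $F$ (which we may assume is layered, via the \SigOneB-definable layering function from Section~\ref{sec:algebraic_circuit_value_problem}), proceeding layer by layer, and at each step it is just the identity $\widehat{F_v}=\widehat{F_u}+\widehat{F_w}$ together with linearity of the coefficient map. The resulting circuit has depth $O(\log n)$ and size $O(n\cdot \text{poly}(s))=\poly(s)$ (the scalars $\lambda(r)_j$ have polynomial bit-length since each is bounded by $s$ times the largest input scalar).

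The main obstacle I anticipate is purely formal: making the matrix-power argument legitimate in \VNCTwo, i.e. ensuring that (a) the adjacency matrix and the truncated power sum $\sum_{k=0}^{s}M^k$ are \SigOneB-definable and their defining properties provable in \VNCTwo, and (b) the integers appearing as entries of $\sum_k M^k$ (hence the coefficients $\lambda(v)_j$) are provably polynomially bounded in bit-length so that the whole construction stays in \FNCTwo. Point (b) needs the observation that the number of distinct directed paths in a size-$s$ DAG of fan-in two from a fixed source to a fixed target is at most $2^s$, so the coefficients have $O(s)$ bits — polynomial, as required. With matrix powering and these bounds in place, the rest is bookkeeping about encodings of circuits already set up in Section~\ref{sec:Encoding Circuits and PI-Proofs}.
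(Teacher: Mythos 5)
Your proposal is correct and follows essentially the same route as the paper: encode the product-free circuit as an adjacency matrix, use matrix powering (definable in \VNCTwo\ via \cite{CF10}) to count directed paths from each leaf to the root — these path counts are exactly the coefficients of the linear form — and then emit a balanced $O(\log n)$-depth summation tree. Your additional remarks on the $O(s)$-bit bound for the coefficients and on verifying the construction by induction over layers are sensible refinements of the same argument (the paper defers the provability part to Lemma \ref{lem:prove-dv=1-case-for-balancing}).
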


\begin{proof}
\iddo{}By assumption, the circuit $F$ computes a big sum of variables, where a variable can occur with an integer coefficient.  We will now represent circuits with unbounded fan-in as adjacency matrices. We first construct an upper triangular matrix $A=\{A_{ij}\}_{i,j\in[s]}$ that represents $F$: for every $j>i\in[s]$,\, $A_{ij}$ \emph{is labeled with the number of edges from node $i$ to node $j$ in the circuit}. In the initial  stage, $A$ is a 0-1 matrix because every node $i$ can have at most one directed edge to node $j$. 
This construction is done already in \VZ.

Given such a  matrix \(A\) representing $F$, the algorithm simply computes \(A^{s}\). The matrix $A^s$ has $c$ on its $(i,r)$th entry iff the number of different paths from node $i$ to the output gate $r$ is $c$. Thus, we can consider the matrix $A^s$ as corresponding to a depth $1$ circuit:  each leaf $i$ in this circuit represents the input variable $x_i$ or a scalar $k\in\Z$, and is connected to the root $r$ of the original circuit with a single edge labelled with some integer $c$; this integer $c$ is the total number of different paths in the original circuit leading from the input node $x_i$ or a scalar $k\in\Z$ to the root. Thus $cx_i$ or $ck$ is the contribution of the input node $x_i$ or the scalar node $k$ to the linear polynomial $\widehat F$. It  is thus immediate to construct a circuit (of depth $O(\log n)$) that computes  the linear polynomial  $\widehat F$: simply construct a big sum of the $c x_i$'s and $ck$'s.   

The fact that matrix powering is definable in \VNCTwo\ is shown in Cook and Fontes \cite{CF10}.\iddo{}

%
%
\end{proof}

We will also need the following two lemmas:
\begin{lemma}\label{lem:determine-w-in-Fv}
There is a \SigOneB-definable function in \VNCTwo\ for deciding, given a circuit $C$ and two nodes $w,v$ in $C$, if $w$ is in $F_v$.
\end{lemma}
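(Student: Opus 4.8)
The plan is to reduce the reachability question ``is $w$ in $F_v$?'' to matrix powering, which by Cook and Fontes \cite{CF10} is $\SigOneB$-definable in \VNCTwo. First I would encode the circuit $C$ by its adjacency matrix $A$ over $\{0,1\}$ on the node set, where $A_{uv}=1$ iff there is a directed edge from $u$ to $v$; this construction is $\SigZB$-definable in \VZ\ from the string encoding of $C$ described in Section~\ref{sec:subsec-for-encoding-circuits}. Recall that $w\in F_v$ means precisely that there is a directed path in $C$ from $w$ to $v$ (including the trivial path when $w=v$). Letting $s=|C|$ be an upper bound on the number of nodes, $w\in F_v$ iff $(I+A)^{s}$ has a nonzero $(w,v)$ entry, since any path uses at most $s-1$ edges. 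One can work over $\Z$ (the entries of $(I+A)^s$ are bounded by $s^s$, which has polynomially many bits) or, more cleanly, note that a single application of the \VNCTwo\ matrix-powering function over the integers suffices; we then read off whether the relevant entry is nonzero.

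Concretely, I would define the desired relation $R(C,w,v)$ by the $\SigOneB$-formula asserting: there exists a string $P$ coding an $s\times s$ integer matrix such that $P = (I+A)^{s}$ (this equality being expressed via the $\SigOneB$-definable matrix-powering function of \VNCTwo) and the $(w,v)$-entry of $P$ is different from $0$. To turn this into a genuine $\SigOneB$-definition of a function (the characteristic function of $R$, or equivalently a bit-defined Boolean output) one invokes Theorem~\ref{thm:definable_func_of_vnctwo}: since matrix powering over $\Z$ is in \FNCTwo\ and extracting an entry and testing nonzeroness are \FACZ\ operations, the composite is in \FNCTwo, hence $\SigOneB$-definable in \VNCTwo. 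Uniqueness and totality of the output are immediate from the uniqueness and totality of matrix powering, already provable in \VNCTwo\ by \cite{CF10}; and \VNCTwo\ proves the basic properties of the power $(I+A)^s$ (e.g. $(I+A)^{k+1}=(I+A)^k(I+A)$) needed to argue correctness, namely that the $(w,v)$ entry counts the number of walks of length $\le s$ from $w$ to $v$, which is positive iff such a walk exists.

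The only slightly delicate point is establishing, inside the theory, the equivalence ``$(w,v)$-entry of $(I+A)^s$ is nonzero $\iff$ there is a path from $w$ to $v$''. The direction from a path to nonzeroness follows by $\SigZB$ (number) induction on path length, using that all entries of powers of $I+A$ are nonnegative integers (so no cancellation can occur) — this nonnegativity is itself proved by induction on the exponent. The converse direction — a nonzero entry yields an actual path — is the part that needs care: I would extract a path by arguing that if $(I+A)^k_{w,v}>0$ then there is some intermediate node $u$ with $A_{w,u}\ge 1$ (or $w=u$ via the $I$ summand) and $(I+A)^{k-1}_{u,v}>0$, and iterate; formalizing this ``path extraction'' is most cleanly done by first proving the closed combinatorial identity that $(I+A)^k_{w,v}$ equals the number of walks of length exactly $k$ in the graph with a self-loop added at every vertex, a counting identity provable in \VZ\ by induction on $k$, and then observing that a positive count entails (by a $\SigZB$ minimization / least-number principle available in \VZ) the existence of a witnessing walk, from which a simple path is obtained. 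This combinatorial bookkeeping is the main obstacle, but it is entirely standard reachability-via-matrix-powering reasoning and well within \VNCTwo; I expect no genuine difficulty beyond carefully choosing the encodings so that all the intermediate objects are polynomially bounded.
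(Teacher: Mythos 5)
Your proposal matches the paper's proof: the paper likewise forms the $s\times s$ adjacency matrix $A_C$ (with $1$'s on the diagonal, i.e.\ your $I+A$), applies the \VNCTwo-definable matrix powering of Cook and Fontes \cite{CF10}, and tests whether $A_C^s(w,v)\neq 0$. The only difference is that you spend effort formalizing the walk-counting correctness inside the theory, which is more than the lemma requires — $\SigOneB$-definability only needs the defining formula to hold in the standard model, so the paper simply asserts the reachability equivalence semantically.
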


\begin{proof}
This is similar to Lemma \ref{lem:variable-free-nc2-eval} above. We first construct the adjacency matrix $A_C$ of the circuit $C$ as a directed graph: the dimension of  $A_C$ is $s\times s$ with $s$ being the number of nodes in $C$, each entry in $A_C$ is of number sort, and $A_C(w,u)$ is $1$ iff $w$ has a directed edge towards $u$ or $w=u$, and 0 otherwise.

Then, $w$ has a directed path to $v$ iff $A_C^s(w,v)\neq 0$, where matrix powering is definable in \VNCTwo\ as mentioned above. 
\end{proof}

\iddo{}    
\begin{lemma}\label{lem:construct-partial-w-Fv}
There is a  \SigOneB-definable function in \VNCTwo\  whose input is  a (division free) circuit $F$ with $n$ variables and a pair of nodes  $w,v$ in $F$ where  $w$ is in $F_v$ and $0\le \degubp(v)-\degubp(w)\le 1$, and whose output is an $O(\log(n))$-depth circuit computing $\partial w f_v$.
\end{lemma}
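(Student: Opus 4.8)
The plan is to construct the circuit $\tr{\partial w f_v}$ by following the recursive definition of the polynomial $\partial w f_v$ in Definition~\ref{def:partialfvw}, but—exactly as in the proof of Lemma~\ref{lem:variable-free-nc2-eval}—realizing the recursion ``globally'' via matrix powering rather than by an induction we cannot formalize in the theory. The crucial observation is that under the hypothesis $\degubp(v)-\degubp(w)\le 1$, Proposition~\ref{prop:deg_fv-deg_w} tells us that the polynomial $\partial w f_v$ has degree at most $1$, so it is a linear form $\sum_i c_i x_i + c_0$ (with the scalar treated as a degree-$1$ object via $\degubp$, or as a genuine constant), and what we really need is to compute its integer coefficients. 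For each product node $v_1\cdot v_2$ in $F_v$ at most one of the two subcircuits contains $w$ (the one with the larger $\degubp$), and in the branch \emph{not} containing $w$ we must multiply by the scalar $\widehat{f_{v_2}}$; but $f_{v_2}$ has $\degubp(v_2)=0$ (equivalently, it is a subcircuit of $\degubp$ zero), so it is a product/sum of scalars only, a quantity we can obtain by the same adjacency-matrix-powering trick used in Lemma~\ref{lem:variable-free-nc2-eval} (or, more carefully, using Lemma~\ref{lem:scalars-substitution-keeps-poly-deg} to control that these scalar subcircuits have polynomially bounded $\degubp$).

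Concretely, first I would use Lemma~\ref{lem:determine-w-in-Fv} to identify, for every product node $u=u_1\cdot u_2\in F_v$, which child lies on the unique $w$-to-$v$ path (by checking $A_F^s(w,u_i)\neq 0$); this determines, along that path, at each product node which factor is ``differentiated'' and which factor contributes a scalar. Linearizing the recursion of \eqref{eq:partialfvw} along the (unique, by the degree bound, since $\partial w f_v$ ignores all branches not containing $w$) part of $F_v$ reachable from $w$, one sees $\partial w f_v$ is a signed/weighted sum, over all directed paths $P$ from $w$ to $v$ in $F$, of a product of: the scalar values $\widehat{f_{u'}}$ of the off-path siblings $u'$ at each product node of $P$, together with any additive contributions coming from $+$-gates. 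This is exactly the kind of path-counting-with-weights computation that reduces to powering a weighted adjacency matrix over $\Z$: define a matrix $B$ whose $(i,j)$ entry records the ``transfer coefficient'' from node $i$ to node $j$ in one step (which, at a product gate, is the scalar $\widehat{f_{\text{sibling}}}$, and these scalars are themselves obtained by a prior matrix-powering pass over the $\degubp$-zero part of $F$), and then $B^s(w,v)$ — or more precisely the appropriate entries accumulating the linear-in-$x_i$ contributions — gives the coefficients $c_i$. Matrix powering over $\Z$ is $\SigOneB$-definable in \VNCTwo\ by Cook and Fontes~\cite{CF10}, and the remaining bookkeeping (building the $O(\log n)$-depth circuit that is the big sum $\sum_i c_i x_i+c_0$, as in Lemma~\ref{lem:variable-free-nc2-eval}) is in \FACZ, so the whole function is \FNCTwo, hence $\SigOneB$-definable in \VNCTwo\ by Theorem~\ref{thm:definable_func_of_vnctwo}.

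The main obstacle I expect is handling the scalar (degree-zero) subcircuits $\widehat{f_{u'}}$ cleanly: evaluating an arbitrary-depth circuit over $\Z$ is exactly what \VNCTwo\ cannot do in general, so one must exploit that these particular subcircuits have $\degubp$ equal to zero and—via Lemma~\ref{lem:scalars-substitution-keeps-poly-deg} and the homogenization machinery of Section~\ref{sec:homogenization-of-proofs}—a polynomially bounded $\degubp$, which is what lets the adjacency-matrix/powering argument of Lemma~\ref{lem:variable-free-nc2-eval} apply. A secondary point needing care is bounding the \emph{syntactic-degree} of the output circuit by $\degubp(v)-\degubp(w)\le 1$ (as promised in the Comment after Definition~\ref{def:partialfvw}), so that the later inductive steps of the balancing algorithm can use it; this is immediate from the construction since the output is literally a sum of $c_ix_i$ terms, but it should be stated. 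Everything else — that the constructed circuit indeed computes $\partial w f_v$ — follows from the correctness of the Valiant–Skyum–Berkowitz–Rackoff scheme~\cite{VSB+83} (cf.~\cite{RY08-balancing,HT12}), now phrased with $\degubp$ in place of degree, together with Proposition~\ref{prop:deg_fv-deg_w}.
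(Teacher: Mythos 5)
Your overall strategy (reduce $\partial w f_v$ to a linear form whose coefficients are extracted by matrix powering, in the style of Lemma~\ref{lem:variable-free-nc2-eval}) is aligned with the paper's, but there is a concrete error at the heart of your construction. You claim that at a product gate $u=u_1\cd u_2$ on the $w$-to-$v$ path the off-path sibling has $\degubp$ equal to $0$ and is therefore ``a product/sum of scalars only'' whose integer value $\widehat{f_{\text{sibling}}}$ you then place as a weight in a transfer matrix. Under the $\degubp$ convention there are \emph{no} nodes of degree $0$: scalars are assigned degree $1$ precisely so that every node $r$ satisfies $\degubp(r)\ge 1$. The correct consequences of the hypothesis $\degubp(v)-\degubp(w)\le 1$, which your argument never derives, are: (a) every product gate contributes at least $1$ to $\degubp$ along a path, so any $w$-to-$v$ path contains \emph{at most one} product gate, and $w$ cannot reach any product gate by two distinct paths; and (b) at the unique product gate $u=t_u\cd s_u$ on such a path the sibling satisfies $\degubp(s_u)=1$, hence $F_{s_u}$ contains \emph{no product gates} --- but it may very well contain variables, so $\widehat{F_{s_u}}$ is a linear form, not an integer. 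This collapses the recursion to $\partial w f_v=\sum_{u\in U}\widehat{F_{s_u}}$, a sum of product-free circuits, and Lemma~\ref{lem:variable-free-nc2-eval} then yields the $O(\log n)$-depth circuit directly.

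Because of the misreading, your scheme both breaks and solves a phantom problem. It breaks because a transfer matrix whose entries are the values $\widehat{f_{\text{sibling}}}$ is not well defined when those siblings contain variables: its entries would be polynomials, and matrix powering in \VNCTwo\ (via \cite{CF10}) is available over $\Z$, not over $\Z[\overline x]$. And the ``main obstacle'' you flag --- evaluating deep scalar subcircuits over $\Z$, which you propose to handle via Lemma~\ref{lem:scalars-substitution-keeps-poly-deg} --- does not arise in the correct construction: nothing is ever evaluated. One only counts paths (integers obtained by powering a $0$-$1$ adjacency matrix) and writes down the resulting circuit $\sum_i a_ix_i+\sum_{c} b_c c$ with the scalar leaves left symbolic. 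Lemma~\ref{lem:scalars-substitution-keeps-poly-deg} plays no role here; it is used later to control $\degubp$ in the specific determinant circuits. To repair your proof you need to establish the two structural facts above from $\degubp(r)\ge1$ and then invoke Lemma~\ref{lem:variable-free-nc2-eval} on the product-free siblings rather than trying to evaluate them.
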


\begin{proof}
In case $v=w$ we output the circuit $1$. Otherwise, first note that since $ 0\le \degubp(v)-\degubp(w)\le 1 $, either $\degubp(v)\le 1$ or $\degubp(v)\ge 2$ and $\degubp(w)>\degubp(v)/2$. Hence, by Proposition \ref{prop:deg_fv-deg_w}, the polynomial $ \partial w f_{v} $ is a linear polynomial $ a_1x_1+\dots+a_nx_n+b $. Therefore, it remains to show how to construct the circuit that computes this linear polynomial in \VNCTwo.

\smallskip
\noindent \textbf{Fact 1}: by definition of $\degubp$, for every node $r$ in $F$ we have $\degubp(r)\ge 1$. Hence, for every product gate $u=t\cd s$ we have $\degubp(u)=\degubp(t)+\degubp(s)\ge 2$.
\smallskip 

\noindent \textbf{Fact 2}: 
there cannot be a product gate 
$ u$ in $F_v$ such that $w$ has \emph{two different paths} directed from $w$ to $u$ (recall that edges are directed from leaves to root). 

This is because otherwise  
$\degubp(v)\ge \degubp(u)\ge 2\degubp(w)\ge 2$ (the last inequality is by the fact above), and hence  $\degubp(v)-\degubp(w)\ge \degubp(w)\ge 2 $ in contrast to the assumption. 
\smallskip

\noindent \textbf{Fact 3}: 
Let $\rho$ be a path from $v$ to $w$ (including $v$ and excluding $w$) in $F_v$. Then there exist at most one product gate in $\rho$. 

The reason is as follows: assume there are more than one product gates in $\rho$ (occurring ``above'' $w$). By Fact 1 every such product gate in $\rho$ increases the syntactic-degree upper bound $\degubp$ along $\rho$ by at least 1. Hence, $\degubp(v)\ge \degubp(w)+2$ in contrast to the assumption that $\degubp(v)-\degubp(w)\le 1$. 


We thus conclude that every product gate $u\neq w$ in $F_v$, either does not have $w$ in its scope, or is the only product gate on the path from $w$ to $v$ along $u$. Let $u=t\cd s$ be a product gate in $F_v$ that has $w$ in its scope, and assume without loss of generality that $F_t$ has $w$ in its scope and $F_s$ does not (by Fact 2 it cannot be that both have $w$ in their scope). We argue that $F_s$ \emph{has no product gates}. Otherwise, by Fact 1   $\degubp(s)\ge 2$ 
and so $\degubp(v)\ge \degubp(u)\ge \degubp(w)+\degubp(s)\ge \degubp(w)+2$ in contrast to the assumption $\degubp(v)-\degubp(w)\le 1$.   

Let $U$ be the set of all product gates $u=t_u\cd s_u$  in $F_v$ such that $F_{t_u}$ has  (without loss of generality) in its scope $w$. The above arguments imply that the polynomial  $ \partial w f_{v} =\sum_{u\in U} \widehat F_{s_u}$ and that there are no product gates in the $F_{s_u}$'s. But the set $U$ is easily \SigZB-defined in \VZ. And by Lemma \ref{lem:variable-free-nc2-eval} we can thus construct a $O(\log n)$ depth circuit, for $n$ the number of variables, computing the  sum $\sum_{u\in U} \widehat F_{s_u}$.  
\end{proof}


\iddo{} 
%


\medskip
%


\newcommand{\PCsf}{\ensuremath{\PC(Y_{\Z})}}
\subsubsection{Taking Care of Nodes with High $\degubp$ Measure}\label{sec:taking-care-of-high-degubp-nodes}
In this technical section we make sure that in the circuits we consider nodes have polynomially bounded $\degubp$ measure. For this purpose we show that the identities proved can be assumed to be of polynomial syntactic-degree (irrespective of the other identities between circuits appearing throughout the \PCZ-proofs), and then apply high syntactic-degree eliminations in proofs (Theorem \ref{thm:homogenize-proofs}; now with $\degubp$ replaced for $\degub$).
 
First note the following:

\begin{fact}\label{fac:degubp}
All the statements about proof-construction and transformations that we presented up to this point for $\degub$ also holds true for $\degubp$.
\end{fact}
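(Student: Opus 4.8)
The statement to be proved, Fact~\ref{fac:degubp}, asserts that every proof-construction and proof-transformation result established so far for the measure $\degub$ carries over verbatim to the variant measure $\degubp$, which differs only in assigning syntactic-degree upper bound $1$ (rather than $0$) to scalar leaves.

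\textbf{Plan.} The plan is to argue this essentially by inspection, observing that $\degubp$ is exactly what one obtains by running the homogenization algorithm of Section~\ref{sec:homogenization-of-proofs} after a purely syntactic relabeling of leaves. First I would note the key structural point: replacing every scalar leaf $c\in\Z$ by a fresh variable $y_c$ turns $\degub$ into $\degubp$, since under this relabeling a scalar leaf becomes a variable leaf and hence gets syntactic-degree $1$. Thus $\degubp$ on a circuit $C$ equals $\degub$ on the circuit $C^{Y}$ obtained by this leaf-relabeling; and $C^Y$ is produced from $C$ by a trivial \FACZ\ (indeed \SigZB-definable in \VZ) transformation that touches only the leaves and their gate-type labels, leaving the DAG structure, the internal gates, and all edges untouched. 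Every construction in Sections~\ref{sec:homogenization-of-proofs} and~\ref{sec:balancing-algebraic-circ-in-theory} (the homogenization algorithm, Lemma~\ref{lem:bounded degree}, Theorem~\ref{thm:homogenize-proofs}, the balancing overview, Lemma~\ref{lem:variable-free-nc2-eval}, Lemma~\ref{lem:construct-partial-w-Fv}) is stated for an arbitrary division-free circuit and depends on the degree bookkeeping only through the inductive clauses ``$\deg$ of a plus gate is the max of children'' and ``$\deg$ of a product gate is the sum of children'' — clauses that $\degubp$ satisfies identically, being $\degub$ transported through $C\mapsto C^Y$.

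\textbf{Key steps, in order.} (1) State and prove the one-line observation that $\degubp(v)$ computed on $C$ equals $\degub(v)$ computed on $C^Y$, by induction on $C$ following Definition~\ref{def:syntactic-degree} and the \case-analysis in the homogenization algorithm; the only changed base case is the scalar leaf. (2) Observe that the relabeling $C\mapsto C^Y$, and its inverse $C^Y\mapsto C$, lift to \PCZ-proofs: applying them uniformly to every circuit in a \PCZ-proof $\pi$ preserves the property of each line being an axiom instance or a correct rule application (the rules R1--R4 and axioms A1--A10, C1, C2, D are all closed under substitution of leaves, and substitution of a variable by a scalar is again a legal leaf-labeling); moreover both directions are \SigZB-definable in \VZ, using the encoding of Section~\ref{sec:encoding-PI-proofs}. (3) Given these, every result proved for $\degub$ transfers: to homogenize or balance with respect to $\degubp$ one relabels $C$ to $C^Y$, applies the already-established $\degub$-version, and relabels back, all within \VZ\ (resp.\ \VNCTwo), and the \PCZ-provability of the output equations is preserved by step~(2). (4) Conclude Fact~\ref{fac:degubp}.

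\textbf{Expected main obstacle.} The only real subtlety — and the part I would write out carefully rather than wave at — is step~(2): one must check that \emph{every} witness carried alongside a \PCZ- or \PIZ-proof (the C1/C2 node-mappings, the syntactic-degree upper-bound annotations $[u,i]$, and any pointers to previous lines) is correctly updated under the leaf-relabeling, and that the resulting object is still a legal witnessed proof in the sense of Section~\ref{sec:encoding-PI-proofs}. Since the relabeling is a bijection on nodes that preserves edges and gate arities, the node-maps and pointers are unchanged, and the annotations $[u,i]$ remain valid upper bounds (the bound $\degub(u)\le i$ on $C$ is weaker than $\degubp(u)\le i$ on $C^Y$ only at scalar leaves, where $i=1$ works in both); so the check is routine but should be stated to make the transfer principle airtight. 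Beyond this, there is nothing to do: the claim is a meta-observation about the uniformity of the earlier arguments, not a new construction.
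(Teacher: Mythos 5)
Your proposal is correct and follows essentially the same route as the paper: the paper justifies Fact~\ref{fac:degubp} precisely by substituting each scalar leaf $c\in\Z$ with a fresh variable $y_c$, running the $\degub$-based homogenization machinery on the relabeled circuit, and substituting the scalars back. Your write-up is somewhat more careful than the paper's one-paragraph justification (in particular about lifting the relabeling to whole \PCZ-proofs and their witnesses), but the underlying idea is the same.
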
 

Fact \ref{fac:degubp} holds because of the following: let $C$ be a circuit and let $C'$ be the sum of its syntactic homogeneous components with all nodes in $C'$ appearing with their syntactic-degree upper bound. Let $C_Y$ be $C$ in which we substitution every scalar leaf $c\in\Z$ to a new variable $y_c$. Then the output of the homogenization algorithm on the input $C_Y$ results in a sum of syntactic homogeneous components of $C_Y$ such that every node appear with its syntactic-degree upper bound \degubp. If we now substitution back the scalars $c\in\Z$ for the variables $y_c$ we get the circuit $C'$ in which every node appears with its syntactic-degree upper bound $\degubp$ (instead of $\degub$).


\begin{lemma}[in \VNCTwo] \label{lem:scalars-substitution-keeps-poly-deg}
%
Given a positive natural number $n$ there exists a $\PCZ$-proof of the determinant identities \eqref{eq:1} and \eqref{eq:2}, where the determinant in \eqref{eq:1} and \eqref{eq:2} is written as the division free circuit denoted $\taydetsharpprime(A)$, for $A$, the \nbyn\ symbolic matrix $X$ or $Y$, or their product $XY$, or a symbolic triangular matrix $Z$. Moreover,  in this proof every circuit is a sum of syntactic homogeneous circuits in which every node $u$ appears with its syntactic-degree upper bound \degubp, and $\degubp(u)=O(n)$.

\iddo{}  
\end{lemma}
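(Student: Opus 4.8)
The plan is to obtain $\taydetsharpprime$ from $\taydetsharp$ by a two-step transformation that is executed entirely inside the theory, first passing to a version of the circuit in which scalars are treated as variables so that high-$\degubp$ nodes are exposed, then killing those high-degree nodes by homogenization (using $\degubp$ in place of $\degub$, which is legitimate by Fact~\ref{fac:degubp}), and finally substituting the scalars back. Concretely: start from the \PCZ-proof of \eqref{eq:1} and \eqref{eq:2} with the determinant written as $\taydetsharp(A)$ and every circuit a sum of syntactic-homogeneous components with nodes carrying their $\degub$ upper bound, which exists in \VZ\ by Corollary~\ref{cor:exist-PC-proof-of-det-tay-sharp-identities}. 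Let $c_1,\dots,c_m$ be the distinct scalar values appearing at leaves of circuits in this proof; introduce fresh variables $y_{c_1},\dots,y_{c_m}$ and let $\sigma_Y$ be the substitution replacing each scalar leaf $c_j$ by $y_{c_j}$. As in the variable-shift substitutions of Section~\ref{sec:division-elim-from-proofs} (and the encoding of Section~\ref{sec:encoding-PI-proofs}), this substitution is \SigZB-definable on the proof string and yields a syntactically correct \PCZ-proof $\pi_Y$ over the extended variable set; the witnesses for C1, C2 are updated exactly as there, by adjoining the new leaf-to-leaf pairs.

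Next I would apply the homogenization machinery of Section~\ref{sec:homogenization-of-proofs} to $\pi_Y$, but run with the $\degubp$ convention: by Fact~\ref{fac:degubp} the homogenization algorithm and Theorem~\ref{thm:homogenize-proofs} go through verbatim with $\degub$ replaced by $\degubp$ throughout. The key point is that the \emph{identities being proved}, \eqref{eq:1} and \eqref{eq:2}, have polynomial syntactic-degree even in the $\degubp$ sense: $\taydetsharp$ has syntactic-degree $n$ and, since every scalar leaf becomes a degree-one variable, the corresponding $\taydetsharpprime$ has $\degubp$ equal to $n$ at the output (each monomial of $\det$ is a product of $n$ matrix entries, each now a degree-one object, so the total $\degubp$ is still $n$; more carefully one checks that replacing the $\le n$ scalar leaves occurring on any root-to-leaf path by variables only adds a bounded amount, so $\degubp(u)=O(n)$ at every node of the final circuit). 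So I would invoke Theorem~\ref{thm:homogenize-proofs} (with $\degubp$) applied to $\pi_Y$ and target degree $O(n)$, producing in \VZ\ a \PCZ-proof of $\bigl(\taydetsharp(X)_Y\cdot\taydetsharp(Y)_Y\bigr)^{(k)}=(\taydetsharp(XY)_Y)^{(k)}$ for $0\le k\le 2n$ (and the analogue for \eqref{eq:2}), in which every circuit is syntactic-homogeneous with all nodes carrying a $\degubp$ upper bound $\le k = O(n)$; reassembling the homogeneous components exactly as in the proof of Corollary~\ref{cor:exist-PC-proof-of-det-tay-sharp-identities} (using Claims~\ref{cla:homogeneous-Fii=Fi} and \ref{cla:Coeff-zn-homogeneous-comp-n}, noting $\taydetsharp$ is already presented as syntactic-homogeneous of degree $n$) yields the \PCZ-proof of the single identity over the $Y$-variables. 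Finally, substitute the scalars $c_j$ back for $y_{c_j}$ in this proof; this substitution is again \SigZB-definable on proof strings, it turns the $Y$-version of $\taydetsharp$ into the circuit we name $\taydetsharpprime(A)$, and — crucially — it does \emph{not} change the recorded $\degubp$ upper bounds, which therefore remain $O(n)$ at every node. Since the invocation of matrix powering has not yet been needed (homogenization is \FACZ, and everything so far is \VZ-constructible) the only reason to state this lemma in \VNCTwo\ rather than \VZ\ is uniformity with the subsequent balancing step; in fact the construction here is carried out in \VZ.

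The main obstacle I anticipate is the degree bookkeeping: one must argue carefully that after the $\sigma_Y$ substitution the output node — and indeed every node — of the circuits in the proof has $\degubp$ bounded by $O(n)$ rather than blowing up. This is not automatic: a node computing, say, an iterated square $(((c)^2)^2\cdots)$ of a scalar would have exponential $\degubp$; the point is that \emph{in $\taydetsharp$ and in the proof $\pi$} obtained from the previous sections, scalars appear only at leaves and every root-to-leaf path contains at most $O(n)$ multiplications because these are syntactic-homogeneous circuits of degree $O(n)$, so after substitution each such path contributes $\degubp\le O(n)$. So the real content is verifying that the circuits produced by Corollary~\ref{cor:exist-PC-proof-of-det-tay-sharp-identities} have no long multiplicative chains of scalars; once that structural observation is in place, the rest is a routine re-run of Theorem~\ref{thm:homogenize-proofs} with $\degubp$ and two \SigZB-definable substitutions on proof strings, and I would present it at that level of detail, citing Fact~\ref{fac:degubp} for the transfer from $\degub$ to $\degubp$ and omitting the repeated homogenization bookkeeping.
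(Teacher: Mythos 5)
There is a genuine gap, and it sits exactly where you flagged your ``main obstacle'': your pipeline never actually produces a new circuit. Substituting fresh variables $y_{c}$ for the scalar leaves and then substituting the scalars back is the identity transformation on circuits, so your ``$\taydetsharpprime$'' is just $\taydetsharp$ again --- but $\taydetsharp$ genuinely contains nodes of \emph{exponential} $\degubp$, which is the whole reason the lemma introduces a new circuit. Concretely, $\taydetsharp$ (see \eqref{eq:huge-term-for-detsharp-spelled-out}) contains the subcircuit $\Den(\cdeti(I_n+zX))$, and $\Den$ of a circuit is by construction a product tree whose leaves are mostly the constant $1$ ($\Den(v):=1$ for input nodes, and both $+$ and $\times$ gates turn into products of the children's $\Den$'s); because of the recursive sharing in $\cdeti$, the $\degubp$ of this subcircuit doubles at each inductive level and is exponential in $n$, even though its $\degub$ is fine. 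Your resolution of the obstacle --- ``every root-to-leaf path contains at most $O(n)$ multiplications because these are syntactic-homogeneous circuits of degree $O(n)$'' --- is false: products with scalar (degree-$0$ under $\degub$) subcircuits do not increase $\degub$, so bounded $\degub$ does not bound the number of product gates on a path; the paper's own example $(((2)^2)^2\cdots)^2$ is exactly this phenomenon. Moreover, if you run Theorem~\ref{thm:homogenize-proofs} on the $Y$-substituted proof with target degree $O(n)$ while the circuits have $Y$-degree exponentially large, you cannot reassemble the identity from its first $O(n)$ homogeneous components (the high components are nonzero and do not vanish until you set $y_1=1$, which happens only after they have been discarded), so the resulting object is not a proof of the stated identity.

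What the paper actually does, and what your argument is missing, is a construction step \emph{before} homogenization: it first observes that all scalars in $\cdeti(I_n+zX)$ occur as summands $\alpha+h$ with $h$ containing a variable (so for those occurrences $\degubp=\degub$ and there is nothing to fix), and then isolates the genuinely problematic occurrences --- the constants $1$ sitting inside the products of $\Den(\cdeti(I_n+zX))$ --- and builds an explicit $\PCZ$/$\PIZ$-proof (via $\SigZB$-\textbf{COMP}, using the axiom $1\cdot 1=1$) that cancels them. The circuit obtained after these cancellations is, by definition, $\taydetsharpprime$, and the cancellation proof is the $\PCZ$-proof of $\taydetsharpprime(X)=\taydetsharp(X)$; only then is Theorem~\ref{thm:homogenize-proofs} applied (with $\degubp$ in place of $\degub$, via Fact~\ref{fac:degubp}) and combined with Corollary~\ref{cor:exist-PC-proof-of-det-tay-sharp-identities}. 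So the missing idea is not bookkeeping but the explicit $1\cdot 1=1$ elimination inside $\Den(\cdeti(I_n+zX))$ that makes $\taydetsharpprime$ a different, polynomially-$\degubp$-bounded circuit.
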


\begin{proof}
Let $\pi$ be the  $\PCZ$-proof in Corollary 
\ref{cor:exist-PC-proof-of-det-tay-sharp-identities}. Every node $u$ in $\pi$ appears with its syntactic-degree upper bound $\degub(u)$. By Fact \ref{fac:degubp} we could have assumed that every node appears with the syntactic-degree upper bound \degubp\ only that we need to make sure that \degubp\ for all nodes in $\pi$ are polynomially bounded. We do this as follows.

By inspection of $\taydetsharp(X)$, we will show that we can construct in \VNCTwo\ a \PCZ-proof of $\taydetsharpprime(X)=\taydetsharp(X)$, where $\taydetsharpprime(X)$ is a division free circuit with all nodes $u$ having $\degubp(u)=O(n)$. We then homogenize this \PCZ-proof using Theorem \ref{thm:homogenize-proofs} to get rid of all nodes with high $\degubp$ measure. Combining this proof with the \PCZ-proof in Corollary \ref{cor:exists-corr-up-to-deg-n-PC-proof-of-dettaysharp} we obtain the desired proof.

\iddo{}
\iddo{}

We start by identifying a property that will help us to determine the $\degubp$ measure of nodes.
\iddo{}

\newcommand{\property}{\ensuremath{\kreuz}}

\begin{claim*}[Not necessarily in \VNCTwo]
Let $F$ be a circuit (possibly \emph{with} division) over the variables $x_1,x_2,\dots$. Assume that $F$ has the following property: 
\begin{quote}
{\rm Property \property}: 
either $F$ is a scalar, or every scalar leaf $\alpha$  in $F$ is a child of a plus gate $u$ where $u:=\alpha+h$ and $h$ contains at least one  $x_i$ variable.\end{quote}
Then, $\degubp(F)= \degub(F)$.    
\end{claim*}
\begin{proofclaim}
Since we are not proving this claim in the theory we can  proceed by induction on the size of $F$. If $F:=x_i$, then we are done. If $F:=\alpha+v$, for some scalar $\alpha$,  since $v$ contains some $x_i$ variables, $\degubp(F)=\degubp(v)$. We also have  $\degub(v)=\degubp(v)$ by induction hypothesis, and $\degub(v)=\degub(F)$ since $v$ contains some $x_i$ variables, and we are done.
If $F:=v+w$ with both $v,w$ different from a scalar, then  the claim follows by induction hypothesis. 
Similarly, if $F:=v\cd w $ then the claim follows by induction hypothesis. 
\end{proofclaim}
\smallskip

We proceed to prove Lemma \ref{lem:scalars-substitution-keeps-poly-deg}. We show that $\cdeti(I_n+zX)$ has property \property, and thus $\degubp(\cdeti(I_n+zX))= \degub(\cdeti(I_n+zX))$. 

Consider the circuit $\cdeti(X)$. The only scalars in $\cdeti(X)$ are the 0-1 constants that occur in the identity matrix $I_{n-1}$ in \eqref{eq:def-inverse}. In \eqref{eq:def-inverse} the scalars 0-1 all appear in a subcircuit of the form $0+h$ or $1+h$ for some $h$ with $x_i$ variables as required by \property. Now consider  $\cdeti(I_n+zX)$, which results by replacing  the variables $x_{ij}$ in $\cdeti(X)$ by the term $0+zx_{ij}$ or by  $1+zx_{ij}$ in case $i=j$. But this substitution preserves the property \property.

We now argue that \VNCTwo\ can construct a \PCZ-proof of ${\taydetsharpprime(X)}=\taydetsharp(X)$. \iddo{}%
Recall  definition \eqref{eq:huge-term-for-detsharp-spelled-out} of $\taydetsharp(X)$ in Section \ref{sec:Reducing-the-Syntactic-Degree-of-the-Determinant-Polynomial}. 
%


\iddo{}

The only scalar in $\taydetsharp(X)$ is 1. In order to deal with nodes that have high \degubp\ values in $\taydetsharp(X)\rst\gamma$ it suffices to deal with nodes in $\Den(\cdeti(I_n+zX))$, since other parts in \eqref{eq:huge-term-for-detsharp-spelled-out} do not increase $\degubp$ more than by a polynomial factor. For this purpose we construct using the \SigZB-\textbf{COMP} axiom a \PIZ-proof that eliminates 1 from products with $1$ in $\Den(\cdeti(I_n+zX))$,  using the axiom $1\cd 1=1$. This is done by pointing specifically to where the $1$'s are in $\Den(\cdeti(I_n+zX))$; these are \SigZB-definable number functions in \VZ. We omit the details. \iddo{} 

%
\iddo{}
\end{proof}

\subsection{Formal Description of the Balancing Algorithm}\label{sec:formal-desc-of-balancing-algorithm}

For a syntactically homogeneous circuit $ G $ and a natural number $m$ let
\begin{equation}
 {\cal B}_m(G) :=           \big\{
 t \in G \;:\; t = t_1\cd t_2, \hbox{ where } ~\degubp(t)>m \mbox{ and } \degubp(t_1),\degubp(t_2)\le m                  \big\}.
\end{equation}
Notice that ${\cal B}_m(G)$ is a \SigZB-definable relation  in \VZ.
\medskip

\begin{note} In the construction of the balanced circuit of $F$, given nodes $v,w \in F$, the notation $\tr {F_v} $ and \tr{{\ensuremath{\partial w f_v}}} stand for \textbf{\textit{nodes}} (and not  circuits). When we write $\tr{F_v}:=C$ for a circuit $C$ we mean that the node $\tr{F_v}$ is defined to be the root of the circuit $C$, where $C$ possibly contains other (previously constructed) nodes like $[F_u]$, for some $u\in F$. In other words, the algorithm simply connects the node
$[F_v]$ to a circuit for which some of its leaves are already constructed nodes.
\end{note}
\medskip



\hrule
\medskip 
\noindent\textsc{\FNCTwo-Algorithm for Balancing a Circuit $F$ (Construction of  $\tr{F}$)}

\medskip
\hrule 

\begin{description}[font=\rm]
\item[\textbf{Input}:]
$F$ where $F$ is a sum of one or more syntactic-homogeneous circuits over the variables $x_1,\dots,x_n$, in which every node $u$ appears with its syntactic-degree upper bound $\degubp(u)$. 

\item[\textbf{Output}:]
A  circuit $\tr  F$ computing the polynomial $\widehat  F$. That is, if $s$ is the size of $F$ then depth$(\tr F)=O(\log s\log d+\log^2 d)$ and the size of $\tr F$ is ${\rm poly}(s,d)$, where $d=\degubp(F)$.
\end{description}


\renewcommand{\para}[1]
{\medskip\noindent\textbf{#1}}

\para{Preprocessing step:} 
For every pair of nodes $w,v$ we prepare a list that determines whether $w$ is in $F_v$. This is done by running in parallel for all pairs $w,v$ in $F$ the \NCTwo-algorithm in Lemma \ref{lem:determine-w-in-Fv} for checking if $w$ is in $F_v$ described 

\medskip

\noindent\underline{\textit{Step} $i=0$}:
\medskip 

\noindent 
\textbf{\textit{Part (a)}}:\footnote{This base case uses an \FNCTwo\ algorithm, but since it is done only in the base case, the whole algorithm still is in \FNCTwo.}
We construct the node $[F_v]$, for all nodes $v\in F$ such that $\degubp(v)\le 1=2^i$.


Let $v\in F$ be such that $\degubp(v)\leq1$. 
\iddo{}
\begin{claim}\label{cla:inside-balance-constrcution}
$\widehat {F_v} =a_1x_1+\dots+a_nx_n+\sum_{c\in J} b_c c$, for  $a_1,\dots,a_n, b_c\in\Z $ and $J\subset \Z$. Furthermore, there exists an \FNCTwo-construction that given $F$ constructs the  depth $O(\log n)$ circuit $a_1x_1+\dots+a_nx_n+\sum_{c\in J} b_c c$.
\end{claim}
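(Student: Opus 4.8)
The plan is to prove Claim~\ref{cla:inside-balance-constrcution} by first establishing the structural assertion about $\widehat{F_v}$ and then invoking the machinery already developed for evaluating product-free circuits. Since $v$ satisfies $\degubp(v)\le 1$, and $\degubp$ assigns value at least $1$ to every node (because scalars now have $\degubp=1$, not $0$), any product gate $u=t\cdot s$ has $\degubp(u)=\degubp(t)+\degubp(s)\ge 2$. Hence $F_v$ \emph{contains no product gates}: otherwise the output node $v$ would have $\degubp(v)\ge 2$, since syntactic-degree upper bounds are monotone along directed paths toward the root. Therefore $F_v$ is a product-free (division-free) circuit, i.e.\ it is built only from leaves (variables $x_j$ and scalars $c\in\Z$) and $+$ gates. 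Such a circuit computes a polynomial that is a $\Z$-linear combination of its leaves, namely $\widehat{F_v}=a_1x_1+\dots+a_nx_n+\sum_{c\in J}b_c\,c$ for integers $a_j,b_c$ and a finite set $J\subset\Z$ of scalar leaves, which is exactly the claimed form.

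For the \FNCTwo-construction, I would reduce directly to Lemma~\ref{lem:variable-free-nc2-eval}. First, using the preprocessing step (or directly Lemma~\ref{lem:determine-w-in-Fv}) we can in \VNCTwo\ isolate the subcircuit $F_v$ as an honest circuit: restrict to the set of nodes $w$ with $w\in F_v$, which is a \SigZB-definable (indeed \SigOneB-definable in \VNCTwo) relation, and restrict the edge and gate strings accordingly. Because we have already argued $F_v$ has no product gates, $F_v$ is a legal input to Lemma~\ref{lem:variable-free-nc2-eval}, which provides a \SigOneB-definable function in \VNCTwo\ producing an $O(\log n)$-depth circuit of size $\poly(s)$ computing $\widehat{F_v}$ (its proof builds the adjacency matrix $A$ of $F_v$, computes $A^s$ by matrix powering—definable in \VNCTwo\ by Cook--Fontes~\cite{CF10}—and reads off the coefficient of each leaf from the column indexed by the root). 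This simultaneously yields the integer coefficients $a_j$ and $b_c$ as the path-count entries of $A^s$, establishing the ``$\widehat{F_v}=\sum a_jx_j+\sum b_cc$'' assertion constructively, and gives the desired balanced circuit of depth $O(\log n)$.

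Finally I would note that the construction is performed in parallel over all nodes $v\in F$ with $\degubp(v)\le 1$: each such $v$ contributes its own product-free subcircuit, and the restriction-plus-Lemma~\ref{lem:variable-free-nc2-eval} step is carried out simultaneously for each, so the whole of Step $i=0$ Part~(a) remains \SigOneB-definable in \VNCTwo. The node $[F_v]$ in the output circuit $[F]$ is then defined to be the root of this depth-$O(\log n)$ circuit.

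The main obstacle I expect is a bookkeeping one rather than a conceptual one: carefully extracting $F_v$ as a stand-alone circuit within the encoding of Section~\ref{sec:subsec-for-encoding-circuits} (relabelling nodes, pruning edges, ensuring the gate-type string is consistent) while keeping the whole operation uniform and \SigZB/\SigOneB-definable, and verifying that the ``no product gates in $F_v$'' argument survives the fact that the homogenization algorithm may over-estimate syntactic degrees (it can only over-estimate, so $\degubp(v)\le 1$ still forces the true syntactic degree to be $\le 1$, hence still forces no product gates). Once that is in place, the heavy lifting is entirely delegated to Lemma~\ref{lem:variable-free-nc2-eval} and the \VNCTwo-definability of matrix powering.
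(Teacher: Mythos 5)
Your proposal is correct and follows essentially the same route as the paper: the bound $\degubp(v)\le 1$ forces $F_v$ to be product-free (since every node has $\degubp\ge 1$, so a product gate would push the bound to at least $2$), hence $\widehat{F_v}$ is a $\Z$-linear combination of its leaves, and the balanced circuit is obtained by delegating to Lemma~\ref{lem:variable-free-nc2-eval} via matrix powering. Your additional remarks on extracting $F_v$ uniformly and on parallelizing over all such $v$ are consistent with how the paper treats these points elsewhere (the preprocessing step and the encoding of Section~\ref{sec:subsec-for-encoding-circuits}), so there is no gap.
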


\begin{proofclaim}
Since  $\degubp(v)\le 1$, there are no product gates in $F_v$. Thus, $F_v$ is a circuit with only plus gates, which means $\widehat F_v$ is as stated in the claim. By Lemma \ref{lem:variable-free-nc2-eval} (and Theorem \ref{thm:definable_func_of_vnctwo}) we can construct in \FNCTwo\ the circuit $a_1x_1+\dots+a_nx_n+\sum_{c\in J} b_c c$ (we do not evaluate the circuit). 
\end{proofclaim}

\iddo{} 
We define  
\begin{equation*}\label{eq:the-linear-case}                 
[F_v]:= a_1x_1+\dots+a_nx_n+\sum\nolimits_{c\in J} b_c c.
\end{equation*}

\medskip 

\noindent\textbf{\textit{Part (b)}}:
Let $ w,v $ be a pair of nodes in $ F $ with $ 2\degubp(w) >\degubp(v) $:\mbox{}
\QuadSpace

\case 1 Assume $ w $ is not a node in $ F_v $ (this can be checked using the list from the preprocessing step). Define
\[
                                                                [\partial w f_v]:=0.
\]

\case 2 
Assume that $ w $ is in $ F_v $ and $ 0\le \degubp(v)-\degubp(w)\le 1 $. Again, this is checked by the list from the preprocessing step, and since we the input circuit $F$ is assumed to contain the value of $\degubp$ for each node. 

Thus, by Proposition \ref{prop:deg_fv-deg_w}, the polynomial $ \partial w f_{v} $ is a linear polynomial $ a_1x_1+\dots+a_nx_n+b $. Using  Lemma \ref{lem:construct-partial-w-Fv} and similar notation and reasoning as Claim \ref{cla:inside-balance-constrcution} define 
\[
                                                                [\partial w f_v]:= a_1x_1+\dots+a_nx_n+\sum\nolimits_{c\in J} b_cc\,.
\]

\noindent\underline{\textit{Step} $i+1$}:

The construction in this step is done in \VZ, assuming we have the list from the preprocessing step above. 
\medskip 

\noindent
\textbf{\textit{Part (a)}}:
 Assume that for some $ 0 \le i \le \lceil \log(d) \rceil $:
\[
                                                                2^i < \degubp(v) \le 2^{i+1}.
\]
Put $ m=2^i $, and define (recall that here $ [\partial w f_v], [F_{t_1}] $ and $[F_{t_2}]$ are \emph{nodes})

\[
                                                                [F_v] := \sum_{t\in {\cal B}_m(F_v)\atop t=t_1\cd t_2} [\partial t f_{v}]\cd [F_{t_1}]\cd[F_{t_2}]\,.
\]

\medskip 

\noindent
\textbf{\textit{Part (b)}}: Let $ w,v $ be a pair of nodes in $ F $ with $ 2\degubp(w) >\degubp(v) $:\mbox{}
\QuadSpace

Assume that $ w $ is in $ F_v $ and that for some $ 0 \le i \le \lceil \log(d) \rceil $:
\[
2^i < \degubp(v)-\degubp(w)\le 2^{i+1} .
\]
\smallskip

Put $  m = 2^i + \degubp(w)$. Define:
\[
                                                                [\partial w f_{v}] := \sum_{t\in{\cal B}_m(F_v) } [\partial t f_{v}] \cd[\partial w f_{t_1}]\cd [F_{t_2}]\,,
\]
where here for every given $t\in{\cal B}_m(F_v)$, $t_{1}, t_{2}$ are nodes such that $ t=t_1\cd t_2 $ and $\degubp(t_{1})\geq\degubp(t_{2})$, or $t=t_{2}\cdot t_{1}$ and $\degubp(t_{2})<\degubp(t_{1})$.

\smallskip 

Finally, define $\tr{F}$ as the circuit with  output node $\tr{F_{u}}$, where $u$ is the output node of $F$.
\medskip \hrule 

\bigskip

By construction, the algorithm computes the correct output:\ the fact that $[F]$ has the correct depth stems from the construction as explained in the overview of the balancing algorithm above (see also \cite{VSB+83,RY08-balancing,HT12}). The fact that $[F]$ has the correct size stems from the fact that the algorithm is $\SigZB$-definable in \VNCTwo. The fact that $[F]$ computes $\widehat F$ is shown below by constructing in \VNCTwo\ a \PCZ-proof of $F=[F]$ for a syntactic homogeneous circuit $F$ (this stems from Lemma \ref{lem:simulation}; see again \cite{VSB+83,RY08-balancing,HT12}). \medskip 

Note that given an algebraic circuit $F$ with $\degubp(u)$, for all nodes $u$ in $F$, polynomially bounded, our balancing algorithm provides a way to balance $F$ already in \FNCTwo. As mentioned in  Section \ref{sec:overview}, by first balancing an input circuit and then evaluating it (assuming e.g. it is over the integers, as in the next section) this gives rise to an \NCTwo\ evaluation procedure for algebraic circuits of any depth (given as input an upper bound on their syntactic-degree in unary and assuming the syntactic degree \degubp\ of the circuit is polynomial) that is different from the previously known  algorithm by Miller \textit{et al.}~\cite{MRK88} (their algorithm does not require the syntactic-degree as input) and that of Allender \emph{et al.} \cite{AJMV98} (which is implicit in that work but can be extracted from the text \cite{All18}).

\subsection{Balancing  Proofs in \VNCTwo}\label{sec:balancing-proofs}

For balancing \PCZ-\emph{proofs} we need to show the proof-theoretic counterpart of the balancing algorithm. This is similar to the proof-theoretic counterpart of the homogenization theorem shown in Section \ref{sec:homogenization-of-proofs}.  
We start by showing some properties of the constructions of the base cases of the balancing algorithm described in Lemmas \ref{lem:variable-free-nc2-eval} and \ref{lem:construct-partial-w-Fv} that \VNCTwo\ can prove.  
\begin{lemma}[in \VNCTwo]\label{lem:prove-dv=1-case-for-balancing}
\begin{enumerate}
\item[(i)] Let $F$ be a circuit with no product gates and no scalars (and no division gates). Assume that $v=v_1+v_2$ is a node in $F$ such that $\degubp(v)\le 1$. Then, there exists a \PCZ-proof of $[F_v]=[F_{v_1}]+[F_{v_2}]$. 

\item[(ii)] Let $F$ be a circuit with no scalars and syntactic-degree $d$, and a pair of nodes $w,v$ in $F$, such that $w$ is in $F_v$ and $0\le \degubp(v)-\degubp(w)\le 1$. Then, there is a \PCZ-proof of 
\begin{align}
        [\partial w F_{v}]  &= [\partial w F_{v_1}] + [\partial w F_{v_2}], &\mbox{\ \ \ in case $ v = v_1+v_2 $;}\label{eq:wFv+}\\
        [\partial w F_{v}]  &= [\partial w F_{v_1}] \cd [F_{v_2}] ,& \mbox{\ \ in case $ v = v_1\cd v_2 $ and $\degubp(v_1)\ge\degubp(v_2)$ }\nonumber\\
           && \mbox{\ \ or $ v = v_2\cd v_1 $ and $\degubp(v_1)> \degubp(v_2)$.  }
        \label{eq:wFv*}
\end{align}
\end{enumerate}
\end{lemma}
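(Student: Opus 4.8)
The plan is to handle both parts by the same template: unfold the base-case constructions of the balancing algorithm --- Lemma~\ref{lem:variable-free-nc2-eval} for the nodes $[F_v]$ and Lemma~\ref{lem:construct-partial-w-Fv} for the nodes $[\partial w F_v]$ --- observe that in the situations at hand these nodes are explicit \emph{linear-form circuits} whose integer coefficients are path-counts in the underlying graph, verify in \VNCTwo\ that those coefficients obey the claimed additive (resp.\ multiplicative) recursion, and then assemble the short \PCZ-proof by the usual canonicalisation of linear forms: re-associate and commute the outer sum (A2, A3), collapse like terms $ax_i+bx_i$ into $(a+b)x_i$ (A4, A6), rewrite the scalar sum with A10, and chain with R1--R4. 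The whole construction is done in parallel over the relevant nodes, hence is \SigOneB-definable in \VNCTwo, the only ingredient going beyond \ACZ\ being the matrix powering that produces the coefficients.

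For part~(i): since $F$ has no product, scalar or division gates, each $F_v$ consists only of $+$-gates and variable leaves, so the construction behind Lemma~\ref{lem:variable-free-nc2-eval} makes $[F_v]$ the balanced form $\sum_i a_i(v)\,x_i$, where $a_i(v)$ is the number of directed paths from the $x_i$-leaf to $v$. Let $A$ be the strict adjacency matrix of $F$ (nilpotent, with $A^s=0$ for $s=|F|$, since no path visits $s{+}1$ distinct nodes) and $N=\sum_{k<s}A^k$ the path-count matrix; $N$ is constructible in \VNCTwo\ from matrix powering following \cite{CF10}, and \VNCTwo\ proves $N=I+NA$. Extracting the $(x_i,v)$-entry and using that the only nonzero entries of column $v$ of $A$ are $A_{v_1,v}$ and $A_{v_2,v}$, with $A_{v_1,v}+A_{v_2,v}=2$ (valid whether or not $v_1=v_2$, the coincident case contributing a double edge), gives $a_i(v)=a_i(v_1)+a_i(v_2)$ for every $i$. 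The $O(n)$-step \PCZ-derivation then rearranges $[F_{v_1}]+[F_{v_2}]$ into $\sum_i\big(a_i(v_1)x_i+a_i(v_2)x_i\big)$, collapses each pair via A4, A6 into $(a_i(v_1)+a_i(v_2))x_i$, applies A10 to replace the scalar sum by $a_i(v)$, and transports back to $[F_v]$; the validity of the A10 instances in $\Z$ is exactly the coefficient identity just proved, so \VNCTwo\ certifies the derivation.

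Part~(ii) repeats this pattern with the forms produced by Lemma~\ref{lem:construct-partial-w-Fv}. When $v=v_1+v_2$, the set of product gates of $F_v$ carrying $w$ in their higher-degree factor (the set $U$ from the proof of Lemma~\ref{lem:construct-partial-w-Fv}) splits as the multiset union of the corresponding sets for $F_{v_1}$ and $F_{v_2}$ --- a product gate below a $+$-gate lies below exactly one child, or below both with multiplicity if $v_1=v_2$ --- so the coefficient vectors of $[\partial w F_v]$, $[\partial w F_{v_1}]$, $[\partial w F_{v_2}]$ add entrywise as in part~(i) (with one summand possibly $0$, as produced by Case~1 of step $i=0$, part (b) of the algorithm), and the same canonicalisation yields \eqref{eq:wFv+}. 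When $v=v_1\cdot v_2$ with $\degubp(v_1)\ge\degubp(v_2)$ (the other orientation symmetric), the $\degubp$-bookkeeping of the proof of Lemma~\ref{lem:construct-partial-w-Fv} (its Facts~1--3, together with $0\le\degubp(v)-\degubp(w)\le 1$ and $\degubp(v_2)\ge 1$) forces $\degubp(v_2)=1$ and $\partial w f_{v_1}$ to be a constant $c$, hence $f_{v_2}$ a linear form $\sum_i b_ix_i$ and, by Definition~\ref{def:partialfvw}, $\partial w f_v=(\partial w f_{v_1})\cdot f_{v_2}=\sum_i(c\,b_i)x_i$; since $[\partial w F_v]$ is the product-free form computing that polynomial, its coefficients are provably the $c\,b_i$. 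The \PCZ-proof then distributes the circuit $[\partial w F_{v_1}]\times[F_{v_2}]=c\times(\sum_i b_ix_i)$ into $\sum_i(c\,b_i)x_i$ using A5, A6, A10 and matches it to $[\partial w F_v]$, giving \eqref{eq:wFv*}.

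The main obstacle is not the \PCZ-proof assembly --- a fixed constant-size template per proof-line applied in parallel --- but establishing the coefficient recursions at a level \VNCTwo\ can certify: reducing ``number of paths to $v$ equals number of paths to $v_1$ plus number of paths to $v_2$'' to the matrix identity $N=I+NA$ while correctly accounting for the $v_1=v_2$ multiplicity, and, in part~(ii), transporting the structural facts from the proof of Lemma~\ref{lem:construct-partial-w-Fv} (which control where $w$ can sit and which subcircuits are necessarily product-free) into the $+$- and $\cdot$-decompositions of $U$, so that the two sides of \eqref{eq:wFv+} and \eqref{eq:wFv*} are certified to have matching coefficient vectors. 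Both tasks are ``local and syntactic'' in the sense stressed in the overview, so \VNCTwo\ suffices; the care needed is entirely in the $\degubp$-inequalities, which must be pinned down so that $w$ lies in exactly one factor of each relevant product gate.
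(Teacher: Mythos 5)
Your proposal is correct and follows essentially the same route as the paper: both parts unfold the base-case constructions of Lemmas \ref{lem:variable-free-nc2-eval} and \ref{lem:construct-partial-w-Fv}, certify the additive/multiplicative coefficient recursions in \VNCTwo\ via matrix powering (\cite{CF10}) and $\SigZB$-induction, and then assemble the \PCZ-proof by canonicalising linear forms with A2--A6, A10. Your packaging of the path-count recursion as $N=I+NA$ for $N=\sum_{k<s}A^k$ is only a cosmetic variant of the paper's induction on $A_v^i[u,r]=A_{v_1}^i[u,r]+A_{v_2}^i[u,r]$, and your treatment of the product case (keeping $\partial w f_{v_1}$ as a general degree-$0$ constant $c$) is, if anything, slightly more careful than the paper's.
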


\begin{proof}
Part (i). Consider Lemma \ref{lem:variable-free-nc2-eval}. The circuit  $[F_v]$ is constructed according to this lemma by first computing the integer coefficients of each of the input variables in the linear form computed by $F_v$. It thus suffices to prove (in \VNCTwo) that for every input variable $x_i$ in $F_v$, the coefficient of $x_i$ in $F_v$ equals the sum of the coefficients of $x_i$ in $F_{v_1}, F_{v_2}$. Assuming we can prove this, we can directly construct the \PCZ-proof of $[F_v]=[F_{v_1}]+[F_{v_2}]$. 

We use a result from Cook and Fontes \cite{CF10}, stating that the theory $V\# L$ which is contained in  $\VNCTwo$, \SigOneB-defines the string function ${\rm PowSeq}_\Z(n,s,A)$. This string function receives an \nbyn\ integer matrix $A$ and outputs a string coding the sequence of powers of $A$: $(A,A^2,\dots,A^s)$.

Let $F'_{v_1}:=F_{v_1}\cup\{(v_1,r)\}$ and $F'_{v_2}:=F_{v_2}\cup\{(v_2,r)\}$. That is, $F'_{v_1}$ is the (non-legit) circuit $F_{v_1}$ to which we add the directed edge from $v_1$ to the output node $r$ in $F_v$, and similarly $F'_{v_2}$, so that $F_v=F'_{v_1}\cup F'_{v_2}$. Assume that $A_v,A_{v_1},A_{v_2}$ are the 0-1 adjacency matrices of the circuits $F_v,F'_{v_1},F'_{v_2}$, respectively, where the dimensions of all the matrices all equal $s$, the number of nodes in $F_v$ and the $(u,w)$th entry in all three matrices corresponds to a directed edge from node $u$ to node $w$. Using $\SigZB$-induction on the power $i=1,\dots,s$, and using the strings $(A_v,A_v^2,\dots,A_v^s), (A_{v_1},A_{v_1}^2,\dots,A_{v_1}^s), (A_{v_2},A_{v_2}^2,\dots,A_{v_2}^s)$, we argue that for every input node $u$ in $F_v$
$$
A^i_v[u,r]=A^i_{v_1}[u,r]+A^i_{v_2}[u,r]\,,
$$ 
where $A[u,r]$ denotes the $(u,r)$th entry of the matrix $A$, and as before $r$ is the output node of $F_v$. \iddo{}

\medskip 
Part (ii). Here we  use the construction in Lemma \ref{lem:construct-partial-w-Fv}. 

\case 1  $ v = v_1+v_2 $. According to Lemma \ref{lem:construct-partial-w-Fv}, $[\partial w F_{v}]$ is defined as the  sum  $\sum_{u\in U} \widehat F_{s_u}$, where $U$ is the set of all product gates $u=t_u\cd s_u$  in $F_v$ such that $F_{t_u}$ has  (without loss of generality) in its scope $w$, and where we construct each $ F_{s_u}$ in the sum using Lemma \ref{lem:variable-free-nc2-eval}, similar to part (i). Similar to part (i) we proceed by the number \SigZB-induction on $i=1,\dots,s$, where $s$ is the size of $F_v$ to prove
$$
A^i_v[u,r]=\sum_{u\in U}A^i_{s_u}[u,r]\,.
$$ 
\iddo{}
\iddo{}

\case 2 $v=v_1\cd v_2$. This is similar to case 1. According to Lemma \ref{lem:construct-partial-w-Fv}  and using the terminology of case 1 above,  $[\partial w F_{v}]$ is defined as the sum $\sum_{u\in U} \widehat F_{s_u}$. Only that by assumption, the only product gate that has  $w$ is in its scope must be $v$ itself (because there can be no two nested product gates with $w$ in their scope by assumption $\degubp(v)-\degubp(w)\le 1$). Assume without loss of generality that $v_1$ has $w$ in its scope. Then, $v_2$ does not have $w$ in its scope (by assumption on degree, as explained in the proof of Lemma \ref{lem:construct-partial-w-Fv}). Thus, $[\partial w F_{v}]= \sum_{u\in U} \widehat F_{s_u}=\widehat F_{v_2}=1\cd [F_{v_2}]=[\partial w F_{v_1}]\cd [F_{v_2}]$.\iddo{}
\iddo{}
\end{proof}

Recall that the length number function  $\lceil \log_2(n)\rceil$ is \SigZB-definable function in \VZ\ (see \cite{CN10}). This is the main theorem of this section: 

\begin{theorem}[in \VNCTwo]\label{thm:main-balancing}

\begin{enumerate}

\item \label{bal-jedna} If $F$ is a sum of one or more syntactic homogeneous circuits, of size $s$ and depth $t$, such that $\degubp(F)=d$, then $F=\tr{F}$ has a $\PC$-proof of size $\poly(s,d)$ and depth $O(t+\log s\cd\log d +\log^2 d) $.

\item  \label{bal-dva} Let $\pi$ be a \PCZ-proof of $F=G$ of syntactic-degree at most $d$ and size $s$, and where every circuit is a sum of syntactic homogeneous circuits with every node appearing with its $\degubp$ value. Then, $\tr{F}=\tr{G}$ has a $\PC$-proof of size $\poly(s,d)$ and depth $O(\log s\cd\log d +\log^2 d)$.

\end{enumerate}
\end{theorem}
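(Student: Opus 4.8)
\emph{Part \ref{bal-jedna}.} The plan, mirroring the homogenization Theorem~\ref{thm:homogenize-proofs} and its proof, is to first establish a \emph{simulation lemma}: within \VNCTwo\ one constructs, for every node $v\in F$, a \PCZ-proof of $F_v=\tr{F_v}$, and for every pair of nodes $w,v$ with $w\in F_v$ and $2\degubp(w)>\degubp(v)$, a \PCZ-proof of $\partial w F_v=\tr{\partial w f_v}$, where $\partial w F_v$ is the (possibly unbalanced) circuit given by the recursion of Definition~\ref{def:partialfvw}. These proofs follow the stages $i=0,\dots,\lceil\log d\rceil$ of the balancing algorithm of Section~\ref{sec:formal-desc-of-balancing-algorithm}: the block attached to a fixed stage $i$ derives, from the blocks of stages $\le i$, the defining equations of the nodes created at stage $i$. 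For $i=0$ these are exactly the linear-form identities of Lemma~\ref{lem:prove-dv=1-case-for-balancing} (resting on Lemmas~\ref{lem:variable-free-nc2-eval} and~\ref{lem:construct-partial-w-Fv}); for stage $i+1$ they are the Valiant--Skyum--Berkowitz--Rackoff recursions
\[
\tr{F_v}=\!\!\!\sum_{t\in{\cal B}_m(F_v)\atop t=t_1\cd t_2}\!\!\!\tr{\partial t f_v}\cd\tr{F_{t_1}}\cd\tr{F_{t_2}},
\qquad
\tr{\partial w f_v}=\!\!\!\sum_{t\in{\cal B}_m(F_v)}\!\!\!\tr{\partial t f_v}\cd\tr{\partial w f_{t_1}}\cd\tr{F_{t_2}},
\]
and, for each fixed $i$, such an equation is obtained from already-available equations by a short, purely rule-based \PCZ-manipulation that re-proves the classical VSBR identity over~$\Z$ with $O(\log s)$-depth blocks. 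All blocks are written in parallel --- the pointers to earlier proof-lines being a \SigZB-definable \VZ\ number function of the node/pair indices --- so the full \PCZ-proof is \SigOneB-definable in \VNCTwo\ via \SigZB-\textbf{COMP}; matrix powering, used only inside the base-case constructions of Lemmas~\ref{lem:variable-free-nc2-eval}--\ref{lem:prove-dv=1-case-for-balancing}, is what forces \VNCTwo\ rather than \VZ. With $O(\log d)$ stages, each adding $O(\log s)$ depth, and the balanced circuits $\tr{F_v}$ of depth $O(\log s\log d+\log^2 d)$ while $F$ (of depth $t$) occurs in the first line, the proof has size $\poly(s,d)$ (the algorithm is \SigZB-definable) and depth $O(t+\log s\log d+\log^2 d)$; taking $v$ to be the output node of $F$ yields $F=\tr F$.

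\emph{Part \ref{bal-dva}.} This I would obtain as the proof-theoretic lift of Part~\ref{bal-jedna}, argued exactly as Theorem~\ref{thm:homogenize-proofs}: walk through the lines of $\pi$ and, independently and simultaneously for each line $S=T$, build a block of a \PCZ-proof of $\tr S=\tr T$. By hypothesis every circuit in $\pi$ is a sum of syntactic-homogeneous circuits of syntactic-degree $\le d$ with each node carrying its $\degubp$ value, so $\tr{(\cdot)}$ is defined throughout, and all circuits introduced are outputs of the balancing algorithm, hence of depth $O(\log s\log d+\log^2 d)$ (so, as in the homogenization case, the deep originals from $\pi$ are not carried into the balanced proof). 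If $S=T$ is an axiom, $\tr S=\tr T$ is derived directly: A1 and A10 are immediate, while for the associativity/commutativity/distributivity and C1--C2 axioms one combines the two applications of Part~\ref{bal-jedna} with short ``commutation'' identities relating $\tr{(A\circ B)}$ --- for $\circ$ a sum or product gate --- to the balanced circuits already produced for $A$ and $B$, these identities being themselves instances of the VSBR recursions for the top of $A\circ B$. If $S=T$ was obtained by a rule, say $S_1\cd S_2=T_1\cd T_2$ by R4 from $S_1=T_1$ and $S_2=T_2$, one derives $\tr{(S_1\cd S_2)}=\tr{(T_1\cd T_2)}$ from the earlier blocks $\tr{S_1}=\tr{T_1}$ and $\tr{S_2}=\tr{T_2}$ using the same identities; R1--R3 are analogous. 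Pointers to previous blocks are again a \SigZB-definable \VZ\ function of line numbers, so the construction is \SigOneB-definable in \VNCTwo, and the resulting proof has size $\poly(s,d)$ and depth $O(\log s\log d+\log^2 d)$.

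\emph{The hard part.} The main obstacle is the simulation lemma underlying Part~\ref{bal-jedna}. The correctness of the VSBR recursions over $\Z$ is classical, but turning it into \PCZ-derivations that are simultaneously $\poly(s,d)$-size, of depth $O(\log s\log d+\log^2 d)$, and \SigZB-definable in \VNCTwo\ is delicate --- in particular one must work throughout with the relaxed measure $\degubp$, under which zero-computing subcircuits may carry inflated syntactic-degree bounds, so every degree estimate is only an upper bound and the base-case computations (coefficients of linear forms via matrix powering, Lemmas~\ref{lem:variable-free-nc2-eval} and~\ref{lem:construct-partial-w-Fv}) must be shown \PCZ-provably correct, as in Lemma~\ref{lem:prove-dv=1-case-for-balancing}. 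A secondary subtlety, relevant to Part~\ref{bal-dva}, is that $\tr{(\cdot)}$ is not literally compositional over $+$ and $\cdot$, so one must everywhere route through the simulation lemma and the VSBR recursions rather than ``pushing $\tr{}$ through gates''.
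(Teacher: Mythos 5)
Your proposal is correct and follows essentially the same route as the paper: the "simulation lemma" you describe is the paper's Lemma~\ref{lem:simulation} (proved via the staged sequences $\Psi_i$ of Proposition~\ref{prop:lengthy-lemma}, with the base case resting on Lemmas~\ref{lem:variable-free-nc2-eval}--\ref{lem:prove-dv=1-case-for-balancing} and the inductive stages on the VSBR recursions), Part~\ref{bal-jedna} is then the node-by-node derivation of $F_v=\tr{F_v}$, and Part~\ref{bal-dva} is the line-by-line lift exactly parallel to Theorem~\ref{thm:homogenize-proofs}. The only cosmetic deviation is that you introduce an unbalanced circuit for $\partial w f_v$, whereas the paper keeps $\partial w f_v$ as a polynomial and only proves identities among the balanced nodes $[F_v]$, $[\partial w f_v]$; this does not affect the argument.
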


Theorem \ref{thm:main-balancing} will be proved analogously to Theorem \ref{thm:homogenize-proofs}: 
the proof is similar to the proof of Theorem \ref{thm:homogenize-proofs}, only that instead of using Lemma \ref{lem:bounded degree} we use the analogous Lemma \ref{lem:main simulation lemma} below that demonstrates some essential properties of  $\tr{F}$ that have short \PCZ-proofs.

\iddo{}

\begin{lemma}[in \VNCTwo]\label{lem:main simulation lemma}\label{lem:simulation}
Let $ F_{1}, F_2 $ be syntactic homogeneous circuits of syntactic degree at most $ d $ and size at most $ s $.
 Then, there exist $\PCZ$-proofs of:
\begin{align}
 [F_1\oplus F_2]  &= [F_1]+[F_2]\, , \ \ \ \     \label{eq:plus}\\
[F_1\otimes  F_2] & = [F_1]\cd[F_2] \,,    \label{eq:prod}
\end{align}
such that the proofs have size $\poly(s,d)$ and depth $O(\log d\cd \log s+\log^{2}d)$.
Furthermore, $\tr{z}=z$ has a constant-size proof whenever $z$ is a variable or an integer.
\end{lemma}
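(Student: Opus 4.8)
The plan is to prove Lemma~\ref{lem:main simulation lemma} by establishing the two circuit-identities \eqref{eq:plus} and \eqref{eq:prod} through a direct inspection of the balancing algorithm, running parallel to how the balancing algorithm itself constructs $\tr{F}$ step-by-step. The key point is that both sides of each equation refer to \emph{nodes} built by the same \FNCTwo\ construction, and the algorithm builds $[(F_1\oplus F_2)_v]$ and $[\partial w f_{(F_1\oplus F_2)_v}]$ out of the nodes $[F_{1,u}],[F_{2,u}],[\partial t f_{1,u}],[\partial t f_{2,u}]$ that already occur in $\tr{F_1}$ and $\tr{F_2}$; so the desired equalities are really bookkeeping identities about which subcircuits get wired where. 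First I would fix notation: write $H := F_1\oplus F_2$ with output node $v_0=v_1+v_2$ where $v_i$ is the output of $F_i$, and observe that every node of $H$ other than $v_0$ is a node of $F_1$ or of $F_2$, with $\degubp$ inherited. Then $\tr{H}$ and $\tr{F_1},\tr{F_2}$ share all the nodes $[H_u],[\partial w f_{H,u}]$ for $u$ below $v_1$ or $v_2$; only the top node $[H_{v_0}]$ is genuinely new.

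The proof then proceeds by a \SigZB/\SigOneB-induction (or rather a parallel construction, as in Theorem~\ref{thm:homogenize-proofs}) on the step index $i=0,\dots,\lceil\log d\rceil$ of the balancing algorithm. For step $i=0$ part (a) (the $\degubp\le 1$ linear case), I would invoke Lemma~\ref{lem:prove-dv=1-case-for-balancing}(i): the coefficient of each variable in the linear form computed by $H_{v_0}=v_1+v_2$ is the sum of the coefficients in $F_{1,v_1}$ and $F_{2,v_2}$, and this is exactly what the matrix-powering construction of Lemma~\ref{lem:variable-free-nc2-eval} produces, so $[H_{v_0}]=[F_{1,v_1}]+[F_{2,v_2}]$ has a short \PCZ-proof; but $[F_{1,v_1}]=[F_1]$ and $[F_{2,v_2}]=[F_2]$ by definition of the notation, giving \eqref{eq:plus} in this range. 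For $i=0$ part (b) one uses Lemma~\ref{lem:prove-dv=1-case-for-balancing}(ii). For the inductive step $i+1$, using $m=2^i$ one has
\[
[H_{v_0}] := \sum_{t\in\mathcal B_m(H_{v_0})\atop t=t_1\cd t_2}[\partial t f_{H,v_0}]\cd[H_{t_1}]\cd[H_{t_2}]\,,
\]
and since $v_0$ is a plus gate, $\mathcal B_m(H_{v_0})=\mathcal B_m(F_{1,v_1})\cup\mathcal B_m(F_{2,v_2})$ (disjoint as node sets), while for $t\in\mathcal B_m(F_{1,v_1})$ one has $[\partial t f_{H,v_0}]=[\partial t f_{1,v_1}]+[\partial t f_{2,v_1}]=[\partial t f_{1,v_1}]+0$ by the additivity of $\partial t(\cdot)$ over a plus gate together with $t\notin F_{2,v_1}$; so the sum splits cleanly into the two sums defining $[F_1]$ and $[F_2]$, and a short \PCZ-derivation (using A2, A3, A7, R3) reassembles it as $[F_1]+[F_2]$. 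The derivation of \eqref{eq:prod} is analogous but uses the defining equation for $\tr{\cdot}$ at the product node $v_0=v_1\cd v_2$: here $[\partial t f_{H,v_0}]$ for $t$ in the higher-degree factor, say $F_{1,v_1}$, satisfies $[\partial t f_{H,v_0}]=[\partial t f_{1,v_1}]\cd[F_{2,v_2}]$ (the product rule of Definition~\ref{def:partialfvw}, proved short in Lemma~\ref{lem:prove-dv=1-case-for-balancing}(ii) and inductively), so the sum defining $[H_{v_0}]$ factors as $\big(\sum_t[\partial t f_{1,v_1}]\cd[F_{1,t_1}]\cd[F_{1,t_2}]\big)\cd[F_{2,v_2}]=[F_1]\cd[F_2]$, again by a short rearrangement proof.

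I would handle the size and depth bounds by the same accounting used in the overview of the balancing algorithm: each of the $\lceil\log d\rceil$ stages contributes depth $O(\log s)$ to the proof (the sums in $\mathcal B_m(\cdot)$ have $\le s$ summands, each a constant-depth product of three already-constructed nodes, and reorganising a $\le s$-fold sum via A2--A3 takes depth $O(\log s)$), so the proof depth is $O(\log d\cd\log s+\log^2 d)$; the number of proof-lines per stage is $\poly(s)$ so the total proof size is $\poly(s,d)$. The final clause, $\tr z=z$ for $z$ a variable or integer, is immediate: by the base case of the balancing algorithm $\tr z$ is literally the one-node circuit $z$ (a variable node has coefficient $1$ on itself and $0$ elsewhere; an integer node is treated as a degree-$1$ leaf and maps to itself), so $\tr z=z$ is an instance of axiom A1, of constant size. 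The main obstacle I anticipate is not any single step but the careful matching of the two node-sets across the $\oplus$/$\otimes$: one must verify in \VZ\ that the preprocessing lists ($w\in F_v$?) and the $\degubp$ labels of $H$ restrict correctly to those of $F_1,F_2$, so that the algorithm really does reuse the nodes of $\tr{F_1},\tr{F_2}$ verbatim rather than rebuilding fresh copies — and that the parallel construction of the \PCZ-proof can be \SigZB-defined without appealing to induction on proof length, exactly as in the proof of Theorem~\ref{thm:homogenize-proofs}. This bookkeeping, together with the slightly delicate degree inequalities needed to know $\partial t f$ lands in one factor only, is where all the real work sits; the algebra in \PCZ\ is routine.
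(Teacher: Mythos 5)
Your proposal is correct and follows essentially the same route as the paper: the paper proves a general stage-wise proposition (proof-sequences $\Psi_i$, $i=0,\dots,\lceil\log d\rceil$, containing $[F_v]=[F_{v_1}]\circ[F_{v_2}]$ and the corresponding $[\partial w F_v]$-equations for all qualifying nodes of a homogeneous circuit $F$), with the base case handled by the matrix-powering Lemmas \ref{lem:variable-free-nc2-eval}--\ref{lem:prove-dv=1-case-for-balancing} and the inductive step by splitting the ${\cal B}_m$-sums exactly as you describe, and then specializes $F$ to $F_1\oplus F_2$ and $F_1\otimes F_2$ with $v$ the root. Only trivia differ: $F_1,F_2$ need not be node-disjoint inside $\oplus$ (the split of the sum is justified by $[\partial t F_{v_e}]=0$ for $t\notin F_{v_e}$, not by disjointness), the product case has an extra easy sub-case $v_0\in{\cal B}_m(F_{v_0})$, and for a variable $z$ the circuit $\tr{z}$ is $1\cd z$, so $\tr{z}=z$ uses A9 rather than A1.
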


The proof of Theorem \ref{thm:main-balancing} is deferred to Section \ref{sec:proof-of-balancing-theorem}.

\subsubsection{Proof of Lemma \ref{lem:simulation}}\label{sec: lemma} We now prove Lemma \ref{lem:simulation}. 
The proof is similar to Lemma 4.4 in \cite{HT12}, except that we use $\degubp(\cd)$ instead of  syntactic-degrees $\deg(\cd)$ and that we construct the \PCZ-proof in \FNCTwo\ instead of by induction on the structure of $F$ (which would have necessitate  using \SigOneB-induction).   

The statement concerning $\tr{z}=z$ is clear: if $z$ is an integer, $\tr{z}$ and $z$ are the same circuit. If $z$ is a variable, $\tr{z}$ is the circuit $1\cd z$. \iddo{}

We need to construct proofs of  equations (\ref{eq:plus}) and (\ref{eq:prod}).


Let $m(s,d)$ and $r(s,d)$ be functions such that for any circuit $ F $ with $\degubp(F)=d$ and size $s$, $\tr{F}$ has depth at most $ r(s,d) $ and size at most $ m(s,d) $. Since the balancing algorithm shown above is \SigZB-definable in \VNCTwo\  we can  choose
\[ m(s,d)=\poly(s,d) \mbox{\ \ and \ \ } r(s,d)=O(\log^{2}d+\log d \cd\log s).\]

\begin{notation}
In the following, $[F_v]$ and $[\partial w F_v]$ will denote \emph{circuits}: $[F_v]$ and $[\partial w F_v]$ are the subcircuits of $[F]$ with output nodes $[F_v]$ and $[\partial w F_v]$, respectively; the defining relations between the nodes of $[F]$ (see the definition of $[F]$  above) translate to equalities between the corresponding circuits. For example, if $v$ and $m$ are as in part (a) Case 2, of the definition of $ \tr{F} $, then, using just the axioms C1 and C2, we can prove
\begin{equation}\label{eq:hosafa basof}
[F_v] = \sum_{t\in {\cal B}_m(F_v)} [\partial t F_{v}]\cd [F_{t_1}]\cd[F_{t_2}]\,.
\end{equation}
Here, the left hand side is understood as the circuit $ \tr{F_v} $ in which $ \tr{\partial t F_v}, \tr{F_{t_1}}, \tr{F_{t_2}} $ appear as \emph{subcircuits}, and so can share common nodes, while on the right hand side the circuits have \emph{disjoint nodes}.
Also, note that if $F$ has size $s$ and degree $d$, the proof of  (\ref{eq:hosafa basof}) has size  $O(s^2m(s,d))$ and has depth $O(r(s,d))$.
We shall use these kind of identities in the current proof.\iddo{}
\end{notation}

Let $ \lambda(s,i) $ be a function such that 
\begin{equation}\label{eq:recurrence}
   \lambda(s,0)=O(s^4) \quad \hbox{and} \quad \lambda(s,i)\le O(s^4\cd m(s,d))+ \lambda(s,i-1).
\end{equation}
Recurrence (\ref{eq:recurrence}) implies $\lambda(s,d)=\poly(s,d)$.

The following proposition (which is similar to Proposition 4.10 in \cite{HT12}) suffices to conclude the lemma (it is enough to take $ F $ in the statement as either $ F_1\oplus F_2 $ or $ F_1\otimes F_2 $, and $ v $ as the root of $ {F} $).
\begin{proposition}[in \VNCTwo]\label{prop:lengthy-lemma}
Let $ F $ be a syntactically homogenous circuit of syntactic degree at most $d $ and size  $ s $. For every $ i=0,\dots,\lceil \log d \rceil $ there exists  a $\PC$ proof-sequence $ \Psi_i $ of size at most $ \lambda(s,i) $ and depth at most $ O(r(s,d)) $, such that the following hold:
\QuadSpace

\ind\textbf{Part (a):}
For every node $ v\in{F} $ with
\begin{equation}\label{eq:od ehad}
 \degubp(v)\le 2^i,
\end{equation}
$ \Psi_i $ contains the following equations:
\begin{align}
        [F_v] &= [F_{v_1}]+[F_{v_2}]\ , \mbox{\ \ \ \ \ in case $ v = v_1+v_2 $,\ \ \ \ \  and} \label{eq:Fv+}\\
        [F_v] &= [F_{v_1}]\cd [F_{v_2}]\ , \mbox{\ \ \ \ \ in case $ v = v_1\cd v_2 $}.\label{eq:Fv*}
\end{align}

\ind\textbf{Part (b):} For every pair of nodes $ w\neq v\in{F} $, where $ w\in F_v $, and with
\begin{align}
   &
\degubp(v)-\degubp(w)\le 2^i \; \mbox{\ \ \ and}  \label{eq:nizkarti1}\\
        & 2\degubp(w)>\degubp(v),\label{eq:nizkarti2}
\end{align}
$ \Psi_i $ contains the following equations:
\begin{align}
        [\partial w F_{v}]  &= [\partial w F_{v_1}] + [\partial w F_{v_2}], &\mbox{\ \ \ in case $ v = v_1+v_2 $;}\label{eq:wFv+}\\
        [\partial w F_{v}]  &= [\partial w F_{v_1}] \cd [F_{v_2}] ,& \mbox{\ \ in case $ v = v_1\cd v_2 $ and $\degubp(v_1)\ge\degubp(v_2)$ }\nonumber\\
           && \mbox{\ \ or $ v = v_2\cd v_1 $ and $\degubp(v_1)> \degubp(v_2)$.  }
        \label{eq:wFv*}
\end{align}
\end{proposition}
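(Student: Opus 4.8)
\textbf{Proof plan for Proposition \ref{prop:lengthy-lemma}.}
The plan is to construct the proof-sequences $\Psi_i$ by induction on $i=0,\dots,\lceil\log d\rceil$, mirroring the structure of the balancing algorithm itself; crucially, for each fixed $i$ the construction is carried out simultaneously (in parallel) over all relevant nodes $v$ and all relevant pairs $w,v$, so that no $\SigOneB$-induction on circuit structure is needed --- only $\SigZB$-\textbf{COMP} to assemble the proof-string, plus the fact that each stage is $\SigZB$-definable in \VNCTwo\ (using matrix powering in the base case). The base case $i=0$ is exactly Lemma \ref{lem:prove-dv=1-case-for-balancing}: for nodes $v$ with $\degubp(v)\le 1$, equations \eqref{eq:Fv+} hold by part (i) of that lemma (note \eqref{eq:Fv*} is vacuous here since $\degubp(v)\le 1$ forbids product gates above $v$), and for pairs $w,v$ with $0\le\degubp(v)-\degubp(w)\le 1$, equations \eqref{eq:wFv+}, \eqref{eq:wFv*} hold by part (ii). These proofs have size $O(s^4)=\lambda(s,0)$ and depth $O(r(s,d))$.

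For the induction step, suppose $\Psi_{i}$ has been built and we wish to build $\Psi_{i+1}$. For a node $v$ with $2^i<\degubp(v)\le 2^{i+1}$ in case $v=v_1+v_2$: since $F$ is syntactically homogeneous, $\degubp(v_1)=\degubp(v_2)=\degubp(v)$, and by the defining equation \eqref{eq:hosafa basof} applied to $v$ and to $v_1,v_2$ (with the same $m=2^i$, noting ${\cal B}_m(F_v)={\cal B}_m(F_{v_1})\cup{\cal B}_m(F_{v_2})$ up to the disjointness bookkeeping), together with the equations $[\partial t F_v]=[\partial t F_{v_1}]+[\partial t F_{v_2}]$ already available in $\Psi_i$ (this is \eqref{eq:wFv+} at level $i$, applicable because for $t\in{\cal B}_m(F_v)$ we have $\degubp(v)-\degubp(t)\le 2^{i+1}-2^i=2^i$ and $2\degubp(t)>\degubp(v)$), one derives \eqref{eq:Fv+} by a \PCZ-derivation that distributes the sum over products --- this is the same elementary rearrangement used in the proof of Theorem \ref{thm:homogenize-proofs}, Case 1. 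In case $v=v_1\cd v_2$, one uses \eqref{eq:hosafa basof} for $v$ together with the factorization of ${\cal B}_m$-terms; the key identity $[\partial t F_v]=[\partial t F_{v_1}]\cd[F_{v_2}]$ (or the symmetric variant), available from $\Psi_i$ via \eqref{eq:wFv*}, lets one peel $[F_{v_2}]$ out and recognize the remaining sum as $[F_{v_1}]$, giving \eqref{eq:Fv*}. The proof for Part (b) at level $i+1$, i.e.\ equations \eqref{eq:wFv+} and \eqref{eq:wFv*} for pairs with $2^i<\degubp(v)-\degubp(w)\le 2^{i+1}$, is analogous, using $m=2^i+\degubp(w)$ and the level-$i$ equations for the sub-derivatives $[\partial w F_{t_1}]$ and sub-circuits $[F_{t_2}]$ appearing in the defining sum for $[\partial w F_v]$.

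Each step adds proof-size $O(s^4\cdot m(s,d))$ on top of $\Psi_i$ (one group of identities of the form \eqref{eq:hosafa basof} plus the rearrangements for each of the $O(s)$ nodes and $O(s^2)$ pairs, each identity of size $O(s\cdot m(s,d))$ and the rearrangement multiplying by an $O(s)$ factor), yielding the recurrence \eqref{eq:recurrence} and hence $\lambda(s,d)=\poly(s,d)$; the depth stays $O(r(s,d))$ because we only ever compose a constant number of already-built subcircuits of depth $\le r(s,d)$ with $O(\log s)$ extra layers, and $r(s,d)$ already absorbs this. Finally, all of this is $\SigOneB$-definable in \VNCTwo: the preprocessing list ``$w\in F_v$?'' comes from Lemma \ref{lem:determine-w-in-Fv} (matrix powering), the base-case coefficient computations from Lemmas \ref{lem:variable-free-nc2-eval} and \ref{lem:construct-partial-w-Fv}, and the assembly of $\Psi_{i+1}$ from $\Psi_i$ for each of the logarithmically-many $i$ is an \FACZ\ operation, so the composition is in \FNCTwo.

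\textbf{Main obstacle.} I expect the hard part to be the bookkeeping for the C1/C2 axioms and the disjointness of node-sets: in the ``real'' circuit $[F_v]$ the subcircuits $[\partial t F_v],[F_{t_1}],[F_{t_2}]$ share nodes, whereas $\PCZ$ proof-lines built from $+$ and $\cdot$ (as opposed to $\cplus,\ctimes$) require disjoint copies, so every use of a defining relation like \eqref{eq:hosafa basof} must be accompanied by the appropriate witnesses for C1, C2 (as set up in Section \ref{sec:encoding-PI-proofs}), and one must check these witnesses can be produced by a \SigZB-formula uniformly across all nodes and all levels $i$. The second delicate point is verifying, at the degree level, that the pairs $w,t$ and $t_1,t_2$ arising in the defining sums genuinely satisfy the degree hypotheses \eqref{eq:nizkarti1}, \eqref{eq:nizkarti2} needed to invoke the level-$i$ equations --- this is a purely arithmetic check on $\degubp$ values (using $\degubp(t_1)+\degubp(t_2)=\degubp(t)$, $\degubp(t)>m\ge\degubp(t_i)$, and the ranges defining ${\cal B}_m$), identical to the degree inequalities in \cite{VSB+83, RY08-balancing, HT12}, and I would carry it out exactly as there, merely substituting $\degubp$ for $\deg$ throughout.
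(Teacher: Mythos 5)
Your plan matches the paper's proof essentially step for step: the same induction on $i$ with simultaneous (parallel, $\SigZB$-\textbf{COMP}-based) construction over all nodes and pairs, the base case via Lemma \ref{lem:prove-dv=1-case-for-balancing}, the induction step via the defining relations \eqref{eq:hosafa basof} combined with the level-$i$ equations of $\Psi_i$ after verifying the degree conditions on $t\in{\cal B}_m(F_v)$, and the same size recurrence. The two "obstacles" you flag (C1/C2 witnesses and the $\degubp$ arithmetic, with the paper handling the sum-restriction to ${\cal B}_m(F_{v_e})$ via the vanishing of $[\partial t F_{v_e}]$ when $t\notin F_{v_e}$) are resolved exactly as you anticipate.
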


\begin{proof}
Similar to previous constructions the idea is to construct all parts of the \PCZ-proof \emph{simultaneously} in \VNCTwo. This is done in an analogue manner to the balancing algorithm above. 
\bigskip 

\ind\underline{\textit{Step $i=0 $}}.
We need to devise the proof sequence $ \Psi_0 $.

\para{Part (a): proof of (\ref{eq:Fv+}).}  Let $ \degubp(v)\le 2^0 $. By definition, $\tr{F_{v}}= \sum_{i=1}^n a_i x_i +b$, where $a_{i}$'s are integers and $b$ is a sum of constant integers. Further, by construction  $\tr{F_{v}}$ does not contain product gates and thus $v=v_{1}+v_{2}$, and we need to prove only (\ref{eq:Fv+}). This stems from Lemma \ref{lem:prove-dv=1-case-for-balancing} part (i).



\para{Part (b): proof of (\ref{eq:wFv+}) and (\ref{eq:wFv*}).}
Similarly to part (a) above, this follows from Lemma \ref{lem:prove-dv=1-case-for-balancing} part (ii).
%
%
\FullSpace

Overall, $\Psi_{0}$ will be the union of all the above proofs, so that $\Psi_{0}$ contains all equations (\ref{eq:Fv+}) (for all nodes $ v $ satisfying (\ref{eq:od ehad})), and all equations (\ref{eq:wFv+}) and (\ref{eq:wFv*}) (for all nodes $ v,w $ satisfying (\ref{eq:nizkarti1}) and (\ref{eq:nizkarti2})).  \iddo{} The proof sequence $ \Psi_0 $ has size $\lambda(s,0)= O(s^4) $ and has depth $O(\log s)$.

\bigskip 

\ind\underline{\textit{Step $i+1$}}:
We wish to construct the proof-sequence $ \Psi_{i+1} $.

\para{Part (a): proof of (\ref{eq:Fv+}) and (\ref{eq:Fv*}).}
 Let $ v $ be any node in $ F $ such that
\[
         2^i < \degubp(v) \le 2^{i+1} .
\]

\case 1
Assume that $ v=v_1+v_2 $. We show how to construct the proof of $ [F_{v}]=[F_{v_1}]+[F_{v_2}]$. Let $ m=2^i $. From the construction of $ [\cd] $ we have:
\begin{align}\label{eq:101}
        [F_v]=[F_{v_1+v_2}]  = \sum_{t\in {\cal B}_m(F_v) } [F_{t_1}]\cd[F_{t_2}]\cd[\partial t (F_{v_1+v_2})] \, .
\end{align}
Since $\degubp(v_1)=\degubp(v_2)=\degubp(v)$, we also have
\begin{align}\label{eq: new1}
        [F_{v_e}]  = \sum_{t\in {\cal B}_m(F_{v_e}) } [F_{t_1}]\cd[F_{t_2}]\cd[\partial t (F_{v_e})], \quad \hbox{for } e\in \{1,2\} \, .
\end{align}

If $t\in {\cal B}_m(F_v)  $ then $ \degubp(t)> m = 2^i $. Therefore, for any $t\in {\cal B}_m(F_v)  $, since $ \degubp(v)\le 2^{i+1} $, we have $ \degubp(v)-\degubp(t) <  2^i$ and $ 2\degubp(t)>\degubp(v) $ and $ t\neq v $ (since $ t $ is a product gate). Thus, by construction,  the proof-sequence $ \Psi_i $ contains, for any $t\in {\cal B}_m(F_v)  $, the equations
\[
        [\partial t (F_{v_1+v_2})]  = [\partial t F_{v_1}]+[\partial t F_{v_2}] ,
\]
and we can compute the positions of these proof-lines in the string encoding of  $\Psi_i$  (using some natural encoding).  
Therefore, pointing to these proof-lines in $ \Psi_i $ as premises, we construct a \PCZ-proof that (\ref{eq:101}) equals:
\begin{equation}\label{eq:201}
\begin{split}
   &       \sum_{t\in {\cal B}_m(F_v) } [F_{t_1}]\cd[F_{t_2}]\cd([\partial t F_{v_1}]+[\partial t F_{v_2}])\\
   & =       \sum_{t\in {\cal B}_m(F_v) } [F_{t_1}]\cd[F_{t_2}]\cd[\partial t F_{v_1}]    +
                                               \sum_{t\in {\cal B}_m(F_v) } [F_{t_1}]\cd[F_{t_2}]\cd[\partial t F_{v_2}].
\end{split}
\end{equation}
If $ t\in {\cal B}_m (F_v) $ and $ t\not\in F_{v_1} $ then $ [\partial t F_{v_1}] = 0 $. Similarly, if $ t\in {\cal B}_m (F_v) $ and $ t\not\in F_{v_2} $ then $ [\partial t F_{v_2}] = 0 $. Hence we can prove
\begin{equation}\label{eq: biggerB}
\sum_{t\in {\cal B}_m(F_v) }[\partial t F_{v_e}]= \sum_{t\in {\cal B}_m(F_{v_{e}}) }[\partial t F_{v_e}],\quad\mbox{for $ e=1,2 $}.
\end{equation}
Thus, using (\ref{eq: new1}) we have that  (\ref{eq:201}) equals:
\begin{equation}\label{eq:421}
\begin{split}
           \sum_{t\in {\cal B}_m(F_{v_1}) } [F_{t_1}]\cd[F_{t_2}]\cd[\partial t F_{v_1}]   +
                                               \sum_{t\in {\cal B}_m(F_{v_2}) } [F_{t_1}]\cd[F_{t_2}]\cd[\partial t F_{v_2}] \\  = [F_{v_1}]+[F_{v_2}].
                                          \end{split}
\end{equation}
The above proof of (\ref{eq:421}) from $ \Psi_i $ has size $ O(s^2\cd m(s,d)) $ and depth $ O(r(s,d)) $.

The proof of Case 2 where  $ v=v_1\cd v_2 $, and  the proofs of Part (b) for equations (\ref{eq:wFv+}) and (\ref{eq:wFv*}) are similar to Case 1 above, and are identical to those cases in the proof of Proposition 4.10 in \cite{HT12}; like Case 1, the  only difference is that we construct with an \FNCTwo\ construction all the \PCZ-proofs $\Psi_i$ together, for every $i=0,\dots,\lceil \log d \rceil$, where in $\Psi_{i+1}$ we point to proof-lines that appear in $\Psi_i$ (whose position can be computed using a reasonable encoding scheme for proof-lines). For self containment we put these cases in the appendix Section \ref{sec:balancing-main-lemma-verbatim}. \end{proof}

This concludes the proof of Proposition \ref{prop:lengthy-lemma}, and hence of Lemma \ref{lem:simulation}.

\subsubsection{Proof of Theorem \ref{thm:main-balancing}}\label{sec:proof-of-balancing-theorem}

\begin{proof}[Proof of Theorem \ref{thm:main-balancing}]

\textbf{Part (i)}.  We use the balancing algorithm above and  Lemma \ref{lem:simulation} to construct for every node $v$ in $F$ a (part of) the proof of $\tr{F_v}=F_v$, simultaneously. We can use the balancing algorithm because  $F$ is a sum of syntactic homogeneous circuits with all nodes appearing together with their associated syntactic-degree upper bound $\degubp$ .  

\case 1 For a leaf $u$ we construct the equation $u=u$, which is correct since  $\tr{u}=u$. 
 
\case 2  For $v=v_{1}\circ v_{2}$, where $\circ\in\{+,\cdot\}$, Lemma 
 \ref{lem:simulation} gives $\tr{F_{v}}=\tr{F_{v_{1}}}\circ \tr{F_{v_{2}}}$. We then point to the equations $\tr{F_{v_{i}}}=F_{v_{i}}$, $i\in \{1,2\}$, which
gives a proof of  $\tr{F_{v}}={F_{v_{1}}}\circ {F_{v_{2}}}=F_{v}$.  

The proof has size $\poly(s,d)$. The depth of the proof never exceeds the depth of $F$ and the depth of the proofs of  $\tr{F_{v}}=\tr{F_{v_{1}}}\circ \tr{F_{v_{2}}}$.

\medskip 

\textbf{Part (ii)}. 
\iddo{}
We assumed that  $\pi$ is a \PCZ-proof of $F=G$ of syntactic-degree at most $d$ and size $s$, in which every circuit is a sum of syntactic homogeneous circuits with every node appearing with its $\degubp$ value. Similar to previous constructions, we are going to simultaneously construct a (part of a) \PCZ-proof of $\tr{F_1}=\tr{F_2}$ using  pointers to previous lines (that we can compute in parallel), for every proof-line $F_1=F_2$ in $\pi$. This resembles the proof structure of Theorem \ref{thm:homogenize-proofs}. Like part (i), we can use the balancing algorithm because by assumption each of the circuits $F_1, F_2$ is given to us as a sum of syntactic homogeneous circuits with all nodes appearing together with their associated syntactic-degree upper bound $\degubp$. 

Let $m_0$ and $k_0$ be  such that (\ref{eq:plus}) and (\ref{eq:prod}) have $\PCZ$-proofs of size at most $m_0$ and depth $k_0$, whenever $F_1\cplus F_2$, respectively, $F_1\ctimes F_2$ have size at most $s'$ and syntactic degree at most $d$. 

\smallskip 

\case 1 $F=H$ is an axiom of $\PCZ$.  Then, $\tr{F}=\tr{H}$ has a $\PC$-proof of size $c_1m_0$ and depth $c_2 k_0$, where $c_1, c_2$ are some constants independent of $s',d$. The axiom A1 is immediate and the axiom A10 follows from the fact that $[F]= \widehat F$, for $F=c$, $c\in\Z$.. The rest of the axioms are an application of Lemma \ref{lem:simulation}, as follows. Axioms C1 and C2 are already  the statement of Lemma \ref{lem:simulation}. For the other axioms, take, for example,
\[F_1\cdot (G_1+G_2)= F_{1}\cdot G_1+F_{1}\cdot G_2\, .\]
We are supposed to give a proof of
\[\tr{F_1\cdot (G_1+G_2)}= \tr{F_1\cdot G_1+ F_1\cdot G_2}\, ,\]
with a small  depth.
By Lemma \ref{lem:simulation} we have a $\PCZ$-proof
\[\tr{F_1\cdot (G_1+G_2)}= \tr{F_1}\cdot\tr{ G_1+ G_2} =  \tr{F_1}\cdot (\tr{G_1}+\tr{G_2})=\tr{F_1}\cdot \tr{G_1}+\tr{F_1}\cdot \tr{G_2}\,.\]
Lemma \ref{lem:simulation} gives again:
\[
        \tr{F_1}\cdot \tr{G_1}+\tr{F_1}\cdot \tr{G_2}
        =      \tr{F_1\cdot G_1}+\tr{F_1 \cdot G_2}
        =       \tr{F_1\cdot G_1+ F_1 \cdot G_2}
.\]
Here we applied Lemma \ref{lem:simulation} to circuits of size at most $s'$, and the proof of $\tr{F_1\cdot (G_1+G_2)}= \tr{F_1\cdot G_1+ F\cdot G_2}$ has size at most $c_0m_0$ and  depth at most $c_1k_0$, for some constants $c_0,c_1$.

\case 2 An application of rules R1, R2 translates to an application of R1, R2. For the rules R3 and R4,  it is sufficient to show the following: if $\pi$ uses the rule \[\frac{F_1=F_2\qquad G_1=G_2}{F_1\circ G_1=F_2\circ G_2},\, \circ\in\{\cdot, +\}, \] then there is a proof of $\tr{F_1\circ G_1}=\tr{F_2\circ G_2}$, of size $c_1m_0$ and depth $c_2k_0$, from the equations $\tr{F_1}=\tr{G_1}$ and $\tr{F_2}=\tr{G_2}$. This is again
 an application of Lemma \ref{lem:simulation}.
\medskip 

Altogether, we obtain a proof of $\tr{F}=\tr{G}$ of size at most $c_1s'm_0$ and depth $c_2k_0$.
\end{proof}
\bigskip

As a corollary of Theorem \ref{thm:main-balancing} and Corollary \ref{lem:scalars-substitution-keeps-poly-deg} we finally obtain the balanced PI-proof of the determinant identities in \VNCTwo. Denote by \cdetbl\ the circuit obtained by applying the balancing algorithm on $\taydetsharpprime(X)$. That is,
\begin{equation}\label{eq:def-of-Det-balanced}
\cdetbl(X):=[\taydetsharpprime(X)].
\end{equation}
\begin{corollary}
[in \VNCTwo]\label{cor:exist-PC-proof-of-det-tay-sharp-identities-fin}
Given a positive natural number $n$ there exists a  $O(\log^2 n)$-depth $\PCZ$-proof of the determinant identities \eqref{eq:1} and \eqref{eq:2}, where the determinant in \eqref{eq:1} and \eqref{eq:2} is written as the division free circuit $\cdetbl(A)$, for $A$, the \nbyn\ symbolic matrix $X$ or $Y$, or their product $XY$, or a symbolic triangular matrix $Z$.
\end{corollary}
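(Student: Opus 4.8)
The plan is to chain together the results accumulated throughout the paper. By Lemma~\ref{lem:scalars-substitution-keeps-poly-deg} (working in \VNCTwo), for a given positive natural number $n$ there exists a $\PCZ$-proof $\pi$ of the determinant identities \eqref{eq:1} and \eqref{eq:2}, where the determinant in these identities is written as the division-free circuit $\taydetsharpprime(A)$ for $A$ one of $X$, $Y$, $XY$, or a symbolic triangular matrix $Z$, and moreover in $\pi$ every circuit is a sum of syntactic-homogeneous circuits in which every node $u$ appears with its syntactic-degree upper bound $\degubp(u)$, with $\degubp(u)=O(n)$. This is exactly the form of input required by Theorem~\ref{thm:main-balancing} part~\ref{bal-dva}. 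Applying that theorem to $\pi$ yields, for each of the equations in $\pi$ of the form $F=G$, a $\PCZ$-proof of $\tr{F}=\tr{G}$ of size $\poly(s,d)$ and depth $O(\log s\cd\log d+\log^2 d)$; since $s=\poly(n)$ and $d=\degubp(\taydetsharpprime)=O(n)$, this depth is $O(\log^2 n)$.

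The remaining point is simply to identify $\tr{\taydetsharpprime(A)}$ with $\cdetbl(A)$ as defined in \eqref{eq:def-of-Det-balanced}. First I would apply Theorem~\ref{thm:main-balancing} part~\ref{bal-dva} to the specific proof from Lemma~\ref{lem:scalars-substitution-keeps-poly-deg} that proves $\taydetsharpprime(X)\cd\taydetsharpprime(Y)=\taydetsharpprime(XY)$; this gives a depth-$O(\log^2 n)$ $\PCZ$-proof of $[\taydetsharpprime(X)]\cd[\taydetsharpprime(Y)]=[\taydetsharpprime(XY)]$, which by definition is $\cdetbl(X)\cd\cdetbl(Y)=\cdetbl(XY)$, establishing \eqref{eq:1} for $\cdetbl$. (Strictly, $\tr{\cd}$ distributes over the product node at the root by axioms C1, C2 together with Lemma~\ref{lem:simulation}, so that $\tr{\taydetsharpprime(X)\cd\taydetsharpprime(Y)}=\tr{\taydetsharpprime(X)}\cd\tr{\taydetsharpprime(Y)}=\cdetbl(X)\cd\cdetbl(Y)$ has a short $\PCZ$-proof; I would spell this out using the ``Notation'' paragraph preceding Lemma~\ref{lem:main simulation lemma}.) The same argument applied to the proof of \eqref{eq:2} with a symbolic triangular matrix $Z$ gives a depth-$O(\log^2 n)$ $\PCZ$-proof of $\cdetbl(Z)=z_{11}\cdots z_{nn}$, after balancing the right-hand side monomial as well — note $z_{11}\cdots z_{nn}$ already has logarithmic depth (or can be made so cheaply), so this poses no difficulty.

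Finally, to conclude that the whole construction is carried out by a \SigOneB-definable function of \VNCTwo, I would observe that Lemma~\ref{lem:scalars-substitution-keeps-poly-deg} provides a $\SigOneB$-definable function in \VNCTwo\ producing $\pi$, and Theorem~\ref{thm:main-balancing} part~\ref{bal-dva} provides a $\SigOneB$-definable function in \VNCTwo\ transforming $\pi$ into the balanced proof; composing these (both are $\FNCTwo$ by Theorem~\ref{thm:definable_func_of_vnctwo}) gives the desired function. Since each proof-line in the output is an equation between depth-$O(\log^2 n)$ division-free algebraic circuits of polynomial syntactic-degree, and the transformations are all carried out inside \VNCTwo, the theory proves this output is a legitimate $\PCZ$-proof.

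I do not expect any genuine obstacle here: the corollary is essentially a packaging statement that collects Lemma~\ref{lem:scalars-substitution-keeps-poly-deg} and Theorem~\ref{thm:main-balancing}. The only point requiring a little care is bookkeeping the depth and size bounds — checking that $d=\degubp=O(n)$ so that $O(\log s\cd\log d+\log^2 d)=O(\log^2 n)$ — and confirming that $\tr{\cd}$ commutes with the outermost gate of the identities being proved (so that the statement is literally about $\cdetbl(A)$ and not merely about $\tr{\,\cdot\,}$ applied to a product circuit). Both of these are routine given the machinery already developed.
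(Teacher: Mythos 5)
Your proposal is correct and matches the paper's own (implicit) argument: the paper derives this corollary directly by combining Lemma~\ref{lem:scalars-substitution-keeps-poly-deg} with Theorem~\ref{thm:main-balancing} and defining $\cdetbl(X):=[\taydetsharpprime(X)]$ exactly as you do. Your additional bookkeeping on the depth parameters and on $\tr{\cdot}$ commuting with the outermost gate via Lemma~\ref{lem:simulation} is consistent with the machinery the paper relies on.
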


\section{Applying the Reflection Principle and Wrapping Up}\label{sec:applying-the-reflection}
Here we conclude the proofs of the determinant identities in the theory by proving and applying the reflection principle for \PCZ-proofs in \VNCTwo\ (Theorem \ref{thm:ovr:sound-of-arithmetic-proofs}). 

\para{The Determinant Function \Det\ in \VNCTwo.}
As presented in the introduction Section \ref{sec:carry-proof-in-theory}, given an \nbyn\ integer matrix $A$, the determinant function $\Det(A)$ in \VNCTwo \ is defined to first construct an $O(\lgsq)$-depth algebraic circuit  for the determinant polynomial of a symbolic \nbyn\ matrix, and then evaluate the circuit under $A$, using the fact that the evaluation of $O(\lgsq)$-depth algebraic circuits over the integers is definable in the theory as shown below.

Formally, the balanced circuit for the determinant constructed by \Det\ is the circuit $\cdetbl(X)$ from \eqref{eq:def-of-Det-balanced}. This construction was shown above (when constructing the PI-proofs).
 


\subsection{Algebraic \NCTwo-Circuit Value Problem}\label{sec:algebraic_circuit_value_problem}
\iddo{}

We  show that there is an \FNCTwo\  algorithm that receives an \textit{algebraic} circuit over \Z\ with $n$ input variables,  size polynomial in $n$ and  
depth $O(\log ^2 n)$, together with an assignment of integers to the variables written as binary strings, and outputs the value of the circuit under the assignment.

The algorithm proceeds as follows: \textbf{i}) convert the input balanced algebraic circuit into a balanced Boolean circuit computing the same polynomial, where integers are written as binary strings; \textbf{ii}) layer the circuit; \textbf{iii}) convert the layered circuit into a monotone circuit; \textbf{iv}) evaluate the balanced monotone Boolean circuit using the evaluation function for such circuits which is \SigOneB-definable in \VNCTwo.



\para{Step (i): from balanced algebraic circuits to balanced Boolean circuits}.
We show how to transform a polynomial-size  $O(\log ^2 n)$-depth algebraic  circuit into a polynomial-size  $O(\log ^2 n)$-depth Boolean circuit with an \FNCTwo\ algorithm. We use the following two facts:

\begin{fact}[By Vinay \cite{Vin91}]\label{fac:vinay}
Given an algebraic circuit of $\poly(n)$-size and $\poly(n)$-degree,  our algorithm (and the original \cite{VSB+83} algorithm) that  balances the circuit into $O(\log ^2 n)$-depth, in fact balances (with straightforward modifications) the circuit into $O(\log n)$-depth \emph{in which the plus gates have unbounded fan-in} (and product remains a binary operation). 
\end{fact}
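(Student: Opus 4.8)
The statement to prove is Fact~\ref{fac:vinay}, attributed to Vinay \cite{Vin91}: that the Valiant--Skyum--Berkowitz--Rackoff balancing algorithm (and our \FNCTwo\ variant of it) can be modified so that it outputs a circuit of depth $O(\log n)$ rather than $O(\log^2 n)$, at the cost of allowing \emph{unbounded fan-in plus gates} while keeping product gates binary. Since this is cited as a known fact from the literature, the ``proof'' here should really be a sketch that reconciles the cited result with our own formalism (in particular with the use of $\degubp$ in place of $\deg$), pointing out which steps of our balancing algorithm already produce the requisite structure and which need a small modification.

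The plan is to revisit the formal description of the balancing algorithm in Section~\ref{sec:formal-desc-of-balancing-algorithm} and examine where the $O(\log^2 d)$ depth actually comes from. Recall that the algorithm proceeds in $\lceil \log d\rceil$ stages, and in stage~$i$ it builds, for every node $v$ with $2^{i-1}<\degubp(v)\le 2^i$, the node $[F_v]:=\sum_{t\in {\cal B}_m(F_v),\, t=t_1\cd t_2}[\partial t f_v]\cd[F_{t_1}]\cd[F_{t_2}]$, and similarly for the partial-derivative nodes $[\partial w f_v]$. The depth contributed by one stage is $O(\log s)$: a $\log$ from the unbounded sum over ${\cal B}_m(F_v)$ if one writes it as a binary tree of $+$ gates, plus a constant from the three-fold product. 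First I would observe that if we \emph{leave the sum $\sum_{t\in{\cal B}_m(F_v)}$ as a single unbounded fan-in plus gate} rather than expanding it into a binary tree, then each stage contributes only $O(1)$ depth (the binary product of three already-constructed nodes), and the whole algorithm produces a circuit of depth $O(\log d)=O(\log n)$ when $d=\poly(n)$, with plus gates of unbounded fan-in and product gates binary. The base case (Step $i=0$) also needs attention: there $[F_v]$ is a linear form $a_1x_1+\dots+a_nx_n+\sum_{c\in J}b_cc$, which by Lemma~\ref{lem:variable-free-nc2-eval} we presently write as a depth-$O(\log n)$ binary tree of $+$ gates; again, keeping it as one unbounded $+$ gate makes the base-case depth $O(1)$, and the matrix-powering used in Lemma~\ref{lem:variable-free-nc2-eval} to compute the coefficients is unchanged (it is a preprocessing computation, not part of the output circuit's depth). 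The $\partial w f_v$ nodes in the base case are linear forms too, so the same remark applies.

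Next I would check that nothing in the \emph{correctness} argument breaks. The identities proved in Lemma~\ref{lem:simulation} and Proposition~\ref{prop:lengthy-lemma} are all of the form ``$[F_v]$ equals such-and-such a sum of products of previously constructed nodes'', and these equalities hold at the level of polynomials regardless of whether the sum is realized as a single unbounded $+$ gate or a tree; in the PI-proof system \PCZ\ this is just a matter of using the axioms A2, A3 (and C1) to regroup the summands, which costs $O(\log d)$ depth \emph{once}, not per stage. So the proof-theoretic counterpart (Theorem~\ref{thm:main-balancing}) goes through with $O(\log d)$ replacing $O(\log^2 d + \log d\cd\log s)$ in the depth bound for the modified circuit, provided we are content with unbounded $+$ gates; the size bound $\poly(s,d)$ is unaffected. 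One subtlety worth spelling out: our algorithm uses $\degubp$ (treating scalars as degree-$1$ nodes) and, by Lemma~\ref{lem:scalars-substitution-keeps-poly-deg}, $\degubp$ of every relevant node is $O(n)$, so $d=O(n)$ and hence $\log d=O(\log n)$, which is exactly what Fact~\ref{fac:vinay} asserts.

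The main obstacle is not a mathematical difficulty but a bookkeeping one: one must be careful that ``unbounded fan-in plus'' is compatible with the later steps of Section~\ref{sec:algebraic_circuit_value_problem} (conversion to a Boolean circuit, layering, making monotone). In fact this is \emph{why} the modification is wanted — the subsequent Boolean simulation of unbounded-$+$, binary-$\times$ circuits of depth $O(\log n)$ lands in \NCTwo, whereas a naive depth-$O(\log^2 n)$ algebraic circuit with binary gates would also work but the $O(\log n)$-depth version is cleaner. So the sketch I would write is: (1) inspect each stage of the balancing algorithm and note the depth it contributes, isolating the $\log$ factor as coming solely from expanding $\sum_{t\in{\cal B}_m(\cdot)}$; (2) redefine the output to keep that sum as one unbounded-fan-in $+$ gate, and likewise for the linear-form base cases from Lemma~\ref{lem:variable-free-nc2-eval} and Lemma~\ref{lem:construct-partial-w-Fv}; (3) conclude depth $O(\log d)=O(\log n)$ and size $\poly(s,d)$, with correctness inherited verbatim from Lemma~\ref{lem:simulation} since polynomial identities are insensitive to the association of $+$; (4) remark that this is precisely the form of the VSBR algorithm analyzed by Vinay \cite{Vin91}, so we may invoke it as stated. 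The only genuinely new content relative to \cite{Vin91} is the verification that the \FNCTwo\ implementation and the use of $\degubp$ do not interfere, and both of these are immediate from the material already developed.
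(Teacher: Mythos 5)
The paper does not prove this Fact at all: it is stated as a black-box citation to Vinay \cite{Vin91} (and implicitly to the structure of the VSBR algorithm), and is then used only to justify Step (i) of the algebraic-to-Boolean conversion in Section \ref{sec:algebraic_circuit_value_problem}. So there is no in-paper argument to compare yours against. That said, your sketch is a correct reconstruction of why the Fact holds, and it is the standard one: the only source of the $\log s$ depth per stage in the balancing algorithm of Section \ref{sec:formal-desc-of-balancing-algorithm} is the expansion of $\sum_{t\in\mathcal{B}_m(\cdot)}$ (and of the linear forms in the base case) into binary trees of $+$ gates; leaving these as single unbounded fan-in $+$ gates makes each of the $\lceil\log d\rceil$ stages contribute $O(1)$ depth (two binary product gates plus one sum gate), giving total depth $O(\log d)=O(\log n)$, with the coefficient computations of Lemmas \ref{lem:variable-free-nc2-eval} and \ref{lem:construct-partial-w-Fv} remaining preprocessing rather than circuit depth. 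Your observations that correctness is unaffected (the polynomial computed does not depend on the association of $+$) and that this is exactly the form needed for the subsequent Boolean simulation to land in depth $O(\log^2 n)$ are also right. The one caveat is scope: the Fact as stated concerns only the circuit transformation, so your further remarks about carrying the $O(\log d)$ bound into the proof-theoretic counterpart (Theorem \ref{thm:main-balancing}) are extra and not needed here, though not wrong.
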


\newcommand{\stradd}{\ensuremath{\mathsf{StringAdd}}}
\newcommand{\strmult}{\ensuremath{\mathsf{StringMult}}}
\newcommand{\straddf}{\ensuremath{\mathsf{StringAdd}_{fla}}}

\begin{fact}\label{fac:volmer}
The Boolean (multi-valued) function \stradd\ computing the addition of two integers written in binary is in \FO-uniform \FACZ\ (see \cite[p.~85]{CN10}). The Boolean (multi-valued) function \strmult\ computing the product of two integers written in binary is in \FO-uniform \FTCZ\ (see \cite[IX.3.6]{CN10}).
\end{fact}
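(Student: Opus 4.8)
The plan is to recall the standard circuit constructions behind the two arithmetic primitives and check that each lands in the claimed \FO-uniform class; both statements are classical and the cited pages of \cite{CN10} contain the details.

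For \stradd, the first step is to recall the carry-lookahead identity. Given non-negative integers $a=\sum_{i<n}a_i2^i$ and $b=\sum_{i<n}b_i2^i$ presented by their bit-strings, set the \emph{generate} bits $g_i:=a_i\wedge b_i$ and the \emph{propagate} bits $p_i:=a_i\vee b_i$, so that the carry into position $k$ is
\[
c_k=\bigvee_{j<k}\Big(g_j\wedge\bigwedge_{j<i<k}p_i\Big)
\]
and the $k$-th bit of $a+b$ equals $(a_k\oplus b_k)\oplus c_k$ (with $a_n=b_n=0$, so the top bit is just $c_n$). I would then observe that $c_k$ and the output bits are each defined by a \SigZB-formula in the input bit-strings — all quantifiers range over bit positions, which are bounded by the input lengths — so that \stradd\ is bit-definable by a \SigZB-formula and hence is an \FO-uniform \FACZ\ function by Theorem \ref{thm:definable_func_of_vnctwo}; equivalently, it is the polynomial-size, constant-depth, unbounded fan-in Boolean circuit displayed above. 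The last step is a constant-depth case analysis to accommodate the sign convention fixed earlier (lsb $=$ sign, remaining bits $=$ absolute value), reducing signed addition to unsigned addition and subtraction. This reproduces \cite[p.~85]{CN10}.

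For \strmult\ I would reduce multiplication to iterated addition. Writing $a\cdot b=\sum_{i,j<n}a_ib_j2^{i+j}$ and grouping terms by weight, put $w_\ell:=\sum_{i+j=\ell}(a_i\wedge b_j)$ for $0\le\ell\le 2n-2$; each $w_\ell$ lies in $[0,n]$ and is the output of a constant-depth counting (threshold) circuit, hence computable in \FO-uniform \FTCZ. Then $a\cdot b=\sum_\ell w_\ell 2^\ell$ is an iterated sum of polynomially many integers of polynomial bit-length, and I would invoke the constant-depth threshold construction for iterated addition — precisely the content of \cite[IX.3.6]{CN10} — to conclude that all of \strmult\ is an \FO-uniform \FTCZ\ function, using that the composition of a constant number of \FTCZ-steps is again \FTCZ.

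The only genuinely non-routine ingredient is the iterated-addition bound: two-operand addition is in \FACZ, but the sum of $n$ numbers is not (it would compute parity), so one must use the threshold-circuit construction of \cite[IX.3.6]{CN10} rather than simply iterating carry-lookahead. Everything else — the carry formula, the weight decomposition, the sign bookkeeping — is routine, and in fact for our downstream use it would also suffice to appeal directly to the cited results in \cite{CN10}, since we only ever need \stradd\ and \strmult\ as building blocks of \FNCTwo\ functions and $\FACZ\subseteq\FTCZ\subseteq\FNCTwo$.
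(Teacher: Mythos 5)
Your proof is correct and matches what the paper does: the paper states this as a Fact justified by citation to \cite{CN10}, and your carry-lookahead formula for \stradd\ is literally the same $\SigZB$-definition the paper itself writes out later as Definition \ref{def:straddf}, while your reduction of \strmult\ to counting plus iterated addition in \FTCZ\ is the standard argument behind the cited \cite[IX.3.6]{CN10}. No gaps.
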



From Fact \ref{fac:volmer} we conclude that in \VNCTwo\ we can construct  a constant depth fan-in two circuit for addition of two binary integers. Since a plus gate of unbounded fan-in can be simulated by a polynomial-size and $O(\log n)$-depth circuit of plus gates only, we get that in \VNCTwo\ we can construct a polynomial-size fan-in two Boolean circuit of $O(\log n)$-depth for computing iterated addition of binary integers.

\smallskip 

Using Fact \ref{fac:vinay} above, given an \olgsq-depth  algebraic circuit we have the following \SigOneB-definable function in \VNCTwo\ for constructing the corresponding polynomial-size \olgsq-depth (fan-in two) Boolean circuit:
\begin{enumerate}
\item Every unbounded fan-in plus gate is replaced by a polynomial-size fan-in two and depth $O(\log n)$ circuit computing the corresponding iterated sum of integers;
\item Every fan-in two product gate is replaced by a polynomial-size and depth $O(\log n)$ circuit computing the corresponding product of two integers.
\end{enumerate}

The resulting Boolean circuit is thus an $O(\log^2n)$-depth circuit (with a fan-in two) and polynomial-size in $n$. This Boolean circuit is encoded in the same way as algebraic circuits are encoded; namely, via the  encoding scheme in Section \ref{sec:subsec-for-encoding-circuits} (with the obvious modifications: instead of designating $+,\cd$ we designate $\land,\lor,\neg$).  
 \medskip



\para{Step (ii): layering Boolean circuits}.
For the evaluation of Boolean circuits in the theory we need to have circuits that are layered, namely in which every node belongs to a single layer $i$, and nodes in layer $i$ may only go to nodes in layer $i+1$. We can convert within \FNCTwo\ any \olgsq-depth  Boolean circuit from Step (i) above into a layered Boolean circuit, as follows.

\newcommand{\intbin}{{\rm Int}}


%

\medskip 

\hrule\medskip
\noindent\textsc{\FNCTwo-algorithm for layering  balanced Boolean circuits}
\medskip
\hrule
\medskip 

\begin{description}[font=\rm]
\item[\textbf{Input}:]
 A Boolean  circuit $F$ of depth $c\lgsq $, for some constant $c$ (encoded as in Section \ref{sec:subsec-for-encoding-circuits}).

\item[\textbf{Output}:]
A layered Boolean circuit $F'$ computing the same function as $F$. 
\end{description}

\iddo{}

\noindent\textbf{Algorithm}
\begin{enumerate}
\item Let $A$ be the 0-1 adjacency matrices of $F$  where the dimensions of $A$ equal $s$, the number of nodes in $F$ and the $(u,w)$th in $A$, denoted $A[u,w]$ is 1 iff there is a directed edge from node $u$ to node $w$ in $F$. Using the \SigOneB-definable  in \VNCTwo\ string function ${\rm PowSeq}_\Z(n,s,A)$, that receives an \nbyn\ integer matrix $A$ and outputs a string coding the sequence $(A,A^2,\dots,A^s)$ of powers of $A$, we find the shortest length of a directed path from a leaf in $F$ to each of the internal nodes in $F$: the shortest directed length of a path from a leaf $u$ to a node $v$ is the minimal $i$ such that $A^i[u,v]\neq 0$.

\item Let $F'$ be the circuit $F$ in which for every node $u\in V$, for $V$ the set of nodes of $F$, change $u$ to $(u,\ell)$, where $\ell$ is the shortest directed length of a path from a leaf in $F$ to $u$. Hence, $\ell$ will serve as \emph{the layer} of $(u,\ell)$ in $F'$.
\item We now add dummy edges and nodes ``$1\cd u$'' to $F'$, to force every node $u$ to have edges directed only to subsequent layers. Specifically, we scan the nodes of $F'$ from layer 0 to the top layer $c\lgsq$, and for each node $(u,\ell)$ that is connected with a directed edge  $e$ to node $(v,j)$, for $j>\ell+1$, we discard $e$ and add two new nodes and three new edges as follows. Assuming that $(v,j)=(u,\ell)\circ w$, for $\circ\in\{+,\cd\}$, let $(v,j)=((u,\ell)\cd 1)\circ w$, where the new node 1 is on layer $\ell$, the new node $\cd$ is on layer $\ell+1$ and two  new edges are added from $(u,\ell)$ to $\cd$ and from $1$ to $\cd$, and a third edge is added  from $\cd$ to $(v,j)$. After this the node $(u,\ell)$ has a directed edge only to nodes in layer $\ell+1$. Doing this sequentially for all $c\lgsq $ layers we end up with a layered circuit $F'$.   
\end{enumerate}



\smallskip \hrule

\para{Step (iii): convert  layered circuits into a monotone circuits}.
Here we need to apply sequentially De Morgan rules, from top layer to bottom layer, until all negation in the circuit are in the input level. There is no need to add new layers, since the De Morgan rules preserve the number of layers: $\neg (A\land B) \to \neg A \lor \neg B$, $\neg\neg A \to \textsf{true} \land (\textsf{true}\land A)$, $\neg (A\lor B) \to \neg A \land \neg B$.

For balanced circuits this is done in \FNCTwo\  precisely the same way as Part 3 in the algorithm in Step (ii) above, only that we start from the top layer to layer $0$.

\para{Step (iv): evaluation of balanced monotone Boolean circuits}. We define the function $\eac(F,A)$ that receives the string variable $F$ encoding an algebraic circuit over the integers and an assignment of integers to the variables of $F$ written as a two-dimensional array $A$, and outputs the binary string representing the value of the algebraic circuit encoded by $F$ under $A$. 

Let us denote by \mbeac\ the \SigOneB-definable in \VNCTwo\ function that evaluates a layered and monotone Boolean circuit of  depth \olgsq\ as shown in \eqref{eq:LMCV} (Section \ref{sec:vnctwo}).  By Steps (i) to (iii) and using \mbeac\ we  conclude that \eac\ is \SigOneB-definable string function in \VNCTwo. Note that the input Boolean variables are both the  binary strings representing the integers input $A$ and the negation of these binary strings (we need their negation because this is the input to the \emph{monotone} circuit.)

\subsection{Proving the Reflection Principle for \PCZ}\label{sec:proving-the-reflection-principle-for-PCZ}

We shall prove the following reflection principle for \PCZ:

\begin{theorem}(Theorem \ref{thm:ovr:sound-of-arithmetic-proofs} restated; In \VNCTwo)\label{thm:sound-of-arithmetic-proofs}
Let $\pi$ be an $O(\log^2 n)$-depth $\PCZ$-proof of the circuit equation $F=G$. Then $F=G$ is  true in \Z, in the sense that $\forall A\in\Z^n\left(\eac(F,A)=\eac(G,A)\right)$. \end{theorem}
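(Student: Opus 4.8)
The plan is to prove Theorem~\ref{thm:sound-of-arithmetic-proofs} by induction on the length of the proof $\pi$, showing that every proof-line of $\pi$ is ``semantically valid'' in the sense that the two algebraic circuits appearing on its two sides evaluate to the same integer under every integer assignment $A$. Concretely, for a $\PCZ$-proof $\pi = (F_1 = G_1, \dots, F_k = G_k)$ of depth $O(\log^2 n)$, I would define the $\SigZB$-formula (or $\SigOneB$-formula, using $\eac$) $\Phi(j)$ stating ``$j \le k \rightarrow \forall A \in \Z^n\,(\eac(F_j, A) = \eac(G_j, A))$''. Note that since $\pi$ has depth $O(\log^2 n)$ every circuit in it has depth $O(\log^2 n)$, so $\eac(F_j, A)$ and $\eac(G_j, A)$ are well-defined by the $\FNCTwo$ evaluation procedure of Section~\ref{sec:algebraic_circuit_value_problem}; the quantifier $\forall A$ is over integer matrices presented as bit-strings of polynomially bounded length, so $\Phi$ is a $\SigOneB$-formula (or can be massaged into one) and number induction up to $k$ on $\Phi$ is available by Proposition~\ref{prop:number-induction} (after the usual transformation from string-bounded to number induction, which holds in \VNCTwo). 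The conclusion $F = G$ appears as the last line $F_k = G_k$, so $\Phi(k)$ gives the theorem.

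The core of the argument is the induction step: a soundness check for each axiom and each rule. First I would record the one structural fact that makes all of this go through, namely that $\eac$ is compositional: for circuits the evaluation function respects the circuit structure, so $\eac(F \cplus G, A) = \eac(F,A) + \eac(G,A)$ and $\eac(F \ctimes G, A) = \eac(F,A) \cdot \eac(G,A)$, where the right-hand operations are integer addition and multiplication on bit-strings (\stradd\ and \strmult, Fact~\ref{fac:volmer}). This should be provable in \VNCTwo\ by inspecting how the algorithm of Section~\ref{sec:algebraic_circuit_value_problem} processes the root gate — the key point being that the layering, monotonization, and monotone-circuit evaluation steps all commute with taking a subcircuit rooted at an input to the output gate, so evaluating $F\cplus G$ amounts to evaluating $F$, evaluating $G$, and adding. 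Granting compositionality, each case is a routine use of the commutative-ring identities for the integers (which \VNCTwo\ proves, cf.~the basic arithmetic available in \VZ): for A2 ($F+G=G+F$) we get $\eac(F,A)+\eac(G,A) = \eac(G,A)+\eac(F,A)$ by commutativity of integer addition; for A6 (distributivity) we use distributivity of integer multiplication over addition; A7--A9 use the ring-unit laws; A10 is true by hypothesis in \Z; A1 is trivial; C1, C2 are immediate from compositionality together with the witnesses for correctness of C1, C2 (Section~\ref{sec:encoding-PI-proofs}) which guarantee that $F\cplus G$ and $F + G$ denote circuits computing the same value. For the rules: R1 (symmetry) and R2 (transitivity) are immediate from symmetry and transitivity of equality of integers; R3 and R4 follow from compositionality — if $\eac(F_i,A)=\eac(G_i,A)$ for $i=1,2$ then $\eac(F_1 + F_2, A) = \eac(F_1,A)+\eac(F_2,A) = \eac(G_1,A)+\eac(G_2,A) = \eac(G_1+G_2,A)$, and similarly for $\cdot$. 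Throughout one must be slightly careful that all intermediate circuits still have depth $O(\log^2 n)$ so that $\eac$ applies; but the depth of every circuit appearing in a proof-line of $\pi$ is bounded by the depth of $\pi$, so this is automatic.

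The main obstacle I expect is proving the compositionality of $\eac$ inside \VNCTwo\ with enough care that it survives the three preprocessing steps (Boolean conversion, layering, monotonization) of Section~\ref{sec:algebraic_circuit_value_problem}. The subtlety is that these transformations rearrange the circuit globally — in particular layering inserts dummy ``$1\cdot u$'' gates and monotonization pushes negations down via De Morgan — so one cannot simply say ``the subcircuit rooted at the left child of the output computes $\eac(F,A)$'' without an argument that these transformations are, up to the inserted dummies and De Morgan duplications, functorial with respect to the two top subcircuits. The cleanest way around this is probably not to track the transformations gate-by-gate, but instead to prove directly by $\SigOneB$-induction on circuit structure (available since the circuits are $O(\log^2 n)$-depth, so structural induction reduces to number induction on the depth) a lemma of the form: ``for every node $v$ of a balanced algebraic circuit $C$ and every integer assignment $A$, $\eac$ applied to the subcircuit $C_v$ returns the integer obtained by, recursively, adding or multiplying the values returned on the children of $v$.'' Once that lemma is in hand — essentially a correctness statement for the Boolean-circuit-value evaluation against the intended recursive semantics of algebraic circuits — the rest of the proof of Theorem~\ref{thm:sound-of-arithmetic-proofs} is the straightforward case analysis described above. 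So the bulk of the work, and the place to be most careful, is establishing that recursive-semantics lemma for $\eac$ in \VNCTwo; everything downstream is bookkeeping with the ring axioms.
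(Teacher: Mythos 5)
Your proposal is correct and follows essentially the same route as the paper: number induction on proof length, a case analysis establishing soundness of each axiom and rule (the paper's Lemma \ref{lem:sound-of-PCZ-rules}), with the real work being the compositionality of $\eac$ with respect to $+$ and $\cdot$, which the paper proves by decomposing the evaluation string of $\stradd(F'_1,F'_2)$ into the evaluation strings of $F'_1,F'_2$ plus the top $\stradd$ layers and then matching the circuit for $\stradd$ against the $\SigZB$-formula $\straddf$ (Claim \ref{claim:circuit-and-FO-formula-equiv}). Your ``recursive-semantics lemma'' is exactly the paper's version of this, just stated for all nodes rather than localized to the root gate.
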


\begin{proof}
The proof proceeds by the \emph{number }induction  (see Proposition \ref{prop:number-induction}) on the number of proof lines in $\pi$, using Lemma \ref{lem:sound-of-PCZ-rules} below. 

Since the evaluation function \eac\ is \SigOneB-definable in \VNCTwo\ we can use this function in the number induction axiom (see Section \ref{sec:introducing-new-definable-functions-in-VZ}).  
Speciffically, consider the \SigZB-formula $Q(n):=\forall i\le n~(\eac(left(\pi^{[i]}),A)=\eac(right(\pi^{[i]}),A))$, where $left(\pi^{[i]})$ and  $right(\pi^{[i]})$ are the left (resp.~right) hand side circuit in the $i$th proof-line in $\pi$. Then the induction states that assuming the first line is true under an assignment $A$, namely, $Q(0)$, and if $Q(n)\to Q(n+1)$ is true, namely if all proof-lines $\le n$ are true under an assignment $A$, then also the $(n+1)$th line is true under $A$---then we finish the argument since we end up with $\forall n\le length(\pi)(Q(n))$. \iddo{}


It thus remains to prove each of the following cases: 1) Axioms of \PCZ. We  show that the evaluation of \PCZ\ axioms under integer assignments is universally true: $\forall \alpha \in \Z^n(F(\alpha)=G(\alpha))$, when $F=G$ is an axiom. For example, $F+0=F$ holds for every integer assignment to $F$; 2) The rules of \PCZ\ are sound under integer assignments. These two cases are proved in Lemma \ref{lem:sound-of-PCZ-rules}.
\end{proof}


%
%


\begin{lemma}(in \VNCTwo)\label{lem:sound-of-PCZ-rules} 
(i) Let $F_1,F_2,G_1,G_2$ be \olgsq-depth circuits, and $A$ an assignment of integers to their input variables. If $\eac(F_1,A)=\eac(G_1,A)$ and $\eac(F_2,A)=\eac(G_2,A)$ then $\eac(F_1\circ F_2,A) = \eac(G_1 \circ G_2,A)$, for $\circ\in\{+,\times\}$.
(ii) Let $F,G$ be \olgsq-depth circuits, $F=G $  an axiom of $\PCZ$ and $A$ an assignment of integers to the input variables of $F,G$. Then, $\eac(F,A)=\eac(G,A)$.
\end{lemma}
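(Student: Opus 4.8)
The idea is to reduce both parts to a single \emph{compositionality} property of the evaluation function $\eac$ (which is $\SigOneB$-definable in \VNCTwo\ by Section \ref{sec:algebraic_circuit_value_problem}, so that we may freely reason about its values inside the theory), and then to a short list of elementary laws of binary integer arithmetic. The compositionality property I would establish first is the following: in \VNCTwo, for every $\olgsq$-depth algebraic circuit $D$ over \Z\ and every integer assignment $A$ to its variables, (a) if $D$ is a single leaf labelled by a variable $x_i$ then $\eac(D,A)=A(x_i)$, and if it is labelled by a scalar $c\in\Z$ then $\eac(D,A)$ is the binary string coding $c$; and (b) if the output gate of $D$ is $\circ\in\{+,\times\}$ with the two incoming sub-circuits $D_1,D_2$, then
\[
\eac(D,A)=\mathsf{Op}_\circ\!\big(\eac(D_1,A),\,\eac(D_2,A)\big),
\]
where $\mathsf{Op}_+$ denotes $\mathsf{StringAdd}$ and $\mathsf{Op}_\times$ denotes $\mathsf{StringMult}$. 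This is obtained by threading a structural invariant through the four stages defining $\eac$ in Section \ref{sec:algebraic_circuit_value_problem}: Step (i) replaces the top $\circ$ gate by the Boolean sub-circuit for $\mathsf{StringAdd}$ (resp.\ $\mathsf{StringMult}$) fed by the Boolean circuits of $D_1$ and $D_2$; Steps (ii) and (iii) (layering and monotonisation) only insert dummy $1\cdot(\cdot)$ gates and apply De~Morgan rewrites, hence preserve the value computed at the node corresponding to the output of each $D_j$; and Step (iv) applies $\mbeac$, whose correctness is exactly the semantics captured by $\delta_{LMCV}$ in \eqref{eq:LMCV}. Since Boolean-circuit evaluation assigns to every gate a value determined bottom-up by the values of its inputs, the value at the output of $D$ is $\mathsf{Op}_\circ$ applied to the values at the outputs of $D_1,D_2$. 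The same determinism yields the auxiliary remark that the computed value of a circuit is unchanged when some of its sub-nodes are identified (shared) rather than duplicated; this is needed for the axioms C1, C2.

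\textbf{Part (i).} By the encoding convention for PI-proofs (Section \ref{sec:encoding-PI-proofs}), the circuits $F_1,F_2,G_1,G_2$ occur with identical node-strings in the premises $F_1=G_1$, $F_2=G_2$ and in the conclusion $F_1\circ F_2=G_1\circ G_2$. Hence, by compositionality,
\[
\eac(F_1\circ F_2,A)=\mathsf{Op}_\circ(\eac(F_1,A),\eac(F_2,A))=\mathsf{Op}_\circ(\eac(G_1,A),\eac(G_2,A))=\eac(G_1\circ G_2,A),
\]
where the middle equality uses the two hypotheses. This handles the rules R3, R4; R1 and R2 are immediate from symmetry and transitivity of equality on binary strings. (One also notes $F_1\circ F_2$, $G_1\circ G_2$ are again $\olgsq$-depth, so $\eac$ is defined on them.)

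\textbf{Part (ii).} I would go through the axioms of $\PCZ$ one by one. A1 ($F=F$) is trivial, both sides being the same string. For C1 ($F\cplus G=F+G$) and C2 ($F\ctimes G=F\cdot G$): both sides have output gate $+$ (resp.\ $\times$) with the two incoming sub-circuits computing $\eac(F,A)$ and $\eac(G,A)$; the only difference is node-sharing on the left, which by the determinism remark does not change computed values, so both sides evaluate to $\mathsf{Op}(\eac(F,A),\eac(G,A))$. A10 follows from compositionality at the leaves together with the fact that the displayed scalar equation holds in \Z. For the remaining axioms A2--A9, compositionality turns each into an identity of $\mathsf{StringAdd}$/$\mathsf{StringMult}$ on the binary integers $v_F=\eac(F,A)$, $v_G=\eac(G,A)$, $v_H=\eac(H,A)$: commutativity of $\mathsf{StringAdd}$ (A2) and of $\mathsf{StringMult}$ (A4), associativity of each (A3, A5), distributivity (A6), and the unit and zero laws $\mathsf{StringAdd}(v,0)=v$ (A7), $\mathsf{StringMult}(v,0)=0$ (A8), $\mathsf{StringMult}(v,1)=v$ (A9). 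These are standard facts about binary (sign--magnitude) integer arithmetic, provable already in the theories for bounded number arithmetic of Cook--Nguyen \cite{CN10}, hence in \VNCTwo.

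\textbf{Main obstacle.} The real work is the compositionality property of $\eac$: one must verify \emph{inside \VNCTwo} that the multi-stage pipeline defining the evaluation function — algebraic-to-Boolean conversion, layering, monotonisation, and $\mbeac$-evaluation — respects circuit structure finely enough that the root value equals the appropriate integer operation applied to the two sub-values, using only $\delta_{LMCV}$ and the determinism of Boolean evaluation. This is conceptually routine but notationally heavy, since it requires carrying a structural bookkeeping invariant through all four stages. The elementary arithmetic laws of $\mathsf{StringAdd}$ and $\mathsf{StringMult}$ are a secondary and entirely standard ingredient.
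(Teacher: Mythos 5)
Your proposal is correct and follows essentially the same route as the paper: your ``compositionality'' property is exactly the paper's chain of identities reducing $\eac(F_1\circ F_2,A)$ to $\straddf$ (resp.\ $\strmult$) applied to $\eac(F_1,A)$ and $\eac(F_2,A)$, established by tracking the evaluation string through the conversion/layering/monotonisation stages and then proving $\mbeac(\stradd(X,Y),A)=\straddf(X,Y)$ by number induction. The paper omits the details of part (ii) as ``similar''; your reduction of A2--A9 to the ring laws of binary integer arithmetic (provable already in $\VTCZ\subseteq\VNCTwo$) and your treatment of C1, C2 via determinism of Boolean evaluation under node sharing are a reasonable filling-in of that omission.
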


\begin{proof}
\iddo{}
Part (i). Let us consider the rule $F_1=G_1$ and $F_2=G_2$ derives $F_1+F_2=G_1+G_2$. We need to prove that  $\eac(F_1+F_2,A) = \eac(G_1+G_2,A)$.
%

Denote by $\straddf(X,Y)$ the \SigZB-\emph{formula} from Fact \ref{fac:volmer} for adding two binary integers (we reserve the symbol \stradd\ to denote the corresponding constant depth Boolean \emph{circuit}). Specifically, we have (see \cite{CN10}):
\begin{definition}[\straddf]\label{def:straddf} The \SigZB-formula for computing carries in a carry-save adder is:
\begin{multline*}
Carry(i,X,Y)\leftrightarrow \exists k<i\big(X(k)\land Y(k)\land
    \forall j<i(k<j\to (X(j)\lor Y(j)))\big).
\end{multline*}
And the \SigZB-defining axiom for \straddf\ is (where $\oplus$ is exclusive or): 
\begin{multline*}
R_+(X,Y,Z)\leftrightarrow \big(|Z|\le |X|+|Y|\land  \forall i<|X|+|Y|(Z(i)\leftrightarrow X(i)\oplus Y(i)\oplus Carry(i,X,Y)))\big).
\end{multline*}
\end{definition}

The definition of the Boolean circuit \stradd\ is similar to  the \FO\ formula in Definition \ref{def:straddf}: $Carry(i,X,Y)$ is defined as above except that  $\exists k<i$ turns into $\lor _{i=0}^{k-1}$ and  $\forall j<i$ turns into $\land _{j=0}^{i-1}$ and $X(i), Y(i)$ are interpreted as the $i$th bits of the input $X,Y$, that is the Boolean variable $x_i,y_i$, respectively. Thus, $\stradd:=X(i)\oplus Y(i)\oplus Carry(i,X,Y)$ (where $\oplus$ here is built from $\land,\lor,\neg$).

\medskip 
Since $\eac(F_1,A)=\eac(G_1,A)$ and $\eac(F_2,A)=\eac(G_2,A)$, \begin{multline}\label{eq:al-lmb}
\straddf(\eac(F_1,A),\eac(F_2,A))= \straddf(\eac(G_1,A),\eac(G_2,A)).
\end{multline}  

Recall that  by construction, $\eac(F_1+F_2,A)$ first converts the algebraic circuit $F_1+F_2$ into a Boolean circuit of the form $\stradd(F'_1,F'_2)$, where $F'_1,F'_2$ are the monotone, layered and Boolean versions of $F_1,F_2$, respectively, as described in the algorithm above (Steps (i) to (iii)), and then  evaluates it using $\mbeac(\stradd(F'_1,F'_2),A)$ (and similarly for $G_1,G_2$) (for simplicity we shall ignore here the fact that also \stradd\ is turned into a monotone circuit). 
\iddo{} Therefore, we can prove:
\begin{gather}\label{eq:al-lmb2}
\eac(F_1+ F_2,A)=\mbeac(\stradd(F'_1,F'_2),A),\\\label{eq:al-lmb2b}
\eac(G_1+ G_2,A)=\mbeac(\stradd(G'_1,G'_2),A).
\end{gather}
By \eqref{eq:al-lmb}, \eqref{eq:al-lmb2} and \eqref{eq:al-lmb2b} it suffices to prove 
\begin{gather}\label{eq:al-lmb3}
\mbeac(\stradd(F'_1,F'_2),A)=\straddf(\eac(F_1,A),\eac(F_2,A))
,\\ \label{eq:al-lmb3b}
\mbeac(\stradd(G'_1,G'_2),A)=\straddf(\eac(G_1,A),\eac(G_2,A)).
\end{gather}

Let us prove \eqref{eq:al-lmb3} (as \eqref{eq:al-lmb3b} is similar). By the discussion above $\straddf(\eac(F_1,A),\eac(F_2,A))=\straddf(\mbeac(F'_1,A),\mbeac(F'_2,A))$. Hence, it remains to prove 
\begin{equation}\label{eq:two-sides1}
\mbeac(\stradd(F'_1,F'_2),A) = \straddf(\mbeac(F'_1,A),\mbeac(F'_2,A)). \end{equation}

First note that the evaluation function \mbeac\ (\eqref{eq:LMCV} in Section \ref{sec:vnctwo}) works the same for \textit{multi-output }circuits. Second, recall that the function \mbeac\ is defined so that given a circuit $F$ it produces an evaluation string (as defined in \eqref{eq:LMCV}) for the \emph{whole} circuit $F$, and then outputs the evaluation string only of the \emph{top layer} (namely, the output nodes). 

The idea of the proof of \eqref{eq:two-sides1} is the following: consider the left hand side of \eqref{eq:two-sides1}. The evaluation string of the circuit $\stradd(F'_1,F'_2)$ produced by \mbeac\ given $A$ is the same as the combination of the separate evaluation strings of $F'_1$ and $F'_2$, excluding the top layers which belongs to the evaluation of (the constant many layers of the circuit) \stradd. Therefore, we can prove that
\begin{equation}\label{eq:the-idea}
\mbeac(\stradd(F'_1,F'_2),A) = \mbeac(\stradd(\mbeac(F'_1,A),\mbeac(F'_2,A)),A)
\end{equation}
(note that in \eqref{eq:the-idea} the rightmost input $A$ on the right hand side does not have any effect, since the circuit $\stradd(\mbeac(F'_1,A),\mbeac(F'_2,A))$ has no variables).

Therefore, to conclude  \eqref{eq:two-sides1}  it remains to show
\begin{multline}\label{eq:two-sides2}
\mbeac(\stradd(\mbeac(F'_1,A),\mbeac(F'_2,A)),A)\\
 = \straddf(\mbeac(F'_1,A),\mbeac(F'_2,A)). 
 \end{multline}



This is done by the number induction on the \SigZB-formula $$
\Psi(i,X,Y):=|X|=|Y|=i \land \mbeac(\stradd(X,Y),A) = \straddf(X,Y)
$$
(note again that $A$ on the left hand side does not have any effect when $X,Y$ are interpreted as constant binary strings (devoid of variables)). 
This is done using the evaluation string produced for the Boolean circuit \stradd\ by \mbeac. The idea is that the \FO\ formula and the Boolean circuit for \stradd\ are almost identical. More generally, we have the following claim that is proved by construction; and formally, by number induction on the depth of the circuit (equivalently, the number of layers in $C'$):

\begin{claim}\label{claim:circuit-and-FO-formula-equiv}
Let $C$ be an \FNCTwo\ function written in the language \LTwoA\ augmented with function symbols for \FNCTwo\ string and number functions. Further, let $C'$ be the corresponding monotone layered Boolean circuit for $C$. Then, \VNCTwo\ can prove that $C(A)=\mbeac(C',A)$.
\end{claim}

\iddo{}

The same reasoning is applied to  \strmult, for dealing with the rule $F_1=G_1$ and $F_2=G_2$ derives $F_1\cd F_2=G_1\cd G_2$.
\smallskip 

Part (ii) is similar to part (i) and we omit the details.
\end{proof}

\subsection{Wrapping Up}\label{sec:wrapping-up}
Using the definition of \Det, Theorem \ref{thm:ovr:sound-of-arithmetic-proofs} and Corollary \ref{cor:exist-PC-proof-of-det-tay-sharp-identities-fin} we are finally in a position to conclude the main theorem.
\iddo{}

\begin{theorem}[Main theorem]\label{thm:main}
The following determinant identities are provable in \VNCTwo: 
\begin{gather}\label{eq:1-2-final-thm}
\forall n\forall A(\matz(A,n)\land \matz(B,n) \to \Det(A)\cd\Det(B)=\Det(AB))\,,\\
\forall n\forall A(\trianmatz(A,n) \to \Det(A)=A[1,1]\cdots A[n,n])\,.
\end{gather}
 \end{theorem}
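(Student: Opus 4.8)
The plan is to obtain Theorem~\ref{thm:main} as a wrap-up, since by this point all the substantive work has been done. The two ingredients to combine are: (a) the existence, provable in \VNCTwo, of an $\olgsq$-depth division-free \PCZ-proof of the \emph{symbolic} determinant identities with the determinant written as the balanced circuit \cdetbl\ (Corollary~\ref{cor:exist-PC-proof-of-det-tay-sharp-identities-fin}); (b) the reflection principle for $\olgsq$-depth \PCZ-proofs (Theorem~\ref{thm:ovr:sound-of-arithmetic-proofs}, in the form of Theorem~\ref{thm:sound-of-arithmetic-proofs}); and (c) the definition of the string function $\Det(\cdot)$ in the theory as $\eac(\cdetbl(\cdot),-)$. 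So I would reason inside \VNCTwo\ and treat the two identities in turn.

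For multiplicativity, fix $n$ and \nbyn\ integer matrices $A,B$ (so $\matz(A,n)\land\matz(B,n)$). By the definition of \Det\ we have $\Det(A)=\eac(\cdetbl(X),A)$, $\Det(B)=\eac(\cdetbl(Y),B)$ (the circuit for $Y$ being the copy of $\cdetbl(X)$ on the renamed variables), and $\Det(AB)=\eac(\cdetbl(X),AB)$, where $AB$ is the \SigOneB-definable integer matrix product. By Corollary~\ref{cor:exist-PC-proof-of-det-tay-sharp-identities-fin} there is an $\olgsq$-depth \PCZ-proof of $\cdetbl(XY)=\cdetbl(X)\cdot\cdetbl(Y)$, so by Theorem~\ref{thm:sound-of-arithmetic-proofs} this equation is true under the integer assignment sending the $x$-variables to $A$ and the $y$-variables to $B$:
\[
\eac(\cdetbl(XY),(A,B)) \;=\; \eac(\cdetbl(X),A)\cdot\eac(\cdetbl(Y),B).
\]
It then remains to identify the left-hand side with $\eac(\cdetbl(X),AB)=\Det(AB)$. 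This is the one point needing a genuine (if short) argument: $\cdetbl(XY)$ is by construction the circuit $\cdetbl(X)$ with each input leaf $x_{ij}$ replaced by the $O(\log n)$-depth subcircuit computing $\sum_k x_{ik}y_{kj}$, and evaluating such a leaf-substituted circuit under $(A,B)$ yields the same integer as evaluating $\cdetbl(X)$ under the matrix whose $(i,j)$ entry is $\sum_k a_{ik}b_{kj}=(AB)_{ij}$. I would establish this compositionality of \eac\ by reducing it to the corresponding property of \mbeac\ for monotone layered Boolean circuits, provable in \VNCTwo\ by number induction on the number of layers exactly as in Claim~\ref{claim:circuit-and-FO-formula-equiv} and Lemma~\ref{lem:sound-of-PCZ-rules}. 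Chaining the equalities gives $\Det(A)\cdot\Det(B)=\Det(AB)$, i.e.\ the first part of \eqref{eq:1-2-final-thm}.

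For the triangular case, let $A$ be an \nbyn\ integer triangular matrix ($\trianmatz(A,n)$). Then $\Det(A)=\eac(\cdetbl(X),A)$, and since $A$ is triangular the assignment of $A$ to the full symbolic matrix $X$ agrees with the assignment of the non-zero entries of $A$ to the symbolic triangular matrix $Z$ (all remaining entries being $0$ on both sides), so $\eac(\cdetbl(X),A)=\eac(\cdetbl(Z),A)$. By Corollary~\ref{cor:exist-PC-proof-of-det-tay-sharp-identities-fin} there is an $\olgsq$-depth \PCZ-proof of $\cdetbl(Z)=z_{11}\cdots z_{nn}$; Theorem~\ref{thm:sound-of-arithmetic-proofs} makes this true under the assignment of $A$, and since the iterated-product circuit $z_{11}\cdots z_{nn}$ evaluates (via \strmult, as \VNCTwo\ proves) to the product of the assigned diagonal entries, we get $\eac(\cdetbl(Z),A)=A[1,1]\cdots A[n,n]$. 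Combining, $\Det(A)=A[1,1]\cdots A[n,n]$.

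On the main obstacle: mechanically the theorem is immediate once the pieces are lined up, so the only step I would treat carefully is the evaluation-compositionality fact used above — that \VNCTwo\ proves $\eac$ respects substitution of subcircuits into leaves (so that $\eac(\cdetbl(XY),(A,B))=\eac(\cdetbl(X),AB)$), and relatedly that $\eac$ of the triangular substitution agrees with the full-matrix substitution and that the iterated-product circuit evaluates to the integer product. This is not deep: it follows from the construction of $\eac$ through the monotone layered Boolean circuit evaluation and the same number-induction template already used for soundness (Claim~\ref{claim:circuit-and-FO-formula-equiv}, Lemma~\ref{lem:sound-of-PCZ-rules}), but it is the point that has to be phrased precisely. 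Everything else — division elimination (Section~\ref{sec:division-elim-from-proofs}), homogenization (Section~\ref{sec:homogenization-of-proofs}), balancing (Section~\ref{sec:balancing-algebraic-circ-in-theory}), and the reflection principle (Section~\ref{sec:applying-the-reflection}) — has already been carried out, so this last step is pure bookkeeping.
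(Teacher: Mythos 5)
Your proposal is correct and follows essentially the same route as the paper's own wrap-up (Step 8 of Section~\ref{sec:carry-proof-in-theory} together with Section~\ref{sec:wrapping-up}): combine the definition of $\Det$ via $\eac(\cdetbl(\cdot),-)$, the balanced \PCZ-proof from Corollary~\ref{cor:exist-PC-proof-of-det-tay-sharp-identities-fin}, and the reflection principle of Theorem~\ref{thm:sound-of-arithmetic-proofs}. The evaluation-compositionality point you isolate (identifying $\eac(\cdetbl(XY),(A,B))$ with $\eac(\cdetbl(X),AB)$) is indeed the one step the paper leaves implicit, and your reduction of it to \mbeac\ via the same induction template as Claim~\ref{claim:circuit-and-FO-formula-equiv} is the appropriate way to discharge it.
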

Where in \eqref{eq:1-2-final-thm} $\matz(A,n)$ means that $A$ is an \nbyn\ integer matrix, with integer entries are encoded by strings as usual, and $\trianmatz(A,n)$ means that $A$ is a lower or upper  \nbyn\ triangular matrix, and $A[i,j]$ is the $(i,j)$th integer entry in $A$ .

\bigskip 

Using the translation between bounded arithmetic theories and propositional proofs as shown in \cite{CN10} we can also extend the result in \cite{HT12} to work over the integers:

\begin{theorem}\label{thm:NCTwo-Frege-proofs}
There are polynomial-size propositional \NCTwo-Frege proofs of the determinant identities over the integers.   
\end{theorem}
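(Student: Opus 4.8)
The plan is to obtain Theorem \ref{thm:NCTwo-Frege-proofs} as a corollary of the Main theorem (Theorem \ref{thm:main}) via the standard propositional translation of \VNCTwo-proofs into \NCTwo-Frege proofs, as developed in \cite{CN10}. First I would recall that \VNCTwo\ is associated with the propositional proof system \NCTwo-Frege in the sense that any $\forall\SigZB$ (equivalently, any bounded) theorem of \VNCTwo\ translates, under the propositional translation $\langle\cdot\rangle$, into a family of propositional tautologies that admit polynomial-size \NCTwo-Frege proofs; moreover these proofs are constructed uniformly (\eac, \strmult, etc.\ being \FNCTwo-functions, the relevant relations are \NCTwo-predicates and translate into \NCTwo-circuits of polynomial size). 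The point that makes this work over $\Z$, rather than only over $\GFTwo$ as in \cite{HT12}, is precisely that the integer arithmetic used in the statement and in the evaluation function \eac\ is carried out inside \VNCTwo\ with integers encoded as binary strings, so the translated propositional statements are Boolean formulas over bit-variables encoding the integer matrix entries, with the arithmetic (iterated addition, multiplication) realized by \NCTwo-circuits, exactly as required for an \NCTwo-Frege proof.

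The key steps, in order, are: (1) fix a size bound $m=m(n)$ on the bit-length of the integer matrix entries and restrict the $\SigZB$ (in fact $\forall\SigZB$) form of the identity \eqref{eq:1-2-final-thm}, i.e.\ $\matz(A,n)\wedge\matz(B,n)\to \Det(A)\cdot\Det(B)=\Det(AB)$, to inputs of that size, so that it becomes a statement about bit-strings of length polynomial in $n$ and $m$; (2) observe that $\Det(\cdot)$ is \SigOneB-definable in \VNCTwo\ and hence, by Theorem \ref{thm:definable_func_of_vnctwo}, computable in \FNCTwo, so its bit-graph is an \NCTwo-predicate; the equality $\Det(A)\cdot\Det(B)=\Det(AB)$ together with the definition of the multiplication \strmult\ then becomes a $\SigZB$ (indeed $\SigZB$-witnessed) statement whose propositional translation $\langle\cdot\rangle$ is a family $\{\varphi_n\}$ of polynomial-size propositional formulas; (3) apply the translation theorem for \VNCTwo\ (from \cite{CN10}): since Theorem \ref{thm:main} is a theorem of \VNCTwo, the formulas $\varphi_n$ have polynomial-size \NCTwo-Frege proofs, and these proofs are themselves produced by an \FNCTwo-function of $n$ (in particular uniformly); (4) do the same for the triangular-matrix identity \eqref{eq:2}. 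Concretely one unwinds the \VNCTwo-proof given by the preceding sections: Corollary \ref{cor:exist-PC-proof-of-det-tay-sharp-identities-fin} gives a uniformly constructible $O(\log^2 n)$-depth \PCZ-proof, and the reflection principle Theorem \ref{thm:ovr:sound-of-arithmetic-proofs} (proved in \VNCTwo) is what gets translated; each line of the \VNCTwo-proof translates line-by-line into a bounded number of \NCTwo-Frege lines over the propositional encoding.

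The main obstacle I expect is bookkeeping rather than mathematical depth: one has to make sure that (a) the propositional translation of the determinant function \Det---which internally builds the balanced circuit $\cdetbl$ and then runs the Boolean circuit evaluation \eac---yields \emph{polynomial-size} propositional formulas with only \NCTwo\ depth, which relies on the balancing and homogenization done in Sections \ref{sec:homogenization-of-proofs}--\ref{sec:balancing-algebraic-circ-in-theory} together with the fact that \stradd, \strmult\ are in low uniform classes (Fact \ref{fac:volmer}); and (b) that the translation of the \emph{proof} of the reflection principle, which proceeds by number induction on proof-length of the \PCZ-proof, unfolds into a polynomial-size \NCTwo-Frege derivation---this is exactly the content of the general translation theorem in \cite{CN10}, so invoking it correctly (checking the theorem is stated in the $\forall\SigZB$ or equivalent bounded form it requires, and that all auxiliary functions are \SigOneB-definable in \VNCTwo) is the crux. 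Once those translation hypotheses are verified, Theorem \ref{thm:NCTwo-Frege-proofs} follows immediately; I would state it as: by Theorem \ref{thm:main} and the \VNCTwo-to-\NCTwo-Frege translation of \cite{CN10}, the propositional formulas expressing $\Det(A)\cdot\Det(B)=\Det(AB)$ and $\Det(C)=c_{11}\cdots c_{nn}$ over integer bit-encodings have uniform polynomial-size \NCTwo-Frege proofs.
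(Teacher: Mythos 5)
Your proposal is correct and follows essentially the same route as the paper: the paper derives Theorem \ref{thm:NCTwo-Frege-proofs} directly from Theorem \ref{thm:main} via the \VNCTwo-to-\NCTwo-Frege propositional translation of \cite{CN10}, noting (as you do) that integers are encoded by fixed-length binary strings so that each bit-length yields a separate family of polynomial-size propositional proofs. Your additional bookkeeping about the $\SigZB$ form of the translated statement and the \FNCTwo-definability of \Det\ is exactly the verification the paper leaves implicit.
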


In Theorem \ref{thm:NCTwo-Frege-proofs}, \NCTwo-Frege is defined as in \cite{HT12}, namely, these are families of standard propositional (Frege) proofs with size $\poly(n)$ in which every proof-line is a circuit of depth $\olgsq$, and where we augment the system with rules for manipulating circuits  similar to the rules C1, C2 in \PC\ (it is possible to characterize these proofs as  restricted Extended Frege proofs). Moreover, integers in the \NCTwo-Frege proofs are encoded by fixed length binary strings, that is sequences of propositional variables. Note that for every fixed length of binary strings encoding integers, we will have a different propositional proof.

\section{Corollaries}\label{sec:corollaries}
Here we show some further theorems of linear algebra that can be proved in \VNCTwo, using similar arguments as before. Specifically, we show that the Cayley-Hamilton theorem and the co-factor expansion of the determinant are  provable in \VNCTwo, as well as the \textit{hard matrix identities }identified by Soltys and Cook in \cite{SC04}.

The Cayley-Hamilton (C-H) theorem states that for the (univariate) \emph{characteristic} \emph{polynomial} of a matrix $A$ in the variable $z$, defined as \iddo{}
$$
p_A(z):=\det(zI-A),
$$
it holds that $p_A(A)=0$, where $p_A(A)$ is a univariate polynomial in the matrix $A$, product is interpreted as matrix product, and scalar multiplication of a matrix is interpreted as usual, and where the right hand side 0 stands for the all zero matrix. 

\newcommand{\bl}[1]{\ensuremath{\left[#1\right]}}
The characteristic polynomial of a matrix is defined in the theory as follows: we introduce a \SigOneB-definable string function $p(A,n)$ that receives an  \nbyn\ integer matrix $A$ and outputs a division free $O(\log^2 n)$-depth algebraic circuit with $n^2$ input variables, where the coefficient of $z^i$, for $i=0,\dots,n$, in the circuit is computed (as a sub-circuit) by  
$$
\bl{\Coef_{z^i}\left(\taydetsharp(zI_n-A)\right)},
$$ 
namely, the  balanced circuit that extracts the (constant) coefficient of the determinant polynomial of $zI_n-A$; recall that \taydetsharp\ is a division free circuit with a polynomial syntactic-degree. Thus, overall the string function $p(A,n)$ outputs the following circuit, written as an $O(\lgsq)$-depth circuit,
for the characteristic polynomial of $A$:
\begin{equation}\label{eq:characteristic-poly}
p(A,n):=\sum_{i=0}^n \bl{\Coef_{z^i}\left(\taydetsharp(zI_n-A)\right)}\cd z^i.
\end{equation}
\medskip
 

\iddo{}

The C-A theorem is expressed in the theory as follows:
\iddo{} 
\begin{equation}\label{eq:C-H-thm}
\forall n\forall A(\matz(A,n)\to \eval{p(A,n),A}=\mathbf{0}_n)\,,
\end{equation}
where $\mathbf{0}_n$ is the all zero \nbyn\ integer matrix, $\eval{X,A}$ is the string function that evaluates the circuit $C$ under the integer assignment $A$, $\matz(A,n)$ is the relation that holds iff $A$ is an \nbyn\ integer matrix, and $p(\cd)$ is \SigOneB-definable function that receives a matrix and outputs a circuit (in fact a formula) that computes its characteristic polynomial (with a single input variable $z$).   
  
\begin{corollary} \label{cor:C-H-theorem}
The Cayley-Hamilton theorem, expressed as in \eqref{eq:C-H-thm}, is provable in \VNCTwo. 
\end{corollary}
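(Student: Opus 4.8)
The plan is to derive the Cayley--Hamilton theorem from the multiplicativity of the determinant (Theorem~\ref{thm:main}) together with the machinery for polynomial-identity proofs developed above, essentially by formalizing the classical adjugate-based proof. First I would recall the key algebraic fact: for a symbolic \nbyn\ matrix $W$ one has $W\cdot\Adj(W)=\cdeti(W)\cdot I_n$, where $\Adj(W)$ is the matrix of cofactors; this is a polynomial identity over $\Z$, and following the pattern of Proposition~\ref{prop:write_PC-1-proof} and Section~\ref{sec:P-C-inverse-proof-of-1-2} it can be turned into a \PIZ-proof and then, via the division-elimination, homogenization, and balancing pipeline of Sections~\ref{sec:division-elim-from-proofs}--\ref{sec:balancing-algebraic-circ-in-theory}, into an $O(\lgsq)$-depth \PCZ-proof. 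Here I would use the cofactor expansion of the determinant, which the excerpt already announces as a corollary, so that the entries of $\Adj(W)$ are themselves given by balanced division-free circuits built from \taydetsharp\ applied to minors.

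Next I would substitute $W:=zI_n-A$, where $A$ is a matrix of fresh variables $a_{ij}$ and $z$ is an additional variable. Then $\cdeti(zI_n-A)$ is, up to sign, the characteristic polynomial $p_A(z)$, and $\Adj(zI_n-A)$ is a matrix whose entries are polynomials in $z$ of degree at most $n-1$ with coefficients that are polynomials in the $a_{ij}$. Using Lemma~\ref{lem:Taylor-basic-properties}(3) together with the syntactic-degree-witnessing homogenization of Section~\ref{sec:homogenization-of-proofs}, I would construct \PCZ-proofs that write both sides of $W\cdot\Adj(W)=\cdeti(W)\cdot I_n$ as explicit sums $\sum_i C_i z^i$ with the $C_i$ balanced circuits in the $a_{ij}$, and then extract the coefficient of each power of $z$ via $\Coef_{z^i}(\cdot)$ (this is exactly the place where part (3) of Lemma~\ref{lem:Taylor-basic-properties} is needed, which is why the excerpt repeatedly flags that witnesses for syntactic degree are required here). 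Comparing the $z^i$-coefficients on both sides gives, for each $i$, a \PCZ-identity that is a matrix version of the Newton-like recurrence relating the coefficients of $p_A$ to powers of $A$; summing these, weighted so as to telescope, yields a \PCZ-proof that $\sum_{i=0}^n \bl{\Coef_{z^i}(\taydetsharp(zI_n-A))}\cdot A^i = \mathbf 0_n$ as a polynomial identity in the $a_{ij}$, i.e.\ $p(A,n)$ evaluated at $A$ (in the matrix sense) is the zero matrix. All these circuits have depth $O(\lgsq)$ and polynomial size, so this is an $O(\lgsq)$-depth \PCZ-proof.

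Finally I would apply the reflection principle for $O(\lgsq)$-depth \PCZ-proofs, Theorem~\ref{thm:ovr:sound-of-arithmetic-proofs}, inside \VNCTwo: since the matrix identity $\eval{p(A,n),A}=\mathbf 0_n$ has such a proof when $A$ is symbolic, it holds under every integer assignment to the $a_{ij}$, which is precisely the statement~\eqref{eq:C-H-thm}. One has to be slightly careful that the string function $p(A,n)$ as defined in \eqref{eq:characteristic-poly} agrees, provably in \VNCTwo, with the circuit used inside the \PCZ-proof; this follows from the fact that both are built from \taydetsharp\ and $\Coef_{z^i}(\cdot)$ by the same \FNCTwo\ constructions, and from the provable correctness of the balancing and evaluation functions (Theorem~\ref{thm:main-balancing} and Section~\ref{sec:algebraic_circuit_value_problem}). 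The evaluation of the matrix power $A^i$ and of the weighted sum is handled by the algebraic \NCTwo-circuit value function \eac, since the exponents $i\le n$ keep the syntactic degree polynomial.

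The main obstacle I expect is the bookkeeping around syntactic-degree witnesses: unlike the earlier parts of the paper, where upper bounds on syntactic degrees suffice, the use of Lemma~\ref{lem:Taylor-basic-properties}(3) to expand a circuit as $\sum_i \Coef_{z^i}(F)z^i$ genuinely requires exact syntactic-degree witnesses for every node, and one must verify that the circuits for $\cdeti(zI_n-A)$ and for the adjugate entries can be equipped with such witnesses by a \SigZB- (or at worst \SigOneB-) definable function, and that homogenization and balancing preserve them. A secondary difficulty is formalizing the cofactor expansion of the determinant in \VNCTwo\ with the right (balanced, division-free, polynomial-degree) circuits so that $\Adj(W)$ is available as a genuine object the \PCZ-proof can manipulate; this parallels the construction of $\cdeti$ in Section~\ref{sec:circuit-with-div-for-det} but must be carried out for all $n^2$ minors simultaneously and uniformly.
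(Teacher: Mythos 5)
Your proposal is correct and follows essentially the same route as the paper: the paper's proof simply constructs (via $\SigZB$-\textbf{COMP}) the \PCZ-proof of Cayley--Hamilton from \cite[Proposition 9.4]{HT12} — which is the adjugate/coefficient-extraction argument you spell out — and then applies the reflection principle of Theorem \ref{thm:ovr:sound-of-arithmetic-proofs}. You also correctly isolate the one genuinely new technical point, namely that Lemma \ref{lem:Taylor-basic-properties}(3) requires exact syntactic-degree witnesses for the circuit in \eqref{eq:characteristic-poly}, which the paper supplies in Lemma \ref{lem:supply-witness-for-syntactic-degrees}.
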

\begin{proof}
This follows  the same line of arguments demonstrated for the \VNCTwo-proofs of the  determinant identities. We first construct using \SigZB-\textbf{COMP} the \PCZ-proof of the C-H theorem shown in  Proposition 9.4 in \cite{HT12} and then use the reflection principle as in Section \ref{sec:applying-the-reflection}.
The only difference is that we need to use part (3) in Lemma \ref{lem:Taylor-basic-properties} (we did not use this part before), and for this we need to supply the witnesses for the syntactic-degrees of the nodes in (\ref{eq:characteristic-poly})\iddo{}. This needs more work, and is shown in the appendix in Lemma \ref{lem:supply-witness-for-syntactic-degrees}. \iddo{}
\end{proof}

Other basic results in linear algebra that are provable in \VNCTwo\ are the cofactor expansion of the determinant and the inversion principle, as follows.

The \emph{inversion principle} is the following formula in \VNCTwo: 
$$
\forall n\forall A,B (\matz(A,n)\land \matz(B,n)\to (AB=I\to BA=I)).
$$
Soltys and Cook \cite{SC04} showed that the inversion principle is equivalent in the theory $LA$ (that can be interpreted in \VNCTwo\ by Cook and Fontes \cite{CF10}\footnote{Though here we have to be careful, because the encoding of matrices and polynomials and the determinant we introduce is different from the encoding of \cite{SC04,CF10}.}) to the following principles they called collectively (including the inversion principle itself) \emph{the hard matrix identities}:
\begin{align*}
& AB=I\land AC=I \to B=C \\
& AB=I \to AC\neq 0 \lor C=0\\
& AB=I \to A^t B^t=I\,.
\end{align*}
\begin{corollary}\label{cor:inv-principle}
The inversion principle is provable in \VNCTwo. \end{corollary}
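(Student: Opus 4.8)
The plan is to reduce the inversion principle to the multiplicativity of the determinant (Theorem \ref{thm:main}), which \VNCTwo\ already proves, together with the \VNCTwo-provable basic ring-theoretic facts about matrix arithmetic. First I would recall the standard textbook argument: if $AB=I$, then taking determinants and using $\Det(A)\cd\Det(B)=\Det(AB)=\Det(I)=1$ (the latter via the triangular-matrix identity \eqref{eq:2}, since $I$ is triangular), we get that $\Det(A)$ divides $1$ in \Z, hence $\Det(A)=\pm1$, and in particular $\Det(A)\neq 0$. Then the adjugate (co-factor) matrix $\Adj(A)$ satisfies $A\cd\Adj(A)=\Adj(A)\cd A=\Det(A)\cd I$, so $A$ has a genuine two-sided inverse, namely $\Det(A)^{-1}\Adj(A)$, and a two-sided inverse of $A$ is unique, giving $B=\Det(A)^{-1}\Adj(A)$ and hence $BA=I$.

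To carry this out in \VNCTwo\ I would proceed in the following steps. (1) \SigOneB-define in \VNCTwo\ the adjugate matrix $\Adj(A)$ entrywise, using the co-factor expansion: each entry is (up to sign) the determinant of an $(n-1)\times(n-1)$ minor, which is evaluated by the same balanced-circuit machinery (the function \Det) used throughout the paper. (2) Prove in \VNCTwo\ the co-factor/adjugate identity $A\cd\Adj(A)=\Adj(A)\cd A=\Det(A)\cd I_n$; the paper states (just before this corollary, in the discussion of Corollary \ref{cor:C-H-theorem}) that the co-factor expansion of the determinant is provable in \VNCTwo, so I would package that together with a symbolic \PCZ-proof of the polynomial identity $X\cd\Adj(X)=\det(X)\cd I_n$ (a polynomial identity of polynomial syntactic-degree), balanced and pushed through the reflection principle Theorem \ref{thm:ovr:sound-of-arithmetic-proofs} exactly as for the main determinant identities. (3) From $AB=I$, apply Theorem \ref{thm:main} to get $\Det(A)\cd\Det(B)=\Det(I)$; evaluate $\Det(I)=1$ using identity \eqref{eq:2} (or just directly, since the balanced determinant circuit on $I_n$ provably equals $1$). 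Conclude $\Det(A)\cd\Det(B)=1$ over \Z, hence $\Det(A)\in\{1,-1\}$ — this is an elementary fact about integers ($xy=1\to x=\pm1$) available already in \VZ. (4) Now reason inside \VNCTwo: from $A\cd\Adj(A)=\Det(A)\cd I$ and $\Det(A)=\pm1$ we get $A\cd(\Det(A)\cd\Adj(A))=I$ and also $(\Det(A)\cd\Adj(A))\cd A=I$; call $C:=\Det(A)\cd\Adj(A)$. Then $B=IB=(CA)B=C(AB)=CI=C$ using associativity of matrix product (provable already in \VNCOne, as noted in the preliminaries, e.g.\ \cite[Lemma 28]{HT12}), and therefore $BA=CA=I$.

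For the companion ``hard matrix identities'': $AB=I\land AC=I\to B=C$ follows by the same $B=(CA)B=C(AB)=C$ computation with $CA=I$ replaced by the fact that $A$ has the explicit left inverse $C':=\Det(A)^{-1}\Adj(A)$, giving $B=C'AB=C'=C'AC=C$; $AB=I\to(AC\neq0\lor C=0)$ follows because $AB=I$ makes $A$ invertible with inverse $C'$, so $AC=0$ implies $C=C'AC=0$; and $AB=I\to A^tB^t=I$ follows from $(AB)^t=B^tA^t$ (a \VNCOne-level fact about transpose) together with $B^tA^t=I^t=I$ and then applying the inversion principle itself to $B^tA^t=I$ to get $A^tB^t=I$. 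I would state all four as a single corollary proved uniformly this way.

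The main obstacle I anticipate is Step (2): establishing the co-factor/adjugate identity $A\cd\Adj(A)=\Det(A)\cd I_n$ \emph{in \VNCTwo} with the \emph{specific balanced-circuit encoding} of the determinant used in this paper (rather than the ad hoc encodings of \cite{SC04,CF10}, as the footnote in the excerpt warns). Concretely one must produce, by \SigZB-\textbf{COMP}, a \PIZ-proof (then homogenized, division-eliminated, and balanced by the pipeline of Sections \ref{sec:division-elim-from-proofs}–\ref{sec:balancing-algebraic-circ-in-theory}) of the symbolic identity relating \cdeti\ applied to $X$ and to its minors, matching the way $\Adj$ is defined; checking that the minors appearing in $\Adj(X)$ are exactly substitution instances of \cdeti\ (up to the row/column deletions), so that the provably-good-division-gate analysis of Section \ref{sec:division-elim-from-proofs} still applies, is the delicate bookkeeping. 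Everything after that — the integer divisibility argument and the associativity manipulations — is routine and low-complexity.
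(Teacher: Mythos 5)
Your proposal is correct and follows essentially the same route as the paper: the paper also defines the adjugate entrywise via the \SigOneB-definable determinant function and then pushes the adjugate/cofactor identity through the PI-proof pipeline and the reflection principle, following Propositions 9.1 and 9.2 of \cite{HT12}, exactly as you outline (your steps (3)--(4) just make explicit the integer-arithmetic and associativity reasoning that the paper leaves implicit). The obstacle you flag in Step (2) is indeed where the real work lies, and it is the same work the paper points to.
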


\newcommand{\adj}{\ensuremath{\mathrm {Adj}}}

 For an \nbyn\ matrix $X$ let $X[i|j]$ be the $(n-1)\times(n-1)$ minor obtained by removing the $i$th row and $j$th column from $X$ (recall that a sum of integer numbers represented in binary is definable in \VNCTwo\ (cf.~\cite{CN10})). 
\begin{corollary}\label{cor:cofactor}
The following \emph{cofactor expansion} of the determinant is provable in \VNCTwo:
$$
\forall n\forall A\left(\matz(A,n)\to\left(\Det(A)=\sum_{j=1}^n(-1)^{i+j}A(i,j)\Det(A[i|j])\right)\right)\,.
$$
\end{corollary}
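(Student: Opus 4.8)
The plan is to prove Corollary~\ref{cor:cofactor} by the very route used for the determinant identities and for the Cayley--Hamilton theorem (Corollary~\ref{cor:C-H-theorem}): reduce the statement to a polynomial identity between low-depth, division-free circuits, show that \VNCTwo\ \SigOneB-defines a function producing --- uniformly in $n$ and in the row index $i$ --- an $O(\log^2 n)$-depth \PCZ-proof of that identity, and then invoke the reflection principle (Theorem~\ref{thm:ovr:sound-of-arithmetic-proofs}). Writing $X$ for the $n\times n$ symbolic matrix and $X[i\,|\,j]$ for the symbolic $(n-1)\times(n-1)$ minor (a mere reindexing of the $x_{kl}$'s, hence an \ACZ\ transformation of the circuit encoding), the target is the circuit identity
\[
\cdetbl(X)\;=\;\sum_{j=1}^{n}(-1)^{i+j}\,x_{ij}\,\cdetbl(X[i\,|\,j]),
\]
where on the left $\cdetbl$ is the balanced circuit output by Algorithm~\Det\ on an $n\times n$ input and on the right it is the circuit output on an $(n-1)\times(n-1)$ input. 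Since $\Det(A)$ is \emph{defined} to be the evaluation of $\cdetbl(A)$, once this identity has an $O(\log^2 n)$-depth \PCZ-proof the reflection principle yields, for every integer matrix $A$ and every $i\le n$, that $\Det(A)=\sum_{j}(-1)^{i+j}A(i,j)\Det(A[i\,|\,j])$, which is the Corollary.

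First I would construct in \VZ, using only the \SigZB-\textbf{COMP} axiom exactly as in Proposition~\ref{prop:write_PC-1-proof}, a \PIZ-proof of the Laplace expansion for the division circuit, $\cdeti(X)=\sum_{j=1}^{n}(-1)^{i+j}x_{ij}\,\cdeti(X[i\,|\,j])$, by formalizing the corresponding PI-proof of \cite{HT12}; as there, this is a direct wiring of primitive sub-proofs and is a \SigZB-definable function of the pair $(n,i)$, needing no \SigOneB-induction. Then, in order not to carry live division gates inside the minor circuits on the right-hand side, I would rewrite each $\cdeti(X[i\,|\,j])$ into the division-free $\taydetsharp(X[i\,|\,j])$ by appending the appropriate renamed substitution instances of the \PIZ-proofs of Lemma~\ref{lem:PI-proof-DET-taylor-equals-DET-circ-div} and Corollary~\ref{cor:exist-PI-div-proof-of-det-identities-for-dettaysharp}, and likewise replace $\cdeti(X)$ on the left by $\taydetsharp(X)$. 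The resulting \PIZ-proof then contains division gates only in intermediate lines, and only those arising from $\cdeti(X)$ and from the matrix inverse $X^{-1}$ of the \emph{full} matrix $X$.

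From here the pipeline is identical to the main construction: normalize divisions to the top (Theorem~\ref{thm:Den-Num-normalization-proofs}); verify that every division gate occurring is provably good under the identity-matrices assignment $\rho$ to the $x_{kl}$'s --- which is exactly the content of Lemma~\ref{lem:L01} and Lemma~\ref{lem:Taylor-good-assignment}, since those gates all come from $\cdeti(X)$ and $X^{-1}$ precisely as in the determinant-identity proof; eliminate divisions (Section~\ref{sec:division-elim-from-proofs}); homogenize to polynomial syntactic degree (Theorem~\ref{thm:homogenize-proofs}); pass to polynomially bounded $\degubp$ (Lemma~\ref{lem:scalars-substitution-keeps-poly-deg}); and balance (Theorem~\ref{thm:main-balancing}). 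This produces the desired \SigOneB-definable function of $(n,i)$ in \VNCTwo\ outputting an $O(\log^2 n)$-depth \PCZ-proof of the displayed identity, and Step~8 of Section~\ref{sec:wrapping-up} (the reflection principle) then completes the argument.

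The main obstacle is the provably-good-division check for the minor circuits. For $i\neq j$ the matrix $I_n[i\,|\,j]$ is singular --- it has a zero row and a zero column --- so a circuit of the form $\cdeti(X[i\,|\,j])$ still carrying division gates would \emph{not} be provably good under $\rho$, and division elimination (which rests on Lemma~\ref{lem:inv} and on the constant term of $\Den(F\shift)$ being provably $1$) would fail. This is precisely why the rewrite of the minor determinants to the division-free $\taydetsharp$ form must be carried out \emph{before} eliminating divisions, leaving live divisions only inside $\cdeti(X)$ and $X^{-1}$, where $\rho$ is provably good. Should formalizing the \cite{HT12} Laplace proof prove delicate, the fallback is to route the cofactor expansion through the adjugate identity $X\cdot\Adj(X)=\cdeti(X)\cdot I_n$ established in \cite{HT12}, again first replacing all minor-determinant circuits by their $\taydetsharp$ forms before the division-elimination step, so that Lemma~\ref{lem:L01} applies verbatim.
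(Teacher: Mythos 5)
Your overall strategy --- reduce the statement to a circuit identity, build the \PIZ-proof in the theory, push it through normalization, division elimination, homogenization and balancing, and finish with the reflection principle --- is exactly the paper's, and the adjugate route you offer as a fallback (via $X\cdot\Adj(X)=\cdeti(X)\cdot I_n$, following Propositions 9.1--9.2 of \cite{HT12}) is in fact the route the paper takes. You also correctly isolate the one place where the cofactor expansion is genuinely harder than the determinant identities: for $i\neq j$ the minor $I_n[i|j]$ is singular, so the division gates of $\cdeti(X[i|j])$ are not provably good under the identity-matrices assignment $\rho$.

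However, your proposed repair does not close this gap. You claim that after appending the (renamed) sub-proofs of $\cdeti(X[i|j])=\taydetsharp(X[i|j])$, the resulting \PIZ-proof ``contains division gates only \ldots\ arising from $\cdeti(X)$ and from $X^{-1}$.'' That is not so: the Laplace line $\cdeti(X)=\sum_j(-1)^{i+j}x_{ij}\,\cdeti(X[i|j])$ and every line of the appended rewriting sub-proofs still contain the division gates of $\cdeti(X[i|j])$ and $(X[i|j])^{-1}$ as \emph{intermediate} occurrences, and the division-elimination procedure of Section \ref{sec:division-elim-from-proofs} requires \emph{every} division gate in \emph{every} line of $\pi_3$ to be provably good under the single global shift $\sigma$ (it substitutes $b_r-w_r$ for each variable $r$ uniformly throughout the proof and needs each $\hc{(\Den(F\shift))}{0}=1$ to be \PCZ-provable). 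Since the variables of $X[i|j]$ are shared with $X$, no restriction of $X=I_n$ makes $X[i|j]$ the identity when $i\neq j$ (e.g.\ for $n=2$, $i=1$, $j=2$ the minor is $(x_{21})$, assigned $0$ by $\rho$), so Lemma \ref{lem:from-all-Den-gates-proofs-to-top-Den} and Corollary \ref{cor:what-we-need-for-div-elim} are simply unavailable for those gates. To make the argument work you must either exhibit a single provably good assignment $B$ under which all relevant leading principal minors of $B$ \emph{and} of every $B[i|j]$ provably equal $1$, or restructure the proof so that the minor determinants enter only in their division-free $\taydetsharp$ (or $\cdetbl$) form --- which is what the paper's adjugate formulation, defining $\Adj(X)_{ij}:=(-1)^{i+j}\Det(X[i|j])$ with the already division-free $\Det$, is designed to do. As written, your primary route stalls at the division-elimination step.
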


The proofs of Corollaries \ref{cor:inv-principle}  and \ref{cor:cofactor} are similar to the proof of Corollary \ref{cor:C-H-theorem}.  
It uses the adjoin of a matrix $\adj(X)$ which is defined to be the \nbyn\ matrix whose $(i,j)$th entry is $(-1)^{i+j}\Det(X[i|j])$, where \Det\ is the determinant function  (\SigOneB-defined in \VNCTwo). Then we proceed as in Corollary \ref{cor:C-H-theorem} following Proposition 9.1 and 9.2 from \cite{HT12}.


\section{Conclusions and Open Problems}\label{sec:Conclusions and Open Problems}
We established a  proof of the basic determinant identities and other basic statements of linear algebra such as the Cayley-Hamilton theorem in the weakest logical theory known to date. This answers an open question of, e.g., Cook and Nguyen \cite{CN10}. We achieved this by formalizing in the theory \VNCTwo\ the construction of the PI-proof demonstrated in Hrube\v s-Tzameret \cite{HT12}, and using a reflection principle for PI-proofs in the theory. Due to the central role of linear algebra and the determinant function, these results are expected to be relevant to further basic work in bounded arithmetic.

As mentioned in Section \ref{sec:Note-on-the-choice-of-Theory} the complexity classes $\SharpSACOne\subseteq\TCOne$ that are above \DET\ but below \NCTwo, can compute the required depth reduction and the evaluation of algebraic circuits, and we believe that our construction can be carried out more or less  the same in theories corresponding to these classes (though theories for these classes have not been investigated yet). 

It will be very interesting to establish the same identities in a theory that corresponds to the complexity class \DET\ whose complete (under \ACZ-reductions) problems are the integer determinant itself and matrix powering; such a theory denoted $V\#L$ was introduced in \cite{CF10}. This would necessitate  a completely new argument different from ours  (possibly following Berkowitz' \cite{Ber84} algorithm for the determinant) and may also contribute to the simplification of the proofs. The reason is that our argument utilizes crucially the evaluation of Boolean \NCTwo-circuits in the theory, while it is not expected that such evaluation is doable in  the class \DET.

\section*{Acknowledgements} We thank Pavel Hrube\v s for useful discussions while working on   \cite{HT12},  Eric Allender for very helpful correspondence regarding  \cite{AJMV98} and Emil Je\v rabek for clearing up things about \ACOne\ and \TCOne. An extended abstract of this work  appeared initially at LICS 2017.

\appendix
\

\section*{Appendix}

\bigskip 





%


\section{Definability in Bounded Arithmetic}\label{sec:Definability-in-Bounded-Arithmetic}
Here we give more details on the theories \VZ\ and \VNCTwo.
Specifically, we wish explain now how to extend the language \VZ\ and \VNCTwo\ \ with new function symbols (in a conservative way; see below).

We write $ \exists ! y \varphi $ to denote $ \exists x (\varphi(x) \land \forall y (\varphi(y/x)\to x=y)) $, where $ y $ is a  variable not appearing in $ \varphi $:
\begin{definition}[Two-sorted definability]\label{def:Two-sorted-definability}
Let $\mathcal T $ be a theory over the language $ \mathcal L \supseteq \LTwoA$ and let $ \Phi $ be a set
of formulas in the language $ \mathcal L $. A \emph{number} function $f$ is $\Phi $\textbf{-definable in a theory}
$\mathcal T$ iff there is a formula $ \varphi(\vec x,  y,\vec X)$ in $ \Phi $ such that $\mathcal T$ proves
  \begin{equation*}
        \forall\vec x\forall\vec X\exists!y\varphi(\vec x, y,\vec X)
  \end{equation*}
  and it holds that\footnote{Meaning, it holds semantically
in the standard two-sorted model $ \N_2 $.}
  \begin{equation}
        \label{eq:def axiom num}
        y=f(\vec x,\vec X)\leftrightarrow \varphi(\vec x,y,\vec X).
  \end{equation}
A \emph{string} function $F$ is $\Phi $\textbf{-definable in a theory }$\mathcal T$ iff there is a formula $\varphi(\vec
  x,\vec X,Y)$ in $ \Phi $ such that $\mathcal T$ proves
  \begin{equation*}
        \forall\vec x\forall\vec X\exists!Y\varphi(\vec x,\vec X,Y)
  \end{equation*}
  and it holds that
  \begin{equation}
        \label{eq:def axiom str}
        Y=F(\vec x,\vec X)\leftrightarrow \varphi(\vec x,\vec X,Y).
  \end{equation}
Finally, a relation $R(\vec x,\vec X)$ is $\Phi $-definable in a theory $\mathcal T$ iff there is a formula
$\varphi(\vec x,\vec X,Y)$ in $ \Phi $ such that it holds that
  \begin{equation}
        \label{eq:def axiom rel}
        R(\vec x,\vec X)\leftrightarrow \varphi(\vec x,\vec X).
  \end{equation}
The formulas \eqref{eq:def axiom num}, \eqref{eq:def axiom str}, and \eqref{eq:def axiom rel} are the
{\em defining axioms} for $f$, $F$, and $R$,  respectively.
\end{definition}

\begin{definition}[Conservative extension of a theory]\label{def:conservative extension}
Let $ \mathcal T $ be a theory in the language $ \mathcal L $. We say that a theory $\mathcal
T'\supseteq \mathcal T $ in the language $ \mathcal L'\supseteq \mathcal L $ is \emph{conservative over
$ \mathcal T $} if every $ \mathcal L $ formula provable in $\mathcal T' $ is also provable in $\mathcal
T$.
\end{definition}

We can expand the language $\mathcal L $  and a theory $\mathcal T$ over the language $ \mathcal L $ by adding symbols for arbitrary functions $f$ (or relations $R$) to $\mathcal L $ \textit{and their defining axioms} $A_f$ (or $A_R$) to the theory $\mathcal T$. If the appropriate functions are definable in $\mathcal T$ (according to Definition \ref{def:Two-sorted-definability}) then the theory $\mathcal T +A_f$ ($+A_R$) is conservative over $\mathcal T$. This enables us to add new function and relation symbols to the language while proving statement inside a theory; as long as these function and relation symbols are definable in the theory, every statement in the original language proved in the extended theory (with the additional defining-axioms for the functions and relations) is provable in the original theory over the original language.

However, extending the language and the theory in such a way \emph{\textbf{does not guarantee}} that one can use the new function symbols in the \textit{\textbf{comprehension}} (and induction) axiom schemes. In other words, using the comprehension (and induction) axioms over the expanded language may lead to a theory that is not a  conservative extension. Therefore, definability will not be enough for our purposes. We will show below precisely 
%
%
how to make sure that a function is \emph{both} definable in the theories we work with and also can be used in the corresponding comprehension and induction axiom schemes (while preserving conservativity).

When extending the language with new function symbols we can assume that in \emph{bounded formulas} the bounding terms possibly use function symbols from the expanded language
(because any definable function in a bounded theory can be bounded by a term in the original language \LTwoA\ (cf. \cite{CN10})).




\bigskip 

\subsection{Introducing New Definable Functions in \VZ\ and \VNCTwo}\label{sec:introducing-new-definable-functions-in-VZ}
Here we describe a process (presented in Section V.4. in \cite{CN10}) by which we can extend the language $\LTwoA $ of \VZ\ with new function symbols, obtaining a conservative extension of \VZ\ \textit{that can also prove the comprehension and induction axiom schemes in the extended language, and similarly for \VNCTwo.}

First note that every relation or function symbol has an intended or standard interpretation over the
standard model $ \N_2$ (for instance, the standard interpretation of the binary function ``$ + $'' is that
of the addition of two natural numbers). If not explicitly defined otherwise, we will always assume that a
defining axiom of a symbol in the language defines a symbol in a way that its interpretation in $\N_2$
is the standard one. Note also that we shall use the same symbol $ F(\vec x, \vec X) $ to denote both the function and the \emph{function symbol} in the (extended) language in the theory.

\begin{definition}[Relation representable in a language]
Let $ \Phi $ be a set of formulas in a language $\mathcal L$  that extends \LTwoA. We say a relation $R(\vec x,\vec X)$  (over the standard model) is \textbf{representable} by a formula from $ \Phi $ iff there is a formula $\varphi(\vec
x,\vec X)$ in $ \Phi $ such that in the standard two-sorted model $ \N_2 $ (and when all relation and
function symbols in $ \mathcal L $ get their intended interpretation), it holds that:
  \begin{equation}
        \label{eq:def axiom rel}
        R(\vec x,\vec X)\leftrightarrow \varphi(\vec x,\vec X).
  \end{equation}
\end{definition}
\medskip

We say that a number function $ f(\vec x,\vec X) $ is \emph{polynomially-bounded} if $f(\vec x,\vec X) \le {\rm poly}(\vec x,\vec {|X|}) $. We say that a string function $ F(\vec x,\vec X) $ is \emph{polynomially-bounded} if $|F(\vec x,\vec X) |
\le {\rm poly}(\vec x,\vec{ |X|}) $.

\begin{definition}[Bit-graph]\label{def:bit-graph-of-string-func}
Let $ F(\vec x,\vec X) $ be a polynomially-bounded string function. We define the \textbf{bit-graph} of $ F
$ to be the relation $ R(i,\vec x,\vec X) $, where $ i $ is a number variable, such that
\begin{equation}\label{eq:bit-defn}
        F(\vec x,\vec X)(i) \leftrightarrow  i<t(\vec x,\vec X) \land R(i,\vec x,\vec X)
\end{equation}
holds in the standard two-sorted model,  for some number term $ t(\vec x,\vec X) $.
\end{definition}

\begin{definition}[$\SigZB$-definability from a language; Definition V.4.12.~in \cite{CN10}]
\label{def:sigB1-definability-in-a-langauge} We say that a \emph{number function $ f(\vec x,\vec X)$ is $\SigZB$-definable from a language $ \mathcal L \supseteq \LTwoA $}, if $ f $ is polynomially-bounded and its graph\footnote{I.e., the relation $R(\vec x,\vec X,y)$, such that $f(\vec x,\vec X)=y$ iff $R(\vec x,\vec X,y)$ holds in the standard model.} is represented by a $\SigZB(\mathcal L) $-formula $\varphi $. We call the formula $\varphi $ the \emph{defining  axiom of $ f $}. We say that a \emph{string function $ F $ is $\SigZB$-definable from a language $ \mathcal L \supseteq \LTwoA $}, if $ F$ is polynomially-bounded and its \emph{\uline{bit-graph}} (as in \eqref{eq:bit-defn})
is representable by a $\SigZB(\mathcal L) $-formula $\varphi $. We call the formula $\varphi $ the
\emph{defining  axiom of $ F$} or, equivalently, the \emph{bit-defining axiom of $ F $}.
\end{definition}
\begin{note}
We used the term \emph{defining axiom of a function $ f $} in both the case where $ f $ is defined \emph{from a language} (Definition \ref{def:sigB1-definability-in-a-langauge}) and in case $ f $ is definable \emph{in the theory} (Definition \ref{def:Two-sorted-definability}). In general it is important not to confuse these two  notions. Nevertheless, we will show in the sequel that \emph{for our purposes} these two notions coincide: when we define a function from a language the function will be definable also in the relevant theory, and so the defining axiom of $ f $ from the language will be the defining axiom of $ f $ in the theory (when the theory is possibly  conservatively extended to include new function symbols).

%
\end{note}

The following is a definition of \ACZ\ functions. This definition coincides with the definition of \FACZ\ as \FO-uniform multi-output Boolean circuit families of polynomial-size and constant depth \cite{CN10}.  
\begin{definition}[\FACZ]\label{def:FACZ}
A string (number) function is in \FACZ\ if it is polynomially-bounded and its bit-graph (graph, respectively) is definable by a $\SigZB$-formula in the language \LTwoA.
\end{definition}

\begin{definition}[\ACZ-reduction]
\label{def:ACZ-reduction} A number function $ f $ is \emph{\ACZ-reducible to $\mathcal L\supseteq
\LTwoA$} iff there is a possibly empty sequence of functions $ F_1,\dots,F_k $ such that $ F_i $ is
\TODO{Check if it's $ F_i $ can be only \emph{string} functions, as written in the book}
$\SigZB$-definable from $ \mathcal L \,\cup\,\{F_1,\dots,F_{i-1}\}$, for any $ i=1,\dots,k $, and $ f $ is
$\SigZB$-definable from $ \mathcal L\,\cup\{F_1,\dots,F_{k}\}$.
\end{definition}
\bigskip

We are now finally ready to describe the standard process enabling one to extend a theory $\mathcal T \supseteq \VZ$ over the language \LTwoA\ (and specifically, the theories \VZ\ and \VNCTwo) with new function symbols, obtaining a conservative extension of $\mathcal T $ such that the new function symbols can  be used in comprehension and induction axiom schemes in the
theory (see Section V.4. in \cite{CN10} for the proofs):\iddo{}

\begin{enumerate}
\item[(i)] If the number function $ f $ is $ \SigZB $-definable from $ \LTwoA$, then $\mathcal T $ over the language $ \LTwoA\cup\{ f\} $, augmented with the defining axiom of $ f $,  is a conservative extension of $ \mathcal T $ and we can also prove the comprehension and induction axioms for $\SigZB(f)$-formulas.

\item[(ii)] If the string function $ F $ is $ \SigZB $-definable from $ \LTwoA$, then $\mathcal T $ over the language $ \LTwoA\cup\{ F\} $, augmented with the \emph{bit-defining} axiom of $ F $, is a conservative extension of $ \mathcal T $ and we can also prove the comprehension and induction axioms for $\SigZB(F)$-formulas.

\item[(iii)] We can now iterate the above process of extending the language $ \LTwoA(f) $ (or equivalently, $ \LTwoA(F) $) to conservatively add more functions $ f_2,f_3,\dots$ to the language, which can also be used in comprehension and induction axioms.
\end{enumerate}

\iddo{}
By the aforementioned and by Definition \ref{def:ACZ-reduction}, we can extend the language of a theory with a new function symbol $ f $, \emph{whenever $ f $ is \ACZ-reducible to \LTwoA}. This results in an extended theory (in an extended language) which is conservative, and can prove the comprehension and induction axioms for formulas in the extended language. When defining a new function in \VZ\ or \VNCTwo\ we may simply say that it is \emph{$\SigZB$-definable} or \emph{bit-definable} in the theory and give its $\SigZB$-defining or bit-defining axiom (this axiom can use also previously $\SigZB$-defined (or bit defined) function symbols).

Extending the language of \VZ\ and \VNCTwo\ with new \emph{relation} symbols is simple: every relation $ R(\vec x, \vec X) $ which is representable by a $\Delta^1_1(\mathcal L)$ formula (\cite[Section V.4.1]{CN10}), where $ \mathcal L $ is an extension of the language with new function symbols obtained as shown above, can be added itself to the language. This results in a conservative extension of $ \VZ $ (\VNCTwo, resp.)  that also proves the $\SigZB$-induction and comprehension axioms in the extended language.

\subsection{Some Basic Formalizations in \VZ}\label{sec:basic_formalization_in_ACZ}
\TODO{Move to appendix}In this section we show how to formalize basic objects in \VZ. Most formalizations here are routine (cf.~\cite{CN10,MT10}).


\para{Natural number sequences of constant length} For two numbers
$x,y$ let $\langle x,y\rangle := (x+y)(x+y+1)+2y$ be the
\emph{pairing function}, and let $ \textit{left}(z),\textit{right}(z) $ be the ($\SigZB$-definable in \VZ) projection functions of the first and second element in the pair $ z $, respectively. We also $\SigZB$-define inductively  $\langle v_1,\dots,v_k\rangle := \langle \langle v_1,\dots,v_{k-1}\rangle, v_k\rangle$, for any constant $ k> 2 $. Then $\VZ$ proves the injectivity of the pairing function and enables us handling such pairs in a standard way.

\begin{notation}
Given a number $ x $, coding a sequence of natural numbers of length
$ k $, we write $\langle x \rangle^k_i $, for $ i =1,\dots,k $, to denote the number in the
$ i $th position in $ x $. \TODO{Used?} This is a $\SigZB$-definable function in \VZ\ (defined via $ \textit{left}(x), \textit{right}(x) $ functions).
\end{notation}


%


\para{Natural and integer number sequences}
If we wish to talk about sequences of numbers (whether natural, integers or rationals) where the \textit{length of a  sequence is} \textit{non-constant}, we have to use string variables instead of number variables. Using the number-tupling function
we can encode sequences as sets of numbers (recall that a string is identified with the finite set of
numbers encoding it): a sequence is encoded as a string $ Z $ such that, the $ x $th number in the sequence is $ y $ if the number $ \langle x,y \rangle $ is in $ Z $. Formally, we have the following $ \SigZB$-defining formula for the number function $ {\rm seq}(x,Z) $ returning the $x$th element in the sequence $Z$:
\begin{equation}\label{eq:seq function definition}
  \begin{split}
        y={\rm seq}(x,Z) \leftrightarrow
                & \,(y<|Z|\wedge Z(\langle x,y\rangle)\wedge\forall z<y\,
            \neg Z(\langle
                x,z\rangle))\\
                & \vee(\forall z<|Z|\neg Z(\langle x,z\rangle)\wedge
                y=|Z|).
  \end{split}
\end{equation}
Formula (\ref{eq:seq function definition}) states that the $ x $th element in the sequence coded by $ Z $ is $ y $ iff $ \langle x,y \rangle $ is in $ Z $ \emph{and} no other number smaller than $ y $ also ``occupies the $ x $th position in the sequence'', and that if no number occupies position $ x $ then the function returns the length of the string variable $ Z $.

%

We define the number function $\textit{length}(Z)$ to be the length of the sequence
$ Z $, as follows:
\begin{multline}
\l=\textit{length}(Z) \leftrightarrow 
\SEQ(\l,Z) \land \exists w <|Z|\exists j<|Z|(Z(w)\land w=\langle\l-1,j \rangle)\,.
\end{multline}
The defining axiom of $ \textit{length}(Z) $ states that $ Z $ encodes a sequence and is the lexicographically smallest string that encodes this sequence and that $ \l-1$ is the largest position in the sequence which is occupied (by definition there will be no pair $ \langle a,b \rangle \in Z $ with $ a>\l-1 $).
\newcommand{\RowArray}{{\rm RowArray}}
%
%

\para{Array  of strings}
We wish to encode a \textit{sequence} of strings as an array. We use the function $ \RowArray(x,Z) $ to denote the $ x$th string in $Z$ as follows (we follow the treatment in \cite[Definition V.4.26, page 114]{CN10}).

\begin{definition}[Array of strings]\label{def:array of strings}
The string function $\RowArray(x,Z)$, abbreviated $ Z^{[x]} $, is $\SigZB$-definable  in \VZ\ using the following bit-definition:
\[
        \RowArray(x,Z)(i) \, \leftrightarrow 
            \, (i<|Z| \land Z(\langle x,i\rangle)).
\]
\end{definition}

\para{Matrices}
An \nbyn\ integer matrix is coded as an array of $n$ strings, where each of the $n$ strings is itself an array that represents a row in the matrix, that is an array of $n$ integer numbers. 

\subsection{Binary Tree Construction in \VZ}\label{sec:construct-binary-tree}
Here we demonstrate a \SigZB-construction in \VZ\ of a binary tree encoding. This can be used to construct a formula that computes for example the inner product of two vectors as in Section \ref{sec:vz-function-for-constructing-DET-circuit}.
Specifically, we show that the string function $F(n)$ that receives a number $n$, which we assume is a  power of 2 for simplicity, and outputs a string that describes the edges of a  binary tree with $n$ leaves is \SigZB-definable in \VZ. 

Consider the tree shown in the picture below. Each node in the tree belongs to a single layer $d=1,\dots,\log(n)+1$, and in each layer $d$ the nodes are labeled from $1$ to $2^{\log(n)+1-d}$. The wires of the tree $T$ are encoded by a three-dimensional array, namely a string $E$ such that $E(d,u,v)$ holds iff the output of gate $u$ on layer $d$ is connected to the input of gate $v$ on layer $d+1$.  
\bigskip 

\input{tree.tex}

To show that the string function $F(n)$ \SigZB-definable in \VZ\ (equivalently, \SigOneB-definable in \VZ), according to Section \ref{sec:introducing-new-definable-functions-in-VZ} we need to demonstrate a \SigZB-formula that bit-defines the tree encoding as follows. Let  
\begin{multline}
\Phi(d,u,v)\equiv
n\le du ~\land ~\exists x\le n(u=2x\to 2v=u)\land \\ \exists x\le n (u=2x+1\to 2v=u+1).
\end{multline} 
Then $\Phi(d,u,v)$ is true iff $u$ is a node that occur in the $d$th layer ($n\le du$) and that if $u$ is even then $u$ connects to node $n/2$ in the $(d+1)$th layer, and otherwise it connects to node $\frac{u+1}{2}$ in the $(d+1)$th layer. For the bit-definition of $F(n)$ we introduce the following \SigZB-formula:
$$
\varphi(n,i)\equiv i=(r,u,v)\land r < n ~\land~ u\le n~\land v\le n ~\land \Phi(r,u,v)\,.
$$

\section{Algorithm for \Coef}\label{sec:alg-for-coef}\label{sec:some-technical-lemmas}

The following is similar to the homogenization algorithm from Section \ref{sec:homogenization-of-proofs}.
\bigskip 

\hrule  
\medskip 
\noindent \textsc{{Algorithm for Constructing $\Coef_{z^k}(\cd)$ in Uniform \FACZ}}\smallskip \mar{FAC is undefined}
\medskip
\hrule
\medskip 

\noindent \textbf{Input:} an arithmetic circuit $C$ of size $s$ and a natural number $k$.

\noindent \textbf{Output:} an arithmetic circuit computing $\Coef_{z^k}(C)$.
\medskip

\noindent\textbf{Algorithm:} 
Every node $v$ in $C$ is duplicated $k+1$ times into the nodes $[v,0],\dots,[v,k]$, such that $[v,i]$ is (the root of) a circuit computing the (polynomial) coefficient of $z^i$ in $\widehat {C_v}$. 
\mar{homogeneous part is defined?}
The algorithm is doable in \FACZ\ because every new node $[v,i]$ depends only on the copies of the two nodes $u,w$ that goes into $v$, and these nodes are already known from the input circuit, namely, they are \([u,i], [w,i]\), for  \(i=0,\dots,k+1\), where \(v=u+w\) or \(v=u\cdot w\) in \(C\). Hence, the wiring of the new circuit is done in parallel for each of the new nodes as follows:

\case 0 $v$ is a leaf in $C$. If $v\neq z$ then define $[v,0]=v$, and $[v,i]=0$ for all $i=1,\dots,k$. Otherwise, $v=z$ and we define $[v,1]=1$, and $[v,i]=0$ for all $1\neq i\in\{0,\dots,k\}$. 

\case 1 $v=u+w$ in $C$. Define $[v,i]:=[u,i]+[w,i]$ for every $i=0,\dots,k$.

\case 2 $v=u\times w$ in $C$. Define $[v,i]:=\sum_{j+r=i \atop j,r=0,\dots,k}[u,j]\times[w,r]$.
 
%


%
%
%
%

\medskip\hrule 

\section{Witnessing Syntactic-Degrees}\label{sec:witnessing-syntactic-degrees}

\para{Witnesses for syntactic-degrees}
For most part our work we do not need to witness precise syntactic-degree of nodes, since syntactic-degree upper bounds are enough. However for the Cayley-Hamilton theorem we need to have witnesses for precise syntactic-degrees of nodes. We sketch here how to obtain such witnesses.

Note that computing the syntactic-degree of a node in a circuit is doable   in \NCTwo. This was noted for example by Allender \textit{et al}.~\cite{AJMV98} (replace every scalar gate by $0$, every variable gate by $1$, every product gate  by $+$ and every plus gate by $\max$, and then evaluate the circuit within \NCTwo; e.g., using the algorithm implicit in \cite{AJMV98}, or the algorithm in \cite{MRK88}). However, to actually use this algorithm in the theory we would need also to prove its correctness; this is likely doable (as we essentially show for the \cite{VSB+83} circuit balancing algorithm in Section \ref{sec:balancing-algebraic-circ-in-theory}), but we will opt for a shorter solution: we  simply witness the  syntactic-degrees  of all the specific circuits (and their nodes) we need.

The witness for the syntactic-degrees of nodes in a circuit is a string that stores pairs of numbers $(v,d)$, with $v$ the node label and $d$ its syntactic-degree. We can store each syntactic-degree as a natural number since we will need to witness only circuits with polynomial syntactic-degrees. 

It is easy to formulate a \SigZB-formula $\phi(C,W)$ with $C$ a circuit and $W$ the string that contains all the syntactic-degrees of the nodes in $C$, such that $\phi(C,W)$ holds iff $W$ is correct: for every addition gate $t=v_1+v_2$  it checks that $\deg(t)=\max\{\deg(v_1),\deg(v_2)\}$, and for every product gate $t=v_1\cd v_2$ it checks that $\deg(t)=\deg(v_1)+\deg(v_2)$, and for leaves it checks $\deg(x_i)=1$ and $\deg(c)=0$ for $c\in\Z$.

\iddo{}
\begin{lemma}[in \VZ]\label{lem:supply-witness-for-syntactic-degrees}
There exists a witness for the syntactic-degree of all nodes in $\taydetsharp(X)$.
\end{lemma}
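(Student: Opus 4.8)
The plan is to construct the witness $W$ for $\taydetsharp(X)$ explicitly and directly in \VZ, exploiting the fact that $\taydetsharp(X)$ is built by a sequence of primitive syntactic operations whose effect on syntactic-degree is completely determined by simple arithmetic on the syntactic-degrees of the operands. First I would recall from \eqref{eq:huge-term-for-detsharp-spelled-out} that $\taydetsharp(X)$ is obtained from $\cdeti(I_n+zX)$ by a fixed composition of: (i) the division-normalization operator producing $\Num(\cdot)$ and $\Den(\cdot)$ (Section \ref{sec:NUM-DEN-definition}, constructible in \FACZ); (ii) forming the powers $(1-\Den(\cdeti(I_n+zX)))^k$ for $k=0,\dots,n$ and their sum; (iii) applying the $\Coef_{z^n}$ construction of the appendix Section \ref{sec:alg-for-coef}. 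Each of these steps is a \SigZB-definable string function in \VZ, so the plan is to produce, \emph{alongside} each such construction, a companion \SigZB-definable string function that outputs the syntactic-degree labelling, and then argue correctness via the \SigZB-formula $\phi(C,W)$ described just above the lemma.

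The key steps, in order, would be the following. First, build a witness for $\cdeti(X)$: here I use the recursive encoding of \outix\ from Section \ref{sec:vz-function-for-constructing-DET-circuit}; since that circuit is laid out level-by-level with nodes indexed $(d,(i,j),\ell)$, I can give a \SigZB-formula that assigns to each such node its syntactic-degree, computed from the degrees assigned at level $d-1$ together with the degrees of the ``new'' row/column variables — this is a purely local recurrence ($\deg$ of a $+$ gate is the max, of a $\times$ gate the sum) and hence expressible by bounded quantification over the (polynomially many) nodes, with no need for \SigOneB-induction, just as the circuit itself was constructed without it. Second, propagate the witness through the substitution $x_{ij}\mapsto 1+zx_{ij}$ (resp.\ $0+zx_{ij}$): each leaf of degree $1$ is replaced by a two-node gadget whose $+$ root has degree $1$, so the witness is updated by a \SigZB-definable patch. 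Third, propagate through $\Num,\Den$: by the defining clauses in Section \ref{sec:NUM-DEN-definition}, $\deg(\Num(v))$ and $\deg(\Den(v))$ are determined from $\deg(\Num(u_i)),\deg(\Den(u_i))$ of the children by $+$/$\max$, so again a local recurrence gives a \SigZB\ witness for $\Num(\cdeti(I_n+zX))$ and $\Den(\cdeti(I_n+zX))$. Fourth, propagate through the explicit power-series part $1+G+\dots+G^n$ with $G=1-1\cdot\Den(\cdeti(I_n+zX))$: the degree of $G^k$ is $k\cdot\deg(G)$ and the degrees of the chain of product gates are the obvious partial sums, all \SigZB-definable. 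Fifth, propagate through the $\Coef_{z^n}$ algorithm of Section \ref{sec:alg-for-coef}: each duplicate node $[v,i]$ is a $+$ of products $[u,j]\times[w,i-j]$ (or a leaf), so its degree is the max over $j$ of $\deg([u,j])+\deg([w,i-j])$, once more a local recurrence. Composing these five \SigZB-definable witness-producing functions yields a single \SigZB-definable $W$, and then we check $\phi(\taydetsharp(X),W)$ by inspection of the construction.

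One subtle point I would be careful about: after division elimination and the passage to $\taydetsharpprime$ (Lemma \ref{lem:scalars-substitution-keeps-poly-deg}) we need the witnessed degrees to be \emph{polynomially bounded}, since the witness stores them as natural numbers; this is exactly guaranteed by the analysis in Section \ref{sec:taking-care-of-high-degubp-nodes} showing that $\taydetsharpprime(X)$ has all nodes of $\degubp = O(n)$, and since the true syntactic-degree is at most the $\degubp$ value, the witness entries are $O(n)$ and hence codeable. The main obstacle — and it is really the only place where care is needed rather than routine bookkeeping — is ensuring uniformity: every one of the above propagation steps must be carried out by a \SigZB-formula with only \emph{bounded} quantification over nodes and proof-positions, never by induction on the recursive level (which would require \SigOneB-induction unavailable in \VZ). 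The resolution is the same device used throughout Sections \ref{sec:vz-function-for-constructing-DET-circuit} and \ref{sec:P-C-inverse-proof-of-1-2}: because the circuits are encoded with structured node labels that already record the level/entry/running-index information, the ``value at level $d-1$'' needed to define the value at level $d$ is not computed by recursion but simply \emph{referenced} by its label, so the whole degree labelling is a single comprehension instance. I would spell this out for the $\cdeti$ part (which is the genuinely recursive one) and then note that the remaining four steps operate on already-flattened circuits and are manifestly \FACZ.
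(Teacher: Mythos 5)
Your overall decomposition matches the paper's: handle the degrees of $\Num(\cdeti(I_n+zX))$ and $\Den(\cdeti(I_n+zX))$ first, then push the labelling through the explicit power-series and $\Coef_{z^n}$ wrappers of \eqref{eq:huge-term-for-detsharp-spelled-out}, which are indeed routine local computations. The gap is in your first step, and it is precisely the point the lemma exists to address. You claim that the degree labelling of the recursively defined circuit $\cdeti$ is ``a purely local recurrence'' and hence ``a single comprehension instance'' because the value at level $d-1$ can be ``referenced by its label.'' But $\SigZB$-\textbf{COMP} only yields a string $W$ whose membership is defined by a $\SigZB$-formula that does \emph{not} mention $W$; a formula defining the degree of a node at level $d$ in terms of ``the degree assigned at level $d-1$'' is referring to $W$ itself, and unfolding the reference gives a chain of depth $n$ that a fixed $\SigZB$-formula cannot express. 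Your analogy with the construction of the circuit itself does not hold: the edges and gate types at level $d$ are determined purely by the structured node labels, with no dependence on any quantity computed at the previous level, whereas the syntactic-degree at level $d$ genuinely depends on a value computed at level $d-1$. The paper explicitly flags this (``this cannot be computed inductively in such a way within \VZ'') and resolves it by exhibiting \emph{explicit closed-form number functions}: every inductive level $d$ of $\cdeti$ has a ``base'' syntactic-degree expressible directly as an arithmetic function of $d$, on top of which the degree of each gate is a locally determined increment depending only on the gate's position and type; the witness is then written down directly from these closed forms and verified afterwards by the local checking formula $\phi(C,W)$. That solving-the-recurrence step is the actual content of the lemma, and your proposal omits it.

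Two smaller points. First, the paper works with the \emph{pair} $(\deg(\Num(v)),\deg(\Den(v)))$ for each node $v$ of $\cdeti(I_n+zX)$, since both components are needed to label the normalized circuit; your step-three ``propagation through $\Num,\Den$'' is again a recurrence over the whole circuit and needs the same closed-form treatment. Second, your appeal to Section \ref{sec:taking-care-of-high-degubp-nodes} and $\taydetsharpprime$ for polynomial boundedness is misplaced: the lemma concerns $\taydetsharp(X)$ and the measure $\deg$, and the relevant bound comes from the $\Coef_{z^n}$ construction (all surviving subcircuits have syntactic-degree at most $n$ once the $F_0^{-1}$ occurrences are replaced by $1$), together with the paper's explicit restriction of the degree-computing function to nodes of degree at most $n$ — some nodes of $\Num$/$\Den$ of $\cdeti$ genuinely have exponential syntactic-degree and are simply excluded.
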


\begin{proof}
We show how to witness in the theory \VZ\ the syntactic-degrees of the nodes in $\taydetsharp(X)$.  Recall the definition of $\taydetsharp(X)$ in \eqref{eq:huge-term-for-detsharp-spelled-out}. 
In order to compute the syntactic-degree of nodes we do the following. 

First we show that there is a \SigZB-definable number function in \VZ\ that computes the syntactic-degree of a node $v$, given $n$ and the node $v$ as inputs, where $v$ is a node in $\Num(\cdeti(I_n+zX))$, assuming the syntactic-degree of $v$ is at most $n$. The case for $\Den(\cdeti(I_n+zX))$ is similar. From this, using \eqref{eq:huge-term-for-detsharp-spelled-out}, we can conclude that  there is a \SigZB-definable number function in \VZ\ that computes the syntactic-degree of a node $v$, given $n$, in  $\taydetsharp(X)$, for all nodes of syntactic-degree at most $n$. 


Recall the encoding scheme for circuits $\cdeti(I_n+zX)$ described in \ref{sec:vz-function-for-constructing-DET-circuit}, and let $d$ denote the ``inductive level'' in the definition of \cdeti\ in \eqref{eq:det-min-one-definition}.  To compute the syntactic-degrees of  nodes that are at most $n$ in $\Num(\cdeti(I_n+zX))$ we wish to compute the \emph{pair} of numbers corresponding to the syntactic-degrees of  $(\Num(v), \Den(v))$ for each node $v$ in $\cdeti(I_n+zX)$. 

Observe that every inductive level $d$ in the circuit $\cdeti(X)$ has a ``base'' syntactic-degree (as a function of $d$), on top of which we add a number that depends on the gate we consider. For example, consider the circuit $F_1:={X_{1}^{-1}(I_{n-1}+\dd(X)^{-1}v_{1}^{t} v_{2 }X_{1}^{-1})}$ from \eqref{eq:def-inverse}. If we know the syntactic-degree of the output nodes in level $n-1$, namely the output nodes of $X_1^{-1}$, then we can easily compute the syntactic-degrees of other nodes in $F_1$. Note however that this cannot be computed inductively in such a way within \VZ, rather we need to show the explicit number functions. Also, notice the the syntactic-degree of some nodes in $F_1$ is exponential because the repeated multiplication of $X_1^{-1}$ by itself, hence we shall need to consider only those nodes whose syntactic-degree is polynomial in $n$. 

It is enough to show that there is a \SigZB-formula that determines the (polynomial-bounded) syntactic-degree pair $(\Num(v),\Den(v))$ of a node $v$, based on the inductive level $d$ in which the node occurs and the type or position of the gate in that level. For example, some gates in $F_1$, for every level $d$, are leaves---for instance, the entries of $I_{n-1}$ correspond to scalar leaves that have syntactic-degree pair $(0,0)$,  and some others are variable leaves---for instance, $v_{1}^{t} v_{2}$ corresponds to an inner product with leaves variables from $v_1,v_2$, having syntactic-degree pair $(1,0)$, for every level $d$. 

\iddo{}

We demonstrate this idea on $\dd(X)^{-1}$ which is the $(n,n)$ entry of $X^{-1}$ of dimension \nbyn. Similar reasoning works for the rest of the entries of $X^{-1}$ as well as  $\cdeti(I_n+zX)$.

For instance, if $n=2$ (note that $\dd(X)$ is defined for $X$ of dimension \nbyn, only for $n>1$), then $\dd(X)^{-1}=(x_{22}-x_{21}\cd x_{11}^{-1}\cd x_{12})^{-1}$. Thus, 
$\Num(\dd(X)^{-1})=\Den(x_{22})\cd\Den(x_{21}\cd x_{11}^{-1}\cd x_{12})=\Den(x_{22})\cd\Den(x_{21})\cd \Den(x_{11}^{-1})\cd\Den(x_{12})=1\cd 1\cd x_{11}\cd 1$. Hence, $\deg(\Num(\dd(X)^{-1}))=1$. %




\iddo{}
\end{proof}

Using witnesses for syntactic-degrees we can prove Lemma \ref{lem:Taylor-basic-properties} part 3, which was used in the proof of Theorem \ref{cor:C-H-theorem}.  
 
\begin{lemma}[in \VZ]\label{lem:Annex-F=sum-F-k} Given a division free circuit $F$ of syntactic-degree $d$ and a witness for the syntactic-degrees of all nodes in $F$, there exists a $\PCZ$-proof of ${F= \sum_{k=0}^d F^{(k)}}$. Moreover, $F= \sum_{i=0}^d \Coef_{z^{i}}(F)\cdot  z^{i}$ has a $\PCZ$-proof.
\end{lemma}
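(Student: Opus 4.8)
Lemma \ref{lem:Annex-F=sum-F-k} asks for two things: a $\PCZ$-proof of $F = \sum_{k=0}^d F^{(k)}$, and a $\PCZ$-proof of $F = \sum_{i=0}^d \Coef_{z^i}(F)\cdot z^i$. The plan is to prove the first statement by a parallel (non-inductive) construction that mirrors the homogenization algorithm of Section \ref{sec:homogenization-of-proofs}, and then observe that the second statement follows by essentially the same argument via Claim \ref{cla:Coeff-zn-homogeneous-comp-n} (or a direct variant), since the construction of $\Coef_{z^i}(\cdot)$ in Section \ref{sec:alg-for-coef} is structurally identical to that of $\hc{(\cdot)}{i}$.

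First I would set up the construction. Given $F$ of size $s$ together with the syntactic-degree witness, run the homogenization algorithm (case (2), with the witness supplied) to obtain, for every node $v$ of $F$ and every $i \le d$, the duplicate node $[v,i]$, with $[v,i]=0$ whenever $i > \deg(v)$. The goal is to produce, simultaneously for all nodes $v$ of $F$, a fragment of a $\PCZ$-proof establishing $F_v = \sum_{i=0}^{\deg(v)} [F_v, i]$ (equivalently $\sum_{i=0}^d [F_v,i]$, the extra summands being provably $0$). For leaves this is immediate from the base case of the algorithm: a scalar $\alpha$ satisfies $\alpha = \alpha + 0 + \dots + 0$ by repeated use of axiom A7, and a variable $x_j$ satisfies $x_j = 0 + x_j + 0 + \dots$ similarly. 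For a plus gate $v = v_1 + v_2$, from the (already-constructed, pointed-to) equations $F_{v_1} = \sum_i [F_{v_1},i]$ and $F_{v_2} = \sum_i [F_{v_2},i]$ we derive $F_v = \sum_i [F_{v_1},i] + \sum_i [F_{v_2},i] = \sum_i ([F_{v_1},i] + [F_{v_2},i]) = \sum_i [F_v,i]$ using R3, associativity/commutativity (A2, A3), and the defining relation $[F_v,i] = [F_{v_1},i] + [F_{v_2},i]$ from the algorithm (witnessed by C1). For a product gate $v = v_1 \cdot v_2$, we multiply the two expansions out: $F_v = \left(\sum_i [F_{v_1},i]\right)\left(\sum_j [F_{v_2},j]\right)$, distribute fully using A6 (and A8, A5, A4 as needed), collect terms of equal total degree $i+j=k$, and identify the result with $\sum_k [F_v,k]$ via the defining relation $[F_v,k] = \sum_{i+j=k}[F_{v_1},i]\cdot[F_{v_2},j]$. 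This is exactly the bookkeeping in the proof of Lemma \ref{lem:bounded degree}, part \ref{c:2}, and the proof fragment has size $s\cdot\poly(d)$. Crucially, each fragment depends only on the fragments for the two children, so they can all be assembled in parallel by a \SigZB-definable function in \VZ\ using pointers to ``previous'' lines (whose positions are computable by a \SigZB-definable number function), with no appeal to \SigOneB-induction; the syntactic-degree witness is what makes the $[v,i]=0$-for-$i>\deg(v)$ identities available, so the final sum runs only up to $\deg(F)=d$. Taking $v$ to be the root of $F$ yields the $\PCZ$-proof of $F = \sum_{k=0}^d F^{(k)}$.

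For the second statement, consider $F$ as a circuit in variables $\overline x$ (the variable $z$ does not occur in $F$ in the intended application, but if it does we treat it like any other variable). The circuit $\sum_{i=0}^d \Coef_{z^i}(F)\cdot z^i$ is built by the algorithm of Section \ref{sec:alg-for-coef}, which produces duplicates $[v,i]$ governed by exactly the same recurrences as the homogenization duplicates, except that the ``degree-$1$'' leaf is $z$ rather than a variable and all other leaves sit in degree $0$. Running the identical parallel construction — leaf base cases, R3 for plus gates, full distribution for product gates, now tracking the power of $z$ — gives simultaneously, for each node $v$, a $\PCZ$-proof of $F_v = \sum_{i} \Coef_{z^i}(F_v)\cdot z^i$; at the root this is the desired identity. (Alternatively, one can substitute $z x_j$ for $x_j$ throughout, invoke the already-proved first statement and Claim \ref{cla:Coeff-zn-homogeneous-comp-n} to rewrite $F^{(i)}$ as $\Coef_{z^i}$ of the substituted circuit, then substitute back; I would present whichever is shorter, likely the direct construction.)

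The main obstacle is not any single calculation but ensuring the whole thing is genuinely \SigZB-definable in \VZ: the natural argument is ``by induction on the circuit structure,'' which would require \SigOneB-induction that \VZ\ lacks. The fix, as throughout the paper, is to construct all node-fragments at once and glue them with explicit pointers, so the real work is (i) fixing an encoding of the proof fragments indexed by nodes, (ii) giving the \SigZB-definable number function that computes, from a node and its children, the positions of the premise lines, and (iii) checking that the total size bound $s\cdot\poly(d)$ holds and that the syntactic-degree witness of $F$ can be propagated to a syntactic-degree witness (or at least a correct degree labeling) for every circuit in the resulting proof — which is needed because the statement of Lemma \ref{lem:Taylor-basic-properties} part 3 is invoked in a context (the Cayley–Hamilton proof) where precise syntactic-degrees, not just upper bounds, must be available. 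All of this is routine given the machinery already developed in Sections \ref{sec:homogenization-of-proofs} and \ref{sec:witnessing-syntactic-degrees} and Lemma \ref{lem:supply-witness-for-syntactic-degrees}, so I would state the construction, verify the base and inductive cases of the proof-fragment recursion in the style of Lemma \ref{lem:bounded degree}, and leave the encoding details to the reader as the paper does elsewhere.
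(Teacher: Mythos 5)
Your proposal is correct and follows essentially the same route as the paper: a simultaneous, pointer-based construction of proof fragments $F_v=\sum_i F_v^{(i)}$ for every node $v$ (leaves immediate; plus gates via Lemma \ref{lem:bounded degree} and C1; product gates via the degree-convolution identity and C2), avoiding \SigOneB-induction, with the second identity handled by the structural identity of the $\Coef_{z^i}$ construction with homogenization. No substantive differences to report.
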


\begin{proof} \iddo{}

We shall prove the first statement (the second is similar).
Note that a big sum is an abbreviation of a sum written as a logarithmic depth tree of plus gates with the summands at the leaves (we also need to use  obvious steps such as applying the associativity and commutativity of addition axioms in \PCZ-proofs of big sums).

For every node $v$ in $F$ we construct \emph{simultaneously} a (partial) \PCZ-proof sequence terminating with
\begin{align}\label{eq:homogeneous-proof-proof}
& F_v= \sum_{k=0}^{\deg(v)} F_v^{(k)} 
\end{align} 
as follows:

\case 1 $v$ is a variable $x_i$. Then we construct a proof of $F_v:=x_i = \sum_{k=0}^{\deg(v)} \hc{F_v}{k}$, which is immediate by construction. Similarly for a constant node. 
\iddo{}   


\case 2 $v = u\oplus w$ and let $d=\deg(v)$. 
Then we use Lemma \ref{lem:bounded degree} to construct the following (partial) \PCZ\ proof-sequence. In the witnesses for this  proof-sequence we add pointers to proof-lines that are constructed in parallel (for nodes that appear closer to the leaf in the tree). We can compute the line numbers to be pointed to just by looking at the current node (hence we can carry out the construction in  \VZ). The pointers are constructed as number-functions by using the nodes (e.g., we can  label line numbers with the nodes in $F$ they correspond to, adding a secondary index to the index of the line). \iddo{}\iddo{}\iddo{}
%
%
\begin{align*}
\sum\nolimits_{i=0}^d \hci F_v  
& = \sum\nolimits_{i=0}^d \hci{(F_u\oplus F_w)} & \text{by assumption}\\
& = \sum\nolimits_{i=0}^d (\hci{F_u}+\hci{F_w}) & \text{by Lemma \ref{lem:bounded degree}} \\
& = \sum\nolimits_{i=0}^d \hci{F_u}+ \sum\nolimits_{i=0}^d \hci{F_w} & \text{rearrangement} \\
& = F_u+F_w &   \text{by ``previous'' lines}\\ 
& & \text{{\small (add explicit pointers to the appropriate  proof-lines)}}\\
& = F_u\cplus F_w = F_v & \text{by axiom C1.}
\end{align*} 
 

\case 3 $v = u\otimes w$ and let $d_1=\deg(u), d_2=\deg(w)$ and $d=\deg(v)=d_1+d_2$. This is similar to the Case 2 only that it is crucial here to use the specified syntactic-degrees of nodes along paths from leaves to the root. 
\iddo{}          
\begin{align*}
\sum\nolimits_{i=0}^d \hci F_v  
& = \sum\nolimits_{i=0}^d \hci{(F_u\otimes F_w)} & \text{by assumption}\\
& =  \sum\nolimits_{i=0}^d  \sum_{l+j=i \atop 0\le l\le d_1 ,0\le j\le d_2} F_1^{(l)}\cd F_2^{(j)}\,.
& \text{by Lemma \ref{lem:bounded degree} part (3)} \\
& = \sum\nolimits_{i=0}^{d_1} \hci{F_u}\cd\sum\nolimits_{i=0}^{d_2} \hci{F_w} & \text{rearrangement} \\
& = F_u\cd F_w &   \text{by ``previous'' lines}\\ 
& & \text{{\small (add explicit pointers to the appropriate  proof-lines)}}\\
& = F_u\ctimes F_w = F_v & \text{by axiom C1.}
\end{align*} 
\end{proof}

\section{Remaining Proof of Proposition \ref{prop:lengthy-lemma}}
\label{sec:balancing-main-lemma-verbatim}

\begin{proof}[Proof of Proposition \ref{prop:lengthy-lemma} continued]
\mbox{}
This is taken almost verbatim from \cite{HT12}, except that we use \degubp\ in instead of the true syntactic-degrees of nodes, and noticing that all predicates we use (like  $t\in {\cal B}_m (F_v)$) are definable in \VNCTwo.
\medskip 

\case 2 Assume that $ v=v_1\cd v_2 $.
We wish to prove $ [F_{v}]=[F_{v_1}]\cd[F_{v_2}]$. Let $ m=2^i $. We assume without loss of generality that $ \degubp(v_1)\ge \degubp(v_2)$. By the definition of $ [\cd ] $, we have:
\begin{align}
[F_v] = [F_{v_1\cd v_2}] & =  \sum_{t\in{\cal B}_m(F_v)} [F_{t_1}]\cd[F_{t_2}]\cd[\partial t F_v].   \notag
\end{align}
If $ v\in {\cal B}_m(F_v) $, then $ {\cal B}_m = \{v\} $ and we have $\tr{F_{v}}=[F_{v_1}]\cd[F_{v_2}]\cd[\partial_{v} F_v]$. Since $[\partial_{v}F_{v}]=1$, this gives $[F_{v}]= [F_{v_{1}}]\cd[F_{v_{2}}]$, and we are done.

Otherwise, assume $ v\not\in {\cal B}_m (F_v) $. Then   $m=2^i<\degubp(v_{1})$ (since, if $ \degubp(v_{1})\le m $, then also $ \degubp(v_{2})\le m $ and so by definition $ v\in {\cal B}_m(F_v) $). Further, because  $\degubp(v_1)\leq 2^{i+1}$, we have
\begin{equation} \label{eq: new2}
[F_{v_{1}}]=                                    \sum_{t\in {\cal B}_m(F_{v_1})} [F_{t_1}] \cd [F_{t_2}] \cd [\partial t F_{v_1}]\,.
\end{equation}

Since $ \degubp(v) \le 2^{i+1} $ and   $ \degubp(t)>m=2^{i} $, for any $ t \in {\cal B}_m(F_v) $, we have
\[
          \degubp(v) - \degubp(t) \le 2^i \quad\text{ and } \quad 2\degubp(t)>\degubp(v).
\]
Since $ v\neq t $, $ \Psi_i $ contains, for any $ t\in {\cal B}_m(F_v) $, the equation:
\begin{equation}
 [\partial t (F_{v_1\cd v_2})] = [\partial t F_{v_1}]\cd[F_{v_2}].\label{eq:076}
\end{equation}
Using (\ref{eq:076}) for all $ t \in {\cal B}_m(F_v) $, we can prove the following with a $ \PC(\F) $ proof of size $ O(s^2\cd m(s,d) ) $ and depth $O(r(s,d))$:
\begin{align}
               \sum_{t\in{\cal B}_m(F_v)} [F_{t_1}] \cd [F_{t_2}] \cd [\partial t F_v]
                                & =        \sum_{t\in{\cal B}_m(F_{v})} [F_{t_1}] \cd [F_{t_2}] \cd  [\partial t (F_{v_1\cd v_2})]   \notag\\
                                & =        \sum_{t\in {\cal B}_m(F_{v})} [F_{t_1}] \cd [F_{t_2}] \cd  ([\partial t F_{v_1}]\cd [F_{v_2}])  \notag \\
                                & =  [F_{v_2}]\cd       \sum_{t\in {\cal B}_m(F_{v})}[F_{t_1}] \cd [F_{t_2}] \cd  [\partial t F_{v_1}]. \label{eq:044}
\end{align}
Since ${\cal B}_{m}(F_{v_1})\subseteq {\cal B}_{m}(F_{v})$, we can conclude as in (\ref{eq: biggerB}) that
 \[
                                        \sum_{t\in {\cal B}_m(F_{v})} [F_{t_1}] \cd [F_{t_2}] \cd [\partial t F_{v_1}]  = \sum_{t\in {\cal B}_m(F_{v_1})} [F_{t_1}] \cd [F_{t_2}] \cd [\partial t F_{v_1}]\, .\]
Using (\ref{eq: new2}),
(\ref{eq:044}) equals $[F_{v_{2}}]\cd[F_{v_{1}}]$.
The above proof-sequence (using $ \Psi_i $ as a premise) has size $ O(s^2\cd m(s,d) ) $ and depth $  O(r(s,d))$.
\FullSpace

We now append $ \Psi_{i} $ with all proof-sequences of $ [F_{v}]=[F_{v_1}]+[F_{v_2}]$ for every $ v $ from Case 1, and all proof-sequences of $ [F_{v}]=[F_{v_1}]\cd[F_{v_2}]$ for every $ v $ from Case 2. We obtain a proof-sequence  $ \Psi'_{i+1} $ of size
\[
        \lambda(s,i+1)  \le O(s^3\cd m(s,d))+  \lambda(s,i),
\]
and depth $O(r(s,d)) $.

In Part (b), we extend $ \Psi'_{i+1} $ with more proof-sequences to obtain the final $ \Psi_{i+1} $.

\para{Part (b): proof of (\ref{eq:wFv+}) and (\ref{eq:wFv*}).}
Let $ v\neq w $ be a pair of nodes in $ F $ such that $ w\in F_v $  and assume that
\[
         2^{i}<\degubp(v)-\degubp(w) \le 2^{i+1} \text{\ \ and\ \ \ }  2\degubp(w)>\degubp(v).
\]
Let
\[  m = 2^i + \degubp(w) .\]

\case 1
Suppose that $ v=v_1+v_2 $.  We need to prove
\begin{equation}\label{eq:stam baribua}
        [\partial w F_{v}]  = [\partial w F_{v_1}] + [\partial w F_{v_2}]
\end{equation}
based on $ \Psi_i $ as a premise. By construction of $ [\partial w F_{v} ] $,
\begin{align}
        [\partial w F_{v}]  & =       \sum_{t\in{\cal B}_m(F_v)} [\partial t F_v] \cd [\partial w F_{t_1}] \cd [F_{t_2}]\nonumber \\
                                & =       \sum_{t\in{\cal B}_m(F_v)} [\partial t (F_{v_1+v_2})] \cd [\partial w F_{t_1}] \cd [F_{t_2}] . \label{eq:072}
\end{align}
Since $\degubp(v_1)=\degubp(v_2)=\degubp(v)$, we also have
\begin{equation}\label{eq: new3}
[\partial w F_{v_e}]   =       \sum_{t\in{\cal B}_m(F_{v_e})} [\partial t F_{v_e}] \cd [\partial w F_{t_1}] \cd [F_{t_2}],\quad\hbox{ for } e=1,2\,.
\end{equation}
Since $ m = 2^i +\degubp(w) $, we have $ \degubp(t)>2^i +\degubp(w) $, for any $ t\in {\cal B}_m (F_v) $. Thus, by $ \degubp(v)-\degubp(w) \le 2^{i+1}$, we get that for any $ t\in {\cal B}_m (F_v) $:
\begin{gather*}
         \degubp(v)-\degubp(t)\le 2^i \text{\ \ \ and \ \ \ }  2\degubp(t)>\degubp(v),\text{\ \ \ and}
\\  t\neq v \text{\ (since $ t $ is a product gate)}.
\end{gather*}
Therefore, for any $ t\in {\cal B}_m (F_v) $, $ \Psi_i $ contains the equation
 \[ [\partial t (F_{v_1+v_2})] = [\partial t F_{v_1}]+[\partial t F_{v_2}].\]
Thus, based on $ \Psi_i $, we can prove that (\ref{eq:072}) equals:
\begin{align}\notag
                                &        \sum_{t\in{\cal B}_m(F_v)} ([\partial t F_{v_1}]+[\partial t F_{v_2}]) \cd [\partial w F_{t_1}] \cd [F_{t_2}] \\
                         = &        \sum_{t\in{\cal B}_m(F_v)} [\partial t F_{v_1}] \cd  [\partial w F_{t_1}] \cd [F_{t_2}] +
                                                      \sum_{t\in{\cal B}_m(F_v)} [\partial t F_{v_2}] \cd  [\partial w F_{t_1}] \cd [F_{t_2}].\label{eq:235}
\end{align}
As in (\ref{eq: biggerB}), using (\ref{eq: new3}) we can derive the following from (\ref{eq:235}):
\[
\begin{split}
                                       \sum_{t\in{\cal B}_m(F_{v_1})} [\partial t F_{v_1}] \cd  [\partial w F_{t_1}] \cd [F_{t_2}] +
                                                      \sum_{t\in{\cal B}_m(F_{v_2})} [\partial t F_{v_2}] \cd  [\partial w F_{t_1}] \cd [F_{t_2}]
                                                \\
                                                =[\partial w F_{v_1}] + [\partial w F_{v_2}] .
\end{split}
\]
The proof of (\ref{eq:stam baribua}) from $ \Psi_i $ shown above has size $ O(s^2\cd m(s,d)) $ and depth $ O(r(s,d)) $.
\FullSpace


\case 2
Suppose that $v = v_1\cd v_2$. We assume without loss of generality that $ \degubp(v_1)\ge\degubp(v_2) $ and show how to prove
\begin{equation}\label{eq:stam bashlishit}
[\partial w F_{v}]  = [\partial w F_{v_1}] \cd [F_{v_2}] .
\end{equation}
By construction of $ [\partial w F_{v} ] $:
\begin{align}\notag
        [\partial w F_{v}]  & =       \sum_{t\in{\cal B}_m(F_v)} [\partial t F_v] \cd [\partial w F_{t_1}] \cd [F_{t_2}] \\
                                & =       \sum_{t\in{\cal B}_m(F_v)} [\partial t (F_{v_1 \cd  v_2})] \cd [\partial w F_{t_1}] \cd [F_{t_2}] . \label{eq:224}
\end{align}
Similar to the previous case, for any $ t\in {\cal B}_m (F_v) $ we have
\[ \degubp(v)-\degubp(t)< 2^i \text{\ \ \ and \ \ \ }  2\degubp(t)>\degubp(v).\]

If $ v\in {\cal B}_m(F_v) $ then $ {\cal B}_m(F_v) =\{v\}$ and so (\ref{eq:224}) is simply $[\partial v F_{v}]  \cd  [\partial w F_{v_1}] \cd  [F_{v_2}] =   [\partial w F_{v_1}] \cd  [F_{v_2}]$ as required. Otherwise, assume that $ v\not\in {\cal B}_{m}(F_{v}) $.
Then  $ \Psi_i $ contains the following equation, for any $t\in {\cal B}_m (F_v) $:
\[
        [\partial t (F_{v_1\cd v_2}) ] = [\partial t F_{v_1}] \cd [F_{v_2}].
\]
Using premises from $ \Psi_i $, we can then prove that (\ref{eq:224}) equals:
\begin{multline}\label{eq:056}
                                      \sum_{t\in{\cal B}_m(F_v)} \left([\partial t F_{v_1}] \cd [F_{v_2}]\right) \cd [\partial w F_{t_1}] \cd [F_{t_2}]
                                 = \left(       \sum_{t\in{\cal B}_m(F_v)} [\partial t F_{v_1}] \cd [\partial w F_{t_1}] \cd [F_{t_2}]\right) \cd [F_{v_2}].\ \ \ \
\end{multline}
As in (\ref{eq: biggerB}), we have
$\sum_{t\in{\cal B}_m(F_v)} [\partial t F_{v_1}] \cd [\partial w F_{t_1}] \cd [F_{t_2}]
= \sum_{t\in{\cal B}_m(F_{v_1})} [\partial t F_{v_1}] \cd [\partial w F_{t_1}] \cd [F_{t_2}]$. Also, since $v_{1}\cd v_{2}=v\not \in {\cal B}_{m}(F_{v})$, we have $\degubp(v_{1})>m= 2^{i}+\degubp(w)$, and so
\begin{equation} \label{eq: new4}
        [\partial w F_{v_1}]    =
                \sum_{t\in{\cal B}_m(F_{v_1})} [\partial t F_{v_1}] \cd [\partial w F_{t_1}] \cd [F_{t_2}]\,.
\end{equation}
Hence by (\ref{eq: new4}),
(\ref{eq:056}) equals
$[\partial_{w}F_{v_{1}}]\cd [F_{v_{2}}]$.

The above proof of (\ref{eq:stam bashlishit}) from $ \Psi_i $ has size $ O(s^2\cd m(s,d)) $ and depth $O(r(s,d)) $.
\FullSpace

We now append $ \Psi'_{i} $ from Part (a) (which also contains $ \Psi_i $) with all proof-sequences of $ [\partial w F_{v}]  = [\partial w F_{v_1}] + [\partial w F_{v_2}] $ in Case 1 and all proof sequences $ [\partial w F_{v}]  = [\partial w F_{v_1}] \cd [F_{v_2}]  $ in Case 2, above.  We obtain the proof-sequence  $ \Psi_{i+1} $ of size
\[
        \lambda(s,i+1)  \le O(s^4\cd m(s,d))+  \lambda(s,i),
\]
and depth $O(r(s,d)) $, as required.
\end{proof}


\singlespacing\small
\small  
\bibliographystyle{alphaurlpp}
\bibliography{PrfCmplx-Bakoma}


\end{document}